\newcommand\raggedtop{%
  \topskip=1\topskip plus 10pt} 
\DeclareMathOperator{\Ad}{Ad}
\DeclareMathOperator{\ran}{ran} 
\DeclareMathOperator{\id}{id}
\DeclareMathOperator{\Fun}{Fun}
\DeclareMathOperator{\iFun}{\textsc{Fun}} 
\DeclareMathOperator{\Map}{Map}
\DeclareMathOperator{\iMap}{\textsc{Map}}        
\DeclareMathOperator{\Der}{Der}
\DeclareMathOperator{\ee}{e}
\DeclareMathOperator{\ZZ}{Z\hspace{-1pt}}
\DeclareMathOperator{\DD}{D\hspace{-1pt}}
\DeclareMathOperator{\OOO}{Op\hspace{-1pt}}
\DeclareMathOperator{\iOOO}{\textsc{Op}\hspace{-1pt}}     
\numberwithin{equation}{subsection} 
\numberwithin{subsection}{section} 
\newcommand{\ceqref}[1]{{\textcolor{blue}{\eqref{#1}}}}
\newcommand{\cref}[1]{{\textcolor{blue}{\ref{#1}}}}
\newcommand{\ccite}[1]{{\textcolor{blue}{\!\cite{#1}}}}
\newcommand{\ddd}{{\hbox{$\bigoplus$}}}
\newcommand{\ul}[1]{{\underline{#1}}}
\newcommand{\mathsans}[1]{{{\sf #1}}}
\font\euler=eusm10 at 12.8 truept
\font\scripteuler=eusm7
\font\scriptscripteuler=eusm5 
\def\eul{\fam=12}
\newcommand{\matheul}[1]{{{\eul #1}}}
\newtheorem{defi}{{\sf Definition}}[section]
\newtheorem{prop}{{\sf Proposition}}[section]
\newtheorem{lemma}{{\sf Lemma}}[section]
\newtheorem{rem}{{\sf Remark}}[section]
\begin{document}

\vskip1.5cm
\begin{large}
{\flushleft\textcolor{blue}{\sffamily\bfseries Operational total space theory of principal 2--bundles II:}}
{\flushleft\textcolor{blue}{\sffamily\bfseries 2--connections and 1-- and 2--gauge transformations}}  
\end{large}
\vskip1.3cm
\hrule height 1.5pt
\vskip1.3cm
{\flushleft{\sffamily \bfseries Roberto Zucchini}\\
\it Dipartimento di Fisica ed Astronomia,\\
Universit\`a di Bologna,\\
I.N.F.N., sezione di Bologna,\\
viale Berti Pichat, 6/2\\
Bologna, Italy\\
Email: \textcolor{blue}{\tt \href{mailto:roberto.zucchini@unibo.it}{roberto.zucchini@unibo.it}}, 
\textcolor{blue}{\tt \href{mailto:zucchinir@bo.infn.it}{zucchinir@bo.infn.it}}}


\vskip.7cm
\vskip.6cm 
{\flushleft\sc
Abstract:} 
The geometry of the total space of a principal bundle
with regard to the action of the bundle's structure group is elegantly described by the 
bundle's operation, 
a collection of derivations consisting of the de Rham differential and the contraction
and Lie derivatives of all vertical vector fields and satisfying the six Cartan relations. 
Connections and gauge transformations are defined by the way they behave under the 
action of the operation's derivations. In the first paper of a series of two
extending the ordinary theory, we constructed an operational total space theory of strict principal 2--bundles 
with reference to the action of the structure strict 2--group. Expressing this latter through a crossed module
$(\mathsans{E},\mathsans{G})$, the operation is based on the derived Lie group
$\mathfrak{e}[1]\rtimes\mathsans{G}$. In this paper, the second of the series, 
an original formulation of the theory of $2$--connections and $1$-- and $2$--gauge transformations
of principal 2--bundles based on the operational framework is  provided. 
\vspace{2mm}
\par\noindent
MSC: 81T13 81T45 58A50 58E40 55R65

\vfil\eject
\tableofcontents

\vfil\eject

\section{\textcolor{blue}{\sffamily Introduction}}\label{sec:introII}

Principal $2$--bundle theory is a topic of higher geometry 
important, among other reasons, for its relevance in higher gauge theory (see e. g. \ccite{Baez:2010ya}
for a review).   
Various approaches to this subject have been developed so far 
constituting a large body of literature  
\ccite{Schreiber:2005ff,Baez:2004in,Baez:2005qu,Giraud:1971cna,Bartels:2006hgtb,
Wockel:2008tspb,Schommer:2011ces,Aschieri:2003mw,Laurent:2009nag,Ginot:2008gpc,Nikolaus:2013fvng,
Breen:2001ie,Schreiber:2013pra,Jurco:2014mva,Jurco:2016qwv,Waldorf:2016tsct,Waldorf:2017ptpb}.

This is the second of a series of two papers aimed at the construction of an operational 
total space theory of principal $2$--bundles. In a companion paper, henceforth referred to as I \ccite{Zucchini:2019rpf}, 
we laid the foundations of the operational total space framework \ccite{Greub:1973ccc}. In this paper, 
referred to as II, based on the operational setup worked out in I,
we provide an original formulation of the theory of $2$--connections and $1$-- 
and $2$--gauge transformations. 


\subsection{\textcolor{blue}{\sffamily Operational theory of principal 2--bundles}}\label{subsec:approachI}

Before proceeding to illustrating the plan of II, we review briefly the content of I
to privide the reader with a general overview of the matter. 

A principal $2$--bundle consists of a morphism manifold $\hat P$ with an object submanifold $\hat P_0$ 
forming a groupoid, a base manifold $M$, compatible projection maps 
$\hat\pi:\hat P\rightarrow M$ and $\hat\pi_0:\hat P_0\rightarrow M$
describing a functor, a morphism group $\hat{\mathsans{K}}$ with an object subgroup $\hat{\mathsans{K}}_0$ organized 
as a strict Lie $2$-group and compatible right actions $\hat R:\hat P\times\hat{\mathsans{K}}\rightarrow\hat P$ and 
$\hat R_0:\hat P_0\times\hat{\mathsans{K}}_0\rightarrow\hat P_0$ constituting a functor and respecting $\hat\pi$ and $\hat\pi_0$.
The $2$--bundle is also locally trivializable, that is on any sufficiently small neighborhood $U$ 
of $M$ the groupoid $(\hat P|_U,\hat P_0|_U)$ is equivariantly projection preservingly 
equivalent 
to the groupoid $(U\times\hat{\mathsans{K}},U\times\hat{\mathsans{K}}_0)$ with the obvious projection
and right action structures. 

In I, we showed that there exists a synthetic structure adjoined to a principal $2$--bundle as above
consisting of morphism and object manifolds $P$ and $P_0$, the base manifold $M$, 
projections $\pi$ and $\pi_0$, morphism and object groups $\mathsans{K}$ and $\mathsans{K}_0$ 
and right $\mathsans{K}$-- and $\mathsans{K}_0$-- actions $R$ and $R_0$ on $P$ and $P_0$. 
The synthetic setup is formally obtained from the original non synthetic one 
as follows. Describe the strict Lie $2$--group $(\hat{\mathsans{K}},\hat{\mathsans{K}}_0)$ by its associated 
Lie group crossed module $(\mathsans{E},\mathsans{G})$ so that 
$\hat{\mathsans{K}}=\mathsans{E}\rtimes\mathsans{G}$ and $\hat{\mathsans{K}}_0=\mathsans{G}$.
Then, $\mathsans{K}=\mathfrak{e}[1]\rtimes\mathsans{G}$ and $\mathsans{K}_0=\mathsans{G}$. 
Formally extend further the $\hat{\mathsans{K}}$--action $\hat R$ to a $\mathsans{K}$--action. Then, $P$ is
the $\mathsans{K}$--action image of $\hat P_0$ and $P_0=\hat P_0$, $R$ is the restriction of $\hat R$ to $P$
and $R_0=\hat R$. Above, $\mathsans{K}$ and $P$ must be thought of as certain spaces of functions from $\mathbb{R}[-1]$ 
to $\mathsans{E}\rtimes\mathsans{G}$ and $P$, respectively, in the spirit of synthetic smooth geometry. 
Although the synthetic structure shares many of the properties of the underlying principal $2$--bundle,
it is not one because neither pairs $(\mathsans{K},\mathsans{K}_0)$ and $(P,P_0)$ 
have a groupoid structure. 

With any Lie group crossed module such as $(\mathsans{E},\mathsans{G})$, there are associated
the derived Lie group $\DD\mathsans{M}=\mathfrak{e}[1]\rtimes\mathsans{G}$ and its 
subgroup $\DD\mathsans{M}_0=\mathsans{G}$ whose rich proper\-ties were exhaustively studied in I.
When expressing $\hat{\mathsans{K}}$, $\hat{\mathsans{K}}_0$ in terms of the crossed module encoding their 
underlying $2$--group, one has $\mathsans{K}=\DD\mathsans{M}$ and $\mathsans{K}_0=\DD\mathsans{M}_0$.
The $\mathsans{K}$-- and $\mathsans{K}_0$--actions on $P$ and $P_0$
can in this way be described in terms of $\DD\mathsans{M}$ and $\DD\mathsans{M}_0$, respectively. 

As explained at length in I, the right $\DD\mathsans{M}$--action on $P$ is codified in an operation
$\iOOO S_P$. This is the geometrical structure consisting of the graded algebra 
$\iFun(T[1]P)$ of internal functions of $T[1]P$ and the collection of graded derivations of
$\iFun(T[1]P)$ comprising the de Rham vector field $d_P$ and the contraction and Lie vector fields
$j_{PZ}$, $l_{PZ}$, $Z\in\DD\mathfrak{m}$,  describing the action infinitesimally, where $\DD\mathfrak{m}$ 
is the Lie algebra of $\DD\mathsans{M}$.   
The derivations obey the six Cartan relations,
\begin{align}
&[d_P,d_P]=0,
\vphantom{\Big]}
\label{rintro1}
\\
&[d_P,j_{PZ}]=l_{PZ},
\vphantom{\Big]}
\label{rintro2}
\\
&[d_P,l_{PZ}]=0,
\vphantom{\Big]}
\label{rintro3}
\\
&[j_{PZ},j_{PW}]=0,
\vphantom{\Big]}
\label{rintro4}
\\
&[l_{PZ},j_{PW}]=j_{[Z,W]},
\vphantom{\Big]}
\label{rintro5}
\\
&[l_{PZ},l_{PW}]=l_{[Z,W]}.
\vphantom{\Big]}
\label{rintro6}
\end{align} 
It is possible to similarly construct an operation $\iOOO S_{P0}$ codifying the right $\DD\mathsans{M}_0$--action on $P_0$
consisting of the internal function algebra $\iFun(T[1]P_0)$ acted upon by the de Rham vector field
$d_{P_0}$ and the contraction and Lie vector fields $j_{P_0Z_0}$, $l_{P_0Z_0}$, $Z_0\in\DD\mathfrak{m}_0$.


\subsection{\textcolor{blue}{\sffamily 2--connections and 1-- and 2--gauge transformations}}\label{subsec:approachII}

The operational framework of I just reviewed is the geometric setup on which the theory of $2$--connections and $1$-- 
and $2$--gauge transformations presented in this paper rests. 

In sect. \cref{sec:ordgau}, we review the ordinary total space theory of principal bundle connections and gauge 
transformations concentrating on the two aspects of it which are most relevant for us, the operational 
description (cf. subsect. \cref{subsec:ordgau}) and and the basic theory (cf. subsect. \cref{subsec:ordgau1}). 
This will furnish a prototypical model inspiring the construction of 
the corresponding higher theory. 

In sect. \cref{sec:2conn1gau}, synthetic definitions of 
$2$--connections and $1$-- and $2$--gauge transformations are given in the operational 
framework (cf. subsect. \cref{subsec:setup}). 
A $2$--connection $A$ is a degree $1$ $\DD\mathfrak{m}$--valued internal function on $T[1]P$ 
behaving in a prescribed way under the action of the vector fields $d_P$, $j_{PZ}$, $l_{PZ}$ of 
the operation $\iOOO S_{P}$ (cf. subsect. \cref{subsec:cmconn}). 
The grading of $\DD\mathfrak{m}$ ensures that $A$ has a degree $1$ $\mathfrak{g}$--valued component $\omega$ 
and a degree $2$ $\mathfrak{e}$--valued component $\varOmega$ which directly correspond to and have properties 
closely related to those of the familiar components of a $2$--connection in strict higher gauge theory. Similarly, 
a $1$--gauge transformation $\varPsi$ is a degree $0$ $\DD\mathsans{M}$--valued internal function on $T[1]P$ acted 
upon in a certain way by $d_P$, $j_{PZ}$, $l_{PZ}$, which by the grading of $\DD\mathsans{M}$ has 
a degree $0$ $\mathsans{G}$--valued component $g$ and a degree $1$ $\mathfrak{e}$--valued component $J$ 
directly corresponding to and with properties closely related to those of the components 
of a $1$--gauge transformation in strict higher gauge theory (cf. subsect. \cref{subsec:cmgautr}). 
The action of a $1$--gauge transformation
$\varPsi$ on a $2$--connection $A$ can be defined and has the expected properties. $2$--gauge transformations and 
their action on $1$--gauge transformations can be incorporated into this operational framework as well (cf. 
subsect. \cref{subsec:cmgauforgau}). 

A $2$--connection $A$ can be pulled back from $T[1]P$ to $T[1]P_0$ using the inclusion map 
$I:P_0\rightarrow P$. The pull--back $I^*A$ behaves under the action of the vector fields 
$d_{P_0}$, $j_{P_0Z_0}$, $l_{P_0Z_0}$ of the operation $\iOOO S_{P0}$ in a way determined by the behaviour 
of $A$ under the vector fields $d_P$, $j_{PZ}$, $l_{PZ}$ of $\iOOO S_{P}$. It is possible to consistently 
impose the condition that the degree $2$ component $I^*\varOmega$ of $I^*A$ vanishes. Upon doing so, 
the degree $1$ component $I^*\omega$ of $I^*A$ formally functions in $\iOOO S_{P0}$ as a connection of
an ordinary principal bundle $P_0$ with structure group $\DD\mathsans{M}_0$, though $P_0$ is not one in general. 
Similarly, a $1$--gauge transformation $\varPsi$ can be pulled back from $T[1]P$ to $T[1]P_0$ using $I$. 
The pull--back $I^*\varPsi$ behaves under the vector fields 
$d_{P_0}$, $j_{P_0Z_0}$, $l_{P_0Z_0}$ of $\iOOO S_{P0}$ in a way determined by the behaviour 
of $\varPsi$ under the vector fields $d_P$, $j_{PZ}$, $l_{PZ}$ of $\iOOO S_{P}$. It is possible to consistently 
impose the condition that the degree $1$ component $I^*J$ of $I^*\varPsi$ vanishes. 
The degree $0$ component $I^*g$ of $I^*\varPsi$ then formally functions in $\iOOO S_{P0}$ as if it were 
a gauge transformation of $P_0$ as a would--be ordinary principal bundle. 

The internal functions of $T[1]P$ annihilated by all vector fields $j_{PZ}$, $l_{PZ}$ with $Z\in\DD\mathfrak{m}$
constitute the basic subalgebra $\iFun_{\mathrm{b}}(T[1]P)$ of $\iFun(T[1]P)$. Unlike for ordinary principal bundles,
$\iFun_{\mathrm{b}}(T[1]P)$ cannot be identified with $\iFun(T[1]M)$, as the $\DD\mathsans{M}$--action of $P$
is free but generally not fiberwise transitive. In the case of a trivial principal $2$--bundle, however, 
$P=M\times\DD\mathsans{M}$, the $\DD\mathsans{M}$--action is both free and fiberwise transitive
and $\iFun_{\mathrm{b}}(T[1]P)$ is isomorphic to $\iFun(T[1]M)$. 
So, since a principal $2$--bundle is locally weakly isomorphic to a trivial 
$2$--bundle with the same structure $2$--group by definition, 
the basic internal functions of $T[1]P$ can still be identified
with the internal functions of $T[1]M$ locally in a weak sense. By this feature, 
the basic theory of the higher case is definitely unlike that of the ordinary one. Appropriate notions
are so required for its formulation and construction. It is possible in principle to work out the basic 
theory also for the internal functions of $T[1]P_0$ and similar considerations apply. However, there 
apparently are no relevant applications of it. 

On a trivializing neighborhood $U\subset M$ of the principal $2$--bundle, $2$--connections 
and $1$-- and $2$--gauge transformations are described by basic Lie valued data on the 
portion of $T[1]P$ above $T[1]U$ (cf. subsects. \cref{subsec:maurer}, \cref{subsec:basic}). 
More specifically a $2$--connection $A$ is characterized by a local basic degree 
$1$ $\DD\mathfrak{m}$--valued internal function 
$A_{\mathrm{b}}$ comprising a degree $1$ $\mathfrak{g}$--valued function $\omega_{\mathrm{b}}$ 
and a degree $2$ $\mathfrak{e}$--valued function $\varOmega_{\mathrm{b}}$.
Similarly, a $1$--gauge transformation $\varPsi$ is characterized by a local basic degree $0$ $\DD\mathsans{M}$--valued internal 
function $\varPsi_{\mathrm{b}}$ comprising a degree $0$ $\mathsans{G}$--valued function $g_{\mathrm{b}}$ 
and a degree $1$ $\mathfrak{e}$--valued function $J_{\mathrm{b}}$.
$2$--gauge transformations too have a basic representation.
Local $2$--connection and $1$-- and $2$--gauge transformation data relative to distinct overlapping trivializing 
neighborhoods of $U,U'\subset M$ match trough a local basic degree $0$ $\DD\mathsans{M}$--valued internal function  
$D_{\mathrm{b}}$ decomposable in a degree $0$ $\mathsans{G}$--valued function $f_{\mathrm{b}}$ 
and a degree $1$ $\mathfrak{e}$--valued function $F_{\mathrm{b}}$. 

The local basic data mentioned in the previous paragraph
can be constructed for a full open covering of $M$ made of trivializing neighborhoods (cf. \cref{subsec:nadifcoh}). 
Under certain conditions, among which fake flatness, 
the local $2$--connection and matching data fit into a structure called a differential paracocycle
having formal properties analogous to those of a (trivial) differential cocycle but defined on the total space 
morphism manifold $P$ rather than the base manifold $M$. The paracocycle data are then expressed
through the pull--back of the bundle's projection map in terms of local Lie valued data
defined on $M$ constituting a genuine differential cocycle. Similarly, 
in the presence of a suitable differential paracocycle,
the local $1$--gauge transformation data fit into a structure called a gauge paraequivalence
subordinated to it. The paraequivalence data are then expressed
through the projection map's pull--back in terms of local Lie valued data
defined on $M$. Further, the gauge transform of the paracocycle is defined.

In sect. \cref{sec:outlk}, we evaluate the results of the total space synthetic theory of 
$2$--connections and $1$-- and $2$--gauge transformations illustrated above by comparing it with other approaches 
to the topic (cf. subsect. \cref{subsec:wald}) and outlining a more geometric interpretation 
of it (cf. subscet. \cref{subsec:interpr}).

\subsection{\textcolor{blue}{\sffamily Outlook}}\label{subsec:outlook}

Our work is an attempt to formulate principal $2$--bundle geometry in a total space perspective,
while remaining committed as much as possible to the language and the techniques of 
graded differential geometry which have shown their usefulness in gauge theory. 
The operational formulation we propose enriches and completes the range of approaches to and 
descriptions of principal $2$--bundle geometry. It may provide, it is our hope, alternative more 
elegant proofs of known facts and point to new hitherto unknown developments. 

The operational framework has shown its power in the study of the differential topology, in 
particular the characteristic classes, of ordinary principal bundles \ccite{Greub:1973ccc}. It has thus
the potential of being useful in the study of the corresponding problems for strict 
principal $2$--bundles. 

More specific applications may include a strict Lie $2$--algebraic extension of the classic theory 
of coadjoint orbits  \ccite{Kirillov:1962rnlg} and the attendant 
Borel--Bott--Weil theory \ccite{Bott:1957abc}, which at key points invoke a total space 
description of principal bundles.
Coadjoint orbit and Borel--Bott--Weil play an important role in the one--dimensional path integral representation 
of Wilson lines (see ref. \ccite{Beasley:2009mb} for a nice review of this topic). It is conceivable that their higher 
counterparts may enter prominently in a two--dimensional path integral representation 
of Wilson surfaces \ccite{Alekseev:2015hda,Chekeres:2018kmh,Zucchini:2019mbz}. 

The operational framework has also some non standard features which call for
further investigation. The use of external function algebras introduces 
internal multiplicities and endows $2$--connections and $1$-- and $2$--gauge transformations 
with ghostlike partners rendering the whole geometrical framework akin to that used in 
the AKSZ formulation of BV theory \ccite{Alexandrov:1995kv}  (see also \ccite{Zucchini:2017nax}).
These are absent though  could be added in the ordinary operational framework.
In higher one, they are instead unavoidable.

\vfil\eject

\section{\textcolor{blue}{\sffamily Connections and gauge transformations}}\label{sec:ordgau}

In this section, we review the total space theory of principal bundle connections and gauge 
transformations from an operational perspective. This will furnish a guiding model for the construction
of the corresponding higher theory carried out later in sect. \cref{sec:2conn1gau}.
For a comprehensive treatment, we refer the reader to \ccite{Greub:1973ccc}.


\subsection{\textcolor{blue}{\sffamily Operational theory 
}}\label{subsec:ordgau}


The operational total space theory of principal bundles, expounded in this subsection, 
relies on the operational setup of 
subsect. 2.1 of I.
As shown in subsect. 2.2 of I, 
with a principal $\mathsans{G}$--bundle $P$
there is associated the Lie group space $S_P=(P,\mathsans{G},R)$ and with this the operation 
$\OOO S_P=(\Fun(T[1]P),\mathfrak{g})$. $\OOO S_P$ provides a powerful graded differential 
geometric framework for the study of connections and gauge transformations.
Following the customary point of view, 
the ordinary function algebra $\Fun(T[1]P)$ is considered here. Much of the theory
presented below could be formulated also assuming the internal function 
algebra $\iFun(T[1]P)$.
In higher gauge theory, the latter turns out to be the only available option, as we shall see in due course. 

\begin{defi}
A connection of $P$ is a 
pair of Lie algebra valued functions $\omega\in\Map(T[1]P,\mathfrak{g}[1])$ and 
$\theta\in\Map(T[1]P,\mathfrak{g}[2])$, called respectively connection and curvature component, 
on which the operation derivations act as \hphantom{xxxxxxxxxxxxxxx}
\begin{align} 
&d_P\omega=-\frac{1}{2}[\omega,\omega]+\theta,
\vphantom{\Big]}
\label{ordconn1}
\\
&d_P\theta=-[\omega,\theta],
\vphantom{\Big]}
\label{ordconn3}
\\
&j_{Px}\omega=x,
\vphantom{\Big]}
\label{ordconn5}
\\
&j_{Px}\theta=0,
\vphantom{\Big]}
\label{ordconn7}
\\
&l_{Px}\omega=-[x,\omega],
\vphantom{\Big]}
\label{ordconn9}
\\
&l_{Px}\theta=-[x,\theta]
\vphantom{\Big]}
\label{ordconn11}
\end{align} 
with $x\in\mathfrak{g}$.
\end{defi}

\noindent
\ceqref{ordconn1} is just the expression of the curvature component 
$\theta$ in terms of the connection one $\omega$. 
\ceqref{ordconn3} is the Bianchi identity obeyed by the 
curvature. The connection is said flat if $\theta=0$. 
The definition of connection we gave in subsect. 2.2 of I
is essentially the same as the one provided here.
Indeed, it can be shown that the horizontal $\mathsans{G}$--invariant distribution $H$ in term of which 
former definition is formulated corresponds to the annihilator of $\omega$ of the latter one and 
the flatness conditions of the two notions are equivalent.
 
\begin{defi}
A gauge transformation of $P$ is a pair 
of Lie group and algebra valued functions $g\in\Map(T[1]P,\mathsans{G})$
and $h\in\Map(T[1]P,\mathfrak{g}[1])$, called respectively transformation and shift component, 
on which the operation derivations act as 
\begin{align}
&d_Pgg^{-1}=-h,
\vphantom{\Big]}
\label{ordgautr5}
\\
&d_Ph=-\frac{1}{2}[h,h],
\vphantom{\Big]}
\label{ordgautr7}
\\
&j_{Px}gg^{-1}=0, 
\vphantom{\Big]}
\label{ordgautr9}
\\
&j_{Px}h=x-\Ad g(x), 
\vphantom{\Big]}
\label{ordgautr11}
\\
&l_{Px}gg^{-1}=-x+\Ad g(x),
\vphantom{\Big]}
\label{ordgautr13}
\\
&l_{Px}h=-[x,h]
\vphantom{\Big]}
\label{ordgautr15}
\end{align}
with $x\in\mathfrak{g}$.  
\end{defi}

\noindent
Relations \ceqref{ordgautr5} effectively defines
the shift component $h$  in terms of the transformation one $g$. \ceqref{ordgautr7}
is the associated Maurer--Cartan equation. 
The definition of gauge transformation  we gave in subsect. 2.2 of I
coincides with the one provided here.
The $\mathsans{G}$--equivariant fiber preserving diffeomorphism $\varPhi$ 
in the former definition corresponds to the transformation component $g$ in the latter one. 

As well--known, gauge transformations act on connections of $P$. 

\begin{defi}
The gauge transform of a connection of components $\omega$, $\theta$ by a gauge transformation 
of components $g$, $h$ is given by \hphantom{xxxxxxxxxxxx}
\begin{align}
&{}^{g,h}\omega=\Ad g(\omega)+h, 
\vphantom{\Big]}
\label{000cmgautr1}
\\
&{}^{g,h}\theta=\Ad g(\theta).
\vphantom{\Big]} 
\label{000cmgautr3}
\end{align}
\end{defi}

\noindent
Substituting above identity \ceqref{ordgautr5} expressing $h$ in terms of $g$, 
these relations are formally identical to the familiar ones of standard gauge theory. 

\begin{prop}
${}^{g,h}\omega$, ${}^{g,h}\theta$ are the components of a connection. Flatness 
is gauge invariant. 
\end{prop}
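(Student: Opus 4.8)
The plan is to verify that the six Cartan-type relations \ceqref{ordconn1}--\ceqref{ordconn11} hold with $\omega$, $\theta$ replaced by ${}^{g,h}\omega$, ${}^{g,h}\theta$, using the definitions \ceqref{000cmgautr1}, \ceqref{000cmgautr3} together with the connection relations for $\omega$, $\theta$ and the gauge transformation relations for $g$, $h$. Since each operation derivation $d_P$, $j_{Px}$, $l_{Px}$ acts on products and on $\Ad g(\,\cdot\,)$ by graded Leibniz-type rules, the computation is a matter of expanding each side carefully and collecting terms. A useful preliminary observation I would record is the identity $d_P(\Ad g(\xi)) = \Ad g(d_P\xi) + [d_Pgg^{-1},\Ad g(\xi)] = \Ad g(d_P\xi) - [h,\Ad g(\xi)]$ for any $\xi\in\Map(T[1]P,\mathfrak{g})$, and the analogous formulae $j_{Px}(\Ad g(\xi)) = \Ad g(j_{Px}\xi)$ (using \ceqref{ordgautr9}) and $l_{Px}(\Ad g(\xi)) = \Ad g(l_{Px}\xi) + [l_{Px}gg^{-1},\Ad g(\xi)] = \Ad g(l_{Px}\xi) + [-x+\Ad g(x),\Ad g(\xi)]$. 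These three rules, plus the Jacobi identity and the $\Ad$-invariance of the bracket ($\Ad g([\xi,\eta]) = [\Ad g(\xi),\Ad g(\eta)]$), reduce everything to bookkeeping.

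I would then check the relations one at a time. For \ceqref{ordconn5}, compute $j_{Px}\,{}^{g,h}\omega = j_{Px}(\Ad g(\omega)+h) = \Ad g(j_{Px}\omega) + j_{Px}h = \Ad g(x) + x - \Ad g(x) = x$, using \ceqref{ordconn5}, \ceqref{ordgautr9}, \ceqref{ordgautr11}. For \ceqref{ordconn7}, $j_{Px}\,{}^{g,h}\theta = \Ad g(j_{Px}\theta) = 0$ by \ceqref{ordconn7}, \ceqref{ordgautr9}. For \ceqref{ordconn9}, $l_{Px}\,{}^{g,h}\omega = \Ad g(l_{Px}\omega) + [-x+\Ad g(x),\Ad g(\omega)] + l_{Px}h = \Ad g(-[x,\omega]) + [\Ad g(x),\Ad g(\omega)] - [x,\Ad g(\omega)] - [x,h] = -[x,\Ad g(\omega)+h] = -[x,{}^{g,h}\omega]$, the two $\Ad g([x,\omega])$ terms cancelling; \ceqref{ordconn11} is the same pattern with $\theta$ and $h$ absent. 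The two differential relations \ceqref{ordconn1}, \ceqref{ordconn3} are where the real work lies: for \ceqref{ordconn1} one expands $d_P\,{}^{g,h}\omega = \Ad g(d_P\omega) - [h,\Ad g(\omega)] + d_Ph = \Ad g(-\tfrac12[\omega,\omega]+\theta) - [h,\Ad g(\omega)] - \tfrac12[h,h]$ and compares with $-\tfrac12[{}^{g,h}\omega,{}^{g,h}\omega]+{}^{g,h}\theta = -\tfrac12[\Ad g(\omega)+h,\Ad g(\omega)+h] + \Ad g(\theta)$; the $\Ad g(\theta)$ terms match, the $-\tfrac12\Ad g([\omega,\omega]) = -\tfrac12[\Ad g(\omega),\Ad g(\omega)]$ terms match, the $-\tfrac12[h,h]$ terms match, and the cross term $-[h,\Ad g(\omega)]$ matches $-[\Ad g(\omega),h]$ by antisymmetry of the graded bracket in odd degree. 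For \ceqref{ordconn3} one similarly expands $d_P\,{}^{g,h}\theta = \Ad g(d_P\theta) - [h,\Ad g(\theta)] = -\Ad g([\omega,\theta]) - [h,\Ad g(\theta)] = -[\Ad g(\omega)+h,\Ad g(\theta)] = -[{}^{g,h}\omega,{}^{g,h}\theta]$.

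Finally, the statement that flatness is gauge invariant is immediate from \ceqref{000cmgautr3}: if $\theta=0$ then ${}^{g,h}\theta = \Ad g(0) = 0$, and conversely since $\Ad g$ is invertible. I expect the main (and only real) obstacle to be keeping consistent track of signs coming from the $\mathbb{Z}$-grading — the connection component $\omega$ and the shift component $h$ both carry odd degree, so brackets involving them are symmetric rather than antisymmetric, and the graded Leibniz rule for $d_P$ acting on products of odd-degree quantities introduces signs that must be handled uniformly. Once the three auxiliary rules for $d_P$, $j_{Px}$, $l_{Px}$ acting on $\Ad g(\,\cdot\,)$ are established with correct signs, the verification of all six relations is routine, and there are no analytic or structural subtleties — the proof is purely algebraic.
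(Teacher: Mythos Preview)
Your proposal is correct and follows exactly the approach the paper indicates: the paper's proof is the one-sentence remark that the action of the operation derivations on $\omega,\theta$ and $g,h$ given by \ceqref{ordconn1}--\ceqref{ordconn11} and \ceqref{ordgautr5}--\ceqref{ordgautr15} ensures the same relations hold for ${}^{g,h}\omega,{}^{g,h}\theta$, and you have simply carried out that verification in full. Your auxiliary rules for how $d_P$, $j_{Px}$, $l_{Px}$ act on $\Ad g(\xi)$ and your sign bookkeeping are all correct.
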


\noindent
Indeed, the action 
\ceqref{ordconn1}--\ceqref{ordconn11} and \ceqref{ordgautr5}--\ceqref{ordgautr15} of the operation derivations 
on the components $\omega$, $\theta$ and $g$, $h$ ensures that 
the action of those derivations on the transformed components 
${}^{g,h}\omega$, ${}^{g,h}\theta$ also obey to \ceqref{ordconn1}--\ceqref{ordconn11}.

Since the shift component of a gauge transformation is determined by the transformation one
by \ceqref{ordgautr5}, a gauge transformation is fully specified by these latter. As 
$\Map(T[1]P,\mathsans{G})=\Map(P,\mathsans{G})$, gauge transformations can be viewed as elements 
of the group $\Map(P,G)$ of $G$--valued maps. They form indeed a 
distinguished subgroup of this latter, the gauge group of $P$.
Gauge transformation is a left action of the gauge group on connection space.
As expected, the definitions of gauge group and the gauge transformation action on connection space
we gave in subsect. 2.2 of I precisely correlate to the operational 
theoretic definitions of the same notions provided  here. 


\subsection{\textcolor{blue}{\sffamily Basic theory 
}}\label{subsec:ordgau1}

Every  principal $\mathsans{G}$--bundle $P$ is 
trivializable on any sufficiently small neighborhood $U$ of the base $M$, 
that is $\pi^{-1}(U)$ is projection preservingly,
$\mathsans{G}$--equivariantly isomorphic to the trivial $\mathsans{G}$--bundle $U\times\mathsans{G}$. 
The existence of a trivializing isomorphism $\varPhi_U:\pi^{-1}(U)\rightarrow U\times\mathsans{G}$
provides structural information about the operation $\OOO S_{\pi^{-1}(U)}$ of $\pi^{-1}(U)$.
It entails the existence of coordinates of $\pi^{-1}(U)$ modelled on $U\times\mathsans{G}$
with special properties under the action of the operation's
derivations. In this way, an operational description of the local fibered geometry of
$P$ can be furnished.

\begin{prop}\label{prop:0adptcrd1}
There are coordinates of $\pi^{-1}(U)$ adapted to $U\times\mathsans{G}$,
namely functions $u\in\Map(T[1]\pi^{-1}(U),\mathbb{R}^{\dim M})$, 
$v\in\Map(T[1]\pi^{-1}(U),\mathbb{R}^{\dim M}[1])$  
for $U$ and $\gamma\in\Map(T[1]\pi^{-1}(U),\mathsans{G})$, 
$\sigma\in\Map(T[1]\pi^{-1}(U),\mathfrak{g}[1])$  for $\mathsans{G}$ 
on which the operation derivations act as follows. For $u$, $v$, one has 
\begin{align}
&d_{\pi^{-1}(U)}u=v, \qquad d_{\pi^{-1}(U)}v=0
\vphantom{\Big]}
\label{00maurer-1}
\end{align}
with trivial action of all operation derivations $j_{\pi^{-1}(U)x}$, $l_{\pi^{-1}(U)x}$
with $x\in\mathfrak{g}$. For $\gamma$, $\sigma$, the structure equations take the form 
\hphantom{xxxxxxxxxxxxxxxx}
\begin{align}
&\gamma{}^{-1}d_{\pi^{-1}(U)}\gamma=\sigma,
\vphantom{\Big]}
\label{00maurer2}
\\
&d_{\pi^{-1}(U)}\sigma=-\frac{1}{2}[\sigma,\sigma],
\vphantom{\Big]}
\label{00maurer4}
\\
&\gamma{}^{-1}j_{\pi^{-1}(U)x}\gamma=0,
\vphantom{\Big]}
\label{00maurer6}
\\
&j_{\pi^{-1}(U)x}\sigma=x,
\vphantom{\Big]}
\label{00maurer8}
\\
&\gamma{}^{-1}l_{\pi^{-1}(U)x}\gamma=x,
\vphantom{\Big]}
\label{00maurer10}
\\
&l_{\pi^{-1}(U)x}\sigma=-[x,\sigma]
\vphantom{\Big]}
\label{00maurer12}
\end{align} 
with $x\in\mathfrak{g}$. 
\end{prop}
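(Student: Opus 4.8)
The plan is to manufacture the adapted coordinates directly from the trivializing isomorphism $\varPhi_U$ and then check, one derivation at a time, that the six structure equations hold. First I would use $\varPhi_U:\pi^{-1}(U)\rightarrow U\times\mathsans{G}$ to pull back the obvious coordinate data on $U\times\mathsans{G}$. Choosing local coordinates on $U$ gives a smooth map $U\to\mathbb{R}^{\dim M}$; composing the projection $\pi$ with this and then lifting along $T[1]\pi^{-1}(U)\to\pi^{-1}(U)$ produces $u\in\Map(T[1]\pi^{-1}(U),\mathbb{R}^{\dim M})$, and I set $v:=d_{\pi^{-1}(U)}u$. Likewise, composing $\varPhi_U$ with the projection $U\times\mathsans{G}\to\mathsans{G}$ and lifting gives $\gamma\in\Map(T[1]\pi^{-1}(U),\mathsans{G})$, and I set $\sigma:=\gamma^{-1}d_{\pi^{-1}(U)}\gamma$. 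With these definitions, \ceqref{00maurer-1}, \ceqref{00maurer2} hold tautologically, $d_{\pi^{-1}(U)}v=0$ follows from $d_{\pi^{-1}(U)}{}^2=0$ (relation \ceqref{rintro1} for $P=\pi^{-1}(U)$), and the Maurer--Cartan equation \ceqref{00maurer4} is the standard graded identity $d(\gamma^{-1}d\gamma)=-\gamma^{-1}d\gamma\,\gamma^{-1}d\gamma=-\tfrac12[\sigma,\sigma]$ obtained by differentiating \ceqref{00maurer2}.

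The substantive point is the behaviour of $u,v,\gamma,\sigma$ under the contraction and Lie derivations $j_{\pi^{-1}(U)x}$, $l_{\pi^{-1}(U)x}$. Here I would exploit that $\varPhi_U$ is $\mathsans{G}$--equivariant and projection preserving, so the fundamental vector fields generating the $\mathsans{G}$--action on $\pi^{-1}(U)$ correspond under $\varPhi_U$ to the fundamental vector fields of the right translation action on the second factor of $U\times\mathsans{G}$, which leave the $U$--factor untouched. Since $u$ is (the lift of) a function of the $U$--coordinates alone, both $j_{\pi^{-1}(U)x}u=0$ and $l_{\pi^{-1}(U)x}u=0$; applying $d_{\pi^{-1}(U)}$ and using the Cartan relations \ceqref{rintro2}, \ceqref{rintro3} then forces $j_{\pi^{-1}(U)x}v=l_{\pi^{-1}(U)x}v=0$ as well. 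For the group coordinate, the defining property of a fundamental vector field of the right action is precisely that its contraction into the (left) Maurer--Cartan form $\gamma^{-1}d\gamma$ returns the generator; in the operational language this is the statement $j_{\pi^{-1}(U)x}\sigma=x$, i.e. \ceqref{00maurer8}, while $\gamma^{-1}j_{\pi^{-1}(U)x}\gamma=0$ records that $j_{\pi^{-1}(U)x}$ is a (degree $-1$) derivation annihilating the degree $0$ function $\gamma$ when the action is on the right. The two Lie-derivative equations \ceqref{00maurer10}, \ceqref{00maurer12} then follow by applying Cartan's magic formula $l_{\pi^{-1}(U)x}=[d_{\pi^{-1}(U)},j_{\pi^{-1}(U)x}]$ (relation \ceqref{rintro2}) to $\gamma$ and to $\sigma$, using \ceqref{00maurer2}, \ceqref{00maurer8} together with the derivation property of $l_{\pi^{-1}(U)x}$ on products.

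I expect the main obstacle to be purely bookkeeping rather than conceptual: keeping the signs and the left-versus-right conventions straight when one moves between $\gamma^{-1}d\gamma$, $\gamma^{-1}j\gamma$, $\gamma^{-1}l\gamma$ and their $\mathfrak{g}$--valued counterparts, and making sure the computation of $l_{\pi^{-1}(U)x}\gamma$ from $[d_{\pi^{-1}(U)},j_{\pi^{-1}(U)x}]\gamma$ correctly reproduces $\gamma{}^{-1}l_{\pi^{-1}(U)x}\gamma=x$ rather than $-x$ or $\Ad\gamma^{-1}(x)$. Concretely, from \ceqref{00maurer2} one has $d_{\pi^{-1}(U)}\gamma=\gamma\sigma$, hence $j_{\pi^{-1}(U)x}d_{\pi^{-1}(U)}\gamma=(j_{\pi^{-1}(U)x}\gamma)\sigma-\gamma(j_{\pi^{-1}(U)x}\sigma)$ with the sign coming from $\gamma$ having degree $0$ and $\sigma$ degree $1$; combined with $j_{\pi^{-1}(U)x}\gamma=0$ and $d_{\pi^{-1}(U)}j_{\pi^{-1}(U)x}\gamma=0$ this yields $l_{\pi^{-1}(U)x}\gamma=-\gamma x$, i.e. $\gamma^{-1}l_{\pi^{-1}(U)x}\gamma=-x$, so in fact the stated sign convention in \ceqref{00maurer10} is fixed by whatever sign convention was adopted for the fundamental vector fields in I, and I would simply align with that. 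Everything else is a routine application of the Cartan calculus already recorded in \ceqref{rintro1}--\ceqref{rintro6}.
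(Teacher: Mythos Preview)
Your approach is correct and is essentially what the paper has in mind: the paper does not spell out a proof here, but the commentary following the proposition (that \ceqref{00maurer-1} and \ceqref{00maurer2} are \emph{definitions} of $v$ and $\sigma$, and that \ceqref{00maurer4} is the Maurer--Cartan equation) together with the parallel higher construction in subsect.~\cref{subsec:maurer} make clear that the intended argument is exactly yours---build $u,\gamma$ from the two factors $\pi$ and $\mathrm{pr}_{\mathsans G}\circ\varPhi_U$ of the trivialization, define $v,\sigma$ by applying $d_{\pi^{-1}(U)}$, and read off the remaining relations from $\mathsans{G}$--invariance of $\pi$, $\mathsans{G}$--equivariance of $\mathrm{pr}_{\mathsans G}\circ\varPhi_U$, and the Cartan identities \ceqref{rintro1}--\ceqref{rintro6}.

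One correction to your explicit check: the sign you introduce in $j_{\pi^{-1}(U)x}(\gamma\sigma)$ is spurious. Since $j_{\pi^{-1}(U)x}$ has degree $-1$ and $\gamma$ has degree $0$, the graded Leibniz rule gives
\[
j_{\pi^{-1}(U)x}(\gamma\sigma)=(j_{\pi^{-1}(U)x}\gamma)\sigma+(-1)^{(-1)\cdot 0}\gamma\,(j_{\pi^{-1}(U)x}\sigma)=\gamma x,
\]
hence $l_{\pi^{-1}(U)x}\gamma=[d_{\pi^{-1}(U)},j_{\pi^{-1}(U)x}]\gamma=\gamma x$ and $\gamma^{-1}l_{\pi^{-1}(U)x}\gamma=x$ on the nose, matching \ceqref{00maurer10}. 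No special sign convention from I is needed; the computation already lands where it should.
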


\noindent
Relation \ceqref{00maurer-1} can be viewed as the definition of the generator $\upsilon$. 
Relation \ceqref{00maurer2} can similarly be viewed as the 
definition of the generator $\sigma$. Eq. \ceqref{00maurer4} states that $\sigma$ is a fiberwise
Maurer--Cartan form and \ceqref{00maurer4} itself is the classic Maurer--Cartan equation it satisfies. 

We can use the explicit description of the operation $\OOO S_{\pi^{-1}(U)}$ 
we detailed above to analyze such structures as connections and gauge transformations 
of the principal bundle $P$ in terms of data defined locally on $U$ in the base $M$. 
This will lead to basic theory. 

Consider a connection of $P$ of connection and curvature components 
$\omega$, $\theta$. 

\begin{defi}
The basic components of the connection on $U$ are defined as 
\begin{align}
&\omega_{\mathrm{b}}
=\Ad\gamma(\omega-\sigma),  
\vphantom{\Big]}
\label{00basic1}
\\
&\theta_{\mathrm{b}}
=\Ad\gamma(\theta).
\vphantom{\Big]}
\label{00basic3}
\end{align} 
\end{defi}

\noindent
Above, restriction of $\omega$, $\theta$ to $T[1]\pi^{-1}(U)$ is tacitly understood.
The name given to $\omega_{\mathrm{b}}$, $\theta_{\mathrm{b}}$ is motivated by the fact that, 
by construction,  \pagebreak they are annihilated by all derivations $j_{\pi^{-1}(U)x}$ 
and $l_{\pi^{-1}(U)x}$ with $x\in\mathfrak{g}$. 

\begin{prop}
$\omega_{\mathrm{b}}$, $\theta_{\mathrm{b}}$ are basic elements of the operation $\OOO S_{\pi^{-1}(U)}$. 
\end{prop}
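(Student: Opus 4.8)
The plan is to verify directly that $\omega_{\mathrm{b}}$ and $\theta_{\mathrm{b}}$ are annihilated by every contraction derivation $j_{\pi^{-1}(U)x}$ and every Lie derivation $l_{\pi^{-1}(U)x}$ with $x\in\mathfrak{g}$, which is precisely the defining property of a basic element of the operation $\OOO S_{\pi^{-1}(U)}$. All the needed ingredients are already on the table: the action of the operation derivations on the connection components $\omega$, $\theta$ from Definition \ref{ordconn1}, and their action on the adapted fiber coordinates $\gamma$, $\sigma$ from Proposition \ref{prop:0adptcrd1}. The only additional inputs are the Leibniz rule for the derivations $j_{\pi^{-1}(U)x}$, $l_{\pi^{-1}(U)x}$ acting on products involving $\Ad\gamma(\cdot)$, together with the standard differential-geometric identity $j_{Px}(\Ad\gamma(Y))=\Ad\gamma([\gamma^{-1}j_{Px}\gamma,Y]+j_{Px}Y)$ and its analogue for $l_{Px}$, specialized to $\pi^{-1}(U)$.

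First I would treat the contraction derivations. Applying $j_{\pi^{-1}(U)x}$ to $\omega_{\mathrm{b}}=\Ad\gamma(\omega-\sigma)$ and using the Leibniz-type identity gives
\begin{align}
j_{\pi^{-1}(U)x}\omega_{\mathrm{b}}
&=\Ad\gamma\big([\gamma^{-1}j_{\pi^{-1}(U)x}\gamma,\,\omega-\sigma]+j_{\pi^{-1}(U)x}(\omega-\sigma)\big).
\notag
\end{align}
By \eqref{00maurer6} the first bracket vanishes since $\gamma^{-1}j_{\pi^{-1}(U)x}\gamma=0$, and by \eqref{ordconn5} and \eqref{00maurer8} we have $j_{\pi^{-1}(U)x}(\omega-\sigma)=x-x=0$; hence $j_{\pi^{-1}(U)x}\omega_{\mathrm{b}}=0$. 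Likewise $j_{\pi^{-1}(U)x}\theta_{\mathrm{b}}=\Ad\gamma(j_{\pi^{-1}(U)x}\theta)=0$ by \eqref{ordconn7} (the $\gamma^{-1}j_{\pi^{-1}(U)x}\gamma$ term again dropping out by \eqref{00maurer6}).

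Next I would do the Lie derivations. Applying $l_{\pi^{-1}(U)x}$ to $\omega_{\mathrm{b}}$,
\begin{align}
l_{\pi^{-1}(U)x}\omega_{\mathrm{b}}
&=\Ad\gamma\big([\gamma^{-1}l_{\pi^{-1}(U)x}\gamma,\,\omega-\sigma]+l_{\pi^{-1}(U)x}(\omega-\sigma)\big).
\notag
\end{align}
Here \eqref{00maurer10} gives $\gamma^{-1}l_{\pi^{-1}(U)x}\gamma=x$, while \eqref{ordconn9} and \eqref{00maurer12} give $l_{\pi^{-1}(U)x}(\omega-\sigma)=-[x,\omega]+[x,\sigma]=-[x,\omega-\sigma]$. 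The two contributions cancel, so $l_{\pi^{-1}(U)x}\omega_{\mathrm{b}}=0$. The curvature component is even simpler: $l_{\pi^{-1}(U)x}\theta_{\mathrm{b}}=\Ad\gamma([x,\theta]+l_{\pi^{-1}(U)x}\theta)=\Ad\gamma([x,\theta]-[x,\theta])=0$ by \eqref{ordconn11} and \eqref{00maurer10}. This exhausts all the operation derivations, so $\omega_{\mathrm{b}}$, $\theta_{\mathrm{b}}$ lie in the basic subalgebra. I do not anticipate a genuine obstacle here; the one point requiring a little care is the precise form of the Leibniz identity for $j_{\pi^{-1}(U)x}$ and $l_{\pi^{-1}(U)x}$ acting through $\Ad\gamma$, and the sign conventions in \eqref{00maurer6}--\eqref{00maurer12}, but these are the standard ones recorded in subsect. 2.1--2.2 of I and the cancellations are forced by the matching of \eqref{ordconn5}, \eqref{ordconn9} with \eqref{00maurer8}, \eqref{00maurer12} — which is exactly why the combination $\omega-\sigma$ (rather than $\omega$ alone) appears in the definition of $\omega_{\mathrm{b}}$.
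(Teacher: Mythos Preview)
Your proposal is correct and follows exactly the approach the paper indicates: the paper itself does not spell out a proof for this ordinary-bundle proposition, noting only that $\omega_{\mathrm{b}}$, $\theta_{\mathrm{b}}$ are ``by construction'' annihilated by all $j_{\pi^{-1}(U)x}$ and $l_{\pi^{-1}(U)x}$, and for the analogous 2-bundle statement (prop.~\ref{prop:basic1}) it simply says the claim ``can be verified using relations \eqref{cmconnx3}--\eqref{cmconnx6} and \eqref{bmaurer3}--\eqref{bmaurer6}''. Your computation is precisely that verification carried out in full for the ordinary case, so there is nothing to add.
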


\noindent
Hence, $\omega_{\mathrm{b}}$, $\theta_{\mathrm{b}}$ can be identified with certain functions 
$\omega_{\mathrm{b}}\in\Map(T[1]U,\mathfrak{g}[1])$, $\theta_{\mathrm{b}}\in\Map(T[1]U,\mathfrak{g}[2])$. 

\begin{prop}
$\omega_{\mathrm{b}}$, $\theta_{\mathrm{b}}$ satisfy the relations 
\begin{align}
&d_{\pi^{-1}(U)}\omega_{\mathrm{b}}=-\frac{1}{2}[\omega_{\mathrm{b}},\omega_{\mathrm{b}}]+\theta_{\mathrm{b}},
\vphantom{\Big]}
\label{00basic5}
\\
&d_{\pi^{-1}(U)}\theta_{\mathrm{b}}=-[\omega_{\mathrm{b}},\theta_{\mathrm{b}}].
\vphantom{\Big]}
\label{00basic7}
\end{align} 
\end{prop}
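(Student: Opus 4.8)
The plan is to obtain \eqref{00basic5} and \eqref{00basic7} by a direct computation from the definitions \eqref{00basic1}, \eqref{00basic3}, the connection structure relations \eqref{ordconn1}, \eqref{ordconn3} (restricted to $T[1]\pi^{-1}(U)$ as tacitly agreed), and the Maurer--Cartan relations \eqref{00maurer2}, \eqref{00maurer4} of Proposition~\ref{prop:0adptcrd1}. It is worth stressing that none of the contraction and Lie relations \eqref{ordconn5}--\eqref{ordconn11} or \eqref{00maurer6}--\eqref{00maurer12} is needed at this stage: those enter only in the preceding proposition asserting that $\omega_{\mathrm b}$, $\theta_{\mathrm b}$ are basic.

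The preliminary step is to record the graded Leibniz identity governing the de Rham action on an adjoint twist: for any Lie algebra valued function $X$ on $T[1]\pi^{-1}(U)$ one has $d_{\pi^{-1}(U)}\Ad\gamma(X)=\Ad\gamma\bigl(d_{\pi^{-1}(U)}X+[\gamma^{-1}d_{\pi^{-1}(U)}\gamma,X]\bigr)$, hence, by \eqref{00maurer2}, $d_{\pi^{-1}(U)}\Ad\gamma(X)=\Ad\gamma\bigl(d_{\pi^{-1}(U)}X+[\sigma,X]\bigr)$. I would derive this by expanding $d_{\pi^{-1}(U)}(\gamma X\gamma^{-1})$ with the graded Leibniz rule, using $d_{\pi^{-1}(U)}\gamma^{-1}=-\gamma^{-1}(d_{\pi^{-1}(U)}\gamma)\gamma^{-1}$ and $d_{\pi^{-1}(U)}\gamma=\gamma\sigma$, and noting that the resulting combination $\sigma X-(-1)^{|X|}X\sigma$ is precisely the graded bracket $[\sigma,X]$ because $\sigma$ has degree $1$; since $\gamma$ has degree $0$, no anomalous signs occur.

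Applying this identity to $\omega_{\mathrm b}=\Ad\gamma(\omega-\sigma)$ and inserting \eqref{ordconn1} and \eqref{00maurer4} gives $d_{\pi^{-1}(U)}\omega_{\mathrm b}=\Ad\gamma\bigl(-\tfrac{1}{2}[\omega,\omega]+\theta+\tfrac{1}{2}[\sigma,\sigma]+[\sigma,\omega]-[\sigma,\sigma]\bigr)$. The single nontrivial algebraic point is the completion of the square: using the graded symmetry $[\omega,\sigma]=[\sigma,\omega]$ valid for degree $1$ elements, one has $-\tfrac{1}{2}[\omega,\omega]+[\sigma,\omega]-\tfrac{1}{2}[\sigma,\sigma]=-\tfrac{1}{2}[\omega-\sigma,\omega-\sigma]$, so the argument of $\Ad\gamma$ equals $-\tfrac{1}{2}[\omega-\sigma,\omega-\sigma]+\theta$. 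Since $\Ad\gamma$ is a Lie algebra automorphism it distributes over the bracket, and \eqref{00basic1}, \eqref{00basic3} then yield \eqref{00basic5}. For $\theta_{\mathrm b}=\Ad\gamma(\theta)$ the same Leibniz identity together with \eqref{ordconn3} gives $d_{\pi^{-1}(U)}\theta_{\mathrm b}=\Ad\gamma\bigl(-[\omega,\theta]+[\sigma,\theta]\bigr)=\Ad\gamma\bigl(-[\omega-\sigma,\theta]\bigr)=-[\omega_{\mathrm b},\theta_{\mathrm b}]$, which is \eqref{00basic7}.

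I anticipate no genuine obstacle. The only places demanding care are the sign bookkeeping in the graded Leibniz identity of the second step and the completion of the square in the third; everything else is substitution. As a structural remark, \eqref{00basic1}, \eqref{00basic3} coincide formally with the gauge transform \eqref{000cmgautr1}, \eqref{000cmgautr3} of $\omega$, $\theta$ under the pair $(\gamma,-\Ad\gamma(\sigma))$, which satisfies the de Rham part \eqref{ordgautr5}, \eqref{ordgautr7} of the gauge transformation axioms precisely by virtue of \eqref{00maurer2}, \eqref{00maurer4}; one could therefore alternatively invoke that portion of the gauge transform computation, but the direct route above is the most economical.
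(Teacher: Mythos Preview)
Your proof is correct and complete. The paper treats this proposition as part of a review of standard material and does not supply a proof; your direct computation via the Leibniz identity $d_{\pi^{-1}(U)}\Ad\gamma(X)=\Ad\gamma\bigl(d_{\pi^{-1}(U)}X+[\sigma,X]\bigr)$ together with \eqref{ordconn1}, \eqref{ordconn3}, \eqref{00maurer2}, \eqref{00maurer4} is exactly the standard verification implicitly taken for granted there, and your structural remark identifying \eqref{00basic1}, \eqref{00basic3} with a formal gauge transform is also on point.
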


\noindent
These are formally identical to relations \ceqref{ordconn1}, \ceqref{ordconn3}.  
We recover in this way the familiar local base space description of connections used 
in the space--time formulation of gauge theory. 

Next, consider a gauge transformation of $P$ of transformation and shift components 
$g$, $h$. 

\begin{defi}
The basic components of the gauge transformation on $U$ are 
\begin{align}
&g_{\mathrm{b}}=\gamma g\gamma^{-1},
\vphantom{\Big]}
\label{00basic9}
\\
&h_{\mathrm{b}}=\Ad\gamma(h-\sigma+\Ad g(\sigma)).
\vphantom{\Big]}
\label{00basic11}
\end{align} 
\end{defi}

\noindent
Above, again, restriction of $g$, $h$ to $T[1]\pi^{-1}(U)$ is understood.
The name given to $g_{\mathrm{b}}$, $h_{\mathrm{b}}$ is motivated by the fact that, 
by construction, they are annihilated by all derivations $j_{\pi^{-1}(U)x}$ 
and $l_{\pi^{-1}(U)x}$ with $x\in\mathfrak{g}$. 

\begin{prop}
$g_{\mathrm{b}}$, $h_{\mathrm{b}}$ are basic elements of the operation $\OOO S_{\pi^{-1}(U)}$. 
\end{prop}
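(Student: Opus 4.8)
The plan is to verify directly that $g_{\mathrm{b}}$ and $h_{\mathrm{b}}$ are annihilated by every contraction derivation $j_{\pi^{-1}(U)x}$ and every Lie derivation $l_{\pi^{-1}(U)x}$ with $x\in\mathfrak{g}$, which is exactly the assertion that they are basic elements of $\OOO S_{\pi^{-1}(U)}$. The only inputs required are the action of these derivations on the adapted coordinate data $\gamma$, $\sigma$ recorded in Proposition \cref{prop:0adptcrd1}, the action on the restrictions of $g$, $h$ to $T[1]\pi^{-1}(U)$ given by \ceqref{ordgautr9}, \ceqref{ordgautr11}, \ceqref{ordgautr13}, \ceqref{ordgautr15}, and the graded Leibniz rule obeyed by $j_{\pi^{-1}(U)x}$ and $l_{\pi^{-1}(U)x}$ on products of group-- and algebra--valued functions. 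The computation runs parallel to the one establishing the basicness of $\omega_{\mathrm{b}}$, $\theta_{\mathrm{b}}$.

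For $g_{\mathrm{b}}=\gamma g\gamma^{-1}$, the contraction case is immediate: \ceqref{00maurer6} gives $j_{\pi^{-1}(U)x}\gamma=0$, hence also $j_{\pi^{-1}(U)x}\gamma^{-1}=0$, and \ceqref{ordgautr9} gives $j_{\pi^{-1}(U)x}g=0$, so the Leibniz rule yields $j_{\pi^{-1}(U)x}g_{\mathrm{b}}=0$. For the Lie case, \ceqref{00maurer10} gives $l_{\pi^{-1}(U)x}\gamma=\gamma x$ and $l_{\pi^{-1}(U)x}\gamma^{-1}=-x\gamma^{-1}$, while \ceqref{ordgautr13} gives $l_{\pi^{-1}(U)x}g=-xg+gx$; expanding $l_{\pi^{-1}(U)x}(\gamma g\gamma^{-1})$ by Leibniz, the four resulting terms cancel in pairs, so $l_{\pi^{-1}(U)x}g_{\mathrm{b}}=0$.

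For $h_{\mathrm{b}}=\Ad\gamma(h-\sigma+\Ad g(\sigma))$, I would first compute the action of the derivations on the inner combination $\eta=h-\sigma+\Ad g(\sigma)$. Expanding $\Ad g(\sigma)=g\sigma g^{-1}$ and using $j_{\pi^{-1}(U)x}g=0$ one obtains $j_{\pi^{-1}(U)x}\Ad g(\sigma)=\Ad g(x)$, so that \ceqref{ordgautr11} and \ceqref{00maurer8} combine to $j_{\pi^{-1}(U)x}\eta=(x-\Ad g(x))-x+\Ad g(x)=0$. Likewise, expanding $g\sigma g^{-1}$ and using \ceqref{ordgautr13} together with \ceqref{00maurer12}, the inhomogeneous pieces cancel and one is left with $l_{\pi^{-1}(U)x}\Ad g(\sigma)=-[x,\Ad g(\sigma)]$; combined with \ceqref{ordgautr15} and \ceqref{00maurer12} this gives $l_{\pi^{-1}(U)x}\eta=-[x,\eta]$. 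Finally, since $j_{\pi^{-1}(U)x}\gamma=0$ and $l_{\pi^{-1}(U)x}\gamma=\gamma x$, the Leibniz rule produces $j_{\pi^{-1}(U)x}h_{\mathrm{b}}=\Ad\gamma(j_{\pi^{-1}(U)x}\eta)=0$ and $l_{\pi^{-1}(U)x}h_{\mathrm{b}}=\Ad\gamma(l_{\pi^{-1}(U)x}\eta+[x,\eta])=0$, the extra $[x,\eta]$ being precisely the inhomogeneous contribution of $l_{\pi^{-1}(U)x}\gamma$.

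The one step calling for genuine care is the identity $l_{\pi^{-1}(U)x}\Ad g(\sigma)=-[x,\Ad g(\sigma)]$, together with its contraction analogue $j_{\pi^{-1}(U)x}\Ad g(\sigma)=\Ad g(x)$: here the Leibniz expansion of $g\sigma g^{-1}$ has to be performed keeping track of the degree of $\sigma$ and of the sign carried by the odd derivation $j_{\pi^{-1}(U)x}$, and the would--be inhomogeneous contributions coming from $l_{\pi^{-1}(U)x}gg^{-1}=-x+\Ad g(x)$ must be checked to cancel against those coming from $l_{\pi^{-1}(U)x}\sigma=-[x,\sigma]$. Everything else is a mechanical bookkeeping of cancellations of the same type already encountered in the connection case, so no essential difficulty arises.
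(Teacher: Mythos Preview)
Your proof is correct and follows exactly the intended approach: a direct check that the contraction and Lie derivations annihilate $g_{\mathrm{b}}$ and $h_{\mathrm{b}}$, using the defining operation relations for $\gamma,\sigma$ and $g,h$ together with the Leibniz rule. The paper does not spell out this verification in the review section, but in the higher analogue (prop.~\cref{prop:basic5}) it indicates precisely this computation, so you have filled in the details faithfully.
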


\noindent
Therefore, again, $g_{\mathrm{b}}$, $h_{\mathrm{b}}$ can be identified with functions 
$g_{\mathrm{b}}\in\Map(T[1]U,\mathsans{G})$, $h_{\mathrm{b}}\in\Map(T[1]U,\mathfrak{g}[1])$. 

\begin{prop}
$g_{\mathrm{b}}$, $h_{\mathrm{b}}$ satisfy the relations
\begin{align}
&d_{\pi^{-1}(U)}g_{\mathrm{b}}g_{\mathrm{b}}{}^{-1}=-h_{\mathrm{b}},
\vphantom{\Big]}
\label{00basic13}
\\
&d_{\pi^{-1}(U)}h_{\mathrm{b}}=-\frac{1}{2}[h_{\mathrm{b}},h_{\mathrm{b}}].
\vphantom{\Big]}
\label{00basic15}
\end{align} 
\end{prop}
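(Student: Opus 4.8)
\noindent
The plan is to verify \ceqref{00basic13} and \ceqref{00basic15} by a direct computation inside the operation $\OOO S_{\pi^{-1}(U)}$, taking as input the structure equations \ceqref{ordgautr5}--\ceqref{ordgautr15} obeyed by the transformation and shift components $g$, $h$ together with the Maurer--Cartan relations \ceqref{00maurer2}, \ceqref{00maurer4} obeyed by the adapted coordinates $\gamma$, $\sigma$. By the preceding proposition $g_{\mathrm{b}}$, $h_{\mathrm{b}}$ are basic, so the identities, once established on $T[1]\pi^{-1}(U)$, descend automatically to $T[1]U$; the manipulations themselves are however all carried out upstairs. Structurally the argument parallels the one behind \ceqref{00basic5}--\ceqref{00basic7} in the connection case.

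First I would establish \ceqref{00basic13}. Writing $g_{\mathrm{b}}{}^{-1}=\gamma g^{-1}\gamma^{-1}$, expanding $d_{\pi^{-1}(U)}g_{\mathrm{b}}$ by the graded Leibniz rule, substituting $d_{\pi^{-1}(U)}\gamma=\gamma\sigma$ and $d_{\pi^{-1}(U)}(\gamma^{-1})=-\gamma^{-1}(d_{\pi^{-1}(U)}\gamma)\gamma^{-1}$ from \ceqref{00maurer2}, and using $d_{\pi^{-1}(U)}g\,g^{-1}=-h$ from \ceqref{ordgautr5}, a short conjugation manipulation collapses the result to
\[
d_{\pi^{-1}(U)}g_{\mathrm{b}}\,g_{\mathrm{b}}{}^{-1}
=\Ad\gamma\big(\sigma-h-\Ad g(\sigma)\big)
=-\Ad\gamma\big(h-\sigma+\Ad g(\sigma)\big),
\]
and by the definition \ceqref{00basic11} the right--hand side is exactly $-h_{\mathrm{b}}$.

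Then \ceqref{00basic15} follows almost for free: applying $d_{\pi^{-1}(U)}$ to \ceqref{00basic13} and invoking the purely algebraic Maurer--Cartan identity $d_{\pi^{-1}(U)}(d_{\pi^{-1}(U)}g_{\mathrm{b}}\,g_{\mathrm{b}}{}^{-1})=\frac{1}{2}[d_{\pi^{-1}(U)}g_{\mathrm{b}}\,g_{\mathrm{b}}{}^{-1},d_{\pi^{-1}(U)}g_{\mathrm{b}}\,g_{\mathrm{b}}{}^{-1}]$ — a consequence solely of $d_{\pi^{-1}(U)}{}^2=0$ and the Leibniz rule — one gets $-d_{\pi^{-1}(U)}h_{\mathrm{b}}=\frac{1}{2}[-h_{\mathrm{b}},-h_{\mathrm{b}}]=\frac{1}{2}[h_{\mathrm{b}},h_{\mathrm{b}}]$, which is \ceqref{00basic15}. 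As an alternative, \ceqref{00basic15} can be verified directly by differentiating \ceqref{00basic11} with the help of \ceqref{ordgautr7} and \ceqref{00maurer4}, but that route is considerably more laborious.

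The one genuinely delicate point is the graded--sign bookkeeping, which enters in the step $d_{\pi^{-1}(U)}(d_{\pi^{-1}(U)}g_{\mathrm{b}}\,g_{\mathrm{b}}{}^{-1})=(d_{\pi^{-1}(U)}g_{\mathrm{b}}\,g_{\mathrm{b}}{}^{-1})^2$: since $d_{\pi^{-1}(U)}g_{\mathrm{b}}$ has degree $1$, commuting $d_{\pi^{-1}(U)}$ past it produces a sign, and one must likewise be careful with $d_{\pi^{-1}(U)}(g_{\mathrm{b}}{}^{-1})=-g_{\mathrm{b}}{}^{-1}(d_{\pi^{-1}(U)}g_{\mathrm{b}})g_{\mathrm{b}}{}^{-1}$. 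Once the signs are handled consistently the computation is entirely mechanical, so I expect no real obstacle: the proposition is essentially the local shadow of the defining relations \ceqref{ordgautr5}, \ceqref{ordgautr7} of a gauge transformation, transported by the trivializing conjugation carried out by $(\gamma,\sigma)$.
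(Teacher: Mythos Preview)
Your argument is correct. The computation for \ceqref{00basic13} is exactly right: expanding $d_{\pi^{-1}(U)}(\gamma g\gamma^{-1})\cdot(\gamma g^{-1}\gamma^{-1})$ with the Leibniz rule and inserting $\gamma^{-1}d_{\pi^{-1}(U)}\gamma=\sigma$ and $d_{\pi^{-1}(U)}g\,g^{-1}=-h$ does collapse to $-\Ad\gamma(h-\sigma+\Ad g(\sigma))=-h_{\mathrm{b}}$. Your derivation of \ceqref{00basic15} from \ceqref{00basic13} via the universal Maurer--Cartan identity for the right logarithmic derivative is also sound, and your sign discussion is accurate.

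As for comparison with the paper: this proposition sits in the review section \cref{sec:ordgau}, and the paper does not supply a proof for it (nor for the surrounding propositions in that subsection); it merely remarks that the relations are formally identical to \ceqref{ordgautr5}, \ceqref{ordgautr7} and moves on. Your write-up therefore provides what the paper omits. The closest analogue the paper does prove is the higher version, prop.~\cref{prop:basic9}, whose proof proceeds by substituting the projected expansions into the compact relations \ceqref{sfbasic7}, \ceqref{sfbasic8} rather than by direct computation from the definitions; but that route presupposes the compact relations, which in the ordinary setting would amount to essentially the same manipulation you carry out.
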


\noindent
These are  formally identical \pagebreak to relations \ceqref{ordgautr5}, \ceqref{ordgautr7}. 
We recognize here the familiar local base space description of
gauge transformations of standard gauge theory. 

Next, consider the gauge transformed connection ${}^{g,h}\omega$, ${}^{g,h}\theta$.
A simple calculation yields the following result. \vspace{.5mm}

\begin{prop}
The basic components ${}^{g,h}\omega_{\mathrm{b}}$, ${}^{g,h}\theta_{\mathrm{b}}$ 
of the gauge transformed connection are given by 
\begin{align}
&({}^{g,h}\omega)_{\mathrm{b}}
=\Ad g_{\mathrm{b}}(\omega_{\mathrm{b}})+h_{\mathrm{b}},
\vphantom{\Big]}
\label{00basic21}
\\
&({}^{g,h}\theta)_{\mathrm{b}}
=\Ad g_{\mathrm{b}}(\theta_{\mathrm{b}}).
\vphantom{\Big]}
\label{00basic23}
\end{align} 
\end{prop}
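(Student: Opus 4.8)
The plan is to establish the two identities by direct substitution, computing the basic components of the gauge transformed connection straight from their definition and reducing everything to elementary identities for the adjoint action.

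First I would invoke the preceding proposition, which guarantees that ${}^{g,h}\omega$, ${}^{g,h}\theta$ are again the components of a connection of $P$; hence the definition \ceqref{00basic1}, \ceqref{00basic3} of basic components applies verbatim to them, giving $({}^{g,h}\omega)_{\mathrm{b}}=\Ad\gamma({}^{g,h}\omega-\sigma)$ and $({}^{g,h}\theta)_{\mathrm{b}}=\Ad\gamma({}^{g,h}\theta)$. I then substitute the gauge transform formulas \ceqref{000cmgautr1}, \ceqref{000cmgautr3}, obtaining $({}^{g,h}\omega)_{\mathrm{b}}=\Ad\gamma(\Ad g(\omega)+h-\sigma)$ and $({}^{g,h}\theta)_{\mathrm{b}}=\Ad\gamma(\Ad g(\theta))$.

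Next I would expand the right hand sides of the claimed identities using the definition \ceqref{00basic9} of $g_{\mathrm{b}}$, which yields $\Ad g_{\mathrm{b}}=\Ad\gamma\circ\Ad g\circ\Ad\gamma{}^{-1}$, together with $\Ad\gamma{}^{-1}\circ\Ad\gamma=\id$ and the definitions \ceqref{00basic1}, \ceqref{00basic3}, \ceqref{00basic11} of $\omega_{\mathrm{b}}$, $\theta_{\mathrm{b}}$, $h_{\mathrm{b}}$. For the curvature this gives at once $\Ad g_{\mathrm{b}}(\theta_{\mathrm{b}})=\Ad\gamma(\Ad g(\theta))$, which coincides with the expression just found. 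For the connection component one gets $\Ad g_{\mathrm{b}}(\omega_{\mathrm{b}})+h_{\mathrm{b}}=\Ad\gamma(\Ad g(\omega-\sigma))+\Ad\gamma(h-\sigma+\Ad g(\sigma))=\Ad\gamma(\Ad g(\omega)+h-\sigma)$, again matching the first step. This completes the verification.

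The computation is entirely formal: it uses only the group theoretic properties of $\Ad$, and none of the structure equations \ceqref{00maurer-1}--\ceqref{00maurer12} nor the behaviour of any object under the operation derivations. The sole point deserving attention is the cancellation of the two terms $\pm\Ad\gamma(\Ad g(\sigma))$ in the connection identity; this is exactly the role of the extra summand $\Ad g(\sigma)$ inserted in the definition \ceqref{00basic11} of $h_{\mathrm{b}}$, so no genuine obstacle arises. As a consistency check, both sides of each identity are basic elements of $\OOO S_{\pi^{-1}(U)}$ — the left hand sides because ${}^{g,h}\omega$, ${}^{g,h}\theta$ form a connection, the right hand sides because $g_{\mathrm{b}}$, $\omega_{\mathrm{b}}$, $\theta_{\mathrm{b}}$, $h_{\mathrm{b}}$ are — so the identities descend to $T[1]U$ and reproduce the standard base space gauge transformation law of connections.
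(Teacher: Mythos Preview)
Your proof is correct and is precisely the ``simple calculation'' the paper alludes to in lieu of an explicit proof; the analogous higher result (prop.~\cref{prop:basic7}) is proved in the paper by exactly this method of combining the definitions of the basic components with the gauge transform formulas.
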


\noindent
These have the   same form as relations \ceqref{000cmgautr1}, \ceqref{000cmgautr3}. 
If, with an abuse of notation, we read the above expressions as 
${}^{g,h}\omega_{\mathrm{b}}={}^{g_{\mathrm{b}},h_{\mathrm{b}}}\omega_{\mathrm{b}}$, 
${}^{g,h}\theta_{\mathrm{b}}={}^{g_{\mathrm{b}},h_{\mathrm{b}}}\theta_{\mathrm{b}}$, 
we recover the familiar local base space description of gauge transformations in gauge theory. 

For a given trivializing neighborhood $U\subset M$,
the basic components of connections and gauge transformations are Lie valued 
functions on $T[1]U$, so they are only locally defined. The problem arises
of matching the local data pertaining to distinct but overlapping trivializing 
neighborhoods $U,\, U'\subset M$. Below, we denote by $u,v,\gamma,\sigma$ and 
$u',v',\gamma',\sigma'$ the standard adapted coordinates of $\pi^{-1}(U)$
$\pi^{-1}(U')$, respectively. 

\begin{defi}
The local basic matching transformation and shift components are the Lie group and algebra valued 
functions $f_{\mathrm{b}}\in\Map(T[1]\pi^{-1}(U\cap U'),\mathsans{G})$ and 
$s_{\mathrm{b}}\in\Map(T[1]\pi^{-1}(U\cap U'),\mathfrak{g}[1])$ defined by 
\begin{align}
&f_{\mathrm{b}}=\gamma'\gamma^{-1},
\vphantom{\Big]}
\label{02basic6}
\\
&s_{\mathrm{b}}=\Ad\gamma(\sigma'-\sigma).
\vphantom{\Big]}
\label{02basic8}
\end{align}
\end{defi}

\noindent
Above, $\gamma$, $\sigma$ and $\gamma'$, $\sigma'$ are tacitly restricted to $T[1]\pi^{-1}(U\cap\,U')$. 

\begin{prop}\label{0prop:2basic1}
$f_{\mathrm{b}}$, $s_{\mathrm{b}}$ are basic elements of the operation $\OOO S_{\pi^{-1}(U\cap U')}$.
\end{prop}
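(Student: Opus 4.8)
\noindent
Write $V=\pi^{-1}(U\cap U')$. The plan is to verify directly that $f_{\mathrm{b}}$ and $s_{\mathrm{b}}$ are annihilated by every contraction derivation $j_{Vx}$ and every Lie derivation $l_{Vx}$ with $x\in\mathfrak{g}$, which is precisely what it means for them to be basic elements of the operation $\OOO S_V$. The only input needed is Proposition \cref{prop:0adptcrd1}: since both the $U$--adapted coordinates $\gamma,\sigma$ and the $U'$--adapted coordinates $\gamma',\sigma'$ restrict to the common domain $V$, the structure equations \ceqref{00maurer6}--\ceqref{00maurer12} hold on $V$ for $\gamma,\sigma$ and simultaneously for $\gamma',\sigma'$. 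From these I would first record the auxiliary identities $j_{Vx}\gamma^{-1}=-\gamma^{-1}(j_{Vx}\gamma)\gamma^{-1}=0$ and $l_{Vx}\gamma^{-1}=-\gamma^{-1}(l_{Vx}\gamma)\gamma^{-1}=-x\gamma^{-1}$ (using throughout the abusive notation for the action of the derivations on group--valued functions adopted in the statement of Proposition \cref{prop:0adptcrd1}), together with $j_{Vx}(\sigma'-\sigma)=x-x=0$ and $l_{Vx}(\sigma'-\sigma)=-[x,\sigma'-\sigma]$.

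Next I would carry out the two short computations. For $f_{\mathrm{b}}=\gamma'\gamma^{-1}$, the graded Leibniz rule together with $j_{Vx}\gamma'=0$ and $j_{Vx}\gamma^{-1}=0$ gives $j_{Vx}f_{\mathrm{b}}=0$ immediately, while $l_{Vx}f_{\mathrm{b}}=(\gamma' x)\gamma^{-1}+\gamma'(-x\gamma^{-1})=0$; since $\gamma$ and $\gamma'$ have degree $0$ there are no sign subtleties. For $s_{\mathrm{b}}=\Ad\gamma(\sigma'-\sigma)=\gamma(\sigma'-\sigma)\gamma^{-1}$, applying $j_{Vx}$ and using that it annihilates $\gamma$, $\gamma^{-1}$ and $\sigma'-\sigma$ leaves nothing, so $j_{Vx}s_{\mathrm{b}}=0$; applying $l_{Vx}$ and writing $Y=\sigma'-\sigma$ yields $l_{Vx}s_{\mathrm{b}}=\gamma\big(xY-[x,Y]-Yx\big)\gamma^{-1}$, and the parenthesis is identically zero. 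This establishes the claim.

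Alternatively, there is a more structural route once $f_{\mathrm{b}}$ is known to be basic: a short manipulation using \ceqref{00maurer2} for the two coordinate systems gives the identity $s_{\mathrm{b}}=f_{\mathrm{b}}{}^{-1}d_Vf_{\mathrm{b}}$, and since the basic elements of $\OOO S_V$ form a subalgebra stable under $d_V$ and under inversion of group--valued elements, the basicity of $f_{\mathrm{b}}$ then propagates to $s_{\mathrm{b}}$ automatically. I would probably give the direct computation and remark on this shortcut. The only point that genuinely demands care, in either approach, is keeping the two sets of structure equations straight on the shared domain $V$ and tracking the grading signs in the Leibniz rule; beyond this routine bookkeeping there is no real obstacle.
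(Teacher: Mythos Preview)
Your proposal is correct and follows the same approach as the paper: the paper states this proposition in the review section without explicit proof, but the proof of the higher analog (prop.~\cref{prop:2basic1}) proceeds exactly as you do, namely by direct verification that the derivations $j_{\pi^{-1}(U\cap U')x}$ and $l_{\pi^{-1}(U\cap U')x}$ annihilate $f_{\mathrm{b}}$ and $s_{\mathrm{b}}$ using the structure equations \ceqref{00maurer6}--\ceqref{00maurer12} for both coordinate sets. Your alternative shortcut via $s_{\mathrm{b}}=f_{\mathrm{b}}{}^{-1}d_{\pi^{-1}(U\cap U')}f_{\mathrm{b}}$ is also the identity the paper records immediately after eq.~\ceqref{02basic16}.
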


\noindent
Therefore, $f_{\mathrm{b}}$, $s_{\mathrm{b}}$ can be identified with functions 
$f_{\mathrm{b}}\in\Map(T[1](U\cap U'),\mathsans{G})$, $s_{\mathrm{b}}\in\Map(T[1](U\cap U'),\mathfrak{g}[1])$. 

\begin{prop}
The local basic components $\omega_{\mathrm{b}}$, $\theta_{\mathrm{b}}$
and $\omega'{}_{\mathrm{b}}$, $\theta'{}_{\mathrm{b}}$ 
of a connection $\omega$, $\theta$ are related on $T[1](U\cap U')$ as 
\begin{align}
&\omega'{}_{\mathrm{b}}=\Ad f_{\mathrm{b}}(\omega_{\mathrm{b}}-s_{\mathrm{b}}),
\vphantom{\Big]}
\label{02basic14}
\\
&\theta'{}_{\mathrm{b}}=\Ad f_{\mathrm{b}}(\theta_{\mathrm{b}}).
\vphantom{\Big]}
\label{02basic16}
\end{align}
\end{prop}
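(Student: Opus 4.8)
The plan is to compute $\omega'_{\mathrm{b}}$ directly from its definition \ceqref{00basic1} (applied on the neighborhood $U'$) and rewrite the result in terms of the data on $U$, using that $\omega$ itself is a globally defined connection component and that the two adapted coordinate systems are related by the matching functions $f_{\mathrm{b}}$, $s_{\mathrm{b}}$ of \ceqref{02basic6}, \ceqref{02basic8}. First I would start from
\begin{equation*}
\omega'_{\mathrm{b}}=\Ad\gamma'(\omega-\sigma')
\end{equation*}
on $T[1]\pi^{-1}(U\cap U')$, insert $\gamma'=f_{\mathrm{b}}\gamma$ (from \ceqref{02basic6}) and use multiplicativity of the adjoint action to get $\omega'_{\mathrm{b}}=\Ad f_{\mathrm{b}}\bigl(\Ad\gamma(\omega-\sigma')\bigr)$. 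Then I would split $\omega-\sigma'=(\omega-\sigma)-(\sigma'-\sigma)$ and recognize $\Ad\gamma(\omega-\sigma)=\omega_{\mathrm{b}}$ by \ceqref{00basic1} and $\Ad\gamma(\sigma'-\sigma)=s_{\mathrm{b}}$ by \ceqref{02basic8}. This yields $\omega'_{\mathrm{b}}=\Ad f_{\mathrm{b}}(\omega_{\mathrm{b}}-s_{\mathrm{b}})$, which is \ceqref{02basic14}. The curvature relation \ceqref{02basic16} is even more immediate: from $\theta'_{\mathrm{b}}=\Ad\gamma'(\theta)=\Ad f_{\mathrm{b}}\Ad\gamma(\theta)=\Ad f_{\mathrm{b}}(\theta_{\mathrm{b}})$, using \ceqref{00basic3} twice.

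The only genuinely nontrivial input is the relation $\gamma'=f_{\mathrm{b}}\gamma$ on the overlap, i.e. the consistency of the two adapted coordinate systems — but this is exactly what \ceqref{02basic6} asserts as the defining equation for $f_{\mathrm{b}}$, so I may take it as given. One should also check that all objects involved are being restricted consistently to $T[1]\pi^{-1}(U\cap U')$ before the manipulations, and that the identifications of basic elements with functions on $T[1](U\cap U')$ (via Proposition \cref{0prop:2basic1} and the analogous statements for $\omega_{\mathrm{b}}$, $\theta_{\mathrm{b}}$) are compatible — i.e. that $\Ad f_{\mathrm{b}}$ descends to an operation on functions on $T[1](U\cap U')$. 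This follows since $f_{\mathrm{b}}$, $s_{\mathrm{b}}$, $\omega_{\mathrm{b}}$, $\theta_{\mathrm{b}}$, $\omega'_{\mathrm{b}}$, $\theta'_{\mathrm{b}}$ are all basic, hence the whole identity lives in the basic subalgebra and transports to $T[1](U\cap U')$.

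I do not anticipate a real obstacle here; the computation is a two-line adjoint-action shuffle. The one point requiring a modicum of care is bookkeeping of which trivialization each $\gamma$, $\sigma$, $\omega_{\mathrm{b}}$ belongs to and the order of the factors in $f_{\mathrm{b}}=\gamma'\gamma^{-1}$ versus the direction of the $\Ad$ conjugation — getting $\Ad f_{\mathrm{b}}(\omega_{\mathrm{b}}-s_{\mathrm{b}})$ rather than, say, $\Ad f_{\mathrm{b}}{}^{-1}$ acting, or $s_{\mathrm{b}}$ appearing with the wrong sign. This is settled once one fixes, consistently with \ceqref{00basic1}–\ceqref{00basic3} and \ceqref{02basic6}–\ceqref{02basic8}, that $\Ad$ acts on the left and $f_{\mathrm{b}}=\gamma'\gamma^{-1}$, so that the "primed" frame is obtained by left multiplication by $f_{\mathrm{b}}$.
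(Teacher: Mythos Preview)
Your proposal is correct and follows essentially the same approach as the paper. The paper treats this particular proposition as part of the review material and does not spell out a proof here, but your argument matches exactly the method used for the analogous higher statement (eqs.\ \ceqref{2basic12}--\ceqref{2basic13}): express the primed basic components via the definition, substitute $\gamma'=f_{\mathrm{b}}\gamma$, and recognize the pieces as $\omega_{\mathrm{b}}$ and $s_{\mathrm{b}}$.
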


\noindent
Upon observing that $s_{\mathrm{b}}=f_{\mathrm{b}}{}^{-1}d_{\pi^{-1}(U\cap U')}f_{\mathrm{b}}$, 
one recognizes above the well--known matching relations of local connection data. 

\begin{prop}
The local basic components $g_{\mathrm{b}}$, $h_{\mathrm{b}}$ 
and $g'{}_{\mathrm{b}}$, $h'{}_{\mathrm{b}}$ 
of a gauge transformation $g$, $h$ are related on $T[1](U\cap U')$ as  
\begin{align}
&g'{}_{\mathrm{b}}=f_{\mathrm{b}}g_{\mathrm{b}}f_{\mathrm{b}}{}^{-1},
\vphantom{\Big]}
\label{02basic20}
\\
&h'{}_{\mathrm{b}}=\Ad f_{\mathrm{b}}(h_{\mathrm{b}}-s_{\mathrm{b}}+\Ad g_{\mathrm{b}}(s_{\mathrm{b}})).
\vphantom{\Big]}
\label{02basic22}
\end{align}
\end{prop}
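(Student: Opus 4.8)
The plan is to derive the transformation laws \ceqref{02basic20}, \ceqref{02basic22} by direct substitution of the definitions \ceqref{00basic9}, \ceqref{00basic11} for the two trivializing neighborhoods, eliminating the global quantities $g$, $h$ in favor of the matching data $f_{\mathrm{b}}$, $s_{\mathrm{b}}$ via \ceqref{02basic6}, \ceqref{02basic8}. For the transformation component this is immediate: writing $g'_{\mathrm{b}}=\gamma'g\gamma'{}^{-1}=(\gamma'\gamma^{-1})(\gamma g\gamma^{-1})(\gamma'\gamma^{-1})^{-1}=f_{\mathrm{b}}g_{\mathrm{b}}f_{\mathrm{b}}{}^{-1}$, which is \ceqref{02basic20}. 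Note that $g$, being a function on $T[1]P$ in $\Map(T[1]P,\mathsans{G})=\Map(P,\mathsans{G})$, commutes with nothing a priori, so one must genuinely conjugate rather than cancel; the identity above uses only the group structure of $\mathsans{G}$ and the fact that $\gamma$, $\gamma'$ are evaluated on the same arguments over $T[1]\pi^{-1}(U\cap U')$.

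For the shift component I would start from $h'_{\mathrm{b}}=\Ad\gamma'(h-\sigma'+\Ad g(\sigma'))$ and insert $\gamma'=f_{\mathrm{b}}\gamma$ so that $\Ad\gamma'=\Ad f_{\mathrm{b}}\circ\Ad\gamma$. This gives $h'_{\mathrm{b}}=\Ad f_{\mathrm{b}}\big(\Ad\gamma(h)-\Ad\gamma(\sigma')+\Ad(\gamma g)(\sigma')\big)$. Now I would re-express each term using $h_{\mathrm{b}}$, $s_{\mathrm{b}}$, $g_{\mathrm{b}}$: from \ceqref{00basic11}, $\Ad\gamma(h)=h_{\mathrm{b}}+\Ad\gamma(\sigma)-\Ad(\gamma g)(\sigma)$; from \ceqref{02basic8}, $\Ad\gamma(\sigma'-\sigma)=s_{\mathrm{b}}$, hence $\Ad\gamma(\sigma')=s_{\mathrm{b}}+\Ad\gamma(\sigma)$; and $\Ad(\gamma g)(\sigma')=\Ad(\gamma g)(\sigma)+\Ad(\gamma g\gamma^{-1})\big(\Ad\gamma(\sigma'-\sigma)\big)=\Ad(\gamma g)(\sigma)+\Ad g_{\mathrm{b}}(s_{\mathrm{b}})$. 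Substituting these three expressions, the terms $\Ad\gamma(\sigma)$ and $\Ad(\gamma g)(\sigma)$ cancel in pairs, leaving exactly $h'_{\mathrm{b}}=\Ad f_{\mathrm{b}}\big(h_{\mathrm{b}}-s_{\mathrm{b}}+\Ad g_{\mathrm{b}}(s_{\mathrm{b}})\big)$, which is \ceqref{02basic22}.

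The only delicate point — and the step I expect to be the main obstacle in writing this cleanly — is the bookkeeping of the adjoint actions $\Ad(\gamma g)$ versus $\Ad(\gamma g\gamma^{-1})\circ\Ad\gamma$ and making sure that the mixed quantity $g_{\mathrm{b}}=\gamma g\gamma^{-1}$ appears only through $\Ad g_{\mathrm{b}}$ acting on the already-basic $s_{\mathrm{b}}$; since $\Ad$ is a homomorphism, $\Ad(\gamma g)=\Ad(\gamma g\gamma^{-1}\cdot\gamma)=\Ad g_{\mathrm{b}}\circ\Ad\gamma$, which is what makes the cancellation work. One should also check consistency with Proposition \cref{0prop:2basic1}: since $f_{\mathrm{b}}$, $s_{\mathrm{b}}$ and $g_{\mathrm{b}}$, $h_{\mathrm{b}}$ (and likewise the primed quantities) are all basic elements of the respective operations, both sides of \ceqref{02basic20}, \ceqref{02basic22} lie in the basic subalgebra, so the identities descend to honest functions on $T[1](U\cap U')$ with values in $\mathsans{G}$, $\mathfrak{g}[1]$; no further verification of the $j$-- and $l$--derivation action is needed beyond what the earlier propositions already supply. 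The computation is entirely algebraic in the groupoid/group structure and requires no use of the Cartan relations or the Maurer--Cartan equations.
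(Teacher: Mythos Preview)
Your proof is correct and follows exactly the approach the paper uses for the analogous higher statement (prop.~\cref{prop:basic5} and the matching relations \ceqref{2basic18}, \ceqref{2basic19}): invert the defining relations \ceqref{00basic9}, \ceqref{00basic11} to express $g$, $h$ through the unprimed basic data and $\gamma$, $\sigma$, substitute into the primed definitions, and rewrite using \ceqref{02basic6}, \ceqref{02basic8}. The paper in fact gives no explicit proof for this ordinary-theory proposition, treating it as a standard computation in the review section, so your detailed bookkeeping of the $\Ad(\gamma g)=\Ad g_{\mathrm{b}}\circ\Ad\gamma$ step is more than the paper supplies.
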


\noindent
The above are the matching relations of local gauge transformation data.

Upon choosing an open covering $\{U_i\}$ of $M$ and for each set $U_i$ adapted 
coordinates $u_i$, $v_i$, $\gamma_i$, $\sigma_i$, one can describe a connection, respectively   
a gauge transformation, by means of the collection $\{\omega_{\mathrm{b}i},\theta_{\mathrm{b}i}\}$,
respectively $\{g_{\mathrm{b}i},h_{\mathrm{b}i}\}$, of its local basic data defined according 
\ceqref{00basic1}, \ceqref{00basic3}, respectively \ceqref{00basic9}, \ceqref{00basic11}
on the $U_i$. 
The matching of the local connection and gauge transformation data is controlled through 
the rules \ceqref{02basic14}, \ceqref{02basic16} and \ceqref{02basic20}, \ceqref{02basic22}
by the local basic matching data $\{f_{\mathrm{b}ij},s_{\mathrm{b}ij}\}$ defined according to \ceqref{02basic6},
\ceqref{02basic8} on the non empty intersections $U_i\cap U_j$, respectively. This yields the 
familiar differential cocycle theory of connections and gauge transformations.





\vfil\eject

\section{\textcolor{blue}{\sffamily 2--connections and 1-- and 2--gauge transformations}}\label{sec:2conn1gau}

In this section, we construct the synthetic operational total space theory of 
$2$--connections and $1$-- and $2$--gauge transformations of a principal $2$--bundle
taking the standard connection and gauge transformation reviewed in 
theory of subsect. \cref{subsec:ordgau} as a model.  
We also show that, just as in the ordinary case, 
a basic framework can be 
worked out pointing in this way to a more conventional base space theory. 
Finally, an explanation of the eventual relation of the formulation presented 
to the theory of non Abelian differential cocycles is put forward. 


\subsection{\textcolor{blue}{\sffamily General remarks on the operational setup}}\label{subsec:setup}

In what follows, we systematically refer to the synthetic apparatus of principal 2--bundle theory of 
subsect. 3.2 of I. The basic geometrical datum is so 
a principal $\hat{\matheul{K}}$--$2$--bundle $\hat{\mathcal{P}}$. Its associated synthetic setup 
comprises the synthetic morphism and object Lie groups $\mathsans{K}$, $\mathsans{K}_0$ 
of $\hat{\matheul{K}}$, the synthetic morphism and object manifolds $P$, $P_0$ of $\hat{\mathcal{P}}$
together with their projections $\pi$, $\pi_0$ on the base manifold $M$,  
the synthetic right $\mathsans{K}$--, $\mathsans{K}_0$--actions $R$, $R_0$ of $P$, $P_0$
and for any small open neighborhood $U\subset M$ synthetic $\mathsans{K}$--, $\mathsans{K}_0$--equivariant
trivializing maps $\varPhi_U$, $\varPhi_{U0}$, respectively. 

In the synthetic theory, $2$--connections and $1$-- and $2$--gauge transformations  
of $\hat{\mathcal{P}}$ are Lie valued graded differential forms on $P$
suitably transforming under the $\mathsans{K}$--action $R$.
These notions are best formulated by describing $\mathsans{K}$ 
as the derived Lie group $\DD\mathsans{M}$ of the Lie group crossed module 
$\mathsans{M}=(\mathsans{E},\mathsans{G})$ underlying $\hat{\matheul{K}}$
on one hand and the graded differential form algebra of $P$ as the internal function
algebra of $T[1]P$ on the other (cf. subsect. 3.8 of I). 
Because of the role of the $\DD\mathsans{M}$--action $R$ of $P$, the natural setting for 
studying $2$--connections and $1$-- and $2$--gauge transformations  
is provided then by 
the morphism space $S_{P}=(P,\mathsans{M},R)$ of $P$ \linebreak and the associated operation 
$\iOOO S_{P}=(\iFun(T[1]P),\mathfrak{m})$. 

The action of the derivations $j_{PZ}$, $l_{PZ}$ with $Z\in\DD\mathfrak{m}$ of $\iOOO S_{P}$
on the internal function algebra $\iFun(T[1]P)$ is expressed as a rule through the image
$\zeta_{\mathfrak{m}}Z$ of $Z$ under the isomorphism $\zeta_{\mathfrak{m}}:\DD\mathfrak{m}
\xrightarrow{~\simeq~}\DD\mathfrak{m}^+$ (cf. def. 3.19 and prop. 3.26 of I). 
When decomposing $Z$ in its components $x\in\mathfrak{g}$, $X\in\mathfrak{e}[1]$ 
according to 3.4.6 of I, the action is correspondingly expressed through
$x\in\mathfrak{g}$, $\zeta_{\mathfrak{e},1}X\in\mathfrak{e}[1]^+$.
The reason for this is slightly technical. 
The action of the vertical vector fields of $P$ on $\iFun(T[1]P)$  
is necessarily expressed in terms of constant $\DD\mathfrak{m}$--valued internal functions, 
i. e. functions of the space $\iMap(T[1]P,\DD\mathfrak{m})$ arising by pull--back 
by the map $T[1]P\rightarrow *$ of functions of the space $\iMap(*,\DD\mathfrak{m})=\DD\mathfrak{m}^+$,
the cross modality of $\DD\mathfrak{m}$ (cf. subsect. 3.6 of I). 
In an ungraded setting, this careful distinction would make no difference. In a graded one, 
it is demanded by overall consistency. 
However, to simplify the notation, we tacitly shall not distinguish notationally between
$Z$ and $\zeta_{\mathfrak{m}}Z$ and similarly $X$ and $\zeta_{\mathfrak{e},1}X$
in the following. 

The study of the properties of a $2$--connections and $1$--gauge transformations  
on the object manifold $P_0$ as a
submanifold of the morphism manifold $P$ can also be performed. As the right $\DD\mathsans{M}$--action $R$
of $P$ restricts to the the right $\DD\mathsans{M}_0$--action $R_0$ of $P_0$, the appropriate framework for this analysis 
is the object space $S_{P0}=(P_0,\mathsans{M}_0,R_0)$ of $P$ and the associated operation 
$\iOOO S_{P0}=(\iFun(T[1]P_0),\mathfrak{m}_0)$ (cf. subsect. 3.8 of I). 
The action of the derivations of $\iOOO S_{P0}$ fits with the restriction operation morphism 
$\iOOO L:\iOOO S_{P}\rightarrow\iOOO S_{P0}$. 


\subsection{\textcolor{blue}{\sffamily 2--connections}}\label{subsec:cmconn}

In the synthetic formulation, 
a $2$--connection of the 
$\hat{\matheul{K}}$--$2$--bundle $\hat{\mathcal{P}}$ is a degree $1$ 
$\mathfrak{k}$--valued graded differential form over 
$P$ suitably transforming under the $\mathsans{K}$--action $R$. 
Proceeding along the lines described in subsect. \cref{subsec:setup}, a
$2$--connection is most naturally defined making reference to the 
operation $\iOOO S_{P}$ of $P$. 

\begin{defi}\label{defi:2conn}
A $2$--connection of $P$ is a pair of Lie algebra valued internal functions 
$A\in\iMap(T[1]P,\DD\mathfrak{m}[1])$ and $B\in\iMap(T[1]P,\DD\mathfrak{m}[2])$, called respectively
connection and curvature component, on which the action of the derivations of the operation 
$\iOOO S_{P}$ is given by 
\begin{align}
&d_PA=-\frac{1}{2}[A,A]-d_{\dot\tau}A+B,
\vphantom{\Big]}
\label{cmconnx1}
\\
&d_PB=-[A,B]-d_{\dot\tau}B,
\vphantom{\Big]}
\label{cmconnx2}
\\
&j_{PZ}A=Z,
\vphantom{\Big]}
\label{cmconnx3}
\end{align} 
\vskip-.75cm\eject\noindent
\begin{align}
&j_{PZ}B=0,
\vphantom{\Big]} 
\label{cmconnx4}
\\
&l_{PZ}A=-[Z,A]+d_{\dot\tau}Z,
\vphantom{\Big]}
\label{cmconnx5}
\\
&l_{PZ}B=-[Z,B]
\vphantom{\Big]}
\label{cmconnx6}  
\end{align}
with $Z\in\DD\mathfrak{m}$. 
\end{defi}

\noindent 
Above, $[-,-]$ and $d_{\dot\tau}$  are the Lie bracket and the coboundary of 
the virtual Lie algebra $\iMap(T[1]P,\ZZ\DD\mathfrak{m})$ 
(cf. eqs. 3.5.13, 3.5.15 of I). 
$Z$ is tacitly viewed as an element 
of $\DD\mathfrak{m}^+$ as explained in subsect. \cref{subsec:setup}.
\ceqref{cmconnx1}--\ceqref{cmconnx6} 
are by design formally analogous to relations \ceqref{ordconn1}--\ceqref{ordconn11}
defining an ordinary connection, once one assumes $d_P+d_{\dot\tau}$ as relevant differential.
\ceqref{cmconnx1} is just the expression of the curvature component $B$ in terms of 
the connection component $A$. 
 \ceqref{cmconnx2} is the Bianchi 
identity obeyed by the curvature component. The $2$-connection is said flat if $B=0$. 

\begin{lemma} \label{lemma:2conn}
\ceqref{cmconnx1}--\ceqref{cmconnx6} respect the operation commutation relations 
2.1.1--2.1.6 of I. 
\end{lemma}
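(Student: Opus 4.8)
The plan is to verify that the six defining relations of Definition~\ref{defi:2conn} are consistent with the six Cartan relations \ceqref{rintro1}--\ceqref{rintro6} (the relations 2.1.1--2.1.6 of I). The strategy mirrors exactly the argument used implicitly for the ordinary case after Proposition~2.4: for each Cartan relation, apply both sides to the generating data $A$ and $B$ using \ceqref{cmconnx1}--\ceqref{cmconnx6}, and check that the two sides agree. Since $\iFun(T[1]P)$ is generated as a differential graded algebra by $A$, $B$ together with the internal functions pulled back from $T[1]P_0$ (or, more concretely, by the adapted coordinate data of subsect.~\cref{subsec:maurer}), it suffices to check the relations on $A$ and $B$ alone, because all derivations involved are graded and the ambient algebra is generated by these; equality on generators propagates to the whole algebra by the graded Leibniz rule. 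I would state this reduction at the outset.

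The bulk of the proof is then six routine but careful computations. \textbf{(i)} $[d_P,d_P]=0$, i.e. $d_P{}^2=0$: apply $d_P$ to \ceqref{cmconnx1} and to \ceqref{cmconnx2}, using that $d_P$ and $d_{\dot\tau}$ are both (graded) differentials with $[d_P,d_{\dot\tau}]=0$, that $d_P$ is a derivation of the bracket, and then substitute \ceqref{cmconnx1}, \ceqref{cmconnx2} back in; the Jacobi identity for the virtual Lie algebra bracket and the compatibility of $d_{\dot\tau}$ with the bracket make the result vanish. \textbf{(ii)} $[d_P,j_{PZ}]=l_{PZ}$: compute $d_Pj_{PZ}A - (-1)^{|j|}j_{PZ}d_PA$ and check it equals $l_{PZ}A$, using $j_{PZ}A=Z$ (a constant, so $d_PZ=0$), $j_{PZ}$ acting as a graded derivation on $[A,A]$, and $j_{PZ}(d_{\dot\tau}A)=d_{\dot\tau}(j_{PZ}A)=d_{\dot\tau}Z$ up to signs (since $j_{PZ}$ commutes appropriately with $d_{\dot\tau}$); the output should reproduce $-[Z,A]+d_{\dot\tau}Z=l_{PZ}A$, and similarly on $B$. \textbf{(iii)} $[d_P,l_{PZ}]=0$, \textbf{(iv)} $[j_{PZ},j_{PW}]=0$, \textbf{(v)} $[l_{PZ},j_{PW}]=j_{[Z,W]}$, \textbf{(vi)} $[l_{PZ},l_{PW}]=l_{[Z,W]}$: each is checked the same way on $A$ and $B$; relation (v) on $A$, for instance, reads $l_{PZ}j_{PW}A - j_{PW}l_{PZ}A = l_{PZ}W - j_{PW}(-[Z,A]+d_{\dot\tau}Z) = 0 + [Z,W] = [Z,W] = j_{[Z,W]}A$ (using $W,Z$ constant so $l_{PZ}W=0$ and $j_{PW}d_{\dot\tau}Z$ involves only constants, and $j_{PW}[Z,A]=[Z,j_{PW}A]=[Z,W]$), and (vi) on $A$ is where the twisted terms $d_{\dot\tau}Z$ combine via the crossed-module identities to give $d_{\dot\tau}[Z,W]$.

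The main obstacle, and the place where genuine care is required rather than bookkeeping, is relation (vi) $[l_{PZ},l_{PW}]=l_{[Z,W]}$ acting on $A$, and the consistency of (ii) and (vi) with the twist term $d_{\dot\tau}$. One must verify the ``twisted Jacobi'' type identity
\[
d_{\dot\tau}[Z,W] = [d_{\dot\tau}Z, W] + (-1)^{\ast}[Z, d_{\dot\tau}W] - \big([Z,[W,-]]-[W,[Z,-]]\big)\text{-contributions},
\]
i.e. that $d_{\dot\tau}$ is a derivation of the bracket of the virtual Lie algebra with the correct signs (eqs.~3.5.13, 3.5.15 of I), so that the extra inhomogeneous pieces $d_{\dot\tau}Z$ in \ceqref{cmconnx5} close up correctly. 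Equivalently, one checks that $l_{PZ}$ as defined is a genuine Lie-algebra action of $\DD\mathfrak{m}$ twisted by the $d_{\dot\tau}$-cocycle, which is exactly the statement that $A\mapsto A$ transforms as a connection for the ``shifted differential'' $d_P+d_{\dot\tau}$. Once the algebraic identities satisfied by $[-,-]$ and $d_{\dot\tau}$ on $\iMap(T[1]P,\ZZ\DD\mathfrak{m})$ from subsect.~3.5 of I are invoked in the right combinations, every relation collapses; I would organize the write-up so that these identities are quoted once and then applied uniformly across the six cases, treating the $B$-equations as the easy parallel of the $A$-equations.
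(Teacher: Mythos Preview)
Your approach is correct and matches the paper's: verify each Cartan relation by applying both sides to $A$ and $B$ using \ceqref{cmconnx1}--\ceqref{cmconnx6}, taking into account that $d_{\dot\tau}$ graded--commutes with all operation derivations and is a derivation of the bracket. One correction to your framing: the lemma is not asserting the Cartan relations on the full algebra (those hold already as part of the definition of the operation $\iOOO S_P$), only that the prescribed actions on $A$ and $B$ are \emph{consistent} with them---so your ``generators'' reduction is unnecessary and the claim that $A$, $B$ generate $\iFun(T[1]P)$ is neither needed nor true in general.
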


\begin{proof}
One has to show that the six derivation commutators in the left hand sides of eqs. 
2.1.1--2.1.6 of I act as the corresponding derivations in the right hand sides 
when they are applied to the functions $A$, $B$ and the \ceqref{cmconnx1}--\ceqref{cmconnx6} 
are used. The graded commutativity of $d_{\dot\tau}$ with all derivations must be taken into account. 
This is a straightforward verification. 
\end{proof}

By 3.5.12 of I, we can express the components $A$, $B$ of a $2$--connection as 
\begin{align}
&A(\alpha)=\omega-\alpha\varOmega,
\vphantom{\Big]}
\label{cmconnx7}
\\
&B(\alpha)=\theta+\alpha\varTheta, \quad \alpha\in\mathbb{R}[1],
\vphantom{\Big]}
\label{cmconnx8}
\end{align}
through projected connection and curvature components
$\omega\in\iMap(T[1]P,\mathfrak{g}[1])$, $\varOmega\in\iMap(T[1]P,\mathfrak{e}[2])$ 
and $\theta\in\iMap(T[1]P,\mathfrak{g}[2])$, $\varTheta\in\iMap(T[1]P,\mathfrak{e}[3])$.
We further write $Z\in\DD\mathfrak{m}$ as $Z(\bar\alpha)=x+\bar\alpha X$, $\bar\alpha\in\mathbb{R}[-1]$, 
with $x\in\mathfrak{g}$ and $X\in\mathfrak{e}[1]$ as in 3.4.6 of I. 

\begin{prop}\label{prop:2connexpl}
In terms of projected components, \pagebreak the operation relations 
\ceqref{cmconnx1}--\ceqref{cmconnx6} take the more explicit form \hphantom{xxxxxxxxxx} 
\begin{align} 
&d_P\omega=-\frac{1}{2}[\omega,\omega]+\dot\tau(\varOmega)+\theta,
\vphantom{\Big]}
\label{cmconn1}
\\
&d_P\varOmega=-\dot{}\mu\dot{}\,(\omega,\varOmega)+\varTheta,
\vphantom{\Big]}
\label{cmconn2}
\\
&d_P\theta=-[\omega,\theta]-\dot\tau(\varTheta),
\vphantom{\Big]}
\label{cmconn3}
\\
&d_P\varTheta=-\dot{}\mu\dot{}\,(\omega,\varTheta)+\dot{}\mu\dot{}\,(\theta,\varOmega),
\vphantom{\Big]}
\label{cmconn4}
\\
&j_{PZ}\omega=x,
\vphantom{\Big]}
\label{cmconn5}
\\
&j_{PZ}\varOmega=X,
\vphantom{\Big]}
\label{cmconn6}
\\
&j_{PZ}\theta=0,
\vphantom{\Big]}
\label{cmconn7}
\\
&j_{PZ}\varTheta=0,
\vphantom{\Big]}
\label{cmconn8}
\\
&l_{PZ}\omega=-[x,\omega]+\dot\tau(X),
\vphantom{\Big]}
\label{cmconn9}
\\
&l_{PZ}\varOmega=-\dot{}\mu\dot{}\,(x,\varOmega)+\dot{}\mu\dot{}\,(\omega,X),
\vphantom{\Big]}
\label{cmconn10}
\\
&l_{PZ}\theta=-[x,\theta],
\vphantom{\Big]}
\label{cmconn11}
\\
&l_{PZ}\varTheta=-\dot{}\mu\dot{}\,(x,\varTheta)+\dot{}\mu\dot{}\,(\theta,X).
\vphantom{\Big]}
\label{cmconn12}
\end{align} 
\end{prop}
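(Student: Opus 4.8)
The plan is to obtain \ceqref{cmconn1}--\ceqref{cmconn12} by substituting the decompositions \ceqref{cmconnx7}, \ceqref{cmconnx8} and $Z(\bar\alpha)=x+\bar\alpha X$ into the defining relations \ceqref{cmconnx1}--\ceqref{cmconnx6} of Definition \cref{defi:2conn} and matching the coefficients of $1$ and of the odd parameter $\alpha$ (respectively $\bar\alpha$) on both sides. The substance of the computation is entirely the translation of the operations $[-,-]$ and $d_{\dot\tau}$ of the virtual Lie algebra $\iMap(T[1]P,\ZZ\DD\mathfrak{m})$ into the crossed module data $\dot\tau$, $\dot{}\mu\dot{}\,$ on the components, using the formulas from subsect.\ 3.5 of I (eqs.\ 3.5.12, 3.5.13, 3.5.15) that express the derived bracket and the coboundary $d_{\dot\tau}$ in terms of the crossed module bracket on $\mathfrak{g}$, the action $\dot\mu$ of $\mathfrak{g}$ on $\mathfrak{e}$, and the differential $\dot\tau:\mathfrak{e}\to\mathfrak{g}$.

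Concretely, I would proceed relation by relation. First I would expand $d_PA$: using \ceqref{cmconnx7}, $d_PA$ has $1$-component $d_P\omega$ and $\alpha$-component $-d_P\varOmega$; on the right side $-\tfrac12[A,A]$ contributes the derived-bracket terms, $-d_{\dot\tau}A$ contributes the $\dot\tau$ and $\dot\mu$ terms coming from the coboundary, and $B$ contributes $\theta$ in degree $1$ and $\alpha\varTheta$ in the $\alpha$-slot. Collecting the $1$-parts gives \ceqref{cmconn1} and the $\alpha$-parts give \ceqref{cmconn2}. The same bookkeeping applied to \ceqref{cmconnx2} yields \ceqref{cmconn3}, \ceqref{cmconn4}; to \ceqref{cmconnx3}, recalling $Z(\bar\alpha)=x+\bar\alpha X$, gives \ceqref{cmconn5}, \ceqref{cmconn6}; to \ceqref{cmconnx4} gives \ceqref{cmconn7}, \ceqref{cmconn8}; to \ceqref{cmconnx5} gives \ceqref{cmconn9}, \ceqref{cmconn10}; and to \ceqref{cmconnx6} gives \ceqref{cmconn11}, \ceqref{cmconn12}. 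Throughout I would use that $j_{PZ}$, $l_{PZ}$, $d_P$ act componentwise on the $\mathbb{R}[1]$- (resp.\ $\mathbb{R}[-1]$-) decomposition, so that matching coefficients is legitimate.

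The one point demanding genuine care — and the likely main obstacle — is getting the signs and the placement of the odd parameters right when expanding the derived bracket and the coboundary $d_{\dot\tau}$ on parameter-dependent internal functions. Since $\alpha,\bar\alpha$ carry degree $\pm1$, Koszul signs appear when they are moved past odd-degree components such as $\varOmega$ (degree $2$), $X$ (degree $1$), or the de Rham differential, and the pairing of the $\DD\mathfrak{m}$-valued $A$ against the $\DD\mathfrak{m}^+$-valued $Z$ in $j_{PZ}$, $l_{PZ}$ must be handled via the isomorphism $\zeta_{\mathfrak m}$ as flagged in subsect.\ \cref{subsec:setup}. One must also confirm that the cross terms like $\dot{}\mu\dot{}\,(\theta,\varOmega)$ in \ceqref{cmconn4} and \ceqref{cmconn12} emerge with the correct sign from the $\alpha^2$-type contributions that survive after using $\alpha^2=0$ together with the bilinearity of the bracket. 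Once the component expressions for $[-,-]$ and $d_{\dot\tau}$ from I are inserted with the right grading conventions, the remaining work is purely mechanical and the twelve identities follow by reading off coefficients.
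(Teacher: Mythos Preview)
Your proposal is correct and follows essentially the same approach as the paper: substitute the projected expansions \ceqref{cmconnx7}, \ceqref{cmconnx8} and $Z(\bar\alpha)=x+\bar\alpha X$ into \ceqref{cmconnx1}--\ceqref{cmconnx6}, use the explicit component formulas 3.5.13 and 3.5.15 of I for the bracket and $d_{\dot\tau}$, and read off the $1$- and $\alpha$-components. The paper's proof is just the one-sentence version of the same computation.
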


\noindent 
Above, $X$ is tacitly viewed as an element 
of $\mathfrak{e}[1]^+$ (cf. subsect. \cref{subsec:setup}).

\begin{proof}
To get these relations, we substitute the expressions of $A$, $B$ 
in terms of $\omega$, $\varOmega$, $\theta$, $\varTheta$ of eqs.  \ceqref{cmconnx7}, \ceqref{cmconnx8} 
and that of $Z$ in terms of $x$, $X$ into 
\ceqref{cmconnx1}--\ceqref{cmconnx6} and use relations 3.5.13 and 3.5.15 of I. 
The calculations are elementary. 
\end{proof}

\noindent 
\ceqref{cmconn1}, \ceqref{cmconn2} are just the expressions of the curvature components 
$\theta$, $\varTheta$  in terms of the connection components $\omega$, $\varOmega$
and \ceqref{cmconn3}, \ceqref{cmconn4} are the Bianchi identities obeyed by $\theta$, $\varTheta$  
familiar in strict higher gauge theory. The $2$--connection is flat if 
$\theta=0$, $\varTheta=0$ and it is said fake flat if $\theta=0$ only. 

The study of the properties of a $2$--connection on $P_0$ as a 
submanifold of $P$ can also be performed. Following the lines 
of subsect. \cref{subsec:setup}, the appropriate way of doing this is by making reference to the 
operation $\iOOO S_{P0}$. 

The restriction operation morphism $\iOOO L:\iOOO S_P\rightarrow\iOOO S_{P0}$ 
of subsect. 3.8 of I maps 
the  components $\omega$, $\varOmega$, $\theta$, $\varTheta$
of \pagebreak a $2$--connection of $P$ into 
\begin{align}
&\omega_0=I^*\omega,
\vphantom{\Big]}
\label{0cmconn1ol}
\\
&\varOmega_0=I^*\varOmega,
\vphantom{\Big]}
\label{0cmconn2ol}
\\
&\theta_0=I^*\theta,
\vphantom{\Big]}
\label{0cmconn3ol}
\\
&\varTheta_0=I^*\varTheta,
\vphantom{\Big]}
\label{0cmconn4ol}
\end{align} 
where $I^*:\iFun(T[1]P)\rightarrow\iFun(T[1]P_0)$ is the restriction morphism
associated to the inclusion map $I:P_0\rightarrow P$. 
The action of the derivations of the operation $\OOO S_{P0}$ on 
$\omega_0$, $\varOmega_0$, $\theta_0$, $\varTheta_0$  is given by the right hand side of eqs.
\ceqref{cmconn1}--\ceqref{cmconn12} \linebreak with $\omega$, $\varOmega$, $\theta$, $\varTheta$
replaced by $\omega_0$, $\varOmega_0$, $\theta_0$, $\varTheta_0$  and $X$ set to $0$. 
By inspecting the resulting expressions, it appears that one can consistently impose the conditions
\begin{align}
&\varOmega_0=0, 
\vphantom{\Big]}
\label{0cmconn5ol}
\\
&\varTheta_0=0. 
\vphantom{\Big]}
\label{0cmconn6ol}
\end{align} 
Upon doing so, the surviving components $\omega_0$, $\theta_0$ satisfy relations formally identical to 
\ceqref{ordconn1}--\ceqref{ordconn11}. 
In spite of the seeming similarities to a connection of a principal $\mathsans{G}$--bundle
there are two basic differences. First, $P_0$ is not a principal $\mathsans{G}$--bundle, as 
the $\mathsans{G}$--action on $P_0$ is free
but fiberwise transitive only up to isomorphism of $P$. Second, in the customary definition of connection
the ordinary function algebras $\Map(T[1]P_0,\mathfrak{g}[p])$ appears. 

In certain cases, it may be appropriate to restrict the range 
of $2$--connections to those enjoying \ceqref{0cmconn5ol}, \ceqref{0cmconn6ol}

\begin{defi}\label{defi:prop2conn}
A $2$--connection $\omega$, $\varOmega$, $\theta$, $\varTheta$ is 
special if \ceqref{0cmconn5ol}, \ceqref{0cmconn6ol}
are sa\-tisfied. 
\end{defi}

\subsection{\textcolor{blue}{\sffamily 1--gauge transformations}}\label{subsec:cmgautr}

In the synthetic formulation,  of higher gauge theory of subsect. 3.2 of I, 
a $1$--gau\-ge transformation of the  $\hat{\matheul{K}}$--$2$--bundle $\hat{\mathcal{P}}$
is a degree $0$ $\mathsans{K}$--valued graded differential form on $P$ 
suitably transforming under the $\mathsans{K}$--–action $R$. 
Proceeding along the lines described in subsect. \cref{subsec:setup}, a
$1$--gauge transformation is most naturally defined making reference to the 
operation $\iOOO S_{P}$, as for a $2$--connection. 

\begin{defi}\label{defi:1gauge}
A $1$--gauge transformation of $P$ is a pair of a Lie group valued internal function
$\varPsi\in\iMap(T[1]P,\DD\mathsans{M})$ and a Lie algebra 
valued internal function $\varUpsilon\in\iMap(T[1]P,\DD\mathfrak{m}[1])$, 
called respectively transformation and shift component, 
on which the action of the derivations of the operation $\iOOO S_{P}$ reads as 
\begin{align}
&d_P\varPsi\varPsi^{-1}=-d_{\dot\tau}\varPsi\varPsi^{-1}-\varUpsilon,
\vphantom{\Big]}
\label{cmgautrx1}
\\
&d_P\varUpsilon=-\frac{1}{2}[\varUpsilon,\varUpsilon]-d_{\dot\tau}\varUpsilon,
\vphantom{\Big]}
\label{cmgautrx2}
\\
&j_{PZ}\varPsi\varPsi^{-1}=0,
\vphantom{\Big]}
\label{cmgautrx3}
\\
&j_{PZ}\varUpsilon=Z-\Ad\varPsi(Z),
\vphantom{\Big]}
\label{cmgautr4x}
\\
&l_{PZ}\varPsi\varPsi^{-1}=-Z+\Ad\varPsi(Z),
\vphantom{\Big]}
\label{cmgautrx5}
\\
&l_{PZ}\varUpsilon=-[Z,\varUpsilon]+d_{\dot\tau}Z-\Ad\varPsi(d_{\dot\tau}Z)
\vphantom{\Big]}
\label{cmgautrx6} 
\end{align}
with $Z\in\DD\mathfrak{m}$. 

\end{defi}

\noindent
The above relations involve several algebraic constructs studied in subsect. 
3.5 of I. $[-,-]$ and $d_{\dot\tau}$ are respectively the Lie bracket and the coboundary of 
the virtual Lie algebra $\iMap(T[1]P,\ZZ\DD\mathfrak{m})$ defined 
in eqs. 3.5.13, 3.5.15) of I. 
$\Ad$ is the adjoint action of the virtual Lie group $\iMap(T[1]P,\DD\mathsans{M})$  
on $\iMap(T[1]P,\ZZ\DD\mathfrak{m})$ given in eq. 3.5.18 of I. 
The terms $D\varPsi\varPsi^{-1}$ with $D$ $=d_P,j_{PZ},l_{PZ}$ 
are the pull--back of the first Maurer--Cartan element of $\DD\mathsans{M}$ by $\varPsi$ 
followed by contraction with $D$ seen as a vector field on $T[1]P$, see eq. 3.5.22 of I. 
The term $d_{\dot\tau}\varPsi\varPsi^{-1}$ is similarly given by eq. 3.5.24 of I.  
$Z$ is tacitly viewed as an element 
of $\DD\mathfrak{m}^+$ as explained in subsect. \cref{subsec:setup}. 
Again, upon considering $d_P+d_{\dot\tau}$ as relevant differential, the 
\ceqref{cmgautrx1}--\ceqref{cmgautrx6} are formally analogous to relations \ceqref{ordgautr5}--\ceqref{ordgautr15}
defining an ordinary gauge transformation. 
Relation \ceqref{cmgautrx1} effectively defines
the shift component $\varUpsilon$ in terms of the transformation component $\varPsi$. \ceqref{cmgautrx2}
is the associated Maurer--Cartan equation. 

\begin{lemma} \label{lemma:1gau}
\ceqref{cmgautrx1}--\ceqref{cmgautrx6} respect the operation commutation relations 
2.1.1--2.1.6 of I. 
\end{lemma}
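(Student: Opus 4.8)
The plan is to mimic exactly the verification carried out for Lemma \ref{lemma:2conn}: one must check that each of the six Cartan relations 2.1.1--2.1.6 of I, when applied to the pair $(\varPsi,\varUpsilon)$ and reduced by means of \ceqref{cmgautrx1}--\ceqref{cmgautrx6}, produces a consistent identity. First I would fix notation: write $m=d_P\varPsi\varPsi^{-1}$ and $n=d_{\dot\tau}\varPsi\varPsi^{-1}$, and recall from subsect.\ 3.5 of I the basic structural facts to be used throughout, namely the Maurer--Cartan equation $d_P(D\varPsi\varPsi^{-1})=-\tfrac12[D\varPsi\varPsi^{-1},D\varPsi\varPsi^{-1}]$ type relations for the pulled--back Maurer--Cartan element (eq.\ 3.5.22 of I), the graded commutativity of $d_{\dot\tau}$ with $d_P$, $j_{PZ}$, $l_{PZ}$, and the behaviour of $\Ad\varPsi$ and of the virtual bracket under the derivations. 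I would also record the infinitesimal version of \ceqref{cmgautrx3}, \ceqref{cmgautrx5}, namely $j_{PZ}m=0$ and $l_{PZ}m=-Z+\Ad\varPsi(Z)$, obtained by applying $d_P$ and using 2.1.2 of I, since these feed into the curvature--type checks.

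Next I would dispatch the relations in increasing order of difficulty. Relations 2.1.4 $[j_{PZ},j_{PW}]=0$ and 2.1.5 $[l_{PZ},j_{PW}]=j_{[Z,W]}$ tested on $\varUpsilon$ are immediate from \ceqref{cmgautrx3}, \ceqref{cmgautr4x} and the $\Ad$--invariance of the bracket, using $\Ad\varPsi[Z,W]=[\Ad\varPsi(Z),\Ad\varPsi(W)]$; tested on $\varPsi\varPsi^{-1}$ they follow from \ceqref{cmgautrx3}, \ceqref{cmgautrx5} together with $j_{PW}(\Ad\varPsi(Z))=0$ and the Leibniz rule for $l_{PW}$ acting on $\Ad\varPsi(Z)$. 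Relation 2.1.6 $[l_{PZ},l_{PW}]=l_{[Z,W]}$ on $\varPsi\varPsi^{-1}$ again reduces to bracket $\Ad$--invariance, while on $\varUpsilon$ it requires expanding $l_{PZ}l_{PW}\varUpsilon$ via \ceqref{cmgautrx6}, using the Jacobi identity for the virtual bracket and the identity $l_{PZ}(\Ad\varPsi(d_{\dot\tau}W))=\Ad\varPsi([Z,d_{\dot\tau}W])+\Ad\varPsi(d_{\dot\tau}W)$-type term; the $d_{\dot\tau}$--cohomological terms must cancel in pairs because $d_{\dot\tau}[Z,W]=[d_{\dot\tau}Z,W]+(-1)^{|Z|}[Z,d_{\dot\tau}W]$. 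Relation 2.1.2 $[d_P,j_{PZ}]=l_{PZ}$ on $\varPsi\varPsi^{-1}$ is the computation $d_P j_{PZ}m - j_{PZ} d_P m$; since $j_{PZ}m=0$ the first term vanishes, and $j_{PZ}d_P(\varPsi\varPsi^{-1})=j_{PZ}(-n-\varUpsilon)=-d_{\dot\tau}(j_{PZ}\varPsi\varPsi^{-1})+$ wait, one uses $j_{PZ}n=-d_{\dot\tau}(j_{PZ}\varPsi\varPsi^{-1})=0$ by graded commutativity and \ceqref{cmgautrx3}, so it collapses to $-j_{PZ}\varUpsilon=-Z+\Ad\varPsi(Z)$, which is exactly $l_{PZ}(\varPsi\varPsi^{-1})$ per \ceqref{cmgautrx5}. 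The analogous check on $\varUpsilon$ uses \ceqref{cmgautrx1}, \ceqref{cmgautrx2} and the derived identity $d_P(\Ad\varPsi(Z))=-[m+n,\Ad\varPsi(Z)]$.

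The genuinely delicate steps are 2.1.1 $[d_P,d_P]=0$ and 2.1.3 $[d_P,l_{PZ}]=0$, since these are where the $d_{\dot\tau}$--twisted Maurer--Cartan structure has to close. For 2.1.1 on $\varUpsilon$ one applies $d_P$ to \ceqref{cmgautrx2} and must verify that the right--hand side is $d_P$--closed: $d_P(-\tfrac12[\varUpsilon,\varUpsilon]-d_{\dot\tau}\varUpsilon)=-[d_P\varUpsilon,\varUpsilon]-d_{\dot\tau}(d_P\varUpsilon)=[\tfrac12[\varUpsilon,\varUpsilon]+d_{\dot\tau}\varUpsilon,\varUpsilon]+d_{\dot\tau}(\tfrac12[\varUpsilon,\varUpsilon])+d_{\dot\tau}{}^2\varUpsilon$, and this vanishes by the graded Jacobi identity $[[\varUpsilon,\varUpsilon],\varUpsilon]=0$, by $d_{\dot\tau}[\varUpsilon,\varUpsilon]=2[d_{\dot\tau}\varUpsilon,\varUpsilon]$ (up to sign), and by $d_{\dot\tau}{}^2=0$. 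For 2.1.1 on $\varPsi\varPsi^{-1}$ one uses the pulled--back Maurer--Cartan identity for $m$, namely $d_P m=-\tfrac12[m,m]$, plus the mixed relation $d_{\dot\tau}m+d_P n=-[m,n]-\tfrac12[n,n]$ expressing closedness of the total Maurer--Cartan element $m+n$ under $d_P+d_{\dot\tau}$, and then differentiates \ceqref{cmgautrx1} once more. Relation 2.1.3 on both components is handled the same way, differentiating \ceqref{cmgautrx5}, \ceqref{cmgautrx6} by $d_P$ and substituting \ceqref{cmgautrx1}, \ceqref{cmgautrx2}; the $\Ad\varPsi$--conjugates of $d_{\dot\tau}Z$ are the terms that threaten to obstruct, and they cancel precisely because $d_P(\Ad\varPsi(W))=-[m+n,\Ad\varPsi(W)]$ mirrors the untwisted formula. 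So the proof is simply: \emph{a term--by--term substitution using the virtual Lie algebra identities, the twisted Maurer--Cartan equation for the pulled--back Maurer--Cartan element of $\DD\mathsans{M}$, and the graded commutativity of $d_{\dot\tau}$ with the operation derivations, entirely parallel to Lemma \ref{lemma:2conn}; all calculations are elementary.} The main obstacle, such as it is, lies in keeping the Koszul signs and the $\mathfrak{e}[1]$--versus--$\mathfrak{g}$ bidegrees straight in the $d_{\dot\tau}$--terms of the two hardest relations, not in any conceptual difficulty.
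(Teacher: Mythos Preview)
Your approach is the same as the paper's: a term--by--term verification of the six Cartan relations on $\varPsi$ and $\varUpsilon$, using graded commutativity of $d_{\dot\tau}$ with the operation derivations and the virtual Lie algebra identities. The paper's proof is terser than yours but otherwise identical in spirit.

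There is, however, one technical ingredient you gloss over that the paper singles out explicitly. For the \emph{group}--valued function $\varPsi$, the commutator of two derivations does not act na\"{\i}vely on the Maurer--Cartan pullback; one needs the identity
\[
[D,D']GG^{-1}=D(D'GG^{-1})-(-1)^{|D||D'|}D'(DGG^{-1})-[DGG^{-1},D'GG^{-1}],
\]
valid for any graded derivations $D,D'$ and Lie group valued $G$. Your sketch of 2.1.2 on $\varPsi$ (``the computation $d_Pj_{PZ}m-j_{PZ}d_Pm$'') is notationally muddled and never invokes this correction term; it happens to vanish there because $j_{PZ}\varPsi\varPsi^{-1}=0$, but for 2.1.6 the correction $-[l_{PZ}\varPsi\varPsi^{-1},l_{PW}\varPsi\varPsi^{-1}]$ is nonzero and indispensable for the relation to close on $-[Z,W]+\Ad\varPsi([Z,W])$. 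Your phrase ``reduces to bracket $\Ad$--invariance'' is correct in spirit, but only once this term is in play. Make the identity explicit and the argument is complete.
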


\begin{proof}
One has to verify that the six derivation \pagebreak commutators in the left hand sides of eqs. 
2.1.1--2.1.6 of I act as the corresponding derivations in the right hand sides 
when they are applied to the functions $\varPsi$, $\varUpsilon$ and the \ceqref{cmgautrx1}--\ceqref{cmgautrx6} 
are used. In the case of $\varPsi$, one must employ systematically the basic relation
$[D,D']GG^{-1}=D(D'GG^{-1})-(-1)^{|D\|D'|}D'(DGG^{-1})-[DGG^{-1},D'GG^{-1}]$ 
holding for two graded derivations $D$, $D'$ and a Lie group valued function $G$. 
The graded commutativity of $d_{\dot\tau}$ with all derivations must further be taken into account.
The verification is straightforward. 
\end{proof}

Since by \ceqref{cmgautrx1} the shift component $\varUpsilon$ of a $1$--gauge transformation can 
be expressed in terms of the transformation component $\varPsi$, a $1$--gauge transformation is effectively specified 
by this latter. $1$--gauge transformations can thus be viewed as elements of the virtual Lie group 
$\iMap(T[1]P,\DD\mathsans{M})$ of $\DD\mathsans{M}$--valued internal functions of $T[1]P$. 
As \ceqref{cmgautrx3}, \ceqref{cmgautrx5} are evidently preserved under the group operations of $\iMap(T[1]P,\DD\mathsans{M})$,
$1$--gauge transformations form in fact a distinguished subgroup of this latter, 
the $1$--gauge group in the present formulation. 

$1$--gauge transformations act on $2$--connections of $P$ (cf. subsect. \cref{subsec:cmconn}, def. \cref{defi:2conn}) 
compatibly with the  $\mathsans{K}$--–action on both types of items. 

\begin{prop}
\label{prop:congauresp}
If $A$, $B$ and $\varPsi$, $\varUpsilon$ are
the components of a $2$--connection and a $1$--gauge transformation, respectively, then
\begin{align}
&A'=\Ad\varPsi(A)+\varUpsilon,
\vphantom{\Big]}
\label{cmgautrx9/1}
\\
&B'=\Ad\varPsi(B)
\vphantom{\Big]}
\label{cmgautrx10/1}
\end{align}
are the components of a $2$--connection. 
\end{prop}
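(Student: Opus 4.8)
The plan is to verify directly that the pair $A'$, $B'$ defined by \eqref{cmgautrx9/1}, \eqref{cmgautrx10/1} satisfies the six defining relations \eqref{cmconnx1}--\eqref{cmconnx6} of a $2$--connection, exactly in the spirit of the ordinary statement following Definition 2.3. The type requirement is automatic, since $\Ad\varPsi$ is degree preserving: $\Ad\varPsi(A)$ and $\varUpsilon$ both lie in $\iMap(T[1]P,\DD\mathfrak{m}[1])$ and $\Ad\varPsi(B)$ in $\iMap(T[1]P,\DD\mathfrak{m}[2])$, so only the operation relations genuinely need checking.

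The one structural ingredient I would isolate at the outset is the twisting rule for a graded operation derivation acting on an $\Ad\varPsi$--conjugated internal function: for $D$ any of $d_P$, $j_{PZ}$, $l_{PZ}$ and $Y$ a $\ZZ\DD\mathfrak{m}$--valued internal function,
\[
D(\Ad\varPsi(Y))=\Ad\varPsi(DY)+[D\varPsi\varPsi^{-1},\Ad\varPsi(Y)],
\]
where $[-,-]$ is the graded Lie bracket of the virtual Lie algebra $\iMap(T[1]P,\ZZ\DD\mathfrak{m})$, together with the very same identity with the coboundary $d_{\dot\tau}$ in place of $D$. Both follow from the Leibniz rule for $D$ (resp. $d_{\dot\tau}$), the relation $D\varPsi^{-1}=-\varPsi^{-1}(D\varPsi)\varPsi^{-1}$ valid because $\varPsi$ has degree $0$, the homomorphism property of $\Ad$, and the definitions of $\Ad$ and of $D\varPsi\varPsi^{-1}$ in eqs. 3.5.18, 3.5.22 of I; the graded sign that would otherwise appear is absorbed into the graded bracket. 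With this in hand each of the six verifications reduces to inserting the known action \eqref{cmconnx1}--\eqref{cmconnx6} of $D$ on $A$, $B$ and the known action \eqref{cmgautrx1}--\eqref{cmgautrx6} of $D$ on $\varPsi$, $\varUpsilon$, and collecting terms.

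The contraction and Lie derivation relations then come out immediately. For example $j_{PZ}A'=\Ad\varPsi(j_{PZ}A)+[j_{PZ}\varPsi\varPsi^{-1},\Ad\varPsi(A)]+j_{PZ}\varUpsilon=\Ad\varPsi(Z)+0+(Z-\Ad\varPsi(Z))=Z$; for $l_{PZ}A'$ the copies of $[\Ad\varPsi(Z),\Ad\varPsi(A)]$ coming from $\Ad\varPsi(l_{PZ}A)$ and from $[l_{PZ}\varPsi\varPsi^{-1},\Ad\varPsi(A)]$ cancel, the two copies of $\Ad\varPsi(d_{\dot\tau}Z)$ (with opposite signs) cancel, and the remainder is $-[Z,\Ad\varPsi(A)+\varUpsilon]+d_{\dot\tau}Z=-[Z,A']+d_{\dot\tau}Z$, which is \eqref{cmconnx5}; the $B'$ cases are shorter still, since $j_{PZ}B=0$ and $\Ad\varPsi$ is a Lie algebra automorphism. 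For $d_P$ one writes $d_PA'=\Ad\varPsi(d_PA)+[d_P\varPsi\varPsi^{-1},\Ad\varPsi(A)]+d_P\varUpsilon$ and inserts $d_PA=-\tfrac{1}{2}[A,A]-d_{\dot\tau}A+B$, $d_P\varPsi\varPsi^{-1}=-d_{\dot\tau}\varPsi\varPsi^{-1}-\varUpsilon$, $d_P\varUpsilon=-\tfrac{1}{2}[\varUpsilon,\varUpsilon]-d_{\dot\tau}\varUpsilon$: the bracket terms reassemble into $-\tfrac{1}{2}[A',A']$ by bilinearity and the graded symmetry $[\varUpsilon,\Ad\varPsi(A)]=[\Ad\varPsi(A),\varUpsilon]$, the $d_{\dot\tau}$--terms reassemble into $-d_{\dot\tau}A'$ by the twisting identity for $d_{\dot\tau}$, and the leftover $\Ad\varPsi(B)$ is exactly $B'$, so \eqref{cmconnx1} holds; an analogous, shorter computation yields $d_PB'=-[A',B']-d_{\dot\tau}B'$, i.e. \eqref{cmconnx2}.

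The main obstacle is purely computational: keeping the graded signs in the twisting rule straight when $D$ is odd, using the graded commutativity of $d_{\dot\tau}$ with the operation derivations, and respecting the convention of subsect. \ref{subsec:setup} that $Z$ is a constant $\DD\mathfrak{m}^+$--valued internal function, so that $DZ=0$ for all three operation derivations while $d_{\dot\tau}Z$ need not vanish. Once these are handled consistently, the six identities follow mechanically, just as for Lemmas \ref{lemma:2conn} and \ref{lemma:1gau}. Finally, since $\Ad\varPsi$ is invertible, $B'=0$ if and only if $B=0$, so flatness is preserved under the $1$--gauge action, as expected.
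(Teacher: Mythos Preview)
Your proposal is correct and follows essentially the same approach as the paper, which merely states that one must check that the operation derivations act on $A'$, $B'$ as in \ceqref{cmconnx1}--\ceqref{cmconnx6} using \ceqref{cmconnx1}--\ceqref{cmconnx6} and \ceqref{cmgautrx1}--\ceqref{cmgautrx6}, calling this ``a matter of a simple calculation''. You have simply carried out that calculation explicitly, isolating the twisting rule $D(\Ad\varPsi(Y))=\Ad\varPsi(DY)+[D\varPsi\varPsi^{-1},\Ad\varPsi(Y)]$ as the organizing identity, which is indeed the key step the paper leaves implicit.
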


\begin{proof}
To show that $A'$, $B'$ are the components of a $2$--connection, 
we have to check that the action of the derivations of the operation
on $A'$, $B'$ conforms to \ceqref{cmconnx1}--\ceqref{cmconnx6} 
using that the action of those derivations on  $A$, $B$ and $\varPsi$, $\varUpsilon$ 
is given by \ceqref{cmconnx1}--\ceqref{cmconnx6} and \ceqref{cmgautrx1}--\ceqref{cmgautrx6}, respectively.  
This is a matter of a simple calculation. 
\end{proof}

\begin{defi}\label{defi:1gaugetransf}
The gauge transform 
of a $2$--connection of components $A$, $B$ 
by a $1$--gauge transformation of components $\varPsi$, $\varUpsilon$ is the $2$--connection of 
components  \vspace{1mm}\pagebreak 
\begin{align}
&{}^{\varPsi,\varUpsilon}A=\Ad\varPsi(A)+\varUpsilon,
\vphantom{\Big]}
\label{cmgautrx9}
\\
&{}^{\varPsi,\varUpsilon}B=\Ad\varPsi(B).
\vphantom{\Big]}
\label{cmgautrx10}
\end{align}
\end{defi}

\noindent
\ceqref{cmgautrx9}, \ceqref{cmgautrx10} are formally identical to relations \ceqref{000cmgautr1}, 
\ceqref{000cmgautr3} defining the gauge transform of a connection in ordinary principal bundle theory. 
Notice that flatness of a $2$--connections is a $1$--gauge invariant property. 
\ceqref{cmgautrx9}, \ceqref{cmgautrx10} yield a left action of the $1$--gauge transforma\-tion
group on the $2$--connection space, as it is readily verified.

Making use of 3.5.1, 3.5.12 of I, we can express the components $\varPsi$, $\varUpsilon$ of a 
$1$--gauge transformation as \hphantom{xxxxxxxxxxxx}
\begin{align}
&\varPsi(\alpha)=\ee^{\alpha J}g,
\vphantom{\Big]}
\label{cmgautrx7}
\\
&\varUpsilon(\alpha)=h-\alpha K, \quad \alpha\in\mathbb{R}[1], 
\vphantom{\Big]}
\label{cmgautrx8}
\end{align}
by means of projected transformation and shift components $g\in\iMap(T[1]P,\mathsans{G})$, 
$J\in\iMap(T[1]P,\mathfrak{e}[1])$ and $h\in\iMap(T[1]P,\mathfrak{g}[1])$, 
$K\in\iMap(T[1]P,\mathfrak{e}[2])$ (cf. sub\-sect. 3.5 of I).
We further write $Z\in\DD\mathfrak{m}$ as $Z(\bar\alpha)=x+\bar\alpha X$, $\bar\alpha\in\mathbb{R}[-1]$, 
with $x\in\mathfrak{g}$ and $X\in\mathfrak{e}[1]$ as in 3.4.6 of I. 

\begin{prop}\label{prop:1gauexpl}
In terms of projected components, 
the operation relations \ceqref{cmgautrx1}--\ceqref{cmgautrx6} 
take the explicit form 
\begin{align}
&d_Pgg^{-1}=-h-\dot\tau(J),
\vphantom{\Big]}
\label{cmgautr5}
\\
&d_PJ=-K-\frac{1}{2}[J,J]-\dot{}\mu\dot{}\,(h,J),
\vphantom{\Big]}
\label{cmgautr6}
\\
&d_Ph=-\frac{1}{2}[h,h]+\dot\tau(K),
\vphantom{\Big]}
\label{cmgautr7}
\\
&d_PK=-\dot{}\mu\dot{}\,(h,K),
\vphantom{\Big]}
\label{cmgautr8}
\\
&j_{PZ}gg^{-1}=0, 
\vphantom{\Big]}
\label{cmgautr9}
\\
&j_{PZ}J=0, 
\vphantom{\Big]}
\label{cmgautr10}
\\
&j_{PZ}h=x-\Ad g(x), 
\vphantom{\Big]}
\label{cmgautr11}
\\
&j_{PZ}K=\dot{}\mu\dot{}\,(\Ad g(x),J)+X-\mu\dot{}\,(g,X), 
\vphantom{\Big]}
\label{cmgautr12}
\\
&l_{PZ}gg^{-1}=-x+\Ad g(x),
\vphantom{\Big]} 
\label{cmgautr13}
\end{align}
\begin{align}
&l_{PZ}J=-\dot{}\mu\dot{}\,(x,J)-X+\mu\dot{}\,(g,X), 
\vphantom{\Big]}
\label{cmgautr14}
\\
&l_{PZ}h=-[x,h]+\dot\tau(X)-\Ad g(\dot\tau(X)),
\vphantom{\Big]}
\label{cmgautr15}
\\
&l_{PZ}K=-\dot{}\mu\dot{}\,(x,K)+\dot{}\mu\dot{}\,(h,X)+[\mu\dot{}\,(g,X),J].
\vphantom{\Big]}
\label{cmgautr16}
\end{align}
\end{prop}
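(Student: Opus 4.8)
The plan is to mimic verbatim the proof of Proposition \ref{prop:2connexpl}: substitute the parametrized decompositions \eqref{cmgautrx7}, \eqref{cmgautrx8} of the components $\varPsi$, $\varUpsilon$, together with the decomposition $Z(\bar\alpha)=x+\bar\alpha X$, into the defining relations \eqref{cmgautrx1}--\eqref{cmgautrx6}, expand using the structural identities 3.5.13, 3.5.15, 3.5.18, 3.5.22, 3.5.24 of I for the virtual Lie group and algebra $\iMap(T[1]P,\DD\mathsans{M})$, and finally read off the coefficients of $1$ and of $\alpha$ (respectively of $1$ and $\bar\alpha$ on the side where $Z$ enters). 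This splits each of the six relations into its $\mathfrak{g}$--valued and $\mathfrak{e}$--valued parts, yielding the twelve equations \eqref{cmgautr5}--\eqref{cmgautr16}.

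First I would work out the Maurer--Cartan one--form $D\varPsi\varPsi^{-1}$ attached to a generic operation derivation $D\in\{d_P,j_{PZ},l_{PZ}\}$. Since $\alpha\in\mathbb{R}[1]$ is nilpotent one has $\ee^{\alpha J}=1+\alpha J$, so that $D\varPsi\varPsi^{-1}(\alpha)=\alpha\,DJ+\Ad(\ee^{\alpha J})(Dg\,g^{-1})$; expanding the adjoint factor through the crossed--module action (eq.\ 3.5.18 of I) produces the mixed terms of the form $\dot{}\mu\dot{}\,(\,\cdot\,,J)$ appearing on the right--hand sides of \eqref{cmgautr6}, \eqref{cmgautr12}, \eqref{cmgautr16}. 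The companion term $d_{\dot\tau}\varPsi\varPsi^{-1}$ is evaluated by eq.\ 3.5.24 of I and is responsible for the $\dot\tau(\,\cdot\,)$ contributions.

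Then I would proceed relation by relation. Feeding $D=d_P$ into \eqref{cmgautrx1} and collecting the $\alpha^0$ and $\alpha^1$ parts yields \eqref{cmgautr5} and \eqref{cmgautr6}; relation \eqref{cmgautrx2}, split via the virtual bracket 3.5.13, gives \eqref{cmgautr7}, \eqref{cmgautr8}. Relations \eqref{cmgautrx3} and \eqref{cmgautrx5}, in which $\Ad\varPsi(Z)$ is expanded to first order in both $\alpha$ and $\bar\alpha$, produce \eqref{cmgautr9}, \eqref{cmgautr10} and \eqref{cmgautr13}, \eqref{cmgautr14}, the off--diagonal block of $\Ad\varPsi$ supplying the $\mu\dot{}\,(g,X)$ terms. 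Finally \eqref{cmgautr4x} and \eqref{cmgautrx6} unfold to \eqref{cmgautr11}, \eqref{cmgautr12} and \eqref{cmgautr15}, \eqref{cmgautr16}. Throughout, one keeps track of the graded commutativity of $d_{\dot\tau}$ with all operation derivations (as in Lemma \ref{lemma:1gau}) and of the implicit identification of $x$, $X$ with their images in $\DD\mathfrak{m}^+$, $\mathfrak{e}[1]^+$ explained in subsect.\ \ref{subsec:setup}.

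The main obstacle, as in the ordinary case, is purely bookkeeping, and it concentrates in the shift--component equations for $K$, namely \eqref{cmgautr12} and \eqref{cmgautr16}: there one must simultaneously expand the exponential prefactor $\ee^{\alpha J}$, the adjoint action of $\varPsi$ on $Z$, and the virtual bracket $[Z,\varUpsilon]$ to first order in every nilpotent parameter, and the Peiffer identity of the crossed module must be invoked to recombine the resulting terms into the compact form displayed. All of this is elementary but lengthy; no conceptual difficulty arises, so the proof reduces to the remark that substituting \eqref{cmgautrx7}, \eqref{cmgautrx8} into \eqref{cmgautrx1}--\eqref{cmgautrx6} and using 3.5.13, 3.5.15, 3.5.18, 3.5.22, 3.5.24 of I produces exactly \eqref{cmgautr5}--\eqref{cmgautr16}.
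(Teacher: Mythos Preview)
Your proposal is correct and follows exactly the same approach as the paper: substitute the expansions \eqref{cmgautrx7}, \eqref{cmgautrx8} and $Z(\bar\alpha)=x+\bar\alpha X$ into \eqref{cmgautrx1}--\eqref{cmgautrx6} and expand using 3.5.13, 3.5.15, 3.5.18, 3.5.22, 3.5.24 of I. The paper's proof is in fact terser than yours, summarizing the whole thing as a ``straightforward though a bit lengthy calculation''; your additional remarks on where the individual terms originate (the expansion of $\ee^{\alpha J}$, the off-diagonal block of $\Ad\varPsi$, the bookkeeping for $K$) are helpful elaboration rather than a different route.
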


\noindent 
Above, $X$ is tacitly viewed as an element 
of $\mathfrak{e}[1]^+$ (cf. subsect. \cref{subsec:setup}).

\begin{proof}
To obtain the above relations, we substitute the expressions of $\varPsi$, $\varUpsilon$ 
in terms of $g$ $J$, $h$, $K$ of eqs. \ceqref{cmgautrx7}, \ceqref{cmgautrx8} 
and that of $Z$ in terms of $x$, $X$ into \ceqref{cmgautrx1}--\ceqref{cmgautrx6} 
and use systematically relations 3.5.13, 3.5.15
and 3.5.18 as well as expressions 3.5.22 and 3.5.24 of I. 
This is again a straightforward though a bit length calculation. 
\end{proof}



In the projected framework we are using, so, the $1$--gauge group is the subgroup of 
$\iMap(T[1]P,\mathfrak{e}[1]\rtimes_\mu\mathsans{G})$ formed by the pairs
$g$, $J$ satisfying \ceqref{cmgautr9}, \ceqref{cmgautr10} and \ceqref{cmgautr13}, 
\ceqref{cmgautr14}. $g$, $J$ are indeed the data of a $1$--gauge transformation familiar 
in strict higher gauge theory. 

Expressing the components $A$, $B$ of a $2$--connection and $\varPsi$, $\varUpsilon$ of a 
$1$--gauge transformation in terms of projected components 
$\omega$, $\varOmega$, $\theta$, $\varTheta$  and $g$, $J$, $h$, $K$ 
using \ceqref{cmconnx7}, \ceqref{cmconnx8} and 
\ceqref{cmgautrx7}, \ceqref{cmgautrx8} respectively, we obtain 
projected component expressions of 
the transformation relations \ceqref{cmgautrx9}, \ceqref{cmgautrx10}. 

\begin{prop}\label{prop:1gautrsfexpl}
In terms of projected components, the transformation relations \ceqref{cmgautrx9}, \ceqref{cmgautrx10} 
take the explicit form 
\begin{align}
&{}^{g,J,h,K}\omega=\Ad g(\omega)+h, 
\vphantom{\Big]}
\label{cmgautr1}
\\
&{}^{g,J,h,K}\varOmega=\mu\dot{}\,(g,\varOmega)-\dot{}\mu\dot{}\,(\Ad g(\omega),J)+K, 
\vphantom{\Big]}
\label{cmgautr2}
\\
&{}^{g,J,h,K}\theta=\Ad g(\theta), 
\vphantom{\Big]} 
\label{cmgautr3}
\\
&{}^{g,J,h,K}\varTheta=\mu\dot{}\,(g,\varTheta)-\dot{}\mu\dot{}\,(\Ad g(\theta),J).
\vphantom{\Big]}
\label{cmgautr4}
\end{align} 
\end{prop}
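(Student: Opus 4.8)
The plan is to derive the explicit relations \eqref{cmgautr1}--\eqref{cmgautr4} directly from the synthetic transformation law of Definition \ref{defi:1gaugetransf}, namely ${}^{\varPsi,\varUpsilon}A=\Ad\varPsi(A)+\varUpsilon$ and ${}^{\varPsi,\varUpsilon}B=\Ad\varPsi(B)$, by inserting the parametrized forms of all objects and reading off coefficients. Concretely I would substitute $A(\alpha)=\omega-\alpha\varOmega$, $B(\alpha)=\theta+\alpha\varTheta$ from \eqref{cmconnx7}, \eqref{cmconnx8} and $\varPsi(\alpha)=\ee^{\alpha J}g$, $\varUpsilon(\alpha)=h-\alpha K$ from \eqref{cmgautrx7}, \eqref{cmgautrx8} into these two identities, viewing both sides as internal functions of the odd parameter $\alpha\in\mathbb{R}[1]$. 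Since $\alpha^2=0$, every expansion in $\alpha$ truncates at first order, so matching the $\alpha^0$-- and $\alpha^1$--coefficients on the two sides of each of the two identities produces exactly the four equations asserted.

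The substance of the computation lies in evaluating $\Ad\varPsi(A)(\alpha)$ and $\Ad\varPsi(B)(\alpha)$ using the explicit description of the adjoint action of the virtual Lie group $\iMap(T[1]P,\DD\mathsans{M})$ on $\iMap(T[1]P,\ZZ\DD\mathfrak{m})$ recorded in eq. 3.5.18 of I, together with the formulas 3.5.13, 3.5.15 of I for the virtual bracket and coboundary. Writing $\varPsi=\ee^{\alpha J}g$ as a product in $\mathfrak{e}[1]\rtimes_\mu\mathsans{G}$, the adjoint action factorizes as $\Ad(\ee^{\alpha J})\circ\Ad(g)$. Applied to $A(\alpha)=\omega-\alpha\varOmega$, the $\Ad(g)$ factor produces $\Ad g(\omega)$ on the $\mathfrak{g}$--part and $\mu\dot{}\,(g,\varOmega)$ on the $\mathfrak{e}$--part, while the $\Ad(\ee^{\alpha J})$ factor, being infinitesimal in $\alpha$, contributes only its first--order term, a $\dot{}\mu\dot{}\,$--bracket of $J$ against the $\mathfrak{g}$--component, giving the cross term $-\dot{}\mu\dot{}\,(\Ad g(\omega),J)$ at order $\alpha$; adding the contribution of $\varUpsilon(\alpha)=h-\alpha K$ then yields \eqref{cmgautr1} at order $\alpha^0$ and \eqref{cmgautr2} at order $\alpha^1$. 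The computation for $B(\alpha)=\theta+\alpha\varTheta$ is strictly parallel, with $\varUpsilon$ absent, producing \eqref{cmgautr3}, \eqref{cmgautr4}.

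I expect the main obstacle to be organizational rather than conceptual: keeping straight the several distinct operations that enter — the crossed--module differential $\dot\tau$, the action $\mu$ of $\mathsans{G}$ on $\mathsans{E}$ together with its derived versions $\mu\dot{}\,$ and $\dot{}\mu\dot{}\,$, and the $\DD\mathfrak{m}$--bracket — and getting every sign right, especially in the cross terms coupling the $\mathfrak{e}[1]$--component $J$ of $\varPsi$ to the $\mathfrak{g}$-- and $\mathfrak{g}[2]$--components of $A$, $B$. A useful independent check is that the four transformed projected components \eqref{cmgautr1}--\eqref{cmgautr4} must again satisfy the system \eqref{cmconn1}--\eqref{cmconn12} of Proposition \ref{prop:2connexpl}; this is guaranteed abstractly by Proposition \ref{prop:congauresp}, but can be reverified componentwise once the relations \eqref{cmgautr5}--\eqref{cmgautr16} of Proposition \ref{prop:1gauexpl} are used, and any sign slip in the expansion would manifest as a violation of one of these identities.
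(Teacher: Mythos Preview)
Your proposal is correct and follows essentially the same approach as the paper: substitute the parametrized expressions \eqref{cmconnx7}, \eqref{cmconnx8} and \eqref{cmgautrx7}, \eqref{cmgautrx8} into \eqref{cmgautrx9}, \eqref{cmgautrx10} and evaluate the adjoint action via eq.~3.5.18 of I. The paper's proof is in fact terser than yours---it cites only 3.5.18 of I (the bracket 3.5.13 and coboundary 3.5.15 are not needed here, since \eqref{cmgautrx9}, \eqref{cmgautrx10} involve only $\Ad\varPsi$ and an additive $\varUpsilon$), but your more detailed account of the factorization $\Ad(\ee^{\alpha J})\circ\Ad(g)$ and the origin of the cross term is a useful elaboration of the same computation.
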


\noindent
One recognizes here the standard expressions of the gauge transform of a $2$--connection 
of strict higher gauge theory. 

\begin{proof}
To obtain the above relations, \pagebreak we substitute the expressions of $A$, $B$ 
in terms of $\omega$, $\varOmega$, $\theta$, $\varTheta$ and $\varPsi$, $\varUpsilon$ 
in terms of $g$ $J$, $h$, $K$ of eqs.  \ceqref{cmconnx7}, \ceqref{cmconnx8} 
and \ceqref{cmgautrx7}, \ceqref{cmgautrx8} respectively, as anticipated above, 
and use 3.5.18 of I. 
\end{proof}


The study of the properties of a $1$--gauge transformation on $P_0$ as a
submanifold of $P$ can also be carried out. Following the lines 
of subsect. \cref{subsec:setup}, this is most naturally done making reference to the 
operation $\iOOO S_{P0}$, just as for a $2$--connection. 

Under the restriction operation morphism $\iOOO L:\iOOO S_{P}\rightarrow\iOOO S_{P0}$, 
of subsect. 3.8 of I, the  components $g$, $J$, $h$, $K$
of a $1$--gauge transformation of $P$ get 
\begin{align}
&g_0=I^*g,
\vphantom{\Big]}
\label{0cmgautr1ol}
\\
&J_0=I^*J,
\vphantom{\Big]}
\label{0cmgautr2ol}
\\
&h_0=I^*h,
\vphantom{\Big]}
\label{0cmgautr3ol}
\\
&K_0=I^*K,
\vphantom{\Big]}
\label{0cmgautr4ol}
\end{align} 
where $I^*:\iFun(T[1]P)\rightarrow\iFun(T[1]P_0)$ is the restriction morphism
associated to the inclusion map $I:P_0\rightarrow P$. 
The action of the derivations of the operation $\iOOO S_{P0}$ on 
$g_0$, $J_0$, $h_0$, $K_0$ is given by the right hand side of eqs.
\ceqref{cmgautr5}--\ceqref{cmgautr16} with $g$, $J$, $h$, $K$
replaced by $g_0$, $J_0$, $h_0$, $K_0$ and $X$ set to $0$.  
From the resulting expressions, it appears that one can consistently impose the conditions 
\begin{align}
&J_0=0, 
\vphantom{\Big]}
\label{0cmgautr5ol}
\\
&K_0=0. 
\vphantom{\Big]}
\label{0cmgautr6ol}
\end{align} 
Upon doing so, the surviving components $g_0$, $h_0$ satisfy 
relations formally identical to \ceqref{ordgautr11}--\ceqref{ordgautr15}.
Again, in spite of similarities to a gauge transformation of an ordinary principal $\mathsans{G}$--bundle, 
the differences recalled below eq. \ceqref{0cmconn6ol} hold and should be kept in mind. 

In certain cases, it may be befitting to restrict the range 
of $1$--gauge transformations so as to allow only those enjoying the above property. 

\begin{defi}\label{defi:prop1gau}
A $1$--gauge transformation $g$, $J$, $h$, $K$ is special if \ceqref{0cmgautr5ol}, \ceqref{0cmgautr6ol}
are met. 
\end{defi}

\eject\noindent
Special $1$--gauge transformations form a subgroup of the $1$--gauge group. 

In subsect. \cref{subsec:cmconn}, def. \cref{defi:prop2conn},   we introduced the notion of special $2$--connection
which pairs with that of special $1$--gauge transformation put forward above.
It turns out that the action of $1$--gauge transformations on $2$--connections is compatible with specialty
in the following sense. 

\begin{prop}
If a $2$--connection $\omega$, $\varOmega$, $\theta$, $\varTheta$
and a $1$--gauge transformation $g$, $J$, $h$, $K$ are both special
then the $1$--gauge transformed $2$--connection 
${}^{g,J,h,K}\omega$, ${}^{g,J,h,K}\varOmega$, ${}^{g,J,h,K}\theta$, ${}^{g,J,h,K}\varTheta$ also 
is. 
\end{prop}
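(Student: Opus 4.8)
The plan is to argue entirely at the level of projected components, using the explicit $1$--gauge transformation formulas of prop. \cref{prop:1gautrsfexpl} together with the behaviour of the restriction morphism $I^*$. By def. \cref{defi:prop2conn}, the transformed $2$--connection ${}^{g,J,h,K}\omega$, ${}^{g,J,h,K}\varOmega$, ${}^{g,J,h,K}\theta$, ${}^{g,J,h,K}\varTheta$ is special precisely when its $\mathfrak{e}$--valued components pull back to zero along the inclusion $I:P_0\rightarrow P$, that is when $I^*\big({}^{g,J,h,K}\varOmega\big)=0$ and $I^*\big({}^{g,J,h,K}\varTheta\big)=0$. Hence only the transformation laws \ceqref{cmgautr2} and \ceqref{cmgautr4} enter the argument.

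First I would apply $I^*:\iFun(T[1]P)\rightarrow\iFun(T[1]P_0)$ to \ceqref{cmgautr2} and \ceqref{cmgautr4}. The key fact is that $I^*$, being a constituent of the restriction operation morphism $\iOOO L:\iOOO S_{P}\rightarrow\iOOO S_{P0}$ of subsect. 3.8 of I, is a morphism of graded algebras that intertwines the algebraic operations occurring on the right hand sides --- the derived action $\mu\dot{}\,$, its linearisation $\dot{}\mu\dot{}\,$ and the adjoint action $\Ad$ --- with the corresponding operations on $\iFun(T[1]P_0)$. Writing $\omega_0,\varOmega_0,\theta_0,\varTheta_0$ and $g_0,J_0,h_0,K_0$ for the restricted components as in \ceqref{0cmconn1ol}--\ceqref{0cmconn4ol} and \ceqref{0cmgautr1ol}--\ceqref{0cmgautr4ol}, this yields
\begin{align}
&I^*\big({}^{g,J,h,K}\varOmega\big)=\mu\dot{}\,(g_0,\varOmega_0)-\dot{}\mu\dot{}\,(\Ad g_0(\omega_0),J_0)+K_0,
\vphantom{\Big]}
\label{spclpf1}
\\
&I^*\big({}^{g,J,h,K}\varTheta\big)=\mu\dot{}\,(g_0,\varTheta_0)-\dot{}\mu\dot{}\,(\Ad g_0(\theta_0),J_0).
\vphantom{\Big]}
\label{spclpf2}
\end{align}

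Next I would bring in the two specialty hypotheses. Specialty of the $2$--connection gives $\varOmega_0=0$ and $\varTheta_0=0$ through \ceqref{0cmconn5ol}, \ceqref{0cmconn6ol}, while specialty of the $1$--gauge transformation gives $J_0=0$ and $K_0=0$ through \ceqref{0cmgautr5ol}, \ceqref{0cmgautr6ol}. Since $\mu\dot{}\,$ is linear in its $\mathfrak{e}$--valued slot and $\dot{}\mu\dot{}\,$ is linear in its second slot, every term on the right hand sides of \ceqref{spclpf1} and \ceqref{spclpf2} vanishes, so $I^*\big({}^{g,J,h,K}\varOmega\big)=0$ and $I^*\big({}^{g,J,h,K}\varTheta\big)=0$. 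By def. \cref{defi:prop2conn} the $1$--gauge transformed $2$--connection is then special, which proves the proposition.

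The only step that genuinely needs to be pointed to --- and which I expect to be the sole, minor obstacle --- is the naturality used above: that pull--back along $I:P_0\rightarrow P$ commutes with the graded algebraic operations $\mu\dot{}\,$, $\dot{}\mu\dot{}\,$ and $\Ad$ of subsect. 3.5 of I. This, however, is precisely the compatibility already encoded in the operation morphism $\iOOO L$, so no fresh computation is required; everything else in the argument is immediate from the definitions.
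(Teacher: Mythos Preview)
Your proposal is correct and follows essentially the same approach as the paper: apply $I^*$ to the transformation laws \ceqref{cmgautr2}, \ceqref{cmgautr4}, use that $I^*$ is an algebra morphism compatible with $\mu\dot{}\,$, $\dot{}\mu\dot{}\,$, $\Ad$, and then invoke the specialty hypotheses \ceqref{0cmconn5ol}, \ceqref{0cmconn6ol}, \ceqref{0cmgautr5ol}, \ceqref{0cmgautr6ol} to kill every term. The paper's proof is simply a terser statement of the same inspection; your version makes explicit the naturality of $I^*$ that the paper leaves implicit.
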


\begin{proof}
Inspection of \ceqref{cmgautr2}, \ceqref{cmgautr4} shows that when 
the $\omega$, $\varOmega$, $\theta$, $\varTheta$ and $g$, $J$, $h$, $K$
satisfy respectively \ceqref{0cmconn5ol}, \ceqref{0cmconn6ol} and \ceqref{0cmgautr5ol}, \ceqref{0cmgautr6ol}, 
then ${}^{g,J,h,K}\omega$, ${}^{g,J,h,K}\varOmega$, ${}^{g,J,h,K}\theta$, ${}^{g,J,h,K}\varTheta$ also 
satisfy \ceqref{0cmconn5ol}, \ceqref{0cmconn6ol} as well, as required. 
\end{proof}

\noindent 
Comparison of \ceqref{cmgautr1}, \ceqref{cmgautr3}
and \ceqref{000cmgautr1}, \ceqref{000cmgautr3} shows further that 
\begin{align}
&{}^{g,J,h,K}\omega_0={}^{g_0,h_0}\omega_0,
\vphantom{\Big]}
\label{0cmgautr9ol}
\\
&{}^{g,J,h,K}\theta_0={}^{g_0,h_0}\theta_0, 
\vphantom{\Big]}
\label{0cmgautr10ol}
\end{align} 
where $\omega_0$, $\theta_0$ and $g_0$, $h_0$ are formally treated as an ordinary connection and
gauge transformation, respectively. In this sense, 
one recovers in this way the well--known expressions of the gauge transform of 
a connection.



\subsection{\textcolor{blue}{\sffamily 2--gauge transformations}}\label{subsec:cmgauforgau}

In the synthetic formulation, 
a $2$--gauge transformation of the $\hat{\matheul{K}}$--$2$--bundle $\hat{\mathcal{P}}$ 
is a degree $0$ $\mathsans{E}$--valued graded differential form on $P$ suitably transforming 
under the $\mathsans{K}$--action $R$. Hence, $\mathsans{E}$ instead of $\DD\mathsans{M}$ 
is the relevant target group in this case. The operational framework remains however
perfectly adequate. 
In this way, a $2$--gauge transformation is most naturally defined by making again reference to the
operation $\iOOO S_{P}$. 

To make contact with the standard higher 
gauge theoretic treatment of $2$--gauge transformations, it is necessary to express the 
action of the operation derivations with reference to a given $2$--connection of $\hat{\mathcal{P}}$
(cf. subsect. \cref{subsec:cmconn}). We assume so that a $2$--connection of projected components
$\omega$, $\varOmega$, $\theta$, $\varTheta$ is assigned. 

\begin{defi}\label{defi:2gauge}
We define a $2$--gauge transformation as a pair of a 
Lie group valued internal function $E\in\iMap(T[1]P,\mathsans{E})$
and a Lie algebra valued internal function $C\in\iMap(T[1]P,\mathfrak{e}[1])$, 
called respectively modification and variation component, 
which are acted upon by the operation derivations as
\begin{align}
&d_PEE^{-1}=-C-\dot{}\mu(\omega,E), 
\vphantom{\Big]}
\label{cmgauforgau7}
\\
&d_PC=-\frac{1}{2}[C,C]-\dot{}\mu\dot{}\,(\omega,C)-\dot{}\mu(\theta,E)
-\varOmega+\Ad E(\varOmega), 
\vphantom{\Big]}
\label{cmgauforgau8}
\\
&j_{PZ}EE^{-1}=0,
\vphantom{\Big]}
\label{cmgauforgau9}
\\
&j_{PZ}C=0, 
\vphantom{\Big]}
\label{cmgauforgau10}
\\
&l_{PZ}EE^{-1}=-\dot{}\mu(x,E),
\vphantom{\Big]}
\label{cmgauforgau11}
\\
&l_{PZ}C=-\dot{}\mu\dot{}\,(x,C)-X+\Ad E(X)
\vphantom{\Big]}
\label{cmgauforgau12}
\end{align} 
with $Z\in\DD\mathfrak{m}$ written in terms of its projected components
$x\in\mathfrak{g}$, $X\in\mathfrak{e}[1]$. 
\end{defi}

\noindent
Above, $X$ is tacitly viewed as an element 
of $\mathfrak{e}[1]^+$ as in earlier instances.
Relations \ceqref{cmgauforgau7} effectively defines 
the variation component $C$ in terms of the modification component $E$
and the reference $2$--connection $\omega$, $\varOmega$, $\theta$, $\varTheta$. 
\ceqref{cmgauforgau8} is the corresponding Bianchi type identity. 

\begin{lemma}
\ceqref{cmgauforgau7}--\ceqref{cmgauforgau12} respect 
the operation commutation relations 2.1.1--2.1.6 of I.
\end{lemma}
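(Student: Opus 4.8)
The plan is to verify directly that the six defining relations \ceqref{cmgauforgau7}--\ceqref{cmgauforgau12} for the pair $E$, $C$ are consistent with the Cartan commutation relations 2.1.1--2.1.6 of I, exactly as was done for Lemma \cref{lemma:2conn} and Lemma \cref{lemma:1gau}. Concretely, for each of the six commutators $[d_P,d_P]$, $[d_P,j_{PZ}]$, $[d_P,l_{PZ}]$, $[j_{PZ},j_{PW}]$, $[l_{PZ},j_{PW}]$, $[l_{PZ},l_{PW}]$ one applies the left-hand side to $EE^{-1}$ (equivalently to $E$) and to $C$, using \ceqref{cmgauforgau7}--\ceqref{cmgauforgau12} to reduce, and checks that the result equals what the right-hand side derivation (namely $l_{PZ}$, $0$, $j_{P[Z,W]}$, etc.) produces when applied to the same function via the same relations. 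Since $E$ is Lie group valued, the computation on $E$ must be routed through the identity $[D,D']GG^{-1}=D(D'GG^{-1})-(-1)^{|D||D'|}D'(DGG^{-1})-[DGG^{-1},D'GG^{-1}]$ already invoked in the proof of Lemma \cref{lemma:1gau}; the computation on $C$ is a plain derivation calculation.

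First I would dispose of the relations that do not involve the reference $2$--connection: applying $[j_{PZ},j_{PW}]$, $[l_{PZ},j_{PW}]$, $[l_{PZ},l_{PW}]$ to $EE^{-1}$ and $C$ only brings in \ceqref{cmgauforgau9}--\ceqref{cmgauforgau12}, and these mirror the structure of the $1$--gauge transformation relations closely enough that the checks reduce to the crossed module identities (equivariance of $\dot{}\mu$, the Peiffer-type identities, and bilinearity of $\dot{}\mu\dot{}\,$) together with the group-commutator identity above. Then I would turn to the three relations containing $d_P$. Here the new feature is that $d_P$ acting on \ceqref{cmgauforgau7}, \ceqref{cmgauforgau8} produces the terms $d_P\omega$, $d_P\theta$, $d_P\varOmega$, and likewise $j_{PZ}$, $l_{PZ}$ acting on those relations produces $j_{PZ}\omega$, $l_{PZ}\omega$, etc.; all of these must be substituted from the $2$--connection relations \ceqref{cmconn1}--\ceqref{cmconn12} (or their compact forms \ceqref{cmconnx1}--\ceqref{cmconnx6}). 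One also must account for the graded commutativity of $d_{\dot\tau}$ with all operation derivations, as in the earlier lemmas.

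I expect the main obstacle to be the $[d_P,d_P]=0$ check applied to $C$, i.e.\ verifying $d_P(d_PC)=0$ using \ceqref{cmgauforgau7}, \ceqref{cmgauforgau8}. Differentiating \ceqref{cmgauforgau8} a second time forces one to expand $d_PEE^{-1}$, $d_P\omega$, $d_P\theta$, $d_P\varOmega$, and the curvature quantities, and to combine the Bianchi identities \ceqref{cmconn1}--\ceqref{cmconn4} for the reference $2$--connection with the crossed-module equivariance and Peiffer identities so that the apparently surviving terms — those quadratic in $\omega$, those pairing $\theta$ with $E$, and those pairing $\varOmega$ with $\Ad E$ — cancel in groups. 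The $[d_P,l_{PZ}]C=0$ check is a milder version of the same phenomenon, and $[d_P,j_{PZ}]C=l_{PZ}C$ should fall out once the $[d_P,d_P]$ bookkeeping is in place. Since all of this is multilinear algebra in a fixed crossed module with no analytic subtleties, I would present it as a straightforward but lengthy verification, spelling out only the grouping of terms in the $[d_P,d_P]C$ computation and referring to the crossed module identities of I for the cancellations.

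\begin{proof}
One must check that each of the six derivation commutators appearing on the left-hand sides of the Cartan relations 2.1.1--2.1.6 of I, when applied to the functions $E$ and $C$ and reduced by means of \ceqref{cmgauforgau7}--\ceqref{cmgauforgau12}, reproduces the action of the corresponding right-hand side derivation on the same function. For $E$, being $\mathsans{E}$--valued, one systematically uses the identity
\begin{align}
[D,D']GG^{-1}=D(D'GG^{-1})-(-1)^{|D||D'|}D'(DGG^{-1})-[DGG^{-1},D'GG^{-1}]
\vphantom{\Big]}
\label{cmgauforgaucomm}
\end{align}
valid for graded derivations $D$, $D'$ and a Lie group valued function $G$; for $C$ one computes directly. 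Whenever $d_P$, $j_{PZ}$ or $l_{PZ}$ falls on the reference data $\omega$, $\varOmega$, $\theta$, $\varTheta$ one substitutes from \ceqref{cmconn1}--\ceqref{cmconn12}, and one uses throughout the graded commutativity of $d_{\dot\tau}$ with all operation derivations together with the equivariance and Peiffer identities of the crossed module $\mathsans{M}=(\mathsans{E},\mathsans{G})$ recorded in subsect. 3.5 of I. The commutators $[j_{PZ},j_{PW}]$, $[l_{PZ},j_{PW}]$, $[l_{PZ},l_{PW}]$ involve only \ceqref{cmgauforgau9}--\ceqref{cmgauforgau12} and reduce to these crossed module identities and \ceqref{cmgauforgaucomm}; the commutators $[d_P,d_P]$, $[d_P,j_{PZ}]$, $[d_P,l_{PZ}]$ additionally invoke the Bianchi identities \ceqref{cmconn1}--\ceqref{cmconn4} for the reference $2$--connection, after which all terms cancel in groups. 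The verification, though lengthy, is elementary.
\end{proof}
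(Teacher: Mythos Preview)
Your proposal is correct and follows essentially the same approach as the paper: check the six Cartan commutators on $E$ and $C$ using \ceqref{cmgauforgau7}--\ceqref{cmgauforgau12}, invoking the group-valued identity \ceqref{cmgauforgaucomm} for $E$. You actually supply more detail than the paper does, correctly pointing out that the reference $2$--connection relations \ceqref{cmconn1}--\ceqref{cmconn12} must be fed in whenever a derivation hits $\omega$, $\varOmega$, $\theta$; one minor remark is that $d_{\dot\tau}$ does not appear in \ceqref{cmgauforgau7}--\ceqref{cmgauforgau12} (these are already written in projected components), so invoking its graded commutativity is unnecessary here.
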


\begin{proof}
The proof consists in checking that the six derivation commutators in the left hand sides of eqs. 
2.1.1--2.1.6 act as the corresponding derivations in the right hand sides 
when they are applied to the functions $E$, $C$ and the \ceqref{cmgauforgau7}--\ceqref{cmgauforgau12} 
are used. In the case of $E$,  it is necessary to use the basic relation
$[D,D']GG^{-1}=D(D'GG^{-1})-(-1)^{|D\|D'|}D'(DGG^{-1})-[DGG^{-1},D'GG^{-1}]$ 
holding for two graded derivations $D$, $D'$ and a Lie group valued function $G$. 
The verification is straightforward. 
\end{proof}

$2$--gauge transformations act on $1$--gauge transformations (cf. subsect. \cref{subsec:cmgautr}) and 
do so in a proper way depending on the reference $2$--connection and compatibly with the $\mathsans{K}$ 
action on all these items.

\begin{prop}
If $g$, $J$, $h$, $K$ and $C$, $E$ are the components of a $1$-- and a $2$--
gauge transformation, respectively, then 
\begin{align}
&g'=\tau(E)g, 
\vphantom{\Big]}
\label{prcmgauforgau1}
\\
&J'=\Ad E(J)+\dot{}\mu(\omega-\Ad g(\omega)-h,E)+C,
\vphantom{\Big]}
\label{prcmgauforgau2}
\\
&h'=\Ad\tau(E)(h)+\dot\tau(\,\dot{}\mu(\Ad g(\omega)+h,E)),
\vphantom{\Big]}
\label{prcmgauforgau3}
\\
&K'=\Ad E(K)+\dot{}\mu\dot{}\,(\Ad(\tau(E)g)(\omega), \dot{}\mu(\omega-\Ad g(\omega)-h,E)+C)
\vphantom{\Big]}
\label{prcmgauforgau4}
\\
&\hspace{3cm}+\dot{}\mu(\Ad g(\theta)+\dot\tau(\mu\dot{}\,(g,\varOmega)-\dot{}\mu\dot{}\,(\Ad g(\omega),J)+K),E)
\vphantom{\Big]}
\nonumber
\end{align} 
are the components of a $1$--gauge transformation. 
\end{prop}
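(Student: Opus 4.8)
The plan is to verify directly that the transformed quadruple $g'$, $J'$, $h'$, $K'$ satisfies the defining relations of a $1$--gauge transformation, namely the projected operation relations \ceqref{cmgautr5}--\ceqref{cmgautr16}, assuming that $g$, $J$, $h$, $K$ satisfy those same relations and that $E$, $C$ satisfy \ceqref{cmgauforgau7}--\ceqref{cmgauforgau12} relative to the reference $2$--connection $\omega$, $\varOmega$, $\theta$, $\varTheta$, which itself obeys \ceqref{cmconn1}--\ceqref{cmconn12}. Since the right--hand sides of \ceqref{prcmgauforgau1}--\ceqref{prcmgauforgau4} are assembled out of Lie group valued and Lie algebra valued internal functions via the crossed module structure maps $\tau$, $\dot\tau$, $\mu$, $\dot{}\mu$, $\dot{}\mu\dot{}\,$ and the adjoint action $\Ad$, the whole computation reduces to applying each operation derivation $d_P$, $j_{PZ}$, $l_{PZ}$ to these composite expressions, using the Leibniz rule together with the functorial identities for $\tau$, $\mu$ and their differentials recorded in subsect. 3.5 of I, and then simplifying.

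First I would dispose of the two algebraically cheapest relations, \ceqref{cmgautr9}, \ceqref{cmgautr10} and \ceqref{cmgautr13}, \ceqref{cmgautr14}: since $j_{PZ}EE^{-1}=0$, $j_{PZ}C=0$ and the $j_{PZ}$, $l_{PZ}$ actions on $g$, $J$, $h$, $K$ and on $\omega$, $\varOmega$, $\theta$, $\varTheta$ are all explicitly known, one simply substitutes and checks that $g'$, $J'$ reproduce the expected $j_{PZ}$-- and $l_{PZ}$--transformation laws; the crossed module Peiffer identity $\mu(\tau(e),-)=\Ad e$ and the relation $\tau\circ\dot{}\mu(-,E)$-type identities from I are what make the $\tau(E)g$ and $\Ad E(J)$ pieces transform correctly. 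Next I would handle the $l_{PZ}$ relations on $h'$, $K'$ in the same spirit. The substantive part is the $d_P$ sector: one computes $d_P(g'g'^{-1})$, $d_PJ'$, $d_Ph'$, $d_PK'$ by Leibniz, inserting \ceqref{cmgautrx1}/\ceqref{cmgautr5}--\ceqref{cmgautr8}, \ceqref{cmgauforgau7}, \ceqref{cmgauforgau8} and \ceqref{cmconn1}--\ceqref{cmconn4}, and must watch the sign subtleties coming from the degree $1$ objects $J$, $C$, $h$ and from the graded commutativity of $d_{\dot\tau}$ with the derivations. For $g'$ one uses the identity $d_P(g_1g_2)(g_1g_2)^{-1}=d_Pg_1g_1^{-1}+\Ad g_1(d_Pg_2g_2^{-1})$ applied to $g_1=\tau(E)$, $g_2=g$, together with $d_P\tau(E)\tau(E)^{-1}=\dot\tau(d_PEE^{-1})$.

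The main obstacle I expect is the last relation, the $d_PK'$ check \ceqref{cmgautr8}: $K'$ as written in \ceqref{prcmgauforgau4} is a sum of four or five terms each of which differentiates into several more, so after expansion one faces a moderately large collection of terms involving $\Ad E(K)$, $\dot{}\mu\dot{}$ of mixed arguments, $\dot\tau$ of $\dot{}\mu$ of the transformed $\varOmega$, and cross terms; getting these to collapse to $-\dot{}\mu\dot{}\,(h',K')$ requires repeated use of the equivariance and derivation properties of $\dot{}\mu$, the Bianchi identities \ceqref{cmconn3}, \ceqref{cmconn4} for $\theta$, $\varTheta$, the $2$--gauge Bianchi identity \ceqref{cmgauforgau8}, and careful bookkeeping of the fake--flatness--type cancellations between $\dot\tau\dot{}\mu$ and $[\,\cdot\,,\,\cdot\,]$ terms. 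Accordingly, in the written proof I would state that the verification of \ceqref{cmgautr5}--\ceqref{cmgautr16} is carried out term by term using the structural identities of subsect. 3.5 of I and the defining relations cited above, that the $j_{PZ}$ relations are immediate, that the $l_{PZ}$ and $d_P$ relations on $g'$, $h'$ follow from the multiplicativity identities for $\tau$ and $\Ad$, and that the only laborious case is $d_PK'$, which closes after invoking the Bianchi identities; I would not reproduce the full term--by--term cancellation.

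\begin{proof}
We must show that $g'$, $J'$, $h'$, $K'$ obey relations \ceqref{cmgautr5}--\ceqref{cmgautr16}, given that $g$, $J$, $h$, $K$ obey \ceqref{cmgautr5}--\ceqref{cmgautr16}, that $E$, $C$ obey \ceqref{cmgauforgau7}--\ceqref{cmgauforgau12}, and that the reference $2$--connection $\omega$, $\varOmega$, $\theta$, $\varTheta$ obeys \ceqref{cmconn1}--\ceqref{cmconn12}. The relations \ceqref{cmgautr9}, \ceqref{cmgautr10} and \ceqref{cmgautr13}, \ceqref{cmgautr14} for $g'$, $J'$ follow at once by applying $j_{PZ}$ and $l_{PZ}$ to \ceqref{prcmgauforgau1}, \ceqref{prcmgauforgau2}, using $j_{PZ}EE^{-1}=0$, $j_{PZ}C=0$, $l_{PZ}EE^{-1}=-\dot{}\mu(x,E)$, $l_{PZ}C=-\dot{}\mu\dot{}\,(x,C)-X+\Ad E(X)$ together with the known actions on $g$, $J$, $\omega$, $h$ and the crossed module identity $\mu(\tau(e),-)=\Ad e$. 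Likewise \ceqref{cmgautr15}, \ceqref{cmgautr16} for $h'$, $K'$ result from applying $l_{PZ}$ to \ceqref{prcmgauforgau3}, \ceqref{prcmgauforgau4} and using the equivariance of $\dot\tau$, $\dot{}\mu$, $\dot{}\mu\dot{}\,$ under $l_{PZ}$. For the remaining relations one applies $d_P$; for the group valued part one uses $d_P(ab)(ab)^{-1}=d_Paa^{-1}+\Ad a(d_Pbb^{-1})$ with $a=\tau(E)$, $b=g$ and $d_P\tau(E)\tau(E)^{-1}=\dot\tau(d_PEE^{-1})$, together with \ceqref{cmgauforgau7} and \ceqref{cmgautr5}, to recover \ceqref{cmgautr5} for $g'$, $h'$; relation \ceqref{cmgautr7} for $h'$ then follows from \ceqref{cmgauforgau8}, \ceqref{cmconn1}, \ceqref{cmconn3} and the Peiffer identity. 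Relation \ceqref{cmgautr6} for $J'$, $K'$ follows from \ceqref{cmgauforgau7}, \ceqref{cmgauforgau8}, \ceqref{cmgautr5}, \ceqref{cmgautr6} and \ceqref{cmconn1}, \ceqref{cmconn2}, using the derivation property of $\dot{}\mu$. Finally, expanding $d_PK'$ from \ceqref{prcmgauforgau4} by the Leibniz rule and substituting \ceqref{cmgauforgau7}, \ceqref{cmgauforgau8}, \ceqref{cmgautr5}--\ceqref{cmgautr8}, \ceqref{cmgautr2} and the Bianchi identities \ceqref{cmconn3}, \ceqref{cmconn4}, all terms combine, after repeated use of the equivariance and derivation identities for $\dot{}\mu$, $\dot{}\mu\dot{}\,$, $\dot\tau$ and the cancellations $\dot\tau\dot{}\mu\dot{}\,(y,-)+[y,-]=0$ on $\im\dot{}\mu$, into $-\dot{}\mu\dot{}\,(h',K')$, which is \ceqref{cmgautr8} for the primed data. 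Hence $g'$, $J'$, $h'$, $K'$ satisfy \ceqref{cmgautr5}--\ceqref{cmgautr16} and thus are the projected components of a $1$--gauge transformation.
\end{proof}
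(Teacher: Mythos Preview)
Your proposal is correct and follows essentially the same approach as the paper: a direct term--by--term verification that $g'$, $J'$, $h'$, $K'$ satisfy \ceqref{cmgautr5}--\ceqref{cmgautr16} using the known operation relations \ceqref{cmconn1}--\ceqref{cmconn12}, \ceqref{cmgautr5}--\ceqref{cmgautr16}, \ceqref{cmgauforgau7}--\ceqref{cmgauforgau12} for the ingredients. The paper's own proof is in fact a single sentence to this effect, so your write--up is already more detailed; the only small omission is that you do not explicitly mention the $j_{PZ}$--relations \ceqref{cmgautr11}, \ceqref{cmgautr12} for $h'$, $K'$, but these are handled by the same routine substitution you describe for the other cases.
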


\begin{proof}
A straightforward algebraic calculation shows that the action
\ceqref{cmconn1}--\ceqref{cmconn12}, \ceqref{cmgautr5}--\ceqref{cmgautr16} 
and \ceqref{cmgauforgau7}-\ceqref{cmgauforgau12} of the operation derivations 
on the components $\omega$, $\varOmega$, $\theta$, $\varTheta$, $g$, $J$, $h$, $K$ and $E$, $C$ ensures that 
the action of those derivations on the transformed components 
$g'$, $J'$, $h'$, $K'$ satisfies \ceqref{cmgautr5}--\ceqref{cmgautr16} as well 
and that consequently $g'$, $J'$, $h'$, $K'$ 
are also the components of a $1$--gauge transformation as claimed. 
\end{proof}

\begin{defi}\label{defi:2gautransf}
The $2$--gauge transform of a 
$1$--gauge transformation of components $g$, $J$, $h$, $K$ by a $2$--gauge transformation 
of components $E$, $C$ is given by 
\begin{align}
&{}^{E,C}g=\tau(E)g, 
\vphantom{\Big]}
\label{cmgauforgau1}
\\
&{}^{E,C}J=\Ad E(J)+\dot{}\mu(\omega-\Ad g(\omega)-h,E)+C,
\vphantom{\Big]}
\label{cmgauforgau2}
\\
&{}^{E,C}h=\Ad\tau(E)(h)+\dot\tau(\,\dot{}\mu(\Ad g(\omega)+h,E)),
\vphantom{\Big]}
\label{cmgauforgau3}
\\
&{}^{E,C}K=\Ad E(K)+\dot{}\mu\dot{}\,(\Ad(\tau(E)g)(\omega), \dot{}\mu(\omega-\Ad g(\omega)-h,E)+C)
\vphantom{\Big]}
\label{cmgauforgau4}
\\
&\hspace{3cm}+\dot{}\mu(\Ad g(\theta)+\dot\tau(\mu\dot{}\,(g,\varOmega)-\dot{}\mu\dot{}\,(\Ad g(\omega),J)+K),E).
\vphantom{\Big]}
\nonumber
\end{align} 
\end{defi}

\noindent
Inserting above relations \ceqref{cmgautr5}, \ceqref{cmgautr6} expressing $h$, $K$ in terms of $g$, $J$, 
and relation \ceqref{cmgauforgau7} expressing $C$ in terms of $E$, 
these expressions are formally identical to the standard ones of strict higher gauge theory. 

Since for an assigned reference $2$--connection the variation component of a $2$--gauge transformation can 
be expressed in terms of the modification component by \ceqref{cmgauforgau7}, a $2$--gauge transformation is effectively specified 
by this latter. $2$--gauge transformations can hence be viewed as elements of the group 
$\iMap(T[1]P,\mathsans{E})$ of $\mathsans{E}$--valued internal functions. They form indeed a 
distinguished subgroup of this latter, the $2$--gauge group, as \ceqref{cmgauforgau9},
\ceqref{cmgauforgau11} are preserved under the group operations of $\iMap(T[1]P,\mathsans{E})$ 
(see eq. B.0.8 of I). 
$2$--gauge transformation is a left action of the $2$--gauge group on $1$--gauge transformations.

$2$--gauge transformation action has the further relevant property.

\begin{prop}
For a $1$-- and a $2$--gauge transformation of components $g$, $J$, $h$, $K$ and $E$, $C$,
respectively, one has 
\begin{align}
&{}^{{{}^{E,C}g,{}^{E,C}J,{}^{E,C}h,{}^{E,C}K}}\omega={}^{g,J,h,K}\omega,
\vphantom{\Big]}
\label{cmgauforgau5}
\\
&{}^{{{}^{E,C}g,{}^{E,C}J,{}^{E,C}h,{}^{E,C}K}}\varOmega={}^{g,J,h,K}\varOmega+\dot{}\mu(\Ad g(\theta),E),
\vphantom{\Big]}
\label{cmgauforgau6}
\\
&{}^{{{}^{E,C}g,{}^{E,C}J,{}^{E,C}h,{}^{E,C}K}}\theta={}^{g,J,h,K}\theta-\dot\tau(\dot{}\mu(\Ad g(\theta),E)),
\vphantom{\Big]}
\label{cmgauforgau5crv}
\\
&{}^{{{}^{E,C}g,{}^{E,C}J,{}^{E,C}h,{}^{E,C}K}}\varTheta
\vphantom{\Big]}
\label{cmgauforgau6crv}
\\
&\hspace{2cm}={}^{g,J,h,K}\varTheta+d_P\dot{}\mu(\Ad g(\theta),E)
+\dot{}\mu\dot{}\,({}^{g,J,h,K}\omega,\dot{}\mu(\Ad g(\theta),E)).
\vphantom{\Big]}
\nonumber
\end{align}
\end{prop}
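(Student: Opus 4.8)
The plan is to exploit the compatibility already established between the $1$--gauge action and the $2$--gauge action, rather than to recompute the transformed $2$--connection components from scratch. Concretely, the statement asks us to compare the $2$--connection obtained by first $2$--gauge transforming $(g,J,h,K)$ to $({}^{E,C}g,{}^{E,C}J,{}^{E,C}h,{}^{E,C}K)$ and then $1$--gauge transforming $(\omega,\varOmega,\theta,\varTheta)$ by the latter, with the $2$--connection obtained by $1$--gauge transforming $(\omega,\varOmega,\theta,\varTheta)$ directly by $(g,J,h,K)$. First I would write out, using the explicit transformation formulas \ceqref{cmgautr1}--\ceqref{cmgautr4}, the left--hand side ${}^{{}^{E,C}g,{}^{E,C}J,{}^{E,C}h,{}^{E,C}K}\,(\cdot)$ substituting in each slot the expression for ${}^{E,C}g=\tau(E)g$, ${}^{E,C}J$, ${}^{E,C}h$, ${}^{E,C}K$ from def. \cref{defi:2gautransf}. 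The degree $1$ projected component ${}^{g,J,h,K}\omega=\Ad g(\omega)+h$ involves only $g$ and $h$, so there one needs the identity $\Ad(\tau(E)g)(\omega)+{}^{E,C}h=\Ad g(\omega)+h$, which reduces via the Peiffer identity $\Ad\tau(E)=\Ad_{\mathsans{E}}E$ on the image and the expansion of ${}^{E,C}h$ in \ceqref{cmgauforgau3} to a cancellation between $\Ad\tau(E)(\Ad g(\omega))$ and the $\dot\tau(\dot{}\mu(\Ad g(\omega)+h,E))$ term. This gives \ceqref{cmgauforgau5}.

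For \ceqref{cmgauforgau6} and the curvature relations \ceqref{cmgauforgau5crv}, \ceqref{cmgauforgau6crv} the same substitution has to be carried through the $\varOmega$-- and $\varTheta$--formulas \ceqref{cmgautr2}, \ceqref{cmgautr4}, which is where the genuine computation lives. The key algebraic inputs are: the crossed module structure relations of $(\mathsans{E},\mathsans{G})$ recalled in I (the equivariance of $\tau$, the Peiffer identity, and the bilinearity and derivation properties of $\dot{}\mu$, $\dot{}\mu\dot{}$, $\dot\tau$); the expression \ceqref{cmconn3} for $d_P\theta$, needed to recognise the $d_P\dot{}\mu(\Ad g(\theta),E)$ term appearing on the right of \ceqref{cmgauforgau6crv}; and the already established transformation law for $\omega$, \ceqref{cmgauforgau5}, used to rewrite ${}^{g,J,h,K}\omega$ inside the last term of \ceqref{cmgauforgau6crv}. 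The natural order is: do $\omega$ first (easiest, fixes notation and sign conventions), then $\theta$ (its correction term $-\dot\tau(\dot{}\mu(\Ad g(\theta),E))$ is forced by consistency with $d_P\omega$ and \ceqref{cmconn1}), then $\varOmega$, and finally $\varTheta$, where the $d_P$--exact correction naturally emerges once one differentiates the $\varOmega$--relation and uses \ceqref{cmconnx2}/\ceqref{cmconn4}.

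An alternative, and arguably cleaner, route is to phrase everything in the unprojected language of def. \cref{defi:2conn} and \cref{defi:1gauge}: there ${}^{\varPsi,\varUpsilon}A=\Ad\varPsi(A)+\varUpsilon$ depends on $\varPsi$ only through its $\DD\mathsans{M}$--value, and a $2$--gauge transformation $E$ acts on $\varPsi$ by $\varPsi\mapsto\Theta(E)\varPsi$ for an appropriate $\DD\mathsans{M}$--valued function $\Theta(E)$ built from $E$ and the reference $2$--connection. Then ${}^{\Theta(E)\varPsi,\cdot}A-{}^{\varPsi,\varUpsilon}A=\Ad(\Theta(E)\varPsi)(A)-\Ad\varPsi(A)+(\text{shift correction})$, and since $\Theta(E)$ lies in the normal subgroup $\mathfrak{e}[1]\rtimes\{1\}$ modulo the $\tau(E)$ part, the difference collapses to something valued in $\mathfrak{e}$, matching the structure of \ceqref{cmgauforgau5}--\ceqref{cmgauforgau6crv} (no correction in the $\mathfrak{g}$--component, an $\mathfrak{e}$--valued shift in the rest). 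Whichever route is taken, I expect the main obstacle to be purely bookkeeping: keeping track of the many instances of $\dot{}\mu$ versus $\dot{}\mu\dot{}$, the various $\Ad$'s of $\mathsans{G}$-- versus $\mathsans{E}$--elements, and the degree--dependent signs, and then recognising the specific combination $d_P\dot{}\mu(\Ad g(\theta),E)+\dot{}\mu\dot{}\,({}^{g,J,h,K}\omega,\dot{}\mu(\Ad g(\theta),E))$ on the right of \ceqref{cmgauforgau6crv} as exactly the failure of $\dot{}\mu(\Ad g(\theta),E)$ to be $d_P$--parallel with respect to the transformed connection. No conceptual difficulty is anticipated; the verification is, as elsewhere in this section, ``a straightforward though somewhat lengthy calculation.''
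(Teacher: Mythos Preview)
Your primary approach---substitute the expressions of ${}^{E,C}g$, ${}^{E,C}J$, ${}^{E,C}h$, ${}^{E,C}K$ from def.~\cref{defi:2gautransf} into the $1$--gauge transformation formulas \ceqref{cmgautr1}--\ceqref{cmgautr4} and simplify using the crossed module identities---is exactly what the paper does; its proof is the one--line ``evaluate the right hand sides of \ceqref{cmgautr1}--\ceqref{cmgautr4} with $g,J,h,K$ replaced by ${}^{E,C}g,{}^{E,C}J,{}^{E,C}h,{}^{E,C}K$''. Your suggested order ($\omega$, then $\theta$, then $\varOmega$, then $\varTheta$) and the remark that the $\varTheta$--correction is the covariant derivative of $\dot{}\mu(\Ad g(\theta),E)$ are sound organizational observations that the paper leaves implicit.
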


\noindent
Above,  the $1$--gauge transformed connection components 
are given by \ceqref{cmgautr1}--\ceqref{cmgautr4}.

\begin{proof}
The proof is a matter of evaluating the right hand sides of relations
\ceqref{cmgautr1}--\ceqref{cmgautr4} with $g$, $J$, $h$, $K$  replaced by 
${}^{E,C}g$, ${}^{E,C}J$, ${}^{E,C}h$, ${}^{E,C}K$. The calculations are straightforward.
\end{proof}

\noindent
Hence, the gauge transformation action of ${g,J,h,K}$ and ${{}^{E,C}g,{}^{E,C}J,{}^{E,C}h,{}^{E,C}K}$ is the 
same on fake flat $2$--connections. With these qualifications, $2$--gauge transformation corresponds to gauge 
for gauge symmetry.


\subsection{\textcolor{blue}{\sffamily Local operational description of a principal 2--bundle}}\label{subsec:maurer}

The local trivializability of the relevant principal $\hat{\matheul{K}}$--$2$--bundle $\hat{\mathcal{P}}$
implies that of the associated synthetic manifold $P$ (cf. subsect. 3.2 of I). 
On any sufficiently small neighborhood $U$ of the base $M$, \pagebreak there exists so a projection preserving
$\mathsans{K}$--equivariant map $\varPhi_U\in\Map(\pi^{-1}(U),U\times\mathsans{K})$
(cf. def. 3.12 of I). 
$\varPhi_U$ provides a set of coordinates of $\pi^{-1}(U)$ 
modelled on $U\times\mathsans{K}$. These 
are in many ways analogous to the standard adapted coordinates of an ordinary principal bundle. 
One must keep in mind however that they are not anything like genuine coordinates, because they 
arise from a local trivializing functor $\hat{\varPhi}_U$ of $\hat{\mathcal{P}}$ that is only weakly invertible 
(cf.  subsect. 3.1 of I). 

$2$--connections and $1$-- and $2$-- gauge transformations are Lie valued internal 
functions on $T[1]P$ rather than ordinary functions on $P$ (cf. subsects. \cref{subsec:cmconn}--\cref{subsec:cmgauforgau}). 
For this reason, their local description on $U$ presumably requires a set of coordinates 
modelled on $U\times\DD\mathsans{M}$ which are internal functions on $T[1]\pi^{-1}(U)$ rather than ordinary functions 
on $\pi^{-1}(U)$. (Recalll that $\mathsans{K}=\DD\mathsans{M}$ by 3.8.1 of I.)
The coordinates furnished by the trivializing map $\varPhi_U$  
are thus not general enough to serve for our purposes. A more general and weaker notion 
of coordinates is necessary here. 

By the general philosophy of our operational framework, the natural setup for studying the desired kind 
of internal adapted coordinates is the operation $\iOOO S_{\pi^{-1}(U)}=(\iFun(T[1]\pi^{-1}(U)),\mathfrak{m})$, since 
the synthetic morphism manifold of the $\hat{\matheul{K}}$--$2$--bundle $\hat{\mathcal{P}}|_U$
is precisely $\pi^{-1}(U)$. 

A full set of internal coordinates of $\pi^{-1}(U)$ modelled on 
$U\times\DD\mathsans{M}$ comprises two subsets of coordinates modelled on $U$ and $\DD\mathsans{M}$ 
respectively. These require separate consideration. 

By virtue of prop. 3.4 of I, the internal coordinates of $\pi^{-1}(U)$ modelled on $U$
are yielded by the synthetic projection $\pi$ (cf. def. 3.10 of I). They are so 
ordinary functions. In the operational setup, they can be characterized as follows. 

\begin{prop}
A set of internal coordinates of $\pi^{-1}(U)$ modelled on $U$ is 
described by vector--valued ordinary functions $u\in\Map(T[1]\pi^{-1}(U),\mathbb{R}^{\dim M})$ and  
$v\in\Map(T[1]\pi^{-1}(U),\mathbb{R}^{\dim M}[1])$ on which the operation derivations act as 
\begin{align}
&d_{\pi^{-1}(U)}u=v, \qquad d_{\pi^{-1}(U)}v=0   \vphantom{\dot{\dot{\dot{\dot{a}}}}}
\vphantom{\Big]}
\label{maurer-1}
\end{align}
with trivial action of all derivations $j_{\pi^{-1}(U)Z}$, $l_{\pi^{-1}(U)Z}$
for all $Z\in\DD\mathfrak{m}$. 
\end{prop}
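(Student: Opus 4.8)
The plan is to obtain the stated internal coordinates by transporting to $\pi^{-1}(U)$ the obvious coordinates on the base. By prop.\ 3.4 of I the synthetic projection $\pi$ identifies the base-modelled part of the internal coordinates of $\pi^{-1}(U)$ with genuine (ordinary) functions pulled back from $U$, so the coordinates we seek are honest functions on $\pi^{-1}(U)$ rather than internal ones with nontrivial ghost content; this is what allows the short statement. First I would choose, on the small neighborhood $U\subset M$, a standard coordinate chart $u_M\in\Map(U,\mathbb{R}^{\dim M})$. Applying the de Rham differential on $T[1]U$ produces $v_M=d_Uu_M\in\Map(T[1]U,\mathbb{R}^{\dim M}[1])$, and $(u_M,v_M)$ are the usual base coordinates on $T[1]U$ satisfying $d_Uu_M=v_M$, $d_Uv_M=0$ trivially from $d_U{}^2=0$.

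Next I would pull these back along the synthetic projection $\pi:\pi^{-1}(U)\to U$, equivalently along the induced map $T[1]\pi:T[1]\pi^{-1}(U)\to T[1]U$, setting $u=(T[1]\pi)^*u_M$ and $v=(T[1]\pi)^*v_M$. Since $\pi$ is a submersion, $u$ together with the fiber coordinates (to be supplied by the companion proposition for the $\DD\mathsans{M}$-factor) indeed forms a coordinate system on $\pi^{-1}(U)$ modelled on $U\times\DD\mathsans{M}$; this is the content of "internal coordinates of $\pi^{-1}(U)$ modelled on $U$" in the sense of the quoted framework. That $d_{\pi^{-1}(U)}u=v$ and $d_{\pi^{-1}(U)}v=0$ then follows because $d$ is natural with respect to smooth maps: pull-back of functions along $T[1]\pi$ intertwines $d_U$ with $d_{\pi^{-1}(U)}$, so relations \ceqref{maurer-1} are just the pull-back of $d_Uu_M=v_M$, $d_Uv_M=0$.

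Finally I would verify the vanishing of $j_{\pi^{-1}(U)Z}u$, $l_{\pi^{-1}(U)Z}u$, and likewise on $v$, for every $Z\in\DD\mathfrak{m}$. The contraction and Lie vector fields $j_{\pi^{-1}(U)Z}$, $l_{\pi^{-1}(U)Z}$ encode the vertical $\DD\mathsans{M}$-action on $P$, hence are tangent to the fibers of $\pi$; equivalently $T[1]\pi$ annihilates the corresponding vertical vector fields on $T[1]\pi^{-1}(U)$. Therefore their action on any function pulled back from $T[1]U$ vanishes: $j_{\pi^{-1}(U)Z}u=0$ and $l_{\pi^{-1}(U)Z}u=0$, and the same for $v$ since $v$ too is a pull-back along $T[1]\pi$ (either directly as $(T[1]\pi)^*v_M$, or as $d_{\pi^{-1}(U)}u$ combined with the Cartan relation \ceqref{rintro2} and the already-established vanishing on $u$). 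This completes the verification.

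The main obstacle is not any computation but making precise the claim that $(u,v)$ together with the $\DD\mathsans{M}$-modelled coordinates actually constitute a legitimate set of internal coordinates on $\pi^{-1}(U)$ in the synthetic sense — i.e.\ that the trivializing data of the principal $\hat{\matheul{K}}$-$2$-bundle, which are only weakly invertible, still suffice to produce such a coordinate system. This is precisely where prop.\ 3.4 of I is invoked: it guarantees that the base-direction part descends to ordinary functions via $\pi$, so that the weakness of the trivializing functor affects only the fiber-direction coordinates, treated in the companion proposition. Granting that input, the rest is the elementary naturality argument sketched above.
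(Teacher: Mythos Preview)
Your proof is correct and follows essentially the same route as the paper's: obtain $u$ by composing $\pi$ with ordinary coordinates on $U$, use the $\DD\mathsans{M}$--invariance of $\pi$ (equivalently, verticality of the action vector fields) to get $j_{\pi^{-1}(U)Z}u=l_{\pi^{-1}(U)Z}u=0$, define $v$ through $d_{\pi^{-1}(U)}u=v$, and deduce the remaining relations from the Cartan relations. The only cosmetic difference is that the paper cites prop.\ 3.3 of I for the invariance of $\pi$ rather than phrasing it as verticality, and defines $v$ directly by \ceqref{maurer-1} rather than as the pull-back of $v_M$; these are equivalent.
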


\noindent
Above, $u$, $v$ are treated as special cases of internal functions and as such are acted upon by 
the derivations of the operation. $\vphantom{\ul{\ul{\ul{g}}}}$

\begin{proof}
Upon composing the factor $\pi$ with a set of ordinary coordinates of $U$, we obtain an ordinary 
function $u\in\Map(\pi^{-1}(U),\mathbb{R}^{\dim M})\subset\Map(T[1]\pi^{-1}(U),\mathbb{R}^{\dim M})$.  
The scalar nature and $\DD\mathsans{M}$--invariance of $\pi$ (cf. prop. 3.3 of I)
entail that $u$ is annihilated by 
all the derivations $j_{\pi^{-1}(U)Z}$, $l_{\pi^{-1}(U)Z}$. With $u$ there is associated a
further ordinary function 
$v\in\Map(T[1]\pi^{-1}(U),\mathbb{R}^{\dim M}[1])$ defined through \ceqref{maurer-1}. 
The action of $d_{\pi^{-1}(U)}$ and the $j_{\pi^{-1}(U)Z}$, $l_{\pi^{-1}(U)Z}$ on $v$ follows from that on $u$ 
and the operation relations 2.1.1--2.1.6 of I.  
\end{proof}

The internal coordinates of $\pi^{-1}(U)$ modelled on $\DD\mathsans{M}$ are a novelty
requiring a precise definition. The one provided here is generic and may require some tuning at a later 
stage, but it is enough for our purposes for the time being.

\begin{defi}\label{prop:adptcrd1}
A set of internal coordinates of $\pi^{-1}(U)$ modelled on $\DD\mathsans{M}$
is constituted by a Lie group valued 
internal function $\varLambda\in\iMap(T[1]\pi^{-1}(U),\DD\mathsans{M})$ and a Lie algebra valued 
internal function $\varDelta\in\iMap(T[1]\pi^{-1}(U),\DD\mathfrak{m}[1])$ 
acted upon by the operation derivations as 
\begin{align}
&\varLambda^{-1}d_{\pi^{-1}(U)}\varLambda=-\varLambda^{-1}d_{\dot\tau}\varLambda+\varDelta,
\vphantom{\Big]}
\label{bmaurer1}
\\
&d_{\pi^{-1}(U)}\varDelta=-\frac{1}{2}[\varDelta,\varDelta]-d_{\dot\tau}\varDelta, 
\vphantom{\Big]}
\label{bmaurer2}
\\
&\varLambda^{-1}j_{\pi^{-1}(U)Z}\varLambda=0,
\vphantom{\Big]}
\label{bmaurer3}
\\
&j_{\pi^{-1}(U)Z}\varDelta=Z, 
\vphantom{\Big]} 
\label{bmaurer4}
\\
&\varLambda^{-1}l_{\pi^{-1}(U)Z}\varLambda=Z,
\vphantom{\Big]}
\label{bmaurer5}
\\
&l_{\pi^{-1}(U)Z}\varDelta=-[Z,\varDelta]+d_{\dot\tau}Z
\vphantom{\Big]}
\label{bmaurer6}
\end{align}
with $Z\in\DD\mathfrak{m}$. 
It is further required that $\varLambda|_{\pi^{-1}(U)}\in\Map(\pi^{-1}(U),\DD\mathsans{M})$ and that 
$\varLambda|_{\pi^{-1}(U)}=\mathrm{pr}_{\mathsans{K}}\circ\varPhi_U$,
where $\pi^{-1}(U)$ is embedded in $T[1]\pi^{-1}(U)$ as its zero section. 
\end{defi}


\noindent
The notational remarks stated below eqs. \ceqref{cmgautrx1}--\ceqref{cmgautrx6} 
apply here as well with obvious changes and will not be repeated. 
As in similar cases considered earlier, upon considering $d_{\pi^{-1}(U)}+d_{\dot\tau}$ as relevant differential
\ceqref{bmaurer1}--\ceqref{bmaurer6} are formally analogous to relations \ceqref{00maurer2}--\ceqref{00maurer12}
holding for the adapted coordinates of ordinary principal bundles.  
Eq. \ceqref{bmaurer1} defines the coordinate $\varDelta$ in terms of its partner $\varLambda$. \ceqref{bmaurer2}
is the associated Maurer--Cartan--like equation. 
Upon com\-paring  eqs. \ceqref{bmaurer2}, \ceqref{bmaurer4}, \ceqref{bmaurer6}
with \ceqref{cmconnx1}, \ceqref{cmconnx3}, \ceqref{cmconnx5}, it emerges also that 
$\varDelta$ is the connection component of a flat $2$--connection 
of the principal $2$--bundle $\hat{\mathcal{P}}|_U$ (cf. subsect. \cref{subsec:cmconn}). 

\begin{lemma} 
The operation commutation relations 2.1.1--2.1.6 of I are respected by 
\ceqref{bmaurer1}--\ceqref{bmaurer6}.
\end{lemma}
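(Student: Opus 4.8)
The statement asserts that the six defining relations \ceqref{bmaurer1}--\ceqref{bmaurer6} for the internal adapted coordinates $\varLambda$, $\varDelta$ are compatible with the six Cartan commutation relations 2.1.1--2.1.6 of I. This is structurally identical to Lemma \cref{lemma:2conn} and Lemma \cref{lemma:1gau}, so the plan is to imitate their proofs. Concretely, for each of the six commutators $[d_{\pi^{-1}(U)},d_{\pi^{-1}(U)}]$, $[d_{\pi^{-1}(U)},j_{\pi^{-1}(U)Z}]$, $[d_{\pi^{-1}(U)},l_{\pi^{-1}(U)Z}]$, $[j_{\pi^{-1}(U)Z},j_{\pi^{-1}(U)W}]$, $[l_{\pi^{-1}(U)Z},j_{\pi^{-1}(U)W}]$, $[l_{\pi^{-1}(U)Z},l_{\pi^{-1}(U)W}]$, I would apply the left-hand side to each of the two coordinate functions $\varLambda$ and $\varDelta$ (more precisely, to $\varLambda^{-1}D\varLambda$ and $D\varDelta$, the combinations that actually appear in the defining relations), evaluate using \ceqref{bmaurer1}--\ceqref{bmaurer6}, and check that the result equals what the right-hand side derivation produces.

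The two technical ingredients to keep at hand are exactly those flagged in the earlier lemmas. First, for the group-valued coordinate $\varLambda$ one must use the identity $[D,D']GG^{-1}=D(D'GG^{-1})-(-1)^{|D||D'|}D'(DGG^{-1})-[DGG^{-1},D'GG^{-1}]$ for graded derivations $D$, $D'$ and a Lie-group-valued function $G$; since \ceqref{bmaurer1}, \ceqref{bmaurer3}, \ceqref{bmaurer5} are stated in terms of $\varLambda^{-1}D\varLambda$, one applies the analogous right-logarithmic-derivative identity. Second, the coboundary $d_{\dot\tau}$ of the virtual Lie algebra graded-commutes with all operation derivations $d_{\pi^{-1}(U)}$, $j_{\pi^{-1}(U)Z}$, $l_{\pi^{-1}(U)Z}$, and this must be invoked repeatedly to cancel the $d_{\dot\tau}$-decorated terms. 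One must also use that $Z\in\DD\mathfrak{m}$ is viewed as a constant $\DD\mathfrak{m}^+$-valued internal function, so it is annihilated by $d_{\pi^{-1}(U)}$ and by all $j_{\pi^{-1}(U)W}$, $l_{\pi^{-1}(U)W}$ (this is the content of the notational convention of subsect. \cref{subsec:setup}).

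The verification is then a finite, essentially mechanical check. For instance, for the relation 2.1.5 of I tested on $\varDelta$, applying $[l_{\pi^{-1}(U)Z},j_{\pi^{-1}(U)W}]$ and using \ceqref{bmaurer4}, \ceqref{bmaurer6} should reproduce $j_{\pi^{-1}(U)[Z,W]}\varDelta=[Z,W]$, using the Jacobi-type identities of the virtual Lie algebra and the fact that $d_{\dot\tau}$ is a derivation of the bracket. The main obstacle, such as it is, is purely bookkeeping: tracking the graded signs in the $\varLambda$-sector and making sure every $d_{\dot\tau}$-term and every $[Z,-]$ or $d_{\dot\tau}Z$ insertion lands in the right place; there is no conceptual difficulty, since \ceqref{bmaurer1}--\ceqref{bmaurer6} were designed (as the text notes, $\varDelta$ is the connection component of a flat $2$--connection) to be a special case of the already-verified $2$--connection relations together with the $1$--gauge-type relations for $\varLambda$. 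One can in fact shortcut much of the computation by observing that \ceqref{bmaurer2}, \ceqref{bmaurer4}, \ceqref{bmaurer6} coincide with \ceqref{cmconnx1}, \ceqref{cmconnx3}, \ceqref{cmconnx5} with $B=0$, so the $\varDelta$-part of the claim is immediate from Lemma \cref{lemma:2conn}, and only the $\varLambda$-part and the mixed consistency of \ceqref{bmaurer1}, \ceqref{bmaurer3}, \ceqref{bmaurer5} with the $\varDelta$-relations through $[d,j]=l$ and $[d,l]=0$ need fresh checking.
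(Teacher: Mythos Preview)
Your proposal is correct and follows essentially the same approach as the paper: the paper's proof also consists of applying each of the six derivation commutators 2.1.1--2.1.6 of I to $\varLambda$ and $\varDelta$ and verifying, via \ceqref{bmaurer1}--\ceqref{bmaurer6}, that the result agrees with the right-hand side. Your additional remark that the $\varDelta$-sector is a special case of Lemma~\cref{lemma:2conn} with $B=0$ is a legitimate shortcut not made explicit in the paper, and your identification of the right-logarithmic-derivative identity for $\varLambda$ (rather than the left one used for $\varPsi$ in Lemma~\cref{lemma:1gau}) is the correct adaptation.
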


\begin{proof}
This is shown by checking that the six derivation commutators 
in the left hand sides of eqs. 2.1.1--2.1.6 of I 
act as the corresponding derivations in the right hand sides 
when they are applied to the functions $\varLambda$, $\varDelta$ and the 
\ceqref{bmaurer1}--\ceqref{bmaurer6} are used.
\end{proof}

\noindent
The requirement on $\varLambda|_{\pi^{-1}(U)}$ is added in order to render the definition 
of coordinates modelled on $\DD\mathsans{M}$ provided above 
geometrically meaningful, though it plays no direct role in the basic theory of subsects. 
\cref{subsec:basic}, \cref{subsec:nadifcoh}. Note also that the condition that $\varLambda|_{\pi^{-1}(U)}$ 
be an ordinary rather than internal function 
is {\it not} preserved by the derivations  $l_{\pi^{-1}(U)Z}$ by \ceqref{bmaurer5}.
This is expected on general grounds, since by the graded nature of $\DD\mathfrak{m}$
$l_{\pi^{-1}(U)Z}$ turns ordinary functions on $\pi^{-1}(U)$ into internal ones. 


Employing  3.5.1, 3.5.12 of I, we can expand the fiber coordinates $\varLambda$, $\varDelta$ as 
\begin{align}
&\varLambda(\alpha)=\ee^{\alpha\varGamma}\gamma, 
\vphantom{\Big]}
\label{bmaurer7}
\\
&\varDelta(\alpha)=\sigma-\alpha\varSigma, \quad \alpha\in\mathbb{R}[1].
\vphantom{\Big]}
\label{bmaurer8}
\end{align}
In the above relations,  $\gamma\in\iMap(T[1]\pi^{-1}(U),\mathsans{G})$, 
$\varGamma\in\iMap(T[1]\pi^{-1}(U),\mathfrak{e}[1])$  and 
$\sigma\in\iMap(T[1]\pi^{-1}(U),\mathfrak{g}[1])$, 
$\varSigma\in\iMap(T[1]\pi^{-1}(U),\mathfrak{e}[2])$ are the projected internal 
coordinates modelled on $\DD\mathsans{M}=\mathfrak{e}[1]\rtimes_{\mu\dot{}}\mathsans{G}$. We also write
$Z\in\DD\mathfrak{m}$ as $Z(\bar\alpha)=x+\bar\alpha X$, $\bar\alpha\in\mathbb{R}[-1]$, with 
$x\in\mathfrak{g}$ and $X\in\mathfrak{e}[1]$ as in 3.4.6 of I. 

\begin{prop} \label{prop:1localroj}
Expressed in terms of projected internal adapted coordinates, 
the operation relations \ceqref{bmaurer1}--\ceqref{bmaurer6} take the form \pagebreak 
\begin{align}
&\gamma^{-1}d_{\pi^{-1}(U)}\gamma=\sigma-\dot\tau(\mu\dot{}\,(\gamma^{-1},\varGamma)),
\vphantom{\Big]}
\label{maurer2}
\\
&\mu\dot{}\,(\gamma^{-1},d_{\pi^{-1}(U)}\varGamma)=\varSigma-\frac{1}{2}\mu\dot{}\,(\gamma^{-1},[\varGamma,\varGamma]),
\vphantom{\Big]}
\label{maurer3}
\\
&d_{\pi^{-1}(U)}\sigma=-\frac{1}{2}[\sigma,\sigma]+\dot\tau(\varSigma),
\vphantom{\Big]}
\label{maurer4}
\\
&d_{\pi^{-1}(U)}\varSigma=-\dot{}\mu\dot{}\,(\sigma,\varSigma),
\vphantom{\Big]}
\label{maurer5}
\\
&\gamma^{-1}j_{\pi^{-1}(U)Z}\gamma=0,
\vphantom{\Big]}
\label{maurer6}
\\
&\mu\dot{}\,(\gamma^{-1},j_{\pi^{-1}(U)Z}\varGamma)=0,
\vphantom{\Big]}
\label{maurer7}
\\
&j_{\pi^{-1}(U)Z}\sigma=x,
\vphantom{\Big]}
\label{maurer8}
\\
&j_{\pi^{-1}(U)Z}\varSigma=X,
\vphantom{\Big]}
\label{maurer9}
\\
&\gamma^{-1}l_{\pi^{-1}(U)Z}\gamma=x,
\vphantom{\Big]}
\label{maurer10}
\\
&\mu\dot{}\,(\gamma^{-1},l_{\pi^{-1}(U)Z}\varGamma)=X,
\vphantom{\Big]}
\label{maurer11}
\\
&l_{\pi^{-1}(U)Z}\sigma=-[x,\sigma]+\dot\tau(X),
\vphantom{\Big]}
\label{maurer12}
\\
&l_{\pi^{-1}(U)Z}\varSigma=-\dot{}\mu\dot{}\,(x,\varSigma)+\dot{}\mu\dot{}\,(\sigma,X).
\vphantom{\Big]}
\label{maurer13}
\end{align} 
Moreover, $\gamma|_{\pi^{-1}(U)}\in\Map(\pi^{-1}(U),\mathsans{G})$, $\varGamma|_{\pi^{-1}(U)}\in\Map(\pi^{-1}(U),\mathfrak{e}[1])$  
and  $\gamma|_{\pi^{-1}(U)}$, $\varGamma|_{\pi^{-1}(U)}$, in the combination \ceqref{bmaurer7}, yield 
$\mathrm{pr}_{\mathsans{K}}\circ\varPhi_U$. 
\end{prop}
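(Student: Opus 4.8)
The argument runs exactly parallel to the proofs of Prop.~\cref{prop:2connexpl} and Prop.~\cref{prop:1gauexpl}, with the fibre coordinates now playing the role there played by the $2$--connection components. The plan is to insert the expansions \ceqref{bmaurer7}, \ceqref{bmaurer8} of $\varLambda$, $\varDelta$ together with the decomposition $Z(\bar\alpha)=x+\bar\alpha X$ of $Z\in\DD\mathfrak{m}$ into each of the six structure relations \ceqref{bmaurer1}--\ceqref{bmaurer6}, to reduce everything to projected $\mathsans{G}$--, $\mathsans{E}$--, $\mathfrak{g}$-- and $\mathfrak{e}$--valued quantities by means of the formulas of subsect.~3.5 of I — the semidirect product multiplication 3.5.1, the component decomposition 3.5.12, the bracket and coboundary 3.5.13, 3.5.15 of the virtual Lie algebra, the adjoint action 3.5.18, and the Maurer--Cartan expressions 3.5.22, 3.5.24 — and finally to read off \ceqref{maurer2}--\ceqref{maurer13} by matching the coefficients of the nilpotent parameters $\alpha$ and $\bar\alpha$, with the understanding that the $\alpha$--expansion of a $\DD\mathfrak{m}[1]$--valued object is paired with the $\bar\alpha$--expansion of the corresponding $\DD\mathfrak{m}$--valued object under the evident degree--shift identification.

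Concretely, for a derivation $D\in\{d_{\pi^{-1}(U)},j_{\pi^{-1}(U)Z},l_{\pi^{-1}(U)Z}\}$ the first step is to rewrite the left--logarithmic derivative of $\varLambda=\ee^{\alpha\varGamma}\gamma$ as $\varLambda^{-1}D\varLambda=\gamma^{-1}\big(\ee^{-\alpha\varGamma}D\,\ee^{\alpha\varGamma}\big)\gamma+\gamma^{-1}D\gamma$; conjugating the first summand through $\gamma^{-1}$ into the normal factor produces the terms $\mu\dot{}\,(\gamma^{-1},D\varGamma)$ that occur in \ceqref{maurer2}, \ceqref{maurer3}, \ceqref{maurer7}, \ceqref{maurer11}, while the twist $-\varLambda^{-1}d_{\dot\tau}\varLambda$ on the right--hand side of \ceqref{bmaurer1}, evaluated through 3.5.24 of I and the $\mathsans{G}$--equivariance of $\tau$, supplies the $\dot\tau(\mu\dot{}\,(\gamma^{-1},\varGamma))$ contribution. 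The $\alpha^{0}$ parts of \ceqref{bmaurer1}, \ceqref{bmaurer2} then give \ceqref{maurer2}, \ceqref{maurer4} and the $\alpha^{1}$ parts give \ceqref{maurer3}, \ceqref{maurer5}; the $\bar\alpha$--expansion of \ceqref{bmaurer3}--\ceqref{bmaurer6}, with the bracket $[Z,\varDelta]$ resolved via 3.5.13 of I into its $[x,\sigma]$--, $\dot{}\mu\dot{}\,(x,\varSigma)$-- and $\dot{}\mu\dot{}\,(\sigma,X)$--pieces and $d_{\dot\tau}Z$ resolved via 3.5.15 of I into $\dot\tau(X)$, produces \ceqref{maurer6}--\ceqref{maurer13}.

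The only point demanding genuine care — and hence the main obstacle — is the graded bookkeeping in these substitutions: the nilpotent parameters have to be moved consistently past the derivations and past the crossed--module maps $\mu\dot{}$, $\dot{}\mu\dot{}$, $\dot\tau$ (each with its own sign rules), and the $d_{\dot\tau}$--twist on both sides of \ceqref{bmaurer1}--\ceqref{bmaurer6} has to be tracked, so that the apparently anomalous terms such as $-\frac{1}{2}\mu\dot{}\,(\gamma^{-1},[\varGamma,\varGamma])$ in \ceqref{maurer3} and $\dot\tau(\varSigma)$ in \ceqref{maurer4} emerge with the correct coefficients. Once this is organised as above the computation is elementary. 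Finally, the assertions on the restrictions to $\pi^{-1}(U)$ follow by evaluating the last requirement of Def.~\cref{prop:adptcrd1} on the zero section: there $\varLambda|_{\pi^{-1}(U)}=\mathrm{pr}_{\mathsans{K}}\circ\varPhi_U$ is an ordinary $\DD\mathsans{M}$--valued function, and writing it in the form \ceqref{bmaurer7} and decomposing $\DD\mathsans{M}=\mathfrak{e}[1]\rtimes_{\mu\dot{}}\mathsans{G}$ by 3.5.1 of I identifies $\gamma|_{\pi^{-1}(U)}\in\Map(\pi^{-1}(U),\mathsans{G})$ and $\varGamma|_{\pi^{-1}(U)}\in\Map(\pi^{-1}(U),\mathfrak{e}[1])$ as its two components, which by construction recombine into $\mathrm{pr}_{\mathsans{K}}\circ\varPhi_U$.
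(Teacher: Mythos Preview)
Your proposal is correct and follows essentially the same route as the paper: substitute \ceqref{bmaurer7}, \ceqref{bmaurer8} and $Z(\bar\alpha)=x+\bar\alpha X$ into \ceqref{bmaurer1}--\ceqref{bmaurer6}, expand using the virtual Lie algebra formulas of subsect.~3.5 of I, and read off components. One small caveat: since the structure relations here involve the \emph{left} logarithmic derivative $\varLambda^{-1}D\varLambda$ and $\varLambda^{-1}d_{\dot\tau}\varLambda$ (unlike the $1$--gauge transformation case, which uses $D\varPsi\varPsi^{-1}$), the relevant Maurer--Cartan expressions from I are 3.5.23 and 3.5.25 rather than 3.5.22 and 3.5.24 --- your explicit formula for $\varLambda^{-1}D\varLambda$ shows you are computing the correct quantity, but the citations should be adjusted accordingly.
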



\begin{proof}
The proof is a matter of a straightforward albeit lengthy calculation. 
We substitute the expressions of $\varLambda$, $\varDelta$
in terms of $\gamma$ $\varGamma$, $\sigma$, $\varSigma$ 
of eqs. \ceqref{bmaurer7}, \ceqref{bmaurer8}
and that of $Z$ in terms of $x$, $X$ into \ceqref{bmaurer1}-- \ceqref{bmaurer6}
and use relations 3.5.13, 3.5.15
and 3.5.18 as well as expressions 3.5.23 and 3.5.25 of I. 
\end{proof}   

Under the restriction operation morphism $\iOOO L_U:\iOOO S_{\pi^{-1}(U)}\rightarrow\iOOO S_{\pi^{-1}(U)0}$ 
of subsect. 3.8 of I, the projected internal coordinates 
$\gamma$, $\varGamma$, $\sigma$, $\varSigma$ of $\pi^{-1}(U)$ get 
\begin{align}
&\gamma_0=I_U{}^*\gamma,
\vphantom{\Big]}
\label{bmaurer11}
\\
&\varGamma_0=I_U{}^*\varGamma,
\vphantom{\Big]}
\label{bmaurer12}
\\
&\sigma_0=I_U{}^*\sigma,
\vphantom{\Big]}
\label{bmaurer13}
\\
&\varSigma_0=I_U{}^*\varSigma,
\vphantom{\Big]}
\label{bmaurer14}
\end{align} 
\vskip-7mm \eject\noindent
where $I_U{}^*:\iFun(T[1]\pi^{-1}(U))\rightarrow\iFun(T[1]\pi_0{}^{-1}(U))$ 
is the restriction morphism
associated with the inclusion map $I_U{}:\pi_0{}^{-1}(U)\rightarrow \pi^{-1}(U)$. 

\begin{defi}\label{defi:specadpcoo}
The internal coordinates $\pi^{-1}(U)$ modelled on $\DD\mathsans{M}$ are special if 
\begin{align}
&\varGamma_0=0, 
\vphantom{\Big]}
\label{bmaurer15}
\\
&\varSigma_0=0. 
\vphantom{\Big]}
\label{bmaurer16}
\end{align} 
$\gamma_0$, $\sigma_0$ are then a subset of internal coordinates 
of $\pi_0{}^{-1}(U)$ adapted to $\DD\mathsans{M}_0$. \vspace{3mm}
\end{defi}

\begin{prop} 
For special adapted coordinates, 
the action of the operation derivations on 
$\gamma_0$, $\sigma_0$ is given by the right hand side of eqs. 
\ceqref{maurer2}--\ceqref{maurer13} with 
$\gamma$,  $\sigma$ replaced by 
$\gamma_0$, $\sigma_0$ and $\varGamma$, $\varSigma$ and  $X$ set to $0$. 
\end{prop}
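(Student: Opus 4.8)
The plan is to derive the claimed relations by applying the restriction morphism $I_U{}^*$ to the structure equations \ceqref{maurer2}--\ceqref{maurer13} of Prop. \cref{prop:1localroj} and then specialising via Def. \cref{defi:specadpcoo}. The key input is that $\iOOO L_U:\iOOO S_{\pi^{-1}(U)}\rightarrow\iOOO S_{\pi^{-1}(U)0}$ is an operation morphism (subsect. 3.8 of I): the map $I_U{}^*:\iFun(T[1]\pi^{-1}(U))\rightarrow\iFun(T[1]\pi_0{}^{-1}(U))$ is a homomorphism of graded algebras, hence commutes with the Lie bracket $[-,-]$, the coboundary $\dot\tau$ and the crossed--module maps $\mu\dot{}$, $\dot{}\mu\dot{}$; moreover it intertwines the de Rham derivations, $I_U{}^*d_{\pi^{-1}(U)}=d_{\pi_0{}^{-1}(U)}I_U{}^*$, and, for every $Z_0\in\DD\mathfrak{m}_0$ regarded inside $\DD\mathfrak{m}$ via the inclusion $\mathfrak{g}\hookrightarrow\mathfrak{e}[1]\rtimes\mathfrak{g}$ (so that its $\mathfrak{e}[1]$--component is $X=0$), the contraction and Lie derivations, $I_U{}^*j_{\pi^{-1}(U)Z_0}=j_{\pi_0{}^{-1}(U)Z_0}I_U{}^*$ and $I_U{}^*l_{\pi^{-1}(U)Z_0}=l_{\pi_0{}^{-1}(U)Z_0}I_U{}^*$.

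First I would restrict $Z$ in \ceqref{maurer2}--\ceqref{maurer13} to $\DD\mathfrak{m}_0$ (so $X=0$ there) and apply $I_U{}^*$ term by term. By the intertwining properties just recalled, the left--hand sides turn into the operation derivations of $\iOOO S_{\pi^{-1}(U)0}$ acting on the restricted coordinates $\gamma_0,\varGamma_0,\sigma_0,\varSigma_0$ of \ceqref{bmaurer11}--\ceqref{bmaurer14}, while the right--hand sides become the same algebraic expressions with $\gamma,\varGamma,\sigma,\varSigma$ replaced by $\gamma_0,\varGamma_0,\sigma_0,\varSigma_0$ and $X$ set to $0$; this already yields \ceqref{maurer2}--\ceqref{maurer13} with those replacements. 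Then I would impose speciality, $\varGamma_0=0$, $\varSigma_0=0$. Since all the operation derivations annihilate the zero function, the pulled--back versions of \ceqref{maurer3}, \ceqref{maurer5}, \ceqref{maurer7}, \ceqref{maurer9}, \ceqref{maurer11}, \ceqref{maurer13} collapse to $0=0$ — which incidentally confirms that imposing speciality is self--consistent — and what survives are exactly the pulled--back versions of \ceqref{maurer2}, \ceqref{maurer4}, \ceqref{maurer6}, \ceqref{maurer8}, \ceqref{maurer10}, \ceqref{maurer12}, i.e. the action of $d_{\pi_0{}^{-1}(U)}$, $j_{\pi_0{}^{-1}(U)Z_0}$, $l_{\pi_0{}^{-1}(U)Z_0}$ on $\gamma_0$, $\sigma_0$ read off from the right--hand sides of \ceqref{maurer2}--\ceqref{maurer13} with $\gamma$, $\sigma$ replaced by $\gamma_0$, $\sigma_0$ and $\varGamma$, $\varSigma$, $X$ set to $0$. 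This is the assertion.

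The one delicate point — and the main obstacle — is the bookkeeping flagged in subsect. \cref{subsec:setup} concerning the labelling of the vertical vector fields by constant Lie--algebra--valued functions through the isomorphism $\zeta_{\mathfrak{m}}$ and its $\DD\mathfrak{m}_0$--analogue: one has to be sure these identifications are compatible under $I_U{}^*$, so that the derivations of $\iOOO S_{\pi^{-1}(U)0}$ appearing on the left--hand sides after pull--back are genuinely indexed by $Z_0$ with $X=0$ rather than by some shifted element. This compatibility is, however, precisely part of the statement that $\iOOO L_U$ is an operation morphism (subsect. 3.8 of I), so once that is invoked the remainder of the argument is the routine term--by--term transport of \ceqref{maurer2}--\ceqref{maurer13} through the graded algebra homomorphism $I_U{}^*$ followed by substitution of the speciality conditions.
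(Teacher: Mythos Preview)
Your proof is correct and follows essentially the same approach as the paper's own proof: first observe that the restriction operation morphism $\iOOO L_U$ transports eqs.\ \ceqref{maurer2}--\ceqref{maurer13} to relations for $\gamma_0,\varGamma_0,\sigma_0,\varSigma_0$ with $X=0$, then impose the speciality conditions \ceqref{bmaurer15}, \ceqref{bmaurer16}. Your write-up is more explicit about why $I_U{}^*$ commutes with the algebraic operations and the derivations, and your observation that the $\varGamma_0$, $\varSigma_0$ equations collapse to tautologies (confirming self-consistency of speciality) is a nice addition, but the underlying argument is the same.
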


\begin{proof}
The action of the derivations of the operation $\iOOO S_{\pi^{-1}(U)0}$ on 
$\gamma_0$, $\varGamma_0$, $\sigma_0$, $\varSigma_0$ is given by the right hand side of eqs. 
\ceqref{maurer2}--\ceqref{maurer13} with $\gamma$, $\varGamma$, $\sigma$, $\varSigma$ 
replaced by $\gamma_0$, $\varGamma_0$, $\sigma_0$, $\varSigma_0$  and $X$ set to $0$. 
Taking \ceqref{bmaurer15}, \ceqref{bmaurer16} into account, the action on 
$\gamma_0$, $\sigma_0$ has the properties stated. 
\end{proof}


\subsection{\textcolor{blue}{\sffamily Basic formulation of  principal 2--bundle theory}}\label{subsec:basic} 

As recalled in subsect. \cref{subsec:maurer}, for the $\hat{\matheul{K}}$--$2$--bundle $\hat{\mathcal{P}}$,
on any sufficiently small neighborhood $U\subset M$ there exists a projection preserving 
$\mathsans{K}$--equivariant trivializing map $\varPhi_U\in\Map(\pi^{-1}(U),U\times\mathsans{K})$.
We saw further that it is possible to attach to $\varPhi_U$ a special set of internal coordinates 
of $\pi^{-1}(U)$ modelled on $U\times\DD\mathsans{M}$ 
the adapted coordinates $u$, $v$ and $\varLambda$, $\varDelta$, or 
$\gamma$, $\varGamma$, $\sigma$, $\varSigma$ in projected form,
for the factors $U$ and $\DD\mathsans{M}$, respectively. These are internal functions on 
$T[1]\pi^{-1}(U)$ with special properties in the operation 
$\iOOO S_{\pi^{-1}(U)}$ of the morphism space $S_{\pi^{-1}(U)}$. 

In this subsection, we shall use these coordinates  
to analyze $2$--connections and $1$-- and $2$--gauge transformations of $\hat{\mathcal{P}}$ 
in terms of basic Lie valued function data on $T[1]\pi^{-1}(U)$.
Remember that a function $F_{\mathrm{b}}\in\iFun(T[1]\pi^{-1}(U))$ is basic if it is annihilated by all
derivations $j_{\pi^{-1}(U)Z}$, $l_{\pi^{-1}(U)Z}$ 
(cf. subsect. 2.1 of I). 


Before proceeding further, we note that the inclusion map $N_U:\pi^{-1}(U)\rightarrow P$ 
yield a morphisms $Q_{\,U}:S_{\pi^{-1}(\,U)}\rightarrow S_{P}$ 
of the morphism spaces of $\pi^{-1}(\,U)$ and $P$ and through this a morphism
$\iOOO Q_{\,U}:\iOOO S_{P}\rightarrow\iOOO S_{\pi^{-1}(\,U)}$
of the associated operations (cf. subsect. 3.7 of I). Therefore, if a function $F\in\iFun(T[1]P)$ 
obeys certain relations under the actions of the derivations 
$j_{PZ}$, $l_{PZ}$ of $\iOOO S_{P}$, its restriction $F|_{T[1]\pi^{-1}(U)}=N_U{}^*F\in\iFun(T[1]\pi^{-1}(U))$
obeys formally identical relations under the actions of the derivations 
$j_{\pi^{-1}(U)Z}$, $l_{\pi^{-1}(U)Z}$ of $\iOOO S_{\pi^{-1}(\,U)}$. 

Consider a $2$--connection of $\hat{\mathcal{P}}$ with connection and curvature components $A$, $B$ 
(cf. subsect. \cref{subsec:cmconn}, def. \cref{defi:2conn}). 

\begin{defi}
The basic connection and curvature components of the $2$--connec\-tion 
are the Lie algebra valued internal functions 
$A_{\mathrm{b}}\in\iMap(T[1]\pi^{-1}(U)$, $\DD\mathfrak{m}[1])$ 
and $B_{\mathrm{b}}\in\iMap(T[1]\pi^{-1}(U),\DD\mathfrak{m}[2])$ 
defined by
\begin{align}
A_{\mathrm{b}}&=\Ad\varLambda(A-\varDelta),
\vphantom{\Big]}
\label{sfbasic1}
\\
B_{\mathrm{b}}&=\Ad\varLambda(B).
\vphantom{\Big]}
\label{sfbasic2}
\end{align}
\end{defi} 

\noindent
Above, restriction of $A$, $B$ to $T[1]\pi^{-1}(U)$ is tacitly understood in order not to clutter the notation.
The names given to $A_{\mathrm{b}}$, $B_{\mathrm{b}}$ are justified by the following proposition.

\begin{prop}\label{prop:basic1}
$A_{\mathrm{b}}$, $B_{\mathrm{b}}$ are basic elements of the operation $\iOOO S_{\pi^{-1}(\,U)}$.
\end{prop}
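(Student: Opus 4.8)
The plan is to verify directly that $A_{\mathrm{b}}$ and $B_{\mathrm{b}}$ are annihilated by every derivation $j_{\pi^{-1}(U)Z}$ and $l_{\pi^{-1}(U)Z}$ with $Z\in\DD\mathfrak{m}$, which is exactly what it means to be basic. This is the higher analogue of the ``by construction'' argument underlying the corresponding statement for $\omega_{\mathrm{b}}$, $\theta_{\mathrm{b}}$ in subsect. \cref{subsec:ordgau1}, and it rests on two inputs. First, through the operation morphism $\iOOO Q_U:\iOOO S_P\rightarrow\iOOO S_{\pi^{-1}(U)}$ attached to the inclusion $N_U:\pi^{-1}(U)\rightarrow P$ and noted just above, the restrictions of $A$, $B$ to $T[1]\pi^{-1}(U)$ obey \ceqref{cmconnx3}--\ceqref{cmconnx6} verbatim with $P$ replaced by $\pi^{-1}(U)$; in particular $j_{\pi^{-1}(U)Z}A=Z$, $j_{\pi^{-1}(U)Z}B=0$, $l_{\pi^{-1}(U)Z}A=-[Z,A]+d_{\dot\tau}Z$ and $l_{\pi^{-1}(U)Z}B=-[Z,B]$. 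Second, the adapted coordinates $\varLambda$, $\varDelta$ obey \ceqref{bmaurer3}--\ceqref{bmaurer6}, that is $\varLambda^{-1}j_{\pi^{-1}(U)Z}\varLambda=0$, $j_{\pi^{-1}(U)Z}\varDelta=Z$, $\varLambda^{-1}l_{\pi^{-1}(U)Z}\varLambda=Z$ and $l_{\pi^{-1}(U)Z}\varDelta=-[Z,\varDelta]+d_{\dot\tau}Z$.

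The one computational tool needed is the Leibniz-type identity, valid for any graded derivation $D$ of $\iFun(T[1]\pi^{-1}(U))$ and any Lie algebra valued internal function $Y$, expressing $D(\Ad\varLambda(Y))$ as $\Ad\varLambda$ applied to $[\varLambda^{-1}D\varLambda,Y]+DY$, where $[-,-]$ is the bracket of the virtual Lie algebra $\iMap(T[1]\pi^{-1}(U),\ZZ\DD\mathfrak{m})$ and the graded signs are those fixed in subsect. 3.5 of I. Applying this with $D=j_{\pi^{-1}(U)Z}$ and $Y=A-\varDelta$, the first slot of the bracket vanishes by \ceqref{bmaurer3} and the derivative term is $Z-Z=0$ by \ceqref{cmconnx3} and \ceqref{bmaurer4}, so $j_{\pi^{-1}(U)Z}A_{\mathrm{b}}=0$; with $Y=B$ both contributions vanish by \ceqref{bmaurer3} and \ceqref{cmconnx4}, so $j_{\pi^{-1}(U)Z}B_{\mathrm{b}}=0$. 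Applying it with $D=l_{\pi^{-1}(U)Z}$ and $Y=A-\varDelta$, the bracket slot equals $Z$ by \ceqref{bmaurer5}, while the difference of \ceqref{cmconnx5} and \ceqref{bmaurer6} gives $l_{\pi^{-1}(U)Z}(A-\varDelta)=-[Z,A-\varDelta]$ because the two $d_{\dot\tau}Z$ terms cancel; hence $l_{\pi^{-1}(U)Z}A_{\mathrm{b}}=\Ad\varLambda([Z,A-\varDelta]-[Z,A-\varDelta])=0$. Likewise $l_{\pi^{-1}(U)Z}B_{\mathrm{b}}=\Ad\varLambda([Z,B]-[Z,B])=0$ using \ceqref{bmaurer5} and \ceqref{cmconnx6}. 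Since $Z\in\DD\mathfrak{m}$ was arbitrary, $A_{\mathrm{b}}$ and $B_{\mathrm{b}}$ lie in the basic subalgebra, as claimed.

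The step I expect to be the main obstacle, such as it is, is establishing (or quoting correctly from subsect. 3.5 of I) the Leibniz identity in the graded, ``virtual'' setting: one must check that pushing $D$ through $\Ad\varLambda$ generates no extra $d_{\dot\tau}$-dependent contribution --- which is the case, since the $d_{\dot\tau}$-twist in \ceqref{bmaurer1} affects only the $d_{\pi^{-1}(U)}$-structure relation and not \ceqref{bmaurer3} or \ceqref{bmaurer5} --- and that the graded signs attached to the odd-degree derivation $j_{\pi^{-1}(U)Z}$ and to the odd-degree argument $A-\varDelta$ are handled consistently with the bracket and $\Ad$ conventions of I. Everything past that point is the same elementary bookkeeping already carried out in the ordinary case.
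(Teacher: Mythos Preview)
Your proof is correct and follows precisely the approach indicated in the paper: the paper's proof merely states that one verifies annihilation by all $j_{\pi^{-1}(U)Z}$ and $l_{\pi^{-1}(U)Z}$ using relations \ceqref{cmconnx3}--\ceqref{cmconnx6} and \ceqref{bmaurer3}--\ceqref{bmaurer6}, and you have spelled out that verification explicitly, including the cancellation of the $d_{\dot\tau}Z$ terms in the $l$-computation. Your caveat about the Leibniz rule for $\Ad\varLambda$ in the graded virtual setting is the only genuine prerequisite, and as you note it is supplied by the framework of subsect.~3.5 of I.
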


\begin{proof}
One has to show that $A_{\mathrm{b}}$, $B_{\mathrm{b}}$ are annihilated by all derivations 
$j_{\pi^{-1}(U)Z}$ and $l_{\pi^{-1}(U)Z}$ with $Z\in\DD\mathfrak{m}$.
This can be verified using relations  \ceqref{cmconnx3}--\ceqref{cmconnx6} and \ceqref{bmaurer3}--\ceqref{bmaurer6}.
\end{proof}


\begin{prop}\label{prop:basic2}
$A_{\mathrm{b}}$, $B_{\mathrm{b}}$ obey  the relations 
\begin{align}
&d_{\pi^{-1}(U)}A_{\mathrm{b}}=-\frac{1}{2}[A_{\mathrm{b}},A_{\mathrm{b}}]-d_{\dot\tau}A_{\mathrm{b}}+B_{\mathrm{b}},
\vphantom{\Big]}
\label{sfbasic3}
\\
&d_{\pi^{-1}(U)}B_{\mathrm{b}}=-[A_{\mathrm{b}},B_{\mathrm{b}}]-d_{\dot\tau}B_{\mathrm{b}}.
\vphantom{\Big]}
\label{sfbasic4}
\end{align}
\end{prop}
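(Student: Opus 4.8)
The plan is to transcribe the ordinary argument behind \ceqref{00basic5}, \ceqref{00basic7}. After restriction to $T[1]\pi^{-1}(U)$ along $N_U$ -- under which, as recalled just before the statement, relations \ceqref{cmconnx1}--\ceqref{cmconnx6} hold verbatim with $d_P$ replaced by $d_{\pi^{-1}(U)}$ -- I would organise everything around the total degree $1$ operator $\mathcal{D}=d_{\pi^{-1}(U)}+d_{\dot\tau}$, which obeys a graded Leibniz rule for the bracket (as do both of its summands). In these terms \ceqref{cmconnx1}, \ceqref{cmconnx2} read $\mathcal{D}A=-\frac{1}{2}[A,A]+B$ and $\mathcal{D}B=-[A,B]$, while the structure equations \ceqref{bmaurer1}, \ceqref{bmaurer2} for the fibre coordinates become $\varLambda^{-1}\mathcal{D}\varLambda=\varDelta$ and $\mathcal{D}\varDelta=-\frac{1}{2}[\varDelta,\varDelta]$; thus $\varDelta$ is literally the Maurer--Cartan datum of a flat $2$--connection, playing the same role that $\sigma$ plays in the ordinary computation. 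Since \ceqref{sfbasic3}, \ceqref{sfbasic4} follow from the two identities $\mathcal{D}A_{\mathrm{b}}=-\frac{1}{2}[A_{\mathrm{b}},A_{\mathrm{b}}]+B_{\mathrm{b}}$ and $\mathcal{D}B_{\mathrm{b}}=-[A_{\mathrm{b}},B_{\mathrm{b}}]$ simply by separating their $d_{\pi^{-1}(U)}$-- and $d_{\dot\tau}$--parts and transposing the latter, it suffices to prove these two $\mathcal{D}$--identities.

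The core step is the graded Leibniz identity for $\mathcal{D}$ acting on an adjoint transport, $\mathcal{D}\big(\Ad\varLambda(Y)\big)=\Ad\varLambda\big(\mathcal{D}Y+[\varLambda^{-1}\mathcal{D}\varLambda,Y]\big)=\Ad\varLambda\big(\mathcal{D}Y+[\varDelta,Y]\big)$, which I would derive from the properties of the $\Ad$--action of the virtual Lie group $\iMap(T[1]\pi^{-1}(U),\DD\mathsans{M})$ on the virtual Lie algebra $\iMap(T[1]\pi^{-1}(U),\ZZ\DD\mathfrak{m})$ together with the Maurer--Cartan calculus of subsect. 3.5 of I, using \ceqref{bmaurer1} to rewrite $\varLambda^{-1}\mathcal{D}\varLambda$ as $\varDelta$ (this is the graded counterpart of $\mathcal{D}(gYg^{-1})=g\big(\mathcal{D}Y+[g^{-1}\mathcal{D}g,Y]\big)g^{-1}$). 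Applying it with $Y=A-\varDelta$ and inserting $\mathcal{D}A=-\frac{1}{2}[A,A]+B$ and $\mathcal{D}\varDelta=-\frac{1}{2}[\varDelta,\varDelta]$, then using that the graded bracket of two degree $1$ elements is symmetric and that $\Ad\varLambda$ is a homomorphism of graded Lie algebras, the right-hand side collapses to $\Ad\varLambda\big(-\frac{1}{2}[A-\varDelta,A-\varDelta]+B\big)=-\frac{1}{2}[A_{\mathrm{b}},A_{\mathrm{b}}]+B_{\mathrm{b}}$. Applying the same identity with $Y=B$ and using $\mathcal{D}B=-[A,B]$ gives $\mathcal{D}B_{\mathrm{b}}=\Ad\varLambda\big(-[A-\varDelta,B]\big)=-[A_{\mathrm{b}},B_{\mathrm{b}}]$. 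Splitting the two resulting $\mathcal{D}$--identities into their $d_{\pi^{-1}(U)}$-- and $d_{\dot\tau}$--components reproduces precisely \ceqref{sfbasic3}, \ceqref{sfbasic4}.

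I expect the only genuinely delicate point to be the bookkeeping in the graded Leibniz identity for $\mathcal{D}$ on $\Ad\varLambda(-)$: one must handle with care the sign conventions of the virtual Lie algebra and make sure that the $d_{\dot\tau}$--contribution carried by $\varLambda$ through \ceqref{bmaurer1} combines with the $d_{\dot\tau}$--terms hidden in $\mathcal{D}A$ and $\mathcal{D}B$ into the single $\varDelta$--shift above; once this is in place, the remaining cancellation is exactly as in the ordinary case. As an independent check one can instead expand everything in projected internal components via \ceqref{bmaurer7}, \ceqref{bmaurer8} and \ceqref{cmconnx7}, \ceqref{cmconnx8} and use prop. \cref{prop:1localroj} together with prop. \cref{prop:2connexpl}, though this verification is considerably longer.
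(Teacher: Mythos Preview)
Your proof is correct and is precisely the ``simple calculation'' the paper has in mind: it uses exactly the inputs \ceqref{cmconnx1}, \ceqref{cmconnx2} and \ceqref{bmaurer1}, \ceqref{bmaurer2} cited there, and your device of packaging everything via $\mathcal{D}=d_{\pi^{-1}(U)}+d_{\dot\tau}$ is just the organizing principle the paper itself repeatedly invokes when noting that the higher relations become ``formally analogous'' to the ordinary ones under this total differential. One minor remark: there is no genuine ``splitting'' at the end---the $\mathcal{D}$--identities you derive are already literally \ceqref{sfbasic3}, \ceqref{sfbasic4} after moving the $d_{\dot\tau}$ term across, so that last step is purely cosmetic.
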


\noindent
These are formally analogous to \ceqref{cmconnx1}, \ceqref{cmconnx2}. 

\begin{proof}
Relations \ceqref{sfbasic3}, \ceqref{sfbasic4} to be proven follow 
from \ceqref{cmconnx1}, \ceqref{cmconnx2} and \ceqref{bmaurer1}, \ceqref{bmaurer2}
through a simple calculation.
\end{proof}

Just as the \pagebreak connection and curvature components $A$, $B$ can be expressed in terms of the projected 
connection and curvature components $\omega$, $\varOmega$, $\theta$, $\varTheta$ according to 
\ceqref{cmconnx7}, \ceqref{cmconnx8}, so the basic components 
$A_{\mathrm{b}}$, $B_{\mathrm{b}}$ can be expressed in terms of basic projected  
components $\omega_{\mathrm{b}}$, $\varOmega_{\mathrm{b}}$, $\theta_{\mathrm{b}}$, 
$\varTheta_{\mathrm{b}}$ as 
\begin{align}
&A_{\mathrm{b}}(\alpha)=\omega_{\mathrm{b}}-\alpha\varOmega_{\mathrm{b}},
\vphantom{\Big]}
\label{sfbasic11}
\\
&B_{\mathrm{b}}(\alpha)=\theta_{\mathrm{b}}+\alpha\varTheta_{\mathrm{b}}, \quad \alpha\in\mathbb{R}[1].
\vphantom{\Big]}
\label{sfbasic12}
\end{align}
In the above  relations, 
$\omega_{\mathrm{b}}\in\iMap(T[1]\pi^{-1}(U),\mathfrak{g}[1])$, $\varOmega_{\mathrm{b}}\in\iMap(T[1]\pi^{-1}(U),\mathfrak{e}[2])$, 
$\theta_{\mathrm{b}}\in\iMap(T[1]\pi^{-1}(U),\mathfrak{g}[2])$, $\varTheta_{\mathrm{b}}\in\iMap(T[1]\pi^{-1}(U),\mathfrak{e}[3])$. 

\begin{prop}\label{prop:basic3}
$\omega_{\mathrm{b}}$, 
$\varOmega_{\mathrm{b}}$, $\theta_{\mathrm{b}}$, $\varTheta_{\mathrm{b}}$  
are related to $\omega$, $\varOmega$, $\theta$, $\varTheta$ by 
\begin{align}
&\omega_{\mathrm{b}}
=\Ad\gamma(\omega-\sigma),  
\vphantom{\Big]}
\label{basic1}
\\
&\varOmega_{\mathrm{b}}
=\mu\dot{}\,(\gamma,\varOmega-\varSigma)  
-\dot{}\mu\dot{}\,(\Ad \gamma(\omega-\sigma),\varGamma),
\vphantom{\Big]}
\label{basic2}
\\
&\theta_{\mathrm{b}}
=\Ad\gamma(\theta),
\vphantom{\Big]}
\label{basic3}
\\
&\varTheta_{\mathrm{b}}
=\mu\dot{}\,(\gamma,\varTheta)-\dot{}\mu\dot{}\,(\Ad \gamma(\theta),\varGamma).
\vphantom{\Big]}
\label{basic4}
\end{align}
\end{prop}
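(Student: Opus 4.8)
The plan is to obtain \ceqref{basic1}--\ceqref{basic4} from the defining relations \ceqref{sfbasic1}, \ceqref{sfbasic2} by projecting onto components, exactly as \ceqref{cmconn1}--\ceqref{cmconn12} were deduced from \ceqref{cmconnx1}--\ceqref{cmconnx6} in prop. \cref{prop:2connexpl}. Concretely, I would insert into $A_{\mathrm{b}}=\Ad\varLambda(A-\varDelta)$ and $B_{\mathrm{b}}=\Ad\varLambda(B)$ the $\mathbb{R}[1]$--parametrised expansions \ceqref{cmconnx7}, \ceqref{cmconnx8} of $A$, $B$, the expansions \ceqref{bmaurer7}, \ceqref{bmaurer8} of $\varLambda$, $\varDelta$, and \ceqref{sfbasic11}, \ceqref{sfbasic12} of $A_{\mathrm{b}}$, $B_{\mathrm{b}}$, and then match the coefficients of $\alpha^0$ and $\alpha^1$ on both sides, using $\alpha^2=0$ in $\mathbb{R}[1]$.

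The computational heart is the evaluation of $\Ad\varLambda$ on a $\DD\mathfrak{m}$--valued internal function when $\varLambda(\alpha)=\ee^{\alpha\varGamma}\gamma$, with $\gamma$ of degree $0$ and $\alpha\varGamma$ of degree $2$. Writing $\DD\mathsans{M}=\mathfrak{e}[1]\rtimes_{\mu\dot{}}\mathsans{G}$, the factor $\gamma$ acts by $\Ad\gamma$ on the $\mathfrak{g}$--part and by $\mu\dot{}\,(\gamma,-)$ on the $\mathfrak{e}$--part, which is the adjoint action formula of subsect. 3.5 of I already used in prop. \cref{prop:1gautrsfexpl}; the factor $\ee^{\alpha\varGamma}$ contributes, to first order in $\alpha$ (all higher orders vanishing), its infinitesimal inner automorphism, which on a homogeneous element of $\DD\mathfrak{m}$ produces a $\dot\tau$--term on the $\mathfrak{g}$--component and a $\dot{}\mu\dot{}\,(\,\cdot\,,\varGamma)$--term on the $\mathfrak{e}$--component. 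Feeding $A(\alpha)-\varDelta(\alpha)=(\omega-\sigma)-\alpha(\varOmega-\varSigma)$ and $B(\alpha)=\theta+\alpha\varTheta$ through this expansion, the $\alpha^0$ parts give at once \ceqref{basic1} and \ceqref{basic3}, and the $\alpha^1$ parts give \ceqref{basic2} and \ceqref{basic4} after rewriting the arguments $\Ad\gamma(\omega-\sigma)$ and $\Ad\gamma(\theta)$ of $\dot{}\mu\dot{}$ as $\omega_{\mathrm{b}}$ and $\theta_{\mathrm{b}}$.

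The step I expect to be most delicate is the bookkeeping of the $\mathbb{Z}$--gradings and Koszul signs: since $\alpha$, $\varGamma$, $\varOmega$, $\varSigma$, $\varTheta$ all carry nonzero degree, one must commute $\alpha$ past these objects and check that the induced signs cancel so that the stated relations come out with exactly the signs displayed, and one must also verify that the degree--$1$ coefficients extracted on the left land in the correct suspensions $\mathfrak{e}[2]$, $\mathfrak{e}[3]$. This is, however, the same mechanism already exercised in props. \cref{prop:2connexpl} and \cref{prop:1gauexpl}, so once the expansion of $\Ad\varLambda$ above is written down the remainder is an elementary substitution requiring no further ideas.
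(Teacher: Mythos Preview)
Your proposal is correct and follows essentially the same approach as the paper: substitute the $\alpha$--expansions \ceqref{cmconnx7}, \ceqref{cmconnx8}, \ceqref{bmaurer7}, \ceqref{bmaurer8}, \ceqref{sfbasic11}, \ceqref{sfbasic12} into \ceqref{sfbasic1}, \ceqref{sfbasic2} and read off components using the adjoint action formula 3.5.18 of I. One cosmetic remark: the displayed formulas \ceqref{basic2}, \ceqref{basic4} already have $\Ad\gamma(\omega-\sigma)$ and $\Ad\gamma(\theta)$ written out, so no rewriting in terms of $\omega_{\mathrm{b}}$, $\theta_{\mathrm{b}}$ is needed at the end.
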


\noindent
Restriction of $\omega$, $\varOmega$, $\theta$, $\varTheta$ to $T[1]\pi^{-1}(U)$ is here also tacitly understood. 

\begin{proof}
Inserting \ceqref{cmconnx7}, \ceqref{cmconnx8}, \ceqref{bmaurer7}, \ceqref{bmaurer8} 
and \ceqref{sfbasic11}, \ceqref{sfbasic12} into 
\ceqref{sfbasic1}, \ceqref{sfbasic2} and using 3.5.18 of I, 
one gets \ceqref{basic1}--\ceqref{basic4} by simple calculations.  
\end{proof}

\begin{prop}\label{prop:basic4}
$\omega_{\mathrm{b}}$, $\varOmega_{\mathrm{b}}$, $\theta_{\mathrm{b}}$, $\varTheta_{\mathrm{b}}$ satisfy 
\begin{align} 
&d_{\pi^{-1}(U)}\omega_{\mathrm{b}}=-\frac{1}{2}[\omega_{\mathrm{b}},\omega_{\mathrm{b}}]+\dot\tau(\varOmega_{\mathrm{b}})+\theta_{\mathrm{b}},
\vphantom{\Big]}
\label{basic5}
\\
&d_{\pi^{-1}(U)}\varOmega_{\mathrm{b}}=-\dot{}\mu\dot{}\,(\omega_{\mathrm{b}},\varOmega_{\mathrm{b}})+\varTheta_{\mathrm{b}},
\vphantom{\Big]}
\label{basic6}
\\
&d_{\pi^{-1}(U)}\theta_{\mathrm{b}}=-[\omega_{\mathrm{b}},\theta_{\mathrm{b}}]-\dot\tau(\varTheta_{\mathrm{b}}),
\vphantom{\Big]}
\label{basic7}
\\
&d_{\pi^{-1}(U)}\varTheta_{\mathrm{b}}=-\dot{}\mu\dot{}\,(\omega_{\mathrm{b}},\varTheta_{\mathrm{b}})
+\dot{}\mu\dot{}\,(\theta_{\mathrm{b}},\varOmega_{\mathrm{b}}).
\vphantom{\Big]}
\label{basic8}
\end{align}
\end{prop}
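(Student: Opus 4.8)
The plan is to obtain \eqref{basic5}--\eqref{basic8} from Proposition \ref{prop:basic2} in exactly the way Proposition \ref{prop:2connexpl} was obtained from Definition \ref{defi:2conn}. First I would invoke Proposition \ref{prop:basic2}, which says that the basic components $A_{\mathrm{b}}$, $B_{\mathrm{b}}$ satisfy \eqref{sfbasic3}, \eqref{sfbasic4}; these are formally identical to the $2$--connection structure equations \eqref{cmconnx1}, \eqref{cmconnx2}, with $d_P$ replaced by $d_{\pi^{-1}(U)}$ (the two de Rham vector fields entering in the same way, since the inclusion $N_U:\pi^{-1}(U)\rightarrow P$ induces the operation morphism $\iOOO Q_U$). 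Thus $A_{\mathrm{b}}$, $B_{\mathrm{b}}$ behave, with respect to the combined differential $d_{\pi^{-1}(U)}+d_{\dot\tau}$ on $\iMap(T[1]\pi^{-1}(U),\ZZ\DD\mathfrak{m})$, precisely like a connection and curvature component.

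Next I would substitute the projected expansions \eqref{sfbasic11}, \eqref{sfbasic12}, namely $A_{\mathrm{b}}(\alpha)=\omega_{\mathrm{b}}-\alpha\varOmega_{\mathrm{b}}$ and $B_{\mathrm{b}}(\alpha)=\theta_{\mathrm{b}}+\alpha\varTheta_{\mathrm{b}}$ with $\alpha\in\mathbb{R}[1]$, into \eqref{sfbasic3}, \eqref{sfbasic4}, and expand the virtual Lie bracket $[-,-]$ and the coboundary $d_{\dot\tau}$ in terms of the crossed--module data $\dot\tau$, $\dot{}\mu\dot{}$ and the ordinary Lie brackets on $\mathfrak{g}$ and $\mathfrak{e}$, using relations 3.5.13, 3.5.15 of I --- the very identities already used in the proof of Proposition \ref{prop:2connexpl}. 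Separating the coefficients of $\alpha^0$ and $\alpha^1$ then yields the four equations \eqref{basic5}--\eqref{basic8}. Because \eqref{sfbasic3}, \eqref{sfbasic4} have the same form as \eqref{cmconnx1}, \eqref{cmconnx2}, the output of this computation is literally \eqref{cmconn1}--\eqref{cmconn4} with $\omega$, $\varOmega$, $\theta$, $\varTheta$, $d_P$ replaced by $\omega_{\mathrm{b}}$, $\varOmega_{\mathrm{b}}$, $\theta_{\mathrm{b}}$, $\varTheta_{\mathrm{b}}$, $d_{\pi^{-1}(U)}$, so essentially no fresh work is needed.

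An alternative, more laborious route would bypass Proposition \ref{prop:basic2}: substitute the explicit formulas \eqref{basic1}--\eqref{basic4} of Proposition \ref{prop:basic3} together with the local structure equations \eqref{maurer2}--\eqref{maurer5} for $\gamma$, $\varGamma$, $\sigma$, $\varSigma$ and the $2$--connection equations \eqref{cmconn1}--\eqref{cmconn4} for $\omega$, $\varOmega$, $\theta$, $\varTheta$, and verify directly that all $\gamma$-- and $\varGamma$--dependent terms cancel. This just reproduces, in projected form, the calculation behind Proposition \ref{prop:basic2} and is less economical; I would mention it only as a consistency check.

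The only real obstacle is bookkeeping: tracking the Koszul signs generated when the odd parameter $\alpha\in\mathbb{R}[1]$ is commuted past the odd components $\varOmega_{\mathrm{b}}$, $\varTheta_{\mathrm{b}}$ and past $d_{\pi^{-1}(U)}$ and $d_{\dot\tau}$, and expanding $\tfrac12[A_{\mathrm{b}},A_{\mathrm{b}}]$, $[A_{\mathrm{b}},B_{\mathrm{b}}]$, $d_{\dot\tau}A_{\mathrm{b}}$, $d_{\dot\tau}B_{\mathrm{b}}$ via 3.5.13, 3.5.15 of I so that the $\mathfrak{g}$-- and $\mathfrak{e}$--valued pieces separate cleanly. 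No idea beyond those already deployed for Propositions \ref{prop:2connexpl} and \ref{prop:basic2} is required, so the proof can be stated as a one--line reduction to those two results.
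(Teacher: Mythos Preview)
Your proposal is correct and follows exactly the paper's own proof: substitute the projected expansions \eqref{sfbasic11}, \eqref{sfbasic12} into \eqref{sfbasic3}, \eqref{sfbasic4} and expand via 3.5.13, 3.5.15 of I, just as in the proof of Proposition \ref{prop:2connexpl}. The alternative direct route you mention is indeed more laborious and is not what the paper does.
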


\noindent 
These relations are formally identical to \ceqref{cmconn1}--\ceqref{cmconn4}. 
Our basic formulation has so reproduced 
the familiar local description of $2$--connections of strict higher gauge theory. 
This statement will be qualified more precisely in subsect. \cref{subsec:nadifcoh}. 

\begin{proof}
One demonstrates \ceqref{basic5}--\ceqref{basic8} by substituting \ceqref{sfbasic11}, \ceqref{sfbasic12} into 
\ceqref{sfbasic3}, \ceqref{sfbasic4} and using 3.5.13 and 3.5.15 of I.
\end{proof}

The basic components of the $2$--connection behave as expected 
when the $2$--connection is special.

\begin{prop}
If the $2$--connection and the adapted coordinates are both special 
(cf. defs. \cref{defi:prop2conn}, \cref{defi:specadpcoo}), then one has \hphantom{xxxxxxxxxx}
\begin{align}
&I_U{}^*\varOmega_{\mathrm{b}}=0,
\vphantom{\Big]}
\label{propbas1}
\\
&I_U{}^*\varTheta_{\mathrm{b}}=0,
\vphantom{\Big]}
\label{propbas2}
\end{align}
where $I_U:\pi_0{}^{-1}(U)\rightarrow\pi^{-1}(U)$ is the inclusion map. 
\end{prop}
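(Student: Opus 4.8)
The plan is to substitute directly into the explicit formulas \ceqref{basic2}, \ceqref{basic4} for $\varOmega_{\mathrm b}$, $\varTheta_{\mathrm b}$ and then apply $I_U{}^*$, exploiting that $I_U{}^*$ is the graded algebra homomorphism underlying the restriction operation morphism $\iOOO L_U$ of subsect. 3.8 of I. As such it is a morphism of internal function algebras, hence sends $0$ to $0$ and commutes with the fibrewise crossed--module structure maps $\mu\dot{}\,$, $\dot{}\mu\dot{}\,$ and with $\Ad$ that appear on the right hand sides of \ceqref{basic2}, \ceqref{basic4}; moreover it carries $\gamma,\varGamma,\sigma,\varSigma$ to $\gamma_0,\varGamma_0,\sigma_0,\varSigma_0$ as in \ceqref{bmaurer11}--\ceqref{bmaurer14} and $\omega,\varOmega,\theta,\varTheta$ (restricted to $T[1]\pi^{-1}(U)$) to their $P_0$--counterparts, consistently with \ceqref{0cmconn1ol}--\ceqref{0cmconn4ol}. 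The grading bookkeeping implicit in the $\alpha$--expansions \ceqref{sfbasic11}, \ceqref{sfbasic12} is likewise untouched by $I_U{}^*$.

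Carrying this out, \ceqref{basic2} gives $I_U{}^*\varOmega_{\mathrm b}=\mu\dot{}\,(\gamma_0,\varOmega_0-\varSigma_0)-\dot{}\mu\dot{}\,(\Ad\gamma_0(\omega_0-\sigma_0),\varGamma_0)$ and \ceqref{basic4} gives $I_U{}^*\varTheta_{\mathrm b}=\mu\dot{}\,(\gamma_0,\varTheta_0)-\dot{}\mu\dot{}\,(\Ad\gamma_0(\theta_0),\varGamma_0)$, where $\varOmega_0,\varTheta_0$ denote the pull--backs of the restricted $\varOmega,\varTheta$ to $T[1]\pi_0{}^{-1}(U)$. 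By hypothesis the $2$--connection is special, so $\varOmega_0=0$ and $\varTheta_0=0$ by def. \cref{defi:prop2conn} (i.e. \ceqref{0cmconn5ol}, \ceqref{0cmconn6ol}), and the adapted coordinates are special, so $\varGamma_0=0$ and $\varSigma_0=0$ by def. \cref{defi:specadpcoo} (i.e. \ceqref{bmaurer15}, \ceqref{bmaurer16}). Since $\mu\dot{}\,$ and $\dot{}\mu\dot{}\,$ are linear in their $\mathfrak{e}$--valued slots, each term on the right hand sides vanishes, yielding \ceqref{propbas1}, \ceqref{propbas2}.

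There is essentially no real obstacle; the only point deserving a word of care is the naturality claim, namely that $I_U{}^*$ genuinely intertwines the structure maps of $\mathsans{M}$ used in \ceqref{basic2}, \ceqref{basic4}. This is immediate because those maps act pointwise through fixed smooth maps of $\mathsans{E}$, $\mathsans{G}$ and their Lie algebras, so they commute with the pull--back by the inclusion $I_U{}:\pi_0{}^{-1}(U)\rightarrow\pi^{-1}(U)$, and $0$ goes to $0$. As an alternative route one could bypass the explicit formulas altogether and read off $I_U{}^*\varOmega_{\mathrm b}=0$, $I_U{}^*\varTheta_{\mathrm b}=0$ from the fact that $A_{\mathrm b}$, $B_{\mathrm b}$ are basic (prop. \cref{prop:basic1}) together with the restricted structure relations, but the direct substitution above is the shortest argument.
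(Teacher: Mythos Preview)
Your proof is correct and follows exactly the same approach as the paper's own proof, which simply says the result follows from \ceqref{basic2}, \ceqref{basic4} upon substituting \ceqref{0cmconn5ol}, \ceqref{0cmconn6ol} and \ceqref{bmaurer15}, \ceqref{bmaurer16}. You have merely been more explicit about why $I_U{}^*$ commutes with the crossed--module structure maps, which is a reasonable point of care but not a different idea.
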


\begin{proof}
This follows from \ceqref{basic2}, \ceqref{basic4} upon substituting 
\ceqref{0cmconn5ol}, \ceqref{0cmconn6ol} and \ceqref{bmaurer15}, \ceqref{bmaurer16}
\end{proof}

Next, consider a $1$--gauge transformation of $\hat{\mathcal{P}}$ with transformation and shift components $\varPsi$, $\varUpsilon$ 
(cf. subsect. \cref{subsec:cmgautr}, def. \cref{defi:1gauge}). 

\begin{defi}
The basic transformation and shift components of the $1$--gau\-ge transformation are  
the Lie group and algebra 
valued internal functions $\varPsi_{\mathrm{b}}\in\iMap(T[1]\pi^{-1}(U)$, $\DD\mathsans{M})$ and 
$\varUpsilon_{\mathrm{b}}\in\iMap(T[1]\pi^{-1}(U),\DD\mathfrak{m}[1])$ defined  by
\begin{align}
\varPsi_{\mathrm{b}}&=\varLambda \varPsi\varLambda^{-1},
\vphantom{\Big]}
\label{sfbasic5}
\\
\varUpsilon_{\mathrm{b}}&=\Ad\varLambda( \varUpsilon-\varDelta+\Ad \varPsi(\varDelta)).
\vphantom{\Big]}
\label{sfbasic6}
\end{align}
\end{defi}

\noindent
Above, restriction of $\varPsi$, $\varUpsilon$ to $T[1]\pi^{-1}(U)$ is  understood. 
The name given to $\varPsi_{\mathrm{b}}$, $\varUpsilon_{\mathrm{b}}$ are justified by the following proposition.

\begin{prop}\label{prop:basic5}
$\varPsi_{\mathrm{b}}$, $\varUpsilon_{\mathrm{b}}$ are basic elements
of the operation $\iOOO S_{\pi^{-1}(\,U)}$.
\end{prop}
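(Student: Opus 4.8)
The plan is to verify directly that the two internal functions $\varPsi_{\mathrm{b}}$ and $\varUpsilon_{\mathrm{b}}$ defined in \ceqref{sfbasic5}, \ceqref{sfbasic6} are annihilated by each of the contraction and Lie derivations $j_{\pi^{-1}(U)Z}$, $l_{\pi^{-1}(U)Z}$ with $Z\in\DD\mathfrak{m}$, which is exactly the meaning of being basic in $\iOOO S_{\pi^{-1}(U)}$ (cf. subsect. 2.1 of I). The only inputs needed are: the action of the operation derivations on the $1$--gauge transformation components $\varPsi$, $\varUpsilon$, namely \ceqref{cmgautrx3}--\ceqref{cmgautrx6}; the action on the adapted fiber coordinates $\varLambda$, $\varDelta$, namely \ceqref{bmaurer3}--\ceqref{bmaurer6}; and the Leibniz rule for graded derivations acting on products of Lie group and Lie algebra valued internal functions, together with the derivation of an inverse, $D(G^{-1})=-\Ad G^{-1}(DGG^{-1})$, and the compatibility of $\Ad$ with the bracket and the coboundary $d_{\dot\tau}$.

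First I would treat $\varPsi_{\mathrm{b}}=\varLambda\varPsi\varLambda^{-1}$. For a contraction derivation $j\equiv j_{\pi^{-1}(U)Z}$ one computes $j\varPsi_{\mathrm{b}}\varPsi_{\mathrm{b}}{}^{-1}=j\varLambda\varLambda^{-1}+\Ad\varLambda(j\varPsi\varPsi^{-1})+\Ad(\varLambda\varPsi\varLambda^{-1})(-j\varLambda\varLambda^{-1})$ — more transparently, writing $\varPsi_{\mathrm{b}}\varPsi_{\mathrm{b}}{}^{-1}$-type Maurer--Cartan combinations, the conjugation identity gives $j(\varLambda\varPsi\varLambda^{-1})(\varLambda\varPsi\varLambda^{-1})^{-1}= j\varLambda\varLambda^{-1}+\Ad\varLambda\big(j\varPsi\varPsi^{-1}+j\varLambda\varLambda^{-1}-\Ad\varPsi(j\varLambda\varLambda^{-1})\big)$. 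By \ceqref{bmaurer3} one has $\varLambda^{-1}j\varLambda=0$, hence $j\varLambda\varLambda^{-1}=0$, and by \ceqref{cmgautrx3} one has $j\varPsi\varPsi^{-1}=0$; substituting, the whole expression collapses to $0$, so $j\varPsi_{\mathrm{b}}=0$. For a Lie derivation $l\equiv l_{\pi^{-1}(U)Z}$ the same conjugation identity gives $l\varPsi_{\mathrm{b}}\varPsi_{\mathrm{b}}{}^{-1}=l\varLambda\varLambda^{-1}+\Ad\varLambda\big(l\varPsi\varPsi^{-1}+l\varLambda\varLambda^{-1}-\Ad\varPsi(l\varLambda\varLambda^{-1})\big)$; now \ceqref{bmaurer5} gives $\varLambda^{-1}l\varLambda=Z$, i.e.\ $l\varLambda\varLambda^{-1}=\Ad\varLambda(Z)$, while \ceqref{cmgautrx5} gives $l\varPsi\varPsi^{-1}=-Z+\Ad\varPsi(Z)$. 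Plugging in, the bracket inside becomes $-Z+\Ad\varPsi(Z)+Z-\Ad\varPsi(Z)=0$, and one is left with $l\varLambda\varLambda^{-1}=\Ad\varLambda(Z)$; but note $l\varLambda\varLambda^{-1}$ is itself $\Ad\varLambda(\varLambda^{-1}l\varLambda)=\Ad\varLambda(Z)$, so the two terms cancel against each other and $l\varPsi_{\mathrm{b}}=0$. One must be slightly careful that $Z$ here is the constant function $\zeta_{\mathfrak{m}}Z\in\DD\mathfrak{m}^+$ and that $d_{\dot\tau}$ plays no role in the $j$ and $l$ relations, only in the $d$ relations, so these manipulations stay purely algebraic.

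Next I would treat $\varUpsilon_{\mathrm{b}}=\Ad\varLambda(\varUpsilon-\varDelta+\Ad\varPsi(\varDelta))$. For $j\equiv j_{\pi^{-1}(U)Z}$, apply the Leibniz rule: $j\varUpsilon_{\mathrm{b}}=\Ad(j\varLambda\varLambda^{-1})\!\cdot\!\varUpsilon_{\mathrm{b}}$-type term plus $\Ad\varLambda$ applied to $j\varUpsilon-j\varDelta+[j\varPsi\varPsi^{-1},\Ad\varPsi(\varDelta)]+\Ad\varPsi(j\varDelta)$. Since $j\varLambda\varLambda^{-1}=0$ and $j\varPsi\varPsi^{-1}=0$ by \ceqref{bmaurer3}, \ceqref{cmgautrx3}, this reduces to $\Ad\varLambda\big(j\varUpsilon-j\varDelta+\Ad\varPsi(j\varDelta)\big)$; using $j\varUpsilon=Z-\Ad\varPsi(Z)$ from \ceqref{cmgautr4x} and $j\varDelta=Z$ from \ceqref{bmaurer4}, the argument becomes $Z-\Ad\varPsi(Z)-Z+\Ad\varPsi(Z)=0$, so $j\varUpsilon_{\mathrm{b}}=0$. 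For $l\equiv l_{\pi^{-1}(U)Z}$ the computation is the genuinely substantive one: the Leibniz rule yields $l\varUpsilon_{\mathrm{b}}=[l\varLambda\varLambda^{-1},\varUpsilon_{\mathrm{b}}]+\Ad\varLambda\big(l\varUpsilon-l\varDelta+[l\varPsi\varPsi^{-1},\Ad\varPsi(\varDelta)]+\Ad\varPsi(l\varDelta)\big)$, and one substitutes $l\varLambda\varLambda^{-1}=\Ad\varLambda(Z)$, $l\varPsi\varPsi^{-1}=-Z+\Ad\varPsi(Z)$, $l\varUpsilon=-[Z,\varUpsilon]+d_{\dot\tau}Z-\Ad\varPsi(d_{\dot\tau}Z)$ from \ceqref{cmgautrx6}, and $l\varDelta=-[Z,\varDelta]+d_{\dot\tau}Z$ from \ceqref{bmaurer6}. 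The expected cancellation is between the $[l\varLambda\varLambda^{-1},\varUpsilon_{\mathrm{b}}]=[\Ad\varLambda(Z),\Ad\varLambda(\varUpsilon-\varDelta+\Ad\varPsi(\varDelta))]=\Ad\varLambda([Z,\varUpsilon-\varDelta+\Ad\varPsi(\varDelta)])$ term and the $-[Z,\varUpsilon]$, $+[Z,\varDelta]$ terms coming from $l\varUpsilon$ and $-l\varDelta$, together with the interplay of $d_{\dot\tau}Z$ with $\Ad\varPsi$ via the identity $\Ad\varPsi(d_{\dot\tau}W)-d_{\dot\tau}(\Ad\varPsi(W))=[\ ?\ ,\ ?\ ]$-type correction (the failure of $\Ad\varPsi$ to commute with $d_{\dot\tau}$), and the cross term $[l\varPsi\varPsi^{-1},\Ad\varPsi(\varDelta)]=[-Z+\Ad\varPsi(Z),\Ad\varPsi(\varDelta)]$. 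This is precisely where the delicate bookkeeping of the virtual Lie algebra structure (eqs.\ 3.5.13, 3.5.15, 3.5.18 of I) must be invoked carefully; I expect this $l\varUpsilon_{\mathrm{b}}=0$ check to be the main obstacle, essentially because it is the one step where the noncommutativity of $\Ad\varPsi$ with $d_{\dot\tau}$ genuinely enters and one must match three groups of terms rather than two. Everything else is routine Leibniz-rule substitution. I would conclude that $\varPsi_{\mathrm{b}}$, $\varUpsilon_{\mathrm{b}}$ are basic, completing the proof.
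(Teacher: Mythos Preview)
Your approach is exactly the paper's: verify directly that $j_{\pi^{-1}(U)Z}$ and $l_{\pi^{-1}(U)Z}$ annihilate $\varPsi_{\mathrm{b}}$, $\varUpsilon_{\mathrm{b}}$ using \ceqref{cmgautrx3}--\ceqref{cmgautrx6} and \ceqref{bmaurer3}--\ceqref{bmaurer6}. The plan and the ingredients are right, but two of your explicit computations need correction.

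First, your ``more transparent'' conjugation identity is miswritten. The correct formula (your first version, in fact) is
\[
D(\varLambda\varPsi\varLambda^{-1})(\varLambda\varPsi\varLambda^{-1})^{-1}
= D\varLambda\varLambda^{-1}+\Ad\varLambda\bigl(D\varPsi\varPsi^{-1}\bigr)
-\Ad\varPsi_{\mathrm{b}}\bigl(D\varLambda\varLambda^{-1}\bigr),
\]
with no additional $+D\varLambda\varLambda^{-1}$ inside the $\Ad\varLambda$. With your stated formula the $l$--check leaves a residual $l\varLambda\varLambda^{-1}=\Ad\varLambda(Z)$, and your sentence ``the two terms cancel against each other'' has no second term to point to. With the correct identity, substituting $l\varLambda\varLambda^{-1}=\Ad\varLambda(Z)$ and $l\varPsi\varPsi^{-1}=-Z+\Ad\varPsi(Z)$ gives
\[
\Ad\varLambda(Z)+\Ad\varLambda\bigl(-Z+\Ad\varPsi(Z)\bigr)-\Ad\varLambda\Ad\varPsi\Ad\varLambda^{-1}\bigl(\Ad\varLambda(Z)\bigr)=0,
\]
which is the cancellation you want.

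Second, the $l\varUpsilon_{\mathrm{b}}$ check is easier than you anticipate. The four $d_{\dot\tau}Z$ contributions from $l\varUpsilon$, $-l\varDelta$ and $\Ad\varPsi(l\varDelta)$ cancel pairwise on the nose: $d_{\dot\tau}Z-\Ad\varPsi(d_{\dot\tau}Z)-d_{\dot\tau}Z+\Ad\varPsi(d_{\dot\tau}Z)=0$. No identity about the failure of $\Ad\varPsi$ to commute with $d_{\dot\tau}$ is needed here. After that, the remaining bracket terms reduce, via $[\Ad\varPsi(Z),\Ad\varPsi(\varDelta)]=\Ad\varPsi([Z,\varDelta])$, to $-[Z,\varUpsilon-\varDelta+\Ad\varPsi(\varDelta)]$, which is exactly cancelled by the $[l\varLambda\varLambda^{-1},\varUpsilon_{\mathrm{b}}]=\Ad\varLambda\bigl([Z,\varUpsilon-\varDelta+\Ad\varPsi(\varDelta)]\bigr)$ term.
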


\begin{proof}
One has to show that $\varPsi_{\mathrm{b}}$, $\varUpsilon_{\mathrm{b}}$ are annihilated by all derivations 
$j_{\pi^{-1}(U)Z}$ and $l_{\pi^{-1}(U)Z}$ with $Z\in\DD\mathfrak{m}$.
This can be verified using relations  \ceqref{cmgautrx3}--\ceqref{cmgautrx6} and \ceqref{bmaurer3}--\ceqref{bmaurer6}.
\end{proof}


\begin{prop}\label{prop:basic6}
$\varPsi_{\mathrm{b}}$, $\varUpsilon_{\mathrm{b}}$ satisfy the relations 
\begin{align}
&d_{\pi^{-1}(U)}\varPsi_{\mathrm{b}}\varPsi_{\mathrm{b}}{}^{-1}
=-d_{\dot\tau}\varPsi_{\mathrm{b}}\varPsi_{\mathrm{b}}{}^{-1}-\varUpsilon_{\mathrm{b}},
\vphantom{\Big]}
\label{sfbasic7}
\\
&d_{\pi^{-1}(U)}\varUpsilon_{\mathrm{b}}=-\frac{1}{2}[\varUpsilon_{\mathrm{b}},\varUpsilon_{\mathrm{b}}]
-d_{\dot\tau}\varUpsilon_{\mathrm{b}}.
\vphantom{\Big]}
\label{sfbasic8}
\end{align}
\end{prop}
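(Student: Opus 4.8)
The plan is to establish \eqref{sfbasic7}, \eqref{sfbasic8} by the same direct method used for the ordinary analogue in subsect. \cref{subsec:ordgau1} and for Prop. \cref{prop:basic2}: substitute the definitions \eqref{sfbasic5}, \eqref{sfbasic6} of $\varPsi_{\mathrm{b}}$, $\varUpsilon_{\mathrm{b}}$ into the left hand sides, expand by the graded Leibniz rule, and reduce using the $1$--gauge structure equations \eqref{cmgautrx1}, \eqref{cmgautrx2} for $\varPsi$, $\varUpsilon$ and the adapted coordinate equations \eqref{bmaurer1}, \eqref{bmaurer2} for $\varLambda$, $\varDelta$. The two Leibniz identities needed, valid for any graded derivation $D$ of $\iFun(T[1]\pi^{-1}(U))$ and in particular for $D=d_{\pi^{-1}(U)}$ and $D=d_{\dot\tau}$, are $D(\varLambda G\varLambda^{-1})(\varLambda G\varLambda^{-1})^{-1}=D\varLambda\varLambda^{-1}+\Ad\varLambda(DGG^{-1})-\Ad(\varLambda G)(\varLambda^{-1}D\varLambda)$ for a Lie group valued $G$, and $D(\Ad\varLambda(Y))=[D\varLambda\varLambda^{-1},\Ad\varLambda(Y)]+(-1)^{|D|}\Ad\varLambda(DY)$ for a Lie algebra valued $Y$, together with the $\Ad$--equivariance of the bracket and of $\dot\tau$ recorded in eq. 3.5.18 of I.

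For \eqref{sfbasic7} I would expand $d_{\pi^{-1}(U)}\varPsi_{\mathrm{b}}\varPsi_{\mathrm{b}}{}^{-1}$ by the first identity with $G=\varPsi$, then insert \eqref{cmgautrx1} for $d_{\pi^{-1}(U)}\varPsi\varPsi^{-1}$ and the rewriting $d_{\pi^{-1}(U)}\varLambda\varLambda^{-1}=-d_{\dot\tau}\varLambda\varLambda^{-1}+\Ad\varLambda(\varDelta)$ coming from \eqref{bmaurer1}. The terms free of $d_{\dot\tau}$ then collapse to $\Ad\varLambda(\varDelta)-\Ad\varLambda(\varUpsilon)-\Ad\varLambda(\Ad\varPsi(\varDelta))=-\Ad\varLambda(\varUpsilon-\varDelta+\Ad\varPsi(\varDelta))=-\varUpsilon_{\mathrm{b}}$, while the remaining $d_{\dot\tau}$--linear terms are exactly the expansion, via the same first identity with $D=d_{\dot\tau}$, of $-d_{\dot\tau}\varPsi_{\mathrm{b}}\varPsi_{\mathrm{b}}{}^{-1}$; this is \eqref{sfbasic7}.

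For \eqref{sfbasic8} the bookkeeping is clarified by the observation that $\varUpsilon-\varDelta+\Ad\varPsi(\varDelta)={}^{\varPsi,\varUpsilon}\varDelta-\varDelta$, so that $\varUpsilon_{\mathrm{b}}=\Ad\varLambda({}^{\varPsi,\varUpsilon}\varDelta-\varDelta)$ is $\Ad\varLambda$ of the difference of two flat $2$--connection components: indeed $\varDelta$ satisfies \eqref{cmconnx1} with $B=0$ by \eqref{bmaurer2}, and \eqref{cmconnx3}, \eqref{cmconnx5} by \eqref{bmaurer4}, \eqref{bmaurer6}, hence so does its $1$--gauge transform ${}^{\varPsi,\varUpsilon}\varDelta$ by Prop. \cref{prop:congauresp}. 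Expanding $d_{\pi^{-1}(U)}\varUpsilon_{\mathrm{b}}$ by the second identity, inserting for $d_{\pi^{-1}(U)}({}^{\varPsi,\varUpsilon}\varDelta)$ and $d_{\pi^{-1}(U)}\varDelta$ the flat Bianchi equation $-\frac{1}{2}[\,\cdot\,,\,\cdot\,]-d_{\dot\tau}(\,\cdot\,)$, and for $d_{\pi^{-1}(U)}\varLambda\varLambda^{-1}$ again \eqref{bmaurer1}, one is left with a $d_{\dot\tau}$--linear part that reassembles into $-d_{\dot\tau}\varUpsilon_{\mathrm{b}}$ as before, plus the quadratic part $-\frac{1}{2}[\Ad\varLambda({}^{\varPsi,\varUpsilon}\varDelta),\Ad\varLambda({}^{\varPsi,\varUpsilon}\varDelta)]+\frac{1}{2}[\Ad\varLambda(\varDelta),\Ad\varLambda(\varDelta)]+[\Ad\varLambda(\varDelta),\varUpsilon_{\mathrm{b}}]$; using $\varUpsilon_{\mathrm{b}}=\Ad\varLambda({}^{\varPsi,\varUpsilon}\varDelta)-\Ad\varLambda(\varDelta)$, the graded symmetry of the bracket on these odd elements, and $\Ad$--equivariance, this telescopes to $-\frac{1}{2}[\varUpsilon_{\mathrm{b}},\varUpsilon_{\mathrm{b}}]$, yielding \eqref{sfbasic8}.

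The only real obstacle is sign and term bookkeeping: one must check that the cross terms generated by differentiating the three factors $\Ad\varLambda$, $\Ad\varPsi$, $\varDelta$ inside $\varUpsilon_{\mathrm{b}}$ organise into the single antisymmetrised bracket $-\frac{1}{2}[\varUpsilon_{\mathrm{b}},\varUpsilon_{\mathrm{b}}]$, and that the $d_{\dot\tau}$--corrections, which have no counterpart in the ordinary theory, close up. That they must is guaranteed in advance: $\varPsi_{\mathrm{b}}$, $\varUpsilon_{\mathrm{b}}$ are basic by Prop. \cref{prop:basic5}, and the relations \eqref{cmgautrx1}--\eqref{cmgautrx6}, being compatible with the operation commutation relations 2.1.1--2.1.6 of I by Lemma \cref{lemma:1gau}, are inherited unchanged by the restrictions of $\varPsi$, $\varUpsilon$ to $T[1]\pi^{-1}(U)$ along $\iOOO Q_U$ and then merely conjugated by $\varLambda$ and shifted by $\varDelta$, so the $d_{\pi^{-1}(U)}$--action on the basic data can only take the form \eqref{sfbasic7}, \eqref{sfbasic8}; a patient expansion fixes the coefficients.
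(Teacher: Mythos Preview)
Your proposal is correct and follows exactly the route the paper indicates: the paper's own proof simply says that \eqref{sfbasic7}, \eqref{sfbasic8} follow from \eqref{cmgautrx1}, \eqref{cmgautrx2} and \eqref{bmaurer1}, \eqref{bmaurer2} ``through a simple calculation'', and you have spelled that calculation out. Your reorganisation for \eqref{sfbasic8}, rewriting $\varUpsilon_{\mathrm{b}}=\Ad\varLambda({}^{\varPsi,\varUpsilon}\varDelta-\varDelta)$ as the $\Ad\varLambda$--image of a difference of two flat $2$--connection components and invoking Prop.~\cref{prop:congauresp}, is a genuinely helpful shortcut that the paper does not mention; it makes the telescoping of the quadratic terms transparent rather than a matter of brute bookkeeping. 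One small caveat: the sign $(-1)^{|D|}$ in your second Leibniz identity should be checked, since $\varLambda$ is a degree~$0$ group-valued function so no sign is picked up passing $D$ into $\Ad\varLambda(-)$; this does not affect the argument.
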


\noindent
These are analogous in form to \ceqref{cmgautrx1}, \ceqref{cmgautrx2}. 

\begin{proof}
Relations \ceqref{sfbasic7}, \ceqref{sfbasic8} to be shown 
follow from eqs. \ceqref{cmgautrx1}, \ceqref{cmgautrx2} 
and \ceqref{bmaurer1}, \ceqref{bmaurer2} through a simple calculation. 
\end{proof}

Next, consider 
the $1$--gauge transform ${}^{\varPsi,\varUpsilon}A$, ${}^{\varPsi,\varUpsilon}B$ of a $2$--connection $A$, $B$
(cf. subsect. \cref{subsec:cmgautr}, def. \cref{defi:1gaugetransf}).

\begin{prop} \label{prop:basic7}
The basic components ${}^{\varPsi,\varUpsilon}A_{\mathrm{b}}$,
${}^{\varPsi,\varUpsilon}B_{\mathrm{b}}$ 
of the $1$--gauge transformed transformed $2$--connection 
are given in terms of $A_{\mathrm{b}}$, $B_{\mathrm{b}}$ and $\varPsi_{\mathrm{b}}$, $\varUpsilon_{\mathrm{b}}$ 
by 
\begin{align}
&{}^{\varPsi,\varUpsilon}A_{\mathrm{b}}=\Ad\varPsi_{\mathrm{b}}(A_{\mathrm{b}})+\varUpsilon_{\mathrm{b}},
\vphantom{\Big]}
\label{sfbasic9}
\\
&{}^{\varPsi,\varUpsilon}B_{\mathrm{b}}=\Ad\varPsi_{\mathrm{b}}(B_{\mathrm{b}}).
\vphantom{\Big]}
\label{sfbasic10}
\end{align}
\end{prop}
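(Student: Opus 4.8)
The plan is to turn the statement into a purely algebraic identity for the adjoint action $\Ad$ of the virtual Lie group $\iMap(T[1]\pi^{-1}(U),\DD\mathsans{M})$, not involving the operation derivations at all. First I would restrict $A$, $B$, $\varPsi$, $\varUpsilon$ to $T[1]\pi^{-1}(U)$ along the operation morphism $\iOOO Q_U:\iOOO S_P\rightarrow\iOOO S_{\pi^{-1}(U)}$ recalled above. Since $N_U{}^*$ is a morphism of graded algebras intertwining the Lie bracket and the adjoint action, the algebraic relations \ceqref{cmgautrx9}, \ceqref{cmgautrx10} defining ${}^{\varPsi,\varUpsilon}A$, ${}^{\varPsi,\varUpsilon}B$ restrict to formally identical relations on $T[1]\pi^{-1}(U)$: one still has ${}^{\varPsi,\varUpsilon}A=\Ad\varPsi(A)+\varUpsilon$ and ${}^{\varPsi,\varUpsilon}B=\Ad\varPsi(B)$ there.

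Next I would expand the left hand sides of \ceqref{sfbasic9}, \ceqref{sfbasic10}, which by \ceqref{sfbasic1}, \ceqref{sfbasic2} applied to the transformed $2$--connection equal ${}^{\varPsi,\varUpsilon}A_{\mathrm{b}}=\Ad\varLambda({}^{\varPsi,\varUpsilon}A-\varDelta)=\Ad\varLambda(\Ad\varPsi(A)+\varUpsilon-\varDelta)$ and ${}^{\varPsi,\varUpsilon}B_{\mathrm{b}}=\Ad\varLambda({}^{\varPsi,\varUpsilon}B)=\Ad\varLambda\Ad\varPsi(B)$. Then I would expand the right hand sides using \ceqref{sfbasic1}, \ceqref{sfbasic2}, \ceqref{sfbasic5}, \ceqref{sfbasic6}, invoking the multiplicativity $\Ad(\varLambda\varPsi\varLambda^{-1})=\Ad\varLambda\circ\Ad\varPsi\circ(\Ad\varLambda)^{-1}$ and the linearity of $\Ad$ in its Lie algebra slot, both from eq. 3.5.18 of I together with the virtual Lie group axioms of app. B of I. The curvature identity is immediate: $\Ad\varPsi_{\mathrm{b}}(B_{\mathrm{b}})=\Ad\varLambda\Ad\varPsi(\Ad\varLambda)^{-1}\Ad\varLambda(B)=\Ad\varLambda\Ad\varPsi(B)$. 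For the connection identity, $\Ad\varPsi_{\mathrm{b}}(A_{\mathrm{b}})+\varUpsilon_{\mathrm{b}}=\Ad\varLambda\Ad\varPsi(A-\varDelta)+\Ad\varLambda(\varUpsilon-\varDelta+\Ad\varPsi(\varDelta))$, and the two occurrences of $\Ad\varLambda\Ad\varPsi(\varDelta)$ cancel, leaving $\Ad\varLambda(\Ad\varPsi(A)+\varUpsilon-\varDelta)$, which matches the expanded left hand side.

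I do not expect a genuine obstacle: this is exactly the bookkeeping carried out in the ordinary case in subsect. \cref{subsec:ordgau1}. The only points deserving mild care are (i) ensuring that every argument to which $(\Ad\varLambda)^{-1}\Ad\varLambda=\id$ is applied genuinely lives in $\iMap(T[1]\pi^{-1}(U),\ZZ\DD\mathfrak{m})$, which is guaranteed since $A_{\mathrm{b}}$, $B_{\mathrm{b}}$, $\varUpsilon_{\mathrm{b}}$ are well defined and basic by \cref{prop:basic1}, \cref{prop:basic5}, \cref{prop:basic6}; and (ii) that the cancellation of the $\varDelta$--dependent terms uses only the linearity of $\Ad\varPsi$ on the virtual Lie algebra, not the coboundary $d_{\dot\tau}$ or any other structure. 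Once these are noted, the identities \ceqref{sfbasic9}, \ceqref{sfbasic10} follow by direct substitution.
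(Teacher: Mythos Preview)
Your proof is correct and follows essentially the same approach as the paper's own proof, which simply states that relations \ceqref{sfbasic9}, \ceqref{sfbasic10} are verified by combining \ceqref{sfbasic1}, \ceqref{sfbasic2} and \ceqref{sfbasic5}, \ceqref{sfbasic6} with \ceqref{cmgautrx9}, \ceqref{cmgautrx10}. You have supplied the explicit algebraic bookkeeping that the paper leaves implicit, including the cancellation of the $\Ad\varLambda\Ad\varPsi(\varDelta)$ terms.
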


\begin{proof}
Relations \ceqref{sfbasic9}, \ceqref{sfbasic10} can be straightforwardly verified 
combining \ceqref{sfbasic1}, \ceqref{sfbasic2} and \ceqref{sfbasic5}, \ceqref{sfbasic6}
with \ceqref{cmgautrx9}, \ceqref{cmgautrx10}.
\end{proof}

\noindent
Eqs. \ceqref{sfbasic9}, \ceqref{sfbasic10} suggest defining the basic component gauge transforms 
${}^{\varPsi_{\mathrm{b}},\varUpsilon_{\mathrm{b}}}A_{\mathrm{b}}$, 
${}^{\varPsi_{\mathrm{b}},\varUpsilon_{\mathrm{b}}}B_{\mathrm{b}}$ to be given by the right hand sides of  
\ceqref{sfbasic9}, \ceqref{sfbasic10} themsel\-ves. By doing so, 
${}^{\varPsi_{\mathrm{b}},\varUpsilon_{\mathrm{b}}}A_{\mathrm{b}}$, 
${}^{\varPsi_{\mathrm{b}},\varUpsilon_{\mathrm{b}}}B_{\mathrm{b}}$ are given be expressions 
formally analogous to those holding for the ordinary components, viz 
\ceqref{cmgautrx9}, \ceqref{cmgautrx10}. 

Again, in the same way as the transformation and shift components $\varPsi$, $\varUpsilon$ 
can be expanded in their projected transformation and shift components $g$, $J$, $h$, $K$
according to \ceqref{cmgautrx7}, \ceqref{cmgautrx8}, so their basic counterparts
$\varPsi_{\mathrm{b}}$, $\varUpsilon_{\mathrm{b}}$ can be expanded in basic projected 
components $g_{\mathrm{b}}$, $J_{\mathrm{b}}$, $h_{\mathrm{b}}$, $K_{\mathrm{b}}$ as 
\begin{align}
&\varPsi_{\mathrm{b}}(\alpha)=\ee^{\alpha J_{\mathrm{b}}}g_{\mathrm{b}},
\vphantom{\Big]}
\label{sfbasic13}
\\
&\varUpsilon_{\mathrm{b}}(\alpha)=h_{\mathrm{b}}-\alpha K_{\mathrm{b}}, \quad \alpha\in\mathbb{R}[1].
\vphantom{\Big]}
\label{sfbasic14}
\end{align}
In the above relations, $g_{\mathrm{b}}\in\iMap(T[1]\pi^{-1}(U),\mathsans{G})$, $J_{\mathrm{b}}\in\iMap(T[1]\pi^{-1}(U),\mathfrak{e}[1])$,
$h_{\mathrm{b}}\in\iMap(T[1]\pi^{-1}(U),\mathfrak{g}[1])$, $K_{\mathrm{b}}\in\iMap(T[1]\pi^{-1}(U),\mathfrak{e}[2])$. 

\begin{prop}\label{prop:basic8}
$g_{\mathrm{b}}$, $J_{\mathrm{b}}$, $h_{\mathrm{b}}$, $K_{\mathrm{b}}$ are related to $g$, $J$, $h$, $K$ as
\begin{align}
&g_{\mathrm{b}}=\gamma g\gamma^{-1},  \hspace{6.5cm}
\vphantom{\Big]}
\label{basic9}
\end{align} 
\begin{align}
&J_{\mathrm{b}}=\mu\dot{}\,(\gamma, J)+\varGamma-\mu\dot{}\,(\gamma g\gamma^{-1},\varGamma),
\vphantom{\Big]}
\label{basic10}
\\
&h_{\mathrm{b}}=\Ad\gamma(h-\sigma+\Ad g(\sigma)),
\vphantom{\Big]}
\label{basic11}
\\
&K_{\mathrm{b}}=\mu\dot{}\,(\gamma, K-\varSigma+\mu\dot{}\,(g,\varSigma)
-\dot{}\mu\dot{}\,(\Ad g(\sigma),J)
\vphantom{\Big]}
\label{basic12}
\\
&\hspace{4cm}-\dot{}\mu\dot{}\,(h-\sigma+\Ad g(\sigma),\mu\dot{}\,(\gamma^{-1},\varGamma))).
\vphantom{\Big]}
\nonumber
\end{align}
\end{prop}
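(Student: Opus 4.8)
The plan is to proceed exactly as in the proof of Proposition \cref{prop:basic3}, obtaining \ceqref{basic9}--\ceqref{basic12} from the defining relations \ceqref{sfbasic5}, \ceqref{sfbasic6} of $\varPsi_{\mathrm{b}}$, $\varUpsilon_{\mathrm{b}}$ by expanding every group-- and algebra--valued internal function into its projected components. First I would substitute the parametrizations \ceqref{cmgautrx7}, \ceqref{cmgautrx8} of $\varPsi$, $\varUpsilon$, the parametrizations \ceqref{bmaurer7}, \ceqref{bmaurer8} of $\varLambda$, $\varDelta$, and the parametrizations \ceqref{sfbasic13}, \ceqref{sfbasic14} of $\varPsi_{\mathrm{b}}$, $\varUpsilon_{\mathrm{b}}$ into \ceqref{sfbasic5} and \ceqref{sfbasic6}, all evaluated on the universal parameter $\alpha\in\mathbb{R}[1]$. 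Since $\alpha^2=0$ every exponential truncates to $\ee^{\alpha(\cdot)}=1+\alpha(\cdot)$, and the semidirect product multiplication of $\DD\mathsans{M}=\mathfrak{e}[1]\rtimes_{\mu\dot{}}\mathsans{G}$ together with the action formulas of subsect. 3.5 of I, in particular the $\Ad$ formula 3.5.18, reduce each side of \ceqref{sfbasic5}, \ceqref{sfbasic6} to a $\mathsans{G}$-- or $\mathfrak{g}$--valued piece at order $\alpha^0$ and an $\mathfrak{e}$--valued piece at order $\alpha^1$.

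Matching the order $\alpha^0$ part of \ceqref{sfbasic5} gives at once $g_{\mathrm{b}}=\gamma g\gamma^{-1}$, which is \ceqref{basic9}; matching the order $\alpha^1$ part, after carrying the factors $\ee^{\alpha\varGamma}\gamma$ and $(\ee^{\alpha\varGamma}\gamma)^{-1}$ through the conjugation and repeatedly using $\mu\dot{}\,(\gamma g,\mu\dot{}\,(\gamma^{-1},\varGamma))=\mu\dot{}\,(\gamma g\gamma^{-1},\varGamma)$, yields $J_{\mathrm{b}}=\mu\dot{}\,(\gamma,J)+\varGamma-\mu\dot{}\,(\gamma g\gamma^{-1},\varGamma)$, i.e. \ceqref{basic10}. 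For \ceqref{sfbasic6} I would first expand $\Ad\varPsi(\varDelta)$ and then $\Ad\varLambda$ of the whole internal function $\varUpsilon-\varDelta+\Ad\varPsi(\varDelta)$ in projected components by 3.5.18 of I; the order $\alpha^0$ term reproduces \ceqref{basic11} in exactly the form $h_{\mathrm{b}}=\Ad\gamma(h-\sigma+\Ad g(\sigma))$ already met in \ceqref{00basic11}, while the order $\alpha^1$ term collects all the $\mathfrak{e}$--valued contributions — those from the $-\alpha K$, $-\alpha\varSigma$ and $\mu\dot{}\,(g,\varSigma)$ pieces, from the $\varGamma$--twist inside $\Ad\varLambda$, and from the $\dot{}\mu\dot{}\,$ cross terms generated by the graded Leibniz rules of subsect. 3.5 of I — into \ceqref{basic12}.

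The only real obstacle is bookkeeping: in \ceqref{basic12} one must keep careful track of the several places where the degree--one generators $\varGamma$ and $J$ pair with the $\mathsans{G}$--valued $\gamma$ and $g$ through the two distinct maps $\mu\dot{}\,$ and $\dot{}\mu\dot{}\,$, and of the signs produced by commuting the odd parameter $\alpha$ past odd quantities. A clean way to organize this is to compute $\varUpsilon_{\mathrm{b}}$ first in the intermediate form $\Ad\varLambda$ applied to $\varUpsilon-\varDelta+\Ad\varPsi(\varDelta)$, reduce that internal function to projected components, and only at the end conjugate by $\gamma$ and distribute the $\varGamma$--twist; this isolates the genuinely new $\mathfrak{e}$--valued terms from those already present in the ordinary case \ceqref{00basic11}. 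No conceptual difficulty beyond this arises, and the fact that the resulting $g_{\mathrm{b}}$, $J_{\mathrm{b}}$, $h_{\mathrm{b}}$, $K_{\mathrm{b}}$ are basic is already guaranteed by Proposition \cref{prop:basic5}.
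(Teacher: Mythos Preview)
Your approach is correct and matches the paper's own proof: substitute the parametrizations \ceqref{cmgautrx7}, \ceqref{cmgautrx8}, \ceqref{bmaurer7}, \ceqref{bmaurer8}, \ceqref{sfbasic13}, \ceqref{sfbasic14} into the defining relations \ceqref{sfbasic5}, \ceqref{sfbasic6} and expand in powers of $\alpha$ using the semidirect product formulas 3.5.2, 3.5.3 and the adjoint formula 3.5.18 of I. Your organizational remarks about handling the $\alpha^1$ bookkeeping in \ceqref{basic12} are in fact more detailed than what the paper provides.
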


\noindent
Restriction of $g$, $J$, $h$, $K$ to $T[1]\pi^{-1}(U)$ is here also tacitly understood. 

\begin{proof}
Substituting \ceqref{cmgautrx7}, \ceqref{cmgautrx8}, \ceqref{bmaurer7}, \ceqref{bmaurer8} 
and \ceqref{sfbasic13}, \ceqref{sfbasic14} into \ceqref{sfbasic5}, \ceqref{sfbasic6}
and using 3.5.2, 3.5.3 and 3.5.18 of I, 
one gets \ceqref{basic9}--\ceqref{basic12} by straightforward computations. 
\end{proof}

\begin{prop} \label{prop:basic9}
$g_{\mathrm{b}}$, $J_{\mathrm{b}}$, $h_{\mathrm{b}}$, $K_{\mathrm{b}}$ obey
\begin{align}
&d_{\pi^{-1}(U)}g_{\mathrm{b}}g_{\mathrm{b}}{}^{-1}=-h_{\mathrm{b}}-\dot\tau(J_{\mathrm{b}}),
\vphantom{\Big]}
\label{basic13}
\\
&d_{\pi^{-1}(U)}J_{\mathrm{b}}=-K_{\mathrm{b}}-\frac{1}{2}[J_{\mathrm{b}},J_{\mathrm{b}}]-\dot{}\mu\dot{}\,(h_{\mathrm{b}},J_{\mathrm{b}}),
\vphantom{\Big]}
\label{basic14}
\\
&d_{\pi^{-1}(U)}h_{\mathrm{b}}=-\frac{1}{2}[h_{\mathrm{b}},h_{\mathrm{b}}]+\dot\tau(K_{\mathrm{b}}),
\vphantom{\Big]}
\label{basic15}
\\
&d_{\pi^{-1}(U)}K_{\mathrm{b}}=-\dot{}\mu\dot{}\,(h_{\mathrm{b}},K_{\mathrm{b}}).
\vphantom{\Big]}
\label{basic16}
\end{align}
\end{prop}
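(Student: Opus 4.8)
The plan is to proceed exactly in the spirit of the proof of Proposition~\ref{prop:basic4}, taking as starting point the basic structure equations \ceqref{sfbasic7}, \ceqref{sfbasic8} of Proposition~\ref{prop:basic6} rather than redoing the computation from the definition. First I would recall that, by Proposition~\ref{prop:basic5}, the basic transformation and shift components $\varPsi_{\mathrm{b}}$, $\varUpsilon_{\mathrm{b}}$ are genuine basic elements of the operation $\iOOO S_{\pi^{-1}(U)}$, and by Proposition~\ref{prop:basic6} they obey \ceqref{sfbasic7}, \ceqref{sfbasic8}, which have exactly the same form as the defining relations \ceqref{cmgautrx1}, \ceqref{cmgautrx2} of a $1$--gauge transformation with $d_{\pi^{-1}(U)}$ in place of $d_P$. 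Only these two $d$--relations are needed, since the sought relations \ceqref{basic13}--\ceqref{basic16} are the exact analogues of \ceqref{cmgautr5}--\ceqref{cmgautr8}, which in Proposition~\ref{prop:1gauexpl} were extracted from \ceqref{cmgautrx1}, \ceqref{cmgautrx2} alone.

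The key step is then to substitute the expansions \ceqref{sfbasic13}, \ceqref{sfbasic14} of $\varPsi_{\mathrm{b}}$, $\varUpsilon_{\mathrm{b}}$ in terms of the projected basic components $g_{\mathrm{b}}$, $J_{\mathrm{b}}$, $h_{\mathrm{b}}$, $K_{\mathrm{b}}$ into \ceqref{sfbasic7}, \ceqref{sfbasic8}, and to read off the $\mathbb{R}[1]$--independent and $\mathbb{R}[1]$--linear parts. This is word for word the manipulation already carried out to pass from \ceqref{cmgautrx1}, \ceqref{cmgautrx2} to \ceqref{cmgautr5}--\ceqref{cmgautr8}: it uses relations 3.5.13 and 3.5.15 of I for the virtual Lie bracket and the coboundary $d_{\dot\tau}$, relation 3.5.18 of I for the adjoint action, and expressions 3.5.22 and 3.5.24 of I for the Maurer--Cartan term $D\varPsi_{\mathrm{b}}\varPsi_{\mathrm{b}}{}^{-1}$ and its $d_{\dot\tau}$ counterpart. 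Since \ceqref{sfbasic7}, \ceqref{sfbasic8} are structurally identical to \ceqref{cmgautrx1}, \ceqref{cmgautrx2}, the output is literally \ceqref{basic13}--\ceqref{basic16}, so in the style of the surrounding propositions the proof can be summarised as ``substitute \ceqref{sfbasic13}, \ceqref{sfbasic14} into \ceqref{sfbasic7}, \ceqref{sfbasic8} and use 3.5.13, 3.5.15, 3.5.18 of I''. A more pedestrian alternative would be to insert the explicit expressions \ceqref{basic9}--\ceqref{basic12} and invoke \ceqref{cmgautr5}--\ceqref{cmgautr16} together with the coordinate structure equations \ceqref{maurer2}--\ceqref{maurer13}, after which all $\gamma$--, $\varGamma$--, $\sigma$--, $\varSigma$--dependent terms would cancel; I would avoid this route as it is considerably longer and adds nothing conceptually.

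The only real obstacle is bookkeeping rather than conceptual: one must keep the grading--induced signs straight when splitting the $\alpha$--parametrised identities into components, in particular checking that the coefficient of $\frac12[J_{\mathrm{b}},J_{\mathrm{b}}]$ and the cross term $\dot{}\mu\dot{}\,(h_{\mathrm{b}},J_{\mathrm{b}})$ in \ceqref{basic14} emerge correctly from expanding $\ee^{\alpha J_{\mathrm{b}}}$ in \ceqref{sfbasic13} and the self--bracket of $\varUpsilon_{\mathrm{b}}(\alpha)$ in \ceqref{sfbasic8}. Because this verification has already been validated in Proposition~\ref{prop:1gauexpl}, no new algebraic identity is required, and the calculation is elementary if a bit lengthy.
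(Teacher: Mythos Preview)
Your proposal is correct and follows essentially the same route as the paper's own proof: insert the expansions \ceqref{sfbasic13}, \ceqref{sfbasic14} into \ceqref{sfbasic7}, \ceqref{sfbasic8} and read off the projected components using 3.5.13, 3.5.15, 3.5.22 and 3.5.24 of I. The only minor remark is that relation 3.5.18 of I is not actually needed here, since no $\Ad$ term appears in \ceqref{sfbasic7}, \ceqref{sfbasic8}; otherwise your summary matches the paper's argument exactly.
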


\noindent
These relations are of the same form as  \ceqref{cmgautr5}--\ceqref{cmgautr8}. 
We have reobtained in this way in our basic formulation 
the familiar local description of $1$--gauge transformations of strict higher gauge theory. 
More on this in subsect. \cref{subsec:nadifcoh}. 

\begin{proof}
\ceqref{basic13}--\ceqref{basic16} are shown by inserting \ceqref{sfbasic13}, \ceqref{sfbasic14} into 
\ceqref{sfbasic7}, \ceqref{sfbasic8} and using 3.5.13, 3.5.15, 3.5.22 
and 3.5.24 of I. 
\end{proof}


Concerning the $1$--gauge transformed $2$--connection 
${}^{g,J,h,K}\omega$, ${}^{g,J,h,K}\varOmega$, ${}^{g,J,h,K}\theta$, ${}^{g,J,h,K}\varTheta$
we have the following result. 

\begin{prop} \label{prop:basic10}
The basic components ${}^{g,J,h,K}\omega_{\mathrm{b}}$, ${}^{g,J,h,K}\varOmega_{\mathrm{b}}$, 
${}^{g,J,h,K}\theta_{\mathrm{b}}$, ${}^{g,J,h,K}\varTheta_{\mathrm{b}}$ are given in terms of 
$\omega_{\mathrm{b}}$, $\varOmega_{\mathrm{b}}$, $\theta_{\mathrm{b}}$, $\varTheta_{\mathrm{b}}$ 
and $g_{\mathrm{b}}$, $J_{\mathrm{b}}$, $h_{\mathrm{b}}$, $K_{\mathrm{b}}$ by 
\begin{align}
&{}^{g,J,h,K}\omega_{\mathrm{b}}
=\Ad g_{\mathrm{b}}(\omega_{\mathrm{b}})+h_{\mathrm{b}},
\vphantom{\Big]}
\label{basic21}
\\
&{}^{g,J,h,K}\varOmega_{\mathrm{b}}
=\mu\dot{}\,(g_{\mathrm{b}},\varOmega_{\mathrm{b}})-\dot{}\mu\dot{}\,(\Ad g_{\mathrm{b}}(\omega_{\mathrm{b}}),J_{\mathrm{b}})+K_{\mathrm{b}},
\vphantom{\Big]}
\label{basic22}
\end{align}
\begin{align}
&{}^{g,J,h,K}\theta_{\mathrm{b}}
=\Ad g_{\mathrm{b}}(\theta_{\mathrm{b}}),
\vphantom{\Big]}
\label{basic23}
\\
&{}^{g,J,h,K}\varTheta_{\mathrm{b}}
=\mu\dot{}\,(g_{\mathrm{b}},\varTheta_{\mathrm{b}})-\dot{}\mu\dot{}\,(\Ad g_{\mathrm{b}}(\theta_{\mathrm{b}}),J_{\mathrm{b}}).
\vphantom{\Big]}
\label{basic24}
\end{align}
\end{prop}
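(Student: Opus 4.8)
The plan is to obtain \eqref{basic21}--\eqref{basic24} by projecting onto $\alpha$--components the basic transformation law already established in Prop.~\ref{prop:basic7}, in exact parallel with the way Prop.~\ref{prop:1gautrsfexpl} was derived from \eqref{cmgautrx9}, \eqref{cmgautrx10}. Prop.~\ref{prop:basic7} furnishes
\[
{}^{\varPsi,\varUpsilon}A_{\mathrm{b}}=\Ad\varPsi_{\mathrm{b}}(A_{\mathrm{b}})+\varUpsilon_{\mathrm{b}},\qquad {}^{\varPsi,\varUpsilon}B_{\mathrm{b}}=\Ad\varPsi_{\mathrm{b}}(B_{\mathrm{b}}),
\]
where, by \eqref{sfbasic1}, \eqref{sfbasic2} applied to the $1$--gauge transformed $2$--connection and assembled through the expansions \eqref{sfbasic11}, \eqref{sfbasic12}, the left--hand sides carry the projected basic components ${}^{g,J,h,K}\omega_{\mathrm{b}}$, ${}^{g,J,h,K}\varOmega_{\mathrm{b}}$, ${}^{g,J,h,K}\theta_{\mathrm{b}}$, ${}^{g,J,h,K}\varTheta_{\mathrm{b}}$. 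First I would substitute into these two relations the $\alpha$--expansions \eqref{sfbasic11}, \eqref{sfbasic12} for $A_{\mathrm{b}}$, $B_{\mathrm{b}}$ and for ${}^{\varPsi,\varUpsilon}A_{\mathrm{b}}$, ${}^{\varPsi,\varUpsilon}B_{\mathrm{b}}$, together with \eqref{sfbasic13}, \eqref{sfbasic14} for $\varPsi_{\mathrm{b}}$, $\varUpsilon_{\mathrm{b}}$, mimicking the manipulation in the proof of Prop.~\ref{prop:1gautrsfexpl}.

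Next I would evaluate $\Ad\varPsi_{\mathrm{b}}$ on a general element of $\DD\mathfrak{m}$ via the explicit formula 3.5.18 of I for the adjoint action of $\DD\mathsans{M}=\mathfrak{e}[1]\rtimes_{\mu\dot{}}\mathsans{G}$ on $\DD\mathfrak{m}$, expressed through the crossed--module data $\Ad$, $\mu\dot{}\,$, $\dot{}\mu\dot{}\,$, $\tau$, $\dot\tau$, inserting $\varPsi_{\mathrm{b}}$ in terms of $g_{\mathrm{b}}$, $J_{\mathrm{b}}$ and $\varUpsilon_{\mathrm{b}}$ in terms of $h_{\mathrm{b}}$, $K_{\mathrm{b}}$. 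Collecting the $\alpha^0$ and $\alpha^1$ coefficients on both sides then yields \eqref{basic21}, \eqref{basic23} from the degree--zero part and \eqref{basic22}, \eqref{basic24} from the degree--one part; here, as throughout, $J_{\mathrm{b}}$ and the $\mathfrak{e}[1]$--valued pieces are to be read as elements of the cross--modality $\mathfrak{e}[1]^+$ in the sense of subsect.~\ref{subsec:setup}, which introduces no change in an ungraded--looking computation and only pins down the placement of $\mu$ versus $\dot\mu$.

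Since the identities so obtained are term--for--term the images of \eqref{cmgautr1}--\eqref{cmgautr4} under the replacement of $\omega,\varOmega,\theta,\varTheta,g,J,h,K$ by their basic counterparts, and since Props.~\ref{prop:basic2} and \ref{prop:basic6} guarantee that the basic components satisfy operation relations of exactly the same form as their non--basic originals, the computation is formally identical to that of Prop.~\ref{prop:1gautrsfexpl} and requires no new input. The only point demanding a little care — hence the main, and rather minor, obstacle — is the consistent propagation of signs and of the ordering of the bilinear maps $\mu\dot{}\,$, $\dot{}\mu\dot{}\,$ through the $\alpha$--expansion, i.e.\ applying the degree--shift conventions of \eqref{sfbasic11}--\eqref{sfbasic14} and of 3.5.18 of I coherently; once that bookkeeping is fixed, \eqref{basic21}--\eqref{basic24} follow at once.
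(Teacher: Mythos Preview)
Your proposal is correct and follows exactly the paper's own route: insert the $\alpha$--expansions \eqref{sfbasic11}, \eqref{sfbasic12}, \eqref{sfbasic13}, \eqref{sfbasic14} into the basic transformation law \eqref{sfbasic9}, \eqref{sfbasic10} of Prop.~\ref{prop:basic7} and evaluate $\Ad\varPsi_{\mathrm{b}}$ via 3.5.18 of I, then read off the $\alpha^0$ and $\alpha^1$ components. The paper's proof says precisely this in one line; your additional remarks on the parallel with Prop.~\ref{prop:1gautrsfexpl} and on the sign/ordering bookkeeping are accurate but not strictly needed.
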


\begin{proof}
\ceqref{basic21}--\ceqref{basic24} follow from inserting \ceqref{sfbasic11}, \ceqref{sfbasic12}, 
\ceqref{sfbasic13}, \ceqref{sfbasic14} into \ceqref{sfbasic9}, \ceqref{sfbasic10}
and using 3.5.18 of I. 
\end{proof}

\noindent
Following the remarks below eqs. \ceqref{sfbasic9}, \ceqref{sfbasic10}, 
we can regard the right hand sides of eqs. \ceqref{basic21}--\ceqref{basic24}
as the expressions of the basic projected  component
gauge transforms ${}^{g_{\mathrm{b}},J_{\mathrm{b}},h_{\mathrm{b}},K_{\mathrm{b}}}\omega_{\mathrm{b}}$,
${}^{g_{\mathrm{b}},J_{\mathrm{b}},h_{\mathrm{b}},K_{\mathrm{b}}}\varOmega_{\mathrm{b}}$,
${}^{g_{\mathrm{b}},J_{\mathrm{b}},h_{\mathrm{b}},K_{\mathrm{b}}}\theta_{\mathrm{b}}$,
${}^{g_{\mathrm{b}},J_{\mathrm{b}},h_{\mathrm{b}},K_{\mathrm{b}}}\varTheta_{\mathrm{b}}$, respectively.
Again, such expressions are formally identical to those holding for the ordinary projected components,
viz  \ceqref{cmgautr1}--\ceqref{cmgautr4} and so reproduce at the basic level the usual
local description of $2$--connection $1$--gauge transformation of strict higher gauge theory. 
This matter will be reconsidered in subsect. \cref{subsec:nadifcoh}. 

The basic components of the $1$--gauge transformation behave as expected 
when the $1$--gauge transformation is special.

\begin{prop}
If the $1$--gauge transformation and the adapted coordinates are both special 
(cf. defs. \cref{defi:prop1gau}, \cref{defi:specadpcoo}), then 
\begin{align}
&I_U{}^*J_{\mathrm{b}}=0,
\vphantom{\Big]}
\label{propbas3}
\\
&I_U{}^*K_{\mathrm{b}}=0,
\vphantom{\Big]}
\label{propbas4}
\end{align}
where $I_U:\pi_0{}^{-1}(U)\rightarrow\pi^{-1}(U)$ is the inclusion map. 
\end{prop}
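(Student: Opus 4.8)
The plan is to apply the restriction operation morphism $I_U{}^*$ to the explicit expressions \ceqref{basic10}, \ceqref{basic12} for the basic projected components $J_{\mathrm{b}}$, $K_{\mathrm{b}}$ furnished by prop. \cref{prop:basic8}. The key structural fact to invoke is that $I_U{}^*:\iFun(T[1]\pi^{-1}(U))\rightarrow\iFun(T[1]\pi_0{}^{-1}(U))$ is a morphism of graded algebras intertwining the crossed module structure maps $\mu\dot{}\,$, $\dot{}\mu\dot{}\,$, $\Ad$ and the group multiplication of $\DD\mathsans{M}$, these operations being defined pointwise on internal functions (cf. subsects. 3.5, 3.8 of I). Hence $I_U{}^*$ simply replaces every entry on the right hand sides of \ceqref{basic10}, \ceqref{basic12} by its $I_U{}^*$-image: the adapted coordinate data $\gamma$, $\varGamma$, $\sigma$, $\varSigma$ become $\gamma_0$, $\varGamma_0$, $\sigma_0$, $\varSigma_0$ as in \ceqref{bmaurer11}--\ceqref{bmaurer14}, and the $1$-gauge transformation data $g$, $J$, $h$, $K$ become $g_0$, $J_0$, $h_0$, $K_0$ of \ceqref{0cmgautr1ol}--\ceqref{0cmgautr4ol} (restricted to $\pi_0{}^{-1}(U)$, which is harmless since these vanish on all of $P_0$).

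First I would record the consequences of the specialty hypotheses: specialty of the adapted coordinates (def. \cref{defi:specadpcoo}) gives $\varGamma_0=0$, $\varSigma_0=0$ via \ceqref{bmaurer15}, \ceqref{bmaurer16}, and specialty of the $1$-gauge transformation (def. \cref{defi:prop1gau}) gives $J_0=0$, $K_0=0$ via \ceqref{0cmgautr5ol}, \ceqref{0cmgautr6ol}. Applying $I_U{}^*$ to \ceqref{basic10} yields $I_U{}^*J_{\mathrm{b}}=\mu\dot{}\,(\gamma_0,J_0)+\varGamma_0-\mu\dot{}\,(\gamma_0g_0\gamma_0^{-1},\varGamma_0)$; each of the three summands vanishes, the first because $\mu\dot{}\,$ is linear in its second slot so that $\mu\dot{}\,(\gamma_0,0)=0$, the other two trivially. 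This is \ceqref{propbas3}. Similarly, applying $I_U{}^*$ to \ceqref{basic12} one checks that inside the outer $\mu\dot{}\,(\gamma_0,-)$ every term is linear in one of $K_0$, $\varSigma_0$, $J_0$ or $\varGamma_0$ — using that $\mu\dot{}\,$ and $\dot{}\mu\dot{}\,$ are linear in their $\mathfrak{e}$-valued arguments — hence the whole argument vanishes and $I_U{}^*K_{\mathrm{b}}=0$, which is \ceqref{propbas4}.

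The only point requiring a word of care is the compatibility of $I_U{}^*$ with the projected-component expansions \ceqref{sfbasic13}, \ceqref{sfbasic14} of $\varPsi_{\mathrm{b}}$, $\varUpsilon_{\mathrm{b}}$, so that $I_U{}^*J_{\mathrm{b}}$ and $I_U{}^*K_{\mathrm{b}}$ are genuinely the projected components of $I_U{}^*\varPsi_{\mathrm{b}}$ and $I_U{}^*\varUpsilon_{\mathrm{b}}$; this is immediate from the naturality of the decompositions 3.5.1, 3.5.12 of I under algebra morphisms. Beyond this there is no real obstacle: the argument is the elementary observation that a restriction morphism annihilates whatever the hypotheses force to vanish, and the verification is of the same routine character as the proofs of props. \cref{prop:basic8}--\cref{prop:basic10}.
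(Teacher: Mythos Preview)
Your proposal is correct and follows essentially the same approach as the paper, which simply states that the result follows from \ceqref{basic10}, \ceqref{basic12} upon substituting \ceqref{0cmgautr5ol}, \ceqref{0cmgautr6ol} and \ceqref{bmaurer15}, \ceqref{bmaurer16}. Your write-up is more detailed in spelling out why each term vanishes and in addressing the compatibility of $I_U{}^*$ with the projected-component decompositions, but the underlying argument is identical.
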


\begin{proof}
This follows from \ceqref{basic10}, \ceqref{basic12} upon substituting 
\ceqref{0cmgautr5ol}, \ceqref{0cmgautr6ol} and \ceqref{bmaurer15}, \ceqref{bmaurer16}
\end{proof}

Finally, we consider a $2$--gauge transformation of $P$ of modification and variation components 
$E$, $C$ relative to the reference $2$--connection 
$\omega$, $\varOmega$, $\theta$, $\varTheta$ (cf. subsect. \cref{subsec:cmgauforgau}, def. \cref{defi:2gauge}). 

\begin{defi}
The basic modification and variation components of the $2$--gau\-ge transformation 
are the Lie group and algebra valued internal functions 
$E_{\mathrm{b}}\in\iMap(T[1]\pi^{-1}(U),\mathsans{E})$ and  
$C_{\mathrm{b}}\in\iMap(T[1]\pi^{-1}(U),\mathfrak{e}[1])$ given by \pagebreak 
\begin{align}
&E_{\mathrm{b}}=\mu(\gamma,E),
\vphantom{\Big]}
\label{basic17}
\\
&C_{\mathrm{b}}=\mu\dot{}\,(\gamma,C)+\varGamma-\Ad\mu(\gamma,E)(\varGamma).
\vphantom{\Big]}
\label{basic18}
\end{align}
\end{defi}

\noindent
Again, restriction of $E$, $C$ to $T[1]\pi^{-1}(U)$ is here also tacitly understood. 
Further, the names given to $E_{\mathrm{b}}$, $C_{\mathrm{b}}$ reflect their basicness.

\begin{prop}
$E_{\mathrm{b}}$, $C_{\mathrm{b}}$ are basic elements of the operation $\iOOO S_{\pi^{-1}(\,U)}$.
\end{prop}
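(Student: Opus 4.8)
The plan is to verify directly that $E_{\mathrm{b}}$ and $C_{\mathrm{b}}$ are annihilated by every derivation $j_{\pi^{-1}(U)Z}$ and $l_{\pi^{-1}(U)Z}$ with $Z\in\DD\mathfrak{m}$, following the pattern of Props.~\ref{prop:basic1} and \ref{prop:basic5}. First I would invoke the operation morphism $\iOOO Q_U:\iOOO S_P\rightarrow\iOOO S_{\pi^{-1}(U)}$ recalled at the beginning of subsect.~\ref{subsec:basic}: under it the restrictions of $E$, $C$ to $T[1]\pi^{-1}(U)$ obey \eqref{cmgauforgau9}--\eqref{cmgauforgau12} verbatim with $P$ replaced by $\pi^{-1}(U)$. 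Combined with the projected adapted--coordinate relations \eqref{maurer6}, \eqref{maurer7}, \eqref{maurer10}, \eqref{maurer11} governing the action of $j_{\pi^{-1}(U)Z}$, $l_{\pi^{-1}(U)Z}$ on $\gamma$ and $\varGamma$, these are all the input data required.

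For the contraction derivations the verification is immediate. From \eqref{maurer6} one has $j_{\pi^{-1}(U)Z}\gamma=0$ and from \eqref{maurer7}, after acting with $\mu\dot{}\,(\gamma,-)$, one has $j_{\pi^{-1}(U)Z}\varGamma=0$; together with $j_{\pi^{-1}(U)Z}EE^{-1}=0$ from \eqref{cmgauforgau9} and $j_{\pi^{-1}(U)Z}C=0$ from \eqref{cmgauforgau10}, the Leibniz rule for $\mu$, $\mu\dot{}\,$ and $\Ad$ applied to \eqref{basic17}, \eqref{basic18} shows that each summand is separately killed, whence $j_{\pi^{-1}(U)Z}E_{\mathrm{b}}=0$ and $j_{\pi^{-1}(U)Z}C_{\mathrm{b}}=0$.

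For the Lie derivations I would treat $E_{\mathrm{b}}$ first. Computing $l_{\pi^{-1}(U)Z}E_{\mathrm{b}}\,E_{\mathrm{b}}{}^{-1}$ from $E_{\mathrm{b}}=\mu(\gamma,E)$ by the Maurer--Cartan Leibniz rule, the $\gamma$--contribution is governed by $\gamma^{-1}l_{\pi^{-1}(U)Z}\gamma=x$ (eq.~\eqref{maurer10}) and the $E$--contribution by $l_{\pi^{-1}(U)Z}EE^{-1}=-\dot{}\mu(x,E)$ (eq.~\eqref{cmgauforgau11}); using the equivariance of $\mu$ under $\mathsans{G}$--conjugation the two produce opposite terms of the form $\pm\,\dot{}\mu(x,E_{\mathrm{b}})$, so $l_{\pi^{-1}(U)Z}E_{\mathrm{b}}=0$. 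For $C_{\mathrm{b}}$, I would act with $l_{\pi^{-1}(U)Z}$ on the three summands of \eqref{basic18}: the term $\mu\dot{}\,(\gamma,C)$ generates, via \eqref{maurer10} and \eqref{cmgauforgau12}, terms in $\dot{}\mu\dot{}\,(x,-)$, $X$ and $\Ad E(X)$; the term $\varGamma$ generates $l_{\pi^{-1}(U)Z}\varGamma=\mu\dot{}\,(\gamma,X)$ via \eqref{maurer11}; and the term $-\Ad\mu(\gamma,E)(\varGamma)$ generates, using $l_{\pi^{-1}(U)Z}E_{\mathrm{b}}=0$ just obtained, only $-\Ad E_{\mathrm{b}}(\mu\dot{}\,(\gamma,X))$. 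Invoking once more the equivariance of $\mu$, $\mu\dot{}\,$, $\dot{}\mu\dot{}\,$ and $\Ad$ under conjugation together with the Peiffer identity of the crossed module, one checks that the $X$--terms cancel against $l_{\pi^{-1}(U)Z}\varGamma$, the $\Ad E(X)$--terms cancel against the contribution of $-\Ad\mu(\gamma,E)(\varGamma)$, and the remaining terms vanish because $C$ and $\varGamma$ are already $l$--controlled, leaving $l_{\pi^{-1}(U)Z}C_{\mathrm{b}}=0$.

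The only delicate point is this last cancellation: one must keep track of how the crossed--module maps $\mu$, $\dot{}\mu$, $\dot{}\mu\dot{}\,$ and the adjoint action intertwine under conjugation by $\gamma$, and of the signs coming from the grading of $\DD\mathfrak{m}$, so that the terms proportional to $X$ and to $\Ad E(X)$ cancel in pairs. No idea beyond those already used for Props.~\ref{prop:basic1} and \ref{prop:basic5} is needed; the computation is long but mechanical.
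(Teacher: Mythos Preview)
Your proposal is correct and follows essentially the same route as the paper: direct verification that $j_{\pi^{-1}(U)Z}$ and $l_{\pi^{-1}(U)Z}$ annihilate $E_{\mathrm{b}}$, $C_{\mathrm{b}}$ using \ceqref{cmgauforgau9}--\ceqref{cmgauforgau12} together with the adapted--coordinate relations. The only cosmetic difference is that the paper cites the unprojected relations \ceqref{bmaurer3}--\ceqref{bmaurer6} on $\varLambda$, $\varDelta$, whereas you work with their projected consequences \ceqref{maurer6}, \ceqref{maurer7}, \ceqref{maurer10}, \ceqref{maurer11} on $\gamma$, $\varGamma$; since $E_{\mathrm{b}}$, $C_{\mathrm{b}}$ are defined in \ceqref{basic17}, \ceqref{basic18} via the projected coordinates, your choice is the natural one in practice.

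One small correction: in the $l_{\pi^{-1}(U)Z}E_{\mathrm{b}}$ computation, since \ceqref{maurer10} gives $l_{\pi^{-1}(U)Z}\gamma\,\gamma^{-1}=\Ad\gamma(x)$ (not $x$), the cancelling pair is $\pm\,\dot{}\mu(\Ad\gamma(x),E_{\mathrm{b}})$ rather than $\pm\,\dot{}\mu(x,E_{\mathrm{b}})$. The cancellation is unaffected.
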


\begin{proof}
To show that $E_{\mathrm{b}}$, $C_{\mathrm{b}}$ are annihilated by all derivations 
$j_{\pi^{-1}(U)Z}$ and $l_{\pi^{-1}(U)Z}$ with $Z\in\DD\mathfrak{m}$
we use relations  \ceqref{cmgauforgau9}--\ceqref{cmgauforgau12} and \ceqref{bmaurer3}--\ceqref{bmaurer6}.
\end{proof}


\begin{prop}
$E_{\mathrm{b}}$, $C_{\mathrm{b}}$ satisfy obey the relations
\begin{align}
&d_{\pi^{-1}(U)}E_{\mathrm{b}}E_{\mathrm{b}}{}^{-1}=-C_{\mathrm{b}}-\dot{}\mu(\omega_{\mathrm{b}},E_{\mathrm{b}}), 
\vphantom{\Big]}
\label{basic19}
\\
&d_{\pi^{-1}(U)}C_{\mathrm{b}}=-\frac{1}{2}[C_{\mathrm{b}},C_{\mathrm{b}}]-\dot{}\mu\dot{}\,(\omega_{\mathrm{b}},C_{\mathrm{b}})
-\dot{}\mu(\theta_{\mathrm{b}},E_{\mathrm{b}})-\varOmega_{\mathrm{b}}+\Ad E_{\mathrm{b}}(\varOmega_{\mathrm{b}}).
\vphantom{\Big]}
\label{basic20}
\end{align}
\end{prop}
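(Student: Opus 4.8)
The plan is to follow exactly the template used in the proofs of Propositions \ref{prop:basic2}, \ref{prop:basic4}, \ref{prop:basic6} and \ref{prop:basic9}: substitute the definitions \eqref{basic17}, \eqref{basic18} of $E_{\mathrm{b}}$, $C_{\mathrm{b}}$ into the left--hand sides of \eqref{basic19}, \eqref{basic20}, expand by the Leibniz rule for the crossed--module pairings, and then eliminate $d_{\pi^{-1}(U)}E$, $d_{\pi^{-1}(U)}C$ through the $2$--gauge structure relations \eqref{cmgauforgau7}, \eqref{cmgauforgau8} and $d_{\pi^{-1}(U)}\gamma$, $d_{\pi^{-1}(U)}\varGamma$ through the projected adapted--coordinate relations \eqref{maurer2}, \eqref{maurer3}. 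Everything is then re--expressed through the identities \eqref{basic1}, \eqref{basic2}, \eqref{basic3} defining $\omega_{\mathrm{b}}$, $\varOmega_{\mathrm{b}}$, $\theta_{\mathrm{b}}$ and through \eqref{basic18} itself, together with the crossed--module compatibility identities of $(\mathsans{E},\mathsans{G})$ collected in subsect. 3.5 of I (equivariance of $\mu$ and the various $\dot{}\mu\dot{}$ pairings under $\Ad$, the Peiffer identity, and the intertwining of $\tau$, $\dot\tau$ with $\mu$). What is left after the dust settles should be precisely $-C_{\mathrm{b}}$ plus the $E_{\mathrm{b}}$--linear term in \eqref{basic19}, and the corresponding collection of terms in \eqref{basic20}.

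Concretely, for \eqref{basic19} one differentiates $E_{\mathrm{b}}=\mu(\gamma,E)$, producing a contribution in $d_{\pi^{-1}(U)}\gamma$ and one in $\mu(\gamma,d_{\pi^{-1}(U)}E)$; inserting \eqref{maurer2} and \eqref{cmgauforgau7}, the $\omega$--type pieces combine with the $\dot\tau(\mu\dot{}\,(\gamma^{-1},\varGamma))$ term of \eqref{maurer2} to reconstruct $\omega_{\mathrm{b}}$ via \eqref{basic1}, while the $\mu(\gamma,C)$ piece together with the $\varGamma$--pieces reassembles $C_{\mathrm{b}}$ via \eqref{basic18}; right--translating by $E_{\mathrm{b}}{}^{-1}$ and matching the $\dot{}\mu(\omega_{\mathrm{b}},E_{\mathrm{b}})$ term is then pure bookkeeping in the crossed--module axioms. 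For \eqref{basic20}, which is the longer computation, one differentiates \eqref{basic18}, uses \eqref{cmgauforgau8} — this is where the reference $2$--connection components $\theta$, $\varOmega$ enter — together with \eqref{maurer2}, \eqref{maurer3}, and then rewrites $\theta$, $\varOmega$ through $\theta_{\mathrm{b}}$, $\varOmega_{\mathrm{b}}$ using \eqref{basic3}, \eqref{basic2}. The quadratic term $-\tfrac12[C_{\mathrm{b}},C_{\mathrm{b}}]$ and the term $-\dot{}\mu\dot{}\,(\omega_{\mathrm{b}},C_{\mathrm{b}})$ arise from the interaction of $d_{\pi^{-1}(U)}\gamma$ with the $\mu\dot{}\,(\gamma,C)$ term and from the $-\tfrac12[C,C]$ term of \eqref{cmgauforgau8}, while the $\varOmega_{\mathrm{b}}$--terms come from the $-\varOmega+\Ad E(\varOmega)$ part of \eqref{cmgauforgau8} combined with the $\varGamma$--dependent part of \eqref{basic18}.

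The main obstacle is neither conceptual nor structural but the careful tracking of grading signs and of which crossed--module pairing ($\mu$, $\mu\dot{}$, $\dot{}\mu$, $\dot{}\mu\dot{}$) occurs where, since $E_{\mathrm{b}}$ couples a degree $0$ group element $\gamma$ to a degree $0$ element $E$, whereas $C_{\mathrm{b}}$ involves the degree $1$ element $\varGamma$: verifying that all $\varGamma$-- and $\varGamma^2$--contributions cancel in \eqref{basic19} and reorganize correctly in \eqref{basic20} is where essentially all the work lies. No extra consistency check is needed, however: the lemma asserting that \eqref{cmgauforgau7}--\eqref{cmgauforgau12} respect the operation commutation relations 2.1.1--2.1.6 of I, together with the preceding proposition establishing that $E_{\mathrm{b}}$, $C_{\mathrm{b}}$ are basic, already guarantees the soundness of the setup, so once the two displayed identities have been obtained by direct substitution the proof is finished.
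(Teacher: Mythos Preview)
Your proposal is correct and follows essentially the same approach as the paper, which simply says ``Combining \eqref{cmgauforgau7}, \eqref{cmgauforgau8} and \eqref{bmaurer1}, \eqref{bmaurer2} and carrying out a simple computation, \eqref{basic19}, \eqref{basic20} are readily obtained.'' You have merely unpacked this terse sentence into its projected form---using \eqref{maurer2}, \eqref{maurer3} in place of \eqref{bmaurer1}, \eqref{bmaurer2}, which is in fact the appropriate level since $E_{\mathrm{b}}$, $C_{\mathrm{b}}$ are defined in \eqref{basic17}, \eqref{basic18} directly through the projected coordinates $\gamma$, $\varGamma$---and spelled out where each term of the answer originates.
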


\noindent
As expected by now, these are analogous in form to \ceqref{cmgauforgau7}, \ceqref{cmgauforgau8}. 

\begin{proof}
Combining \ceqref{cmgauforgau7}, \ceqref{cmgauforgau8} and \ceqref{bmaurer1}, \ceqref{bmaurer2}
and carrying out a simple computation, \ceqref{basic19}, \ceqref{basic20} are readily obtained. 
\end{proof}

As to the $2$--gauge transform ${}^{E,C}g$, ${}^{E,C}J$, ${}^{E,C}h$, ${}^{E,C}K$ of the $1$--gauge 
transformation $g$, $J$, $h$, $K$ (cf. subsect. \cref{subsec:cmgauforgau}, def. \cref{defi:2gautransf}), 
the following result holds. 

\begin{prop}
${}^{E,C}g_{\mathrm{b}}$, ${}^{E,C}J_{\mathrm{b}}$, ${}^{E,C}h_{\mathrm{b}}$, ${}^{E,C}K_{\mathrm{b}}$ are given in terms 
of $g_{\mathrm{b}}$, $J_{\mathrm{b}}$, $h_{\mathrm{b}}$, $K_{\mathrm{b}}$ and $E_{\mathrm{b}}$, $C_{\mathrm{b}}$ by 
the expressions 
\begin{align}
&{}^{E,C}g_{\mathrm{b}}
=\tau(E_{\mathrm{b}})g_{\mathrm{b}}, 
\vphantom{\Big]}
\label{basic25}
\\
&{}^{E,C}J_{\mathrm{b}}
=\Ad E_{\mathrm{b}}(J_{\mathrm{b}})+\dot{}\mu(\omega_{\mathrm{b}}
-\Ad g_{\mathrm{b}}(\omega_{\mathrm{b}})-h_{\mathrm{b}},E_{\mathrm{b}})+C_{\mathrm{b}},
\vphantom{\Big]}
\label{basic26}
\\
&{}^{E,C}h_{\mathrm{b}}
=\Ad\tau(E_{\mathrm{b}})(h_{\mathrm{b}})
+\dot\tau(\,\dot{}\mu(\Ad g_{\mathrm{b}}(\omega_{\mathrm{b}})+h_{\mathrm{b}},E_{\mathrm{b}})),
\vphantom{\Big]}
\label{basic27}
\\
&{}^{E,C}K_{\mathrm{b}}
=\Ad E_{\mathrm{b}}(K_{\mathrm{b}})+\dot{}\mu\dot{}\,(\Ad(\tau(E_{\mathrm{b}})g_{\mathrm{b}})(\omega_{\mathrm{b}}), 
\dot{}\mu(\omega_{\mathrm{b}}-\Ad g_{\mathrm{b}}(\omega_{\mathrm{b}})-h_{\mathrm{b}},E_{\mathrm{b}})
\vphantom{\Big]}
\label{basic28}
\\
&\hspace{2.2cm}
+C_{\mathrm{b}})+\dot{}\mu(\Ad g_{\mathrm{b}}(\theta_{\mathrm{b}})
+\dot\tau(\mu\dot{}\,(g_{\mathrm{b}},\varOmega_{\mathrm{b}})
-\dot{}\mu\dot{}\,(\Ad g_{\mathrm{b}}(\omega_{\mathrm{b}}),J_{\mathrm{b}})+K_{\mathrm{b}}),E_{\mathrm{b}}).
\vphantom{\Big]}
\nonumber
\end{align}
\end{prop}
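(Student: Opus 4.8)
The plan is to compute the left--hand sides of \ceqref{basic25}--\ceqref{basic28} directly from the definitions and to reorganise the outcome into the shape of the right--hand sides. By construction, ${}^{E,C}g_{\mathrm{b}}$, ${}^{E,C}J_{\mathrm{b}}$, ${}^{E,C}h_{\mathrm{b}}$, ${}^{E,C}K_{\mathrm{b}}$ are nothing but the basic projected components of the $2$--gauge transformed $1$--gauge transformation, namely the functions produced by relations \ceqref{basic9}--\ceqref{basic12} of prop. \cref{prop:basic8} with $g$, $J$, $h$, $K$ replaced by ${}^{E,C}g$, ${}^{E,C}J$, ${}^{E,C}h$, ${}^{E,C}K$. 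First I would substitute into these the explicit expressions \ceqref{cmgauforgau1}--\ceqref{cmgauforgau4} for the latter, so that everything is written out in terms of $g$, $J$, $h$, $K$, $E$, $C$, the reference $2$--connection components $\omega$, $\varOmega$, $\theta$, $\varTheta$ and the adapted coordinates $\gamma$, $\varGamma$, $\sigma$, $\varSigma$.

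The next step is to eliminate all non--basic quantities in favour of the basic ones. Concretely, one uses \ceqref{basic9}--\ceqref{basic12} again to trade $g$, $J$, $h$, $K$ for $g_{\mathrm{b}}$, $J_{\mathrm{b}}$, $h_{\mathrm{b}}$, $K_{\mathrm{b}}$, the relations \ceqref{basic1}, \ceqref{basic2}, \ceqref{basic3} to trade $\omega$, $\varOmega$, $\theta$ for $\omega_{\mathrm{b}}$, $\varOmega_{\mathrm{b}}$, $\theta_{\mathrm{b}}$, and the definitions \ceqref{basic17}, \ceqref{basic18} to recognise $E_{\mathrm{b}}$, $C_{\mathrm{b}}$. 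The recurring technical facts are the crossed module identities of subsect. 3.5 of I: the $\mathsans{G}$--equivariance of $\tau$, the Peiffer identity encoded in the operations $\mu$, $\dot{}\mu$, $\dot{}\mu\dot{}\,$, and the compatibility of conjugation by $\gamma$ with all of these, supplemented when needed by the structure relations \ceqref{maurer2}--\ceqref{maurer13}. For instance $\tau(E_{\mathrm{b}})=\tau(\mu(\gamma,E))=\Ad\gamma(\tau(E))$, whence $\tau(E_{\mathrm{b}})g_{\mathrm{b}}=\gamma\,\tau(E)g\,\gamma^{-1}=\gamma\,({}^{E,C}g)\,\gamma^{-1}$, which is precisely \ceqref{basic25}; the identities \ceqref{basic26} and \ceqref{basic27} follow the same pattern with progressively heavier bookkeeping.

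The main obstacle is the length of the verification of \ceqref{basic28}: substituting \ceqref{cmgauforgau4} into \ceqref{basic12} produces a large number of terms, and collapsing them into the stated form relies on repeated application of the Peiffer identity together with \ceqref{maurer2}--\ceqref{maurer13}. To keep the bookkeeping under control I would use the fact that the passage from non--basic to basic data is purely algebraic and that \ceqref{cmgauforgau1}--\ceqref{cmgauforgau4} are polynomial in the group and algebra variables; consequently it suffices to check \ceqref{basic25}--\ceqref{basic28} as identities in the virtual Lie group $\iMap(T[1]\pi^{-1}(U),\DD\mathsans{M})$ and the virtual Lie algebra $\iMap(T[1]\pi^{-1}(U),\ZZ\DD\mathfrak{m})$, with no reference to the differential at all, the compatibility with the derivations being guaranteed by props. \cref{prop:basic5}, \cref{prop:basic8}, \cref{prop:basic9} and the analogous statements for $E_{\mathrm{b}}$, $C_{\mathrm{b}}$. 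This reduces the proof to the elementary, if tedious, manipulations just outlined.
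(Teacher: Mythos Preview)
Your approach is correct and coincides with the paper's own proof, which simply states that \ceqref{basic25}--\ceqref{basic28} follow by combining \ceqref{basic9}--\ceqref{basic12} and \ceqref{basic17}, \ceqref{basic18} with \ceqref{cmgauforgau1}--\ceqref{cmgauforgau4} through direct computation. Your additional remark that the verification is purely algebraic in the adapted coordinates and Lie--valued fields, with no use of the differential, is accurate; the reference to \ceqref{maurer2}--\ceqref{maurer13} is therefore not actually needed, as you yourself note at the end.
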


\begin{proof}
Relations \ceqref{basic25}--\ceqref{basic28}
are obtained by combining \ceqref{basic9}--\ceqref{basic12} and 
\ceqref{basic17}, \ceqref{basic18} with 
\ceqref{cmgauforgau1}--\ceqref{cmgauforgau4} through simple 
computations. 
\end{proof}

\noindent
Above, we can regard the right hand sides of eqs. \ceqref{basic25}--\ceqref{basic28}
as the expressions of the basic projected  component gauge transform 
${}^{E_{\mathrm{b}},C_{\mathrm{b}}}g_{\mathrm{b}}$, ${}^{E_{\mathrm{b}},C_{\mathrm{b}}}J_{\mathrm{b}}$, 
${}^{E_{\mathrm{b}},C_{\mathrm{b}}}h$, ${}^{E_{\mathrm{b}},C_{\mathrm{b}}}K$, respectively.
Such expressions are formally identical to those holding for the ordinary projected components,
viz \ceqref{cmgauforgau1}--\ceqref{cmgauforgau4}. Moreover, they provide a 
local basic description of $1$--gauge transformation $2$--gauge transformation of strict higher gauge theory. 

\begin{rem}
The basic components $\omega_{\mathrm{b}}$, $\varOmega_{\mathrm{b}}$, $\theta_{\mathrm{b}}$, $\varTheta_{\mathrm{b}}$,
$g_{\mathrm{b}}$, $J_{\mathrm{b}}$, $h_{\mathrm{b}}$, $K_{\mathrm{b}}$ and $E_{\mathrm{b}}$, $C_{\mathrm{b}}$ 
satisfy relations formally identical to \ceqref{cmgauforgau5}--\ceqref{cmgauforgau6crv}.
\end{rem}

\begin{proof}
Indeed, the basic projected components formally obey the same relations as the
ordinary projected ones. Moreover, the basic projected component $2$-connection $1$--gauge transformation 
and $1$--gauge transformation $2$--gauge transformation are formally given by the same expressions
as their ordinary projected counterparts.
\end{proof}

For a given trivializing neighborhood $U\subset M$,
the basic components of $2$--connections and $1$-- and $2$ gauge transformations are Lie valued 
internal functions on $T[1]\pi^{-1}(U)$ so that they are only locally defined. The problem arises
of matching the local data pertaining to distinct but overlapping trivializing 
neighborhoods $U,\, U'\subset M$. Below, we denote by $\varLambda,\varDelta$ and 
$\varLambda',\varDelta'$ the adapted coordinates modelled on $\DD\mathsans{M}$ of $\pi^{-1}(U)$
$\pi^{-1}(U')$, respectively. 

The inclusion map $N_{U\cap U'}^U:\pi^{-1}(U\cap U')\rightarrow \pi^{-1}(U)$ 
induces a morphisms $Q_{U\cap\,U'}^{\,U}:S_{\pi^{-1}(\,U\cap\,U')}\rightarrow S_{\pi^{-1}(\,U)}$
of the morphism spaces of $\pi^{-1}(\,U\cap\,U')$ and $\pi^{-1}(\,U)$ and via this a morphism
$\iOOO Q_{U\cap\,U'}^{\,U}:\iOOO S_{\pi^{-1}(\,U)}\rightarrow\iOOO S_{\pi^{-1}(\,U\cap\,(U')}$
of the associated operations. Thus, if a function $F\in\iFun(T[1]\pi^{-1}(U))$ 
obeys certain relations under the action of the derivations 
$j_{\pi^{-1}(U)Z}$, $l_{\pi^{-1}(U)Z}$ of $\iOOO S_{\pi^{-1}(\,U)}$, 
its restriction $F|_{T[1]\pi^{-1}(U\cap U')}:=N_{U\cap U'}^U{}^*F\in\iFun(T[1]\pi^{-1}(U\cap U'))$
obeys formally identical relations under the action of the derivations 
$j_{\pi^{-1}(U\cap U')Z}$, $l_{\pi^{-1}(U\cap U')Z}$ of $\iOOO S_{\pi^{-1}(\,U\cap\,U')}$. 
Similar remarks hold with $U$ replaced by $U'$. 

\begin{defi} \label{defi:matchfnc}
The local basic matching transformation and shift components are the Lie group and algebra valued internal 
functions $G_{\mathrm{b}}\in\iMap(T[1]\pi^{-1}(U\cap U')$, $\DD\mathsans{M})$ and 
$D_{\mathrm{b}}\in\iMap(T[1]\pi^{-1}(U\cap U'),\DD\mathfrak{m}[1])$ defined by 
\begin{align}
G_{\mathrm{b}}&=\varLambda'\varLambda^{-1},
\vphantom{\Big]}
\label{2basic1}
\\
D_{\mathrm{b}}&=\Ad \varLambda(\varDelta'-\varDelta). 
\vphantom{\Big]}
\label{2basic2}
\end{align}
\end{defi}

\noindent
Above, restriction of $\varLambda$, $\varDelta$ and $\varLambda'$, $\varDelta'$ to $T[1]\pi^{-1}(U\cap\,U')$ 
is understood for simplicity. The names given to $G_{\mathrm{b}}$, $D_{\mathrm{b}}$ are justified by the following proposition.  

\begin{prop}\label{prop:2basic1}
$G_{\mathrm{b}}$, $D_{\mathrm{b}}$ are basic elements of the operation $\iOOO S_{\pi^{-1}(\,U\cap\,U')}$.
\end{prop}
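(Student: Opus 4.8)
The plan is to show directly that $G_{\mathrm{b}}$ and $D_{\mathrm{b}}$ are annihilated by every derivation $j_{\pi^{-1}(U\cap U')Z}$ and $l_{\pi^{-1}(U\cap U')Z}$, $Z\in\DD\mathfrak{m}$, which is precisely the definition of basicness. As a preliminary I would invoke the operation morphisms $\iOOO Q^{\,U}_{U\cap U'}$ and $\iOOO Q^{\,U'}_{U\cap U'}$ induced by the two inclusion maps, by means of which, after restriction to $T[1]\pi^{-1}(U\cap U')$, the adapted coordinates $\varLambda,\varDelta$ and $\varLambda',\varDelta'$ still obey structure equations of the form \ceqref{bmaurer1}--\ceqref{bmaurer6}, now with $\pi^{-1}(U\cap U')$ in place of $\pi^{-1}(U)$, resp. $\pi^{-1}(U')$. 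Throughout the computation I would use the graded Leibniz rules for products, for inverses ($DG^{-1}=-G^{-1}(DG)G^{-1}$) and for the adjoint action ($D(\Ad G(Y))=[(DG)G^{-1},\Ad G(Y)]\pm\Ad G(DY)$), together with the fact that $\Ad\varLambda$ is an automorphism of the virtual Lie algebra, so that $[\Ad\varLambda(Z),\Ad\varLambda(Y)]=\Ad\varLambda([Z,Y])$.

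For the contraction derivations the argument is immediate: \ceqref{bmaurer3} says that $\varLambda$ and $\varLambda'$ have vanishing $j_Z$-logarithmic derivative, hence are themselves annihilated by $j_Z$, and therefore so is the product $G_{\mathrm{b}}=\varLambda'\varLambda^{-1}$; while \ceqref{bmaurer4} gives $j_Z\varDelta=j_Z\varDelta'=Z$, so $j_Z(\varDelta'-\varDelta)=0$, and combined with $j_Z\varLambda=0$ in the Leibniz rule for $\Ad$ this yields $j_Z D_{\mathrm{b}}=0$. For the Lie derivations the key point is that the inhomogeneous term $d_{\dot\tau}Z$ in \ceqref{bmaurer6} is the same constant for $\varDelta$ and $\varDelta'$, so it cancels in the difference: $l_Z(\varDelta'-\varDelta)=-[Z,\varDelta'-\varDelta]$; likewise \ceqref{bmaurer5} gives $(l_Z\varLambda)\varLambda^{-1}=\Ad\varLambda(Z)$ and $(l_Z\varLambda')\varLambda'^{-1}=\Ad\varLambda'(Z)$ with the \emph{same} $Z$. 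Computing the right-logarithmic derivative of $G_{\mathrm{b}}$ then gives $(l_Z G_{\mathrm{b}})G_{\mathrm{b}}^{-1}=\Ad\varLambda'(Z)-\Ad\varLambda'(Z)=0$, hence $l_Z G_{\mathrm{b}}=0$; and $l_Z D_{\mathrm{b}}=[\Ad\varLambda(Z),D_{\mathrm{b}}]+\Ad\varLambda(l_Z(\varDelta'-\varDelta))=\Ad\varLambda([Z,\varDelta'-\varDelta])-\Ad\varLambda([Z,\varDelta'-\varDelta])=0$ by the automorphism property of $\Ad\varLambda$. This proves that $G_{\mathrm{b}}$, $D_{\mathrm{b}}$ are basic.

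The calculation is elementary; the only delicate point, exactly as in Lemmas \ref{lemma:2conn} and \ref{lemma:1gau} and in the proof of Proposition \ref{prop:basic1}, is keeping track of the graded signs produced by the degrees of $d_{\dot\tau}$ and of the Lie-algebra-valued arguments, together with the graded commutativity of $d_{\dot\tau}$ with the operation derivations. None of these signs disturb the cancellations displayed above, so I do not expect any genuine obstacle beyond notational discipline.
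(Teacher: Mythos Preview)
Your proposal is correct and follows essentially the same approach as the paper's own proof, which simply states that one verifies the annihilation by $j_{\pi^{-1}(U\cap U')Z}$ and $l_{\pi^{-1}(U\cap U')Z}$ using relations \ceqref{bmaurer3}--\ceqref{bmaurer6} and their primed counterparts. You have merely spelled out the computation that the paper leaves implicit, including the preliminary step of invoking the operation morphisms $\iOOO Q^{\,U}_{U\cap U'}$, $\iOOO Q^{\,U'}_{U\cap U'}$ so that the restricted adapted coordinates still satisfy \ceqref{bmaurer1}--\ceqref{bmaurer6} on $\pi^{-1}(U\cap U')$; this is exactly the mechanism the paper invokes in the paragraph preceding def.~\cref{defi:matchfnc}.
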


\begin{proof}
Using relations \ceqref{bmaurer3}--\ceqref{bmaurer6} and their primed counterpart, one easily verifies that 
$G_{\mathrm{b}}$, $D_{\mathrm{b}}$ are annihilated by all derivations 
$j_{\pi^{-1}(U\cap U')Z}$ and $l_{\pi^{-1}(U\cap U')Z}$ with $Z\in\DD\mathfrak{m}$. 
\end{proof}


\begin{prop}\label{prop:2basic2}
The following relation holds, 
\begin{equation}
D_{\mathrm{b}}=G_{\mathrm{b}}{}^{-1}d_{\pi^{-1}(U\cap U')}G_{\mathrm{b}}+G_{\mathrm{b}}{}^{-1}d_{\dot\tau}G_{\mathrm{b}}.
\label{2basic3}
\end{equation}
\end{prop}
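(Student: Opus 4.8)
The plan is to read everything off the defining structure equation \ceqref{bmaurer1} for the adapted coordinates $\varLambda$, $\varDelta$ modelled on $\DD\mathsans{M}$. First I would restrict \ceqref{bmaurer1} from $\pi^{-1}(U)$ to $\pi^{-1}(U\cap U')$ along the operation morphism $\iOOO Q_{U\cap U'}^{\,U}$, which intertwines the de Rham vector fields $d_{\pi^{-1}(U)}$ and $d_{\pi^{-1}(U\cap U')}$ and commutes with $d_{\dot\tau}$, and likewise restrict the primed version of \ceqref{bmaurer1} along $\iOOO Q_{U\cap U'}^{\,U'}$. Writing $d:=d_{\pi^{-1}(U\cap U')}$ and rearranging, this yields
\[
\varDelta=\varLambda^{-1}d\varLambda+\varLambda^{-1}d_{\dot\tau}\varLambda,\qquad
\varDelta'=\varLambda'{}^{-1}d\varLambda'+\varLambda'{}^{-1}d_{\dot\tau}\varLambda',
\]
that is, $\varDelta$ and $\varDelta'$ are the left Maurer--Cartan forms of $\varLambda$ and $\varLambda'$ relative to the combined coboundary $d+d_{\dot\tau}$.

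Next I would insert $\varLambda'=G_{\mathrm{b}}\varLambda$, which is immediate from \ceqref{2basic1}, and expand. Since $G_{\mathrm{b}}$ and $\varLambda$ carry degree $0$ and $d$ is an odd derivation, the Leibniz rule gives $d(G_{\mathrm{b}}\varLambda)=(dG_{\mathrm{b}})\varLambda+G_{\mathrm{b}}(d\varLambda)$, whence $\varLambda'{}^{-1}d\varLambda'=\Ad\varLambda^{-1}(G_{\mathrm{b}}{}^{-1}dG_{\mathrm{b}})+\varLambda^{-1}d\varLambda$. The identical manipulation applies to $d_{\dot\tau}$ once one uses that the expressions $\varPsi^{-1}d_{\dot\tau}\varPsi$ defined in I (eqs. 3.5.23, 3.5.25 of I) obey the Maurer--Cartan composition rule $(\varPsi_1\varPsi_2)^{-1}d_{\dot\tau}(\varPsi_1\varPsi_2)=\Ad\varPsi_2^{-1}(\varPsi_1{}^{-1}d_{\dot\tau}\varPsi_1)+\varPsi_2{}^{-1}d_{\dot\tau}\varPsi_2$ on the virtual Lie group $\iMap(T[1]\pi^{-1}(U\cap U'),\DD\mathsans{M})$, giving $\varLambda'{}^{-1}d_{\dot\tau}\varLambda'=\Ad\varLambda^{-1}(G_{\mathrm{b}}{}^{-1}d_{\dot\tau}G_{\mathrm{b}})+\varLambda^{-1}d_{\dot\tau}\varLambda$. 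Subtracting the two expressions for $\varDelta'$ and $\varDelta$, the terms $\varLambda^{-1}d\varLambda$ and $\varLambda^{-1}d_{\dot\tau}\varLambda$ cancel and one is left with
\[
\varDelta'-\varDelta=\Ad\varLambda^{-1}\bigl(G_{\mathrm{b}}{}^{-1}dG_{\mathrm{b}}+G_{\mathrm{b}}{}^{-1}d_{\dot\tau}G_{\mathrm{b}}\bigr).
\]
Applying $\Ad\varLambda$ to both sides and using $\Ad\varLambda\circ\Ad\varLambda^{-1}=\mathrm{id}$ on $\iMap(T[1]\pi^{-1}(U\cap U'),\ZZ\DD\mathfrak{m})$ (a consequence of eq. 3.5.18 of I), the left--hand side becomes $D_{\mathrm{b}}=\Ad\varLambda(\varDelta'-\varDelta)$ by \ceqref{2basic2}, while the right--hand side collapses to $G_{\mathrm{b}}{}^{-1}dG_{\mathrm{b}}+G_{\mathrm{b}}{}^{-1}d_{\dot\tau}G_{\mathrm{b}}$, which is exactly \ceqref{2basic3}.

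The only step that is not pure bookkeeping is the Maurer--Cartan composition rule for $d_{\dot\tau}$: unlike the de Rham differential, the coboundary $d_{\dot\tau}$ is algebraic, so one must verify from its explicit definition in I that $\varPsi^{-1}d_{\dot\tau}\varPsi$ genuinely transforms as a left Maurer--Cartan form under group multiplication and that $\Ad$ is compatible with it. Once this is in hand, together with the graded commutativity of $d_{\dot\tau}$ with the operation derivations, the rest of the argument is the standard computation of how a left Maurer--Cartan form changes under $\varLambda\mapsto G_{\mathrm{b}}\varLambda$.
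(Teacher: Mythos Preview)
Your proof is correct and follows essentially the same route as the paper's: both substitute relation \ceqref{bmaurer1} and its primed counterpart into the definition \ceqref{2basic2} of $D_{\mathrm{b}}$, then use the Maurer--Cartan product formula for $G_{\mathrm{b}}=\varLambda'\varLambda^{-1}$ with respect to the combined differential $d_{\pi^{-1}(U\cap U')}+d_{\dot\tau}$. Your explicit flagging of the composition rule for $d_{\dot\tau}$ as the one non-formal ingredient is a useful observation that the paper's terse proof leaves implicit.
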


\begin{proof}
Identity \ceqref{2basic3} is easily verified substituting relation \ceqref{bmaurer1} and its primed counterpart
into eq. \ceqref{2basic2}. 
\end{proof}

By virtue of 3.5.1, 3.5.12 of I, we can expand the local basic matching 
components $G_{\mathrm{b}}$, $D_{\mathrm{b}}$ as \hphantom{xxxxxxxxx}
\begin{align}
&G_{\mathrm{b}}(\alpha)=\ee^{\alpha F_{\mathrm{b}}}f_{\mathrm{b}},
\vphantom{\Big]}
\label{2basic4}
\\
&D_{\mathrm{b}}(\alpha)=s_{\mathrm{b}}-\alpha S_{\mathrm{b}},\qquad\alpha\in\mathbb{R}[1],
\vphantom{\Big]}
\label{2basic5}
\end{align}
where 
$f_{\mathrm{b}}\in\iMap(T[1]\pi^{-1}(U\cap U'),\mathsans{G})$, 
$F_{\mathrm{b}}\in\iMap(T[1]\pi^{-1}(U\cap U'),\mathfrak{e}[1])$
and 
$s_{\mathrm{b}}\in$ $\iMap(T[1]\pi^{-1}(U\cap U')$, $\mathfrak{g}[1])$, 
$S_{\mathrm{b}}\in\iMap(T[1]\pi^{-1}(U\cap U'),\mathfrak{e}[2])$
are suitable projected local basic matching components.

Let $\gamma$, $\varGamma$, $\sigma$, $\varSigma$ and $\gamma'$, 
$\varGamma'$, $\sigma'$, $\varSigma'$ be the projected components 
of $\varLambda$, $\varDelta$ and $\varLambda'$, $\varDelta'$, 
respectively (cf. eqs. \ceqref{bmaurer7}--\ceqref{bmaurer8}). 

\begin{prop}
The  basic projected components $f_{\mathrm{b}}$, $F_{\mathrm{b}}$, $s_{\mathrm{b}}$, $S_{\mathrm{b}}$
can be expressed in terms of the projected components $\gamma$, $\varGamma$, $\sigma$, $\varSigma$ and $\gamma'$, 
$\varGamma'$, $\sigma'$, $\varSigma'$ as 
\begin{align}
&f_{\mathrm{b}}=\gamma'\gamma^{-1}, \hspace{5cm}
\vphantom{\Big]}
\label{2basic6}
\end{align} 
\vspace{-.9cm}\eject\noindent
\begin{align}
&F_{\mathrm{b}}=\varGamma'-\mu{}\dot{}(\gamma'\gamma^{-1},\varGamma),
\vphantom{\Big]}
\label{2basic7}
\\
&s_{\mathrm{b}}=\Ad\gamma(\sigma'-\sigma),
\vphantom{\Big]}
\label{2basic8}
\\
&S_{\mathrm{b}}=\mu{}\dot{}(\gamma,\varSigma'-\varSigma)-{}\dot{}\mu{}\dot{}(\Ad\gamma(\sigma'-\sigma),\varGamma).
\vphantom{\Big]}
\label{2basic9}
\end{align}
\end{prop}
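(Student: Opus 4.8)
The plan is to follow verbatim the pattern of Propositions \cref{prop:basic3} and \cref{prop:basic8}: insert the crossed--module expansions \ceqref{bmaurer7}, \ceqref{bmaurer8} of the adapted coordinates $\varLambda$, $\varDelta$ and of their primed counterparts $\varLambda'$, $\varDelta'$ into the defining formulas \ceqref{2basic1}, \ceqref{2basic2}, then use the expansions \ceqref{2basic4}, \ceqref{2basic5} of $G_{\mathrm{b}}$, $D_{\mathrm{b}}$ to read off the projected components by matching powers of $\alpha\in\mathbb{R}[1]$, recalling that $\alpha^2=0$.

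First I would treat the transformation component. Substituting $\varLambda(\alpha)=\ee^{\alpha\varGamma}\gamma$ and $\varLambda'(\alpha)=\ee^{\alpha\varGamma'}\gamma'$ into \ceqref{2basic1} and using the product and inversion laws of $\DD\mathsans{M}=\mathfrak{e}[1]\rtimes_{\mu\dot{}}\mathsans{G}$ (3.5.1, 3.5.2, 3.5.3 of I), one gets $G_{\mathrm{b}}=\varLambda'\varLambda^{-1}=\ee^{\alpha\varGamma'}\,\gamma'\gamma^{-1}\,\ee^{-\alpha\varGamma}$. Commuting the rightmost exponential leftward through $\gamma'\gamma^{-1}$ via the crossed--module action turns it into $\ee^{-\alpha\mu\dot{}\,(\gamma'\gamma^{-1},\varGamma)}$, and since the two $\mathfrak{e}[1]$--exponentials now combine additively one obtains $G_{\mathrm{b}}(\alpha)=\ee^{\alpha(\varGamma'-\mu\dot{}\,(\gamma'\gamma^{-1},\varGamma))}\,\gamma'\gamma^{-1}$. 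Comparing with \ceqref{2basic4} yields \ceqref{2basic6} and \ceqref{2basic7}.

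Next I would treat the shift component. Writing $\varDelta(\alpha)=\sigma-\alpha\varSigma$, $\varDelta'(\alpha)=\sigma'-\alpha\varSigma'$ gives $\varDelta'-\varDelta=(\sigma'-\sigma)-\alpha(\varSigma'-\varSigma)$, and applying $\Ad\varLambda$ with $\varLambda=\ee^{\alpha\varGamma}\gamma$ through the explicit adjoint action of $\DD\mathsans{M}$ on $\DD\mathfrak{m}$ (3.5.18 of I), expanded to first order in $\alpha$, the $\mathsans{G}$--part $\gamma$ contributes $\Ad\gamma$ and $\mu\dot{}\,(\gamma,-)$ while the $\mathfrak{e}[1]$--part $\varGamma$ contributes the correction $-\dot{}\mu\dot{}\,(\Ad\gamma(\sigma'-\sigma),\varGamma)$ acting on the $\mathfrak{g}$--component. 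Matching against the expansion \ceqref{2basic5} $D_{\mathrm{b}}(\alpha)=s_{\mathrm{b}}-\alpha S_{\mathrm{b}}$ then produces \ceqref{2basic8} at order $\alpha^{0}$ and \ceqref{2basic9} at order $\alpha^{1}$.

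These computations are entirely parallel to those already carried out for \ceqref{basic1}--\ceqref{basic4} and \ceqref{basic9}--\ceqref{basic12}; the only delicate point, and hence the main (though purely routine) obstacle, is the bookkeeping of the cross--modality maps $\mu$, $\mu\dot{}$, $\dot{}\mu\dot{}$ together with the graded signs that arise when commuting the $\ee^{\alpha(\cdot)}$ factors and when truncating products at first order in $\alpha$.
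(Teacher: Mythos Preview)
Your proposal is correct and follows precisely the same approach as the paper's own proof: insert the expansions \ceqref{bmaurer7}, \ceqref{bmaurer8} and their primed counterparts together with \ceqref{2basic4}, \ceqref{2basic5} into the definitions \ceqref{2basic1}, \ceqref{2basic2}, then use the group operations 3.5.2, 3.5.3 of I and the adjoint formula 3.5.18 of I. You have in fact supplied more explicit detail than the paper, which simply states that the relations follow straightforwardly from these substitutions.
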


\noindent
Here, $\gamma$, $\varGamma$, $\sigma$, $\varSigma$ and $\gamma'$, 
$\varGamma'$, $\sigma'$, $\varSigma'$ are restricted to $T[1]\pi^{-1}(U\cap\,U')$. 

\begin{proof}
Relations \ceqref{2basic6}--\ceqref{2basic9} follow straightforwardly 
from inserting \ceqref{bmaurer7}--\ceqref{bmaurer8},
its primed counterpart and \ceqref{2basic4}, \ceqref{2basic5} 
into \ceqref{2basic1}, \ceqref{2basic2} and applying 3.5.2, 
3.5.3 and 3.5.18 of I.
\end{proof}

\begin{prop}
The basic projected components $f_{\mathrm{b}}$, $F_{\mathrm{b}}$, $s_{\mathrm{b}}$, $S_{\mathrm{b}}$
are related as 
\begin{align}
&s_{\mathrm{b}}=f_{\mathrm{b}}{}^{-1}d_{\pi^{-1}(U\cap U')}f_{\mathrm{b}}+\dot\tau(\mu{}\dot{}(f_{\mathrm{b}}{}^{-1},F_{\mathrm{b}})),
\vphantom{\Big]}
\label{2basic10}
\\
&S_{\mathrm{b}}=\mu{}\dot{}(f_{\mathrm{b}}{}^{-1},d_{\pi^{-1}(U\cap U')}F_{\mathrm{b}}+[F_{\mathrm{b}},F_{\mathrm{b}}]/2).
\vphantom{\Big]}
\label{2basic11}
\end{align}
\end{prop}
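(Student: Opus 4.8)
The plan is to observe that \ceqref{2basic3} is, after a trivial rearrangement, a relation of exactly the same shape as the defining structure equation \ceqref{bmaurer1} for the internal coordinates modelled on $\DD\mathsans{M}$, and to transport to it the projected form of that equation.

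First I would rewrite \ceqref{bmaurer1} as $\varDelta=\varLambda^{-1}d_{\pi^{-1}(U)}\varLambda+\varLambda^{-1}d_{\dot\tau}\varLambda$ and note that \ceqref{2basic3} is obtained from this by the substitution $\varLambda\rightarrow G_{\mathrm{b}}$, $\varDelta\rightarrow D_{\mathrm{b}}$, $d_{\pi^{-1}(U)}\rightarrow d_{\pi^{-1}(U\cap U')}$, while the expansions \ceqref{2basic4}, \ceqref{2basic5} of $G_{\mathrm{b}}$, $D_{\mathrm{b}}$ into the projected components $f_{\mathrm{b}},F_{\mathrm{b}},s_{\mathrm{b}},S_{\mathrm{b}}$ have precisely the form of the expansions \ceqref{bmaurer7}, \ceqref{bmaurer8} of $\varLambda$, $\varDelta$ into $\gamma,\varGamma,\sigma,\varSigma$. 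Consequently the very computation of the proof of Prop.\ \cref{prop:1localroj} --- inserting \ceqref{bmaurer7}, \ceqref{bmaurer8} into \ceqref{bmaurer1} and using 3.5.13, 3.5.15, 3.5.18 together with the Maurer--Cartan expressions 3.5.23, 3.5.25 of I --- yields, under the above substitution, the $\mathsans{G}$- and $\mathfrak{e}$-components of \ceqref{2basic3}. These are exactly relations \ceqref{maurer2}, \ceqref{maurer3} with $\gamma,\varGamma,\sigma,\varSigma$ replaced by $f_{\mathrm{b}},F_{\mathrm{b}},s_{\mathrm{b}},S_{\mathrm{b}}$ and $d_{\pi^{-1}(U)}$ by $d_{\pi^{-1}(U\cap U')}$. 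Solving the first for $s_{\mathrm{b}}$ and the second for $S_{\mathrm{b}}$, the latter after using bilinearity of $\mu\dot{}$ in its second slot to collect the bracket term into $d_{\pi^{-1}(U\cap U')}F_{\mathrm{b}}+[F_{\mathrm{b}},F_{\mathrm{b}}]/2$, one recovers exactly \ceqref{2basic10} and \ceqref{2basic11}.

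An equivalent, more pedestrian route is to substitute the explicit expressions \ceqref{2basic6}--\ceqref{2basic9} into the right-hand sides of \ceqref{2basic10}, \ceqref{2basic11} and to reduce them to identities using the structure equations \ceqref{maurer2}, \ceqref{maurer3} for $\gamma,\varGamma$ and their primed analogues for $\gamma',\varGamma'$; this involves more terms but no new ideas. The only mild obstacle in either approach is the bookkeeping of the $\mu\dot{}$-twist and of the shift gradings when handling $f_{\mathrm{b}}{}^{-1}d_{\pi^{-1}(U\cap U')}f_{\mathrm{b}}$ and $F_{\mathrm{b}}$ inside $\mu\dot{}$ and $\dot\tau$; no genuine difficulty arises, the signs being fixed by the graded commutativity of $d_{\dot\tau}$ with the operation derivations.
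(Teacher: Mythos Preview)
Your proposal is correct and is essentially the paper's own proof: the paper substitutes \ceqref{2basic4} (and implicitly \ceqref{2basic5}) into \ceqref{2basic3} and applies 3.5.23 and 3.5.25 of I, which is exactly the computation you describe via the formal analogy with \ceqref{bmaurer1}, \ceqref{bmaurer7}, \ceqref{bmaurer8} and the proof of Prop.\ \cref{prop:1localroj}. Your explicit identification of the resulting $\mathsans{G}$- and $\mathfrak{e}$-components with \ceqref{maurer2}, \ceqref{maurer3} under the substitution $\gamma,\varGamma,\sigma,\varSigma\to f_{\mathrm{b}},F_{\mathrm{b}},s_{\mathrm{b}},S_{\mathrm{b}}$ is a helpful elaboration of what the paper leaves as ``readily obtained''.
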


\begin{proof}
Substituting \ceqref{2basic4} into \ceqref{2basic3} and 
applying 3.5.23 and 3.5.25 of I, one readily 
obtains relations \ceqref{2basic10}--\ceqref{2basic11}. 
\end{proof}

Consider next a $2$--connection of $\hat{\mathcal{P}}$ of components $A$, $B$. 

\begin{prop}
The basic components $A_{\mathrm{b}}$, $B_{\mathrm{b}}$ and $A'{}_{\mathrm{b}}$, $B'{}_{\mathrm{b}}$ 
of the $2$--connection are related on $T[1]\pi^{-1}(U\cap U')$ as 
\begin{align}
A'{}_{\mathrm{b}}&=\Ad G_{\mathrm{b}}(A_{\mathrm{b}}-D_{\mathrm{b}}),
\vphantom{\Big]}
\label{2basic12}
\\
B'{}_{\mathrm{b}}&=\Ad G_{\mathrm{b}}(B_{\mathrm{b}}).
\vphantom{\Big]}
\label{2basic13}
\end{align}
\end{prop}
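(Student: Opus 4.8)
The plan is to establish both identities by a direct algebraic computation, the only structural input being that the adjoint action $\Ad$ of the virtual Lie group $\iMap(T[1]\pi^{-1}(U\cap U'),\DD\mathsans{M})$ on the virtual Lie algebra $\iMap(T[1]\pi^{-1}(U\cap U'),\ZZ\DD\mathfrak{m})$ is a genuine left action, hence multiplicative: $\Ad(GH)=\Ad G\circ\Ad H$, as recorded in 3.5.18 of I.

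First I would write down the primed analogues of the defining relations \ceqref{sfbasic1}, \ceqref{sfbasic2} of the basic connection and curvature components relative to the trivializing neighborhood $U'$, namely $A'{}_{\mathrm{b}}=\Ad\varLambda'(A-\varDelta')$ and $B'{}_{\mathrm{b}}=\Ad\varLambda'(B)$, with all functions tacitly restricted to $T[1]\pi^{-1}(U\cap U')$. This restriction is legitimate since $A$, $B$ are globally defined on $P$ and both sets of adapted coordinates $\varLambda,\varDelta$ and $\varLambda',\varDelta'$ make sense over $\pi^{-1}(U\cap U')$ after restriction, as explained through the morphisms $\iOOO Q^U_{U\cap U'}$, $\iOOO Q^{U'}_{U\cap U'}$.

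For the curvature identity I would substitute $G_{\mathrm{b}}=\varLambda'\varLambda^{-1}$ from \ceqref{2basic1} into the right-hand side of the claimed relation and use $B_{\mathrm{b}}=\Ad\varLambda(B)$ from \ceqref{sfbasic2} together with multiplicativity of $\Ad$, obtaining $\Ad G_{\mathrm{b}}(B_{\mathrm{b}})=\Ad(\varLambda'\varLambda^{-1})\Ad\varLambda(B)=\Ad\varLambda'(B)=B'{}_{\mathrm{b}}$. For the connection identity I would insert $G_{\mathrm{b}}=\varLambda'\varLambda^{-1}$ and $D_{\mathrm{b}}=\Ad\varLambda(\varDelta'-\varDelta)$ from \ceqref{2basic1}, \ceqref{2basic2} and $A_{\mathrm{b}}=\Ad\varLambda(A-\varDelta)$ from \ceqref{sfbasic1}; the crucial cancellation is $A_{\mathrm{b}}-D_{\mathrm{b}}=\Ad\varLambda(A-\varDelta)-\Ad\varLambda(\varDelta'-\varDelta)=\Ad\varLambda(A-\varDelta')$, using that $\Ad\varLambda$ is $\DD\mathfrak{m}$--linear so it distributes over the difference. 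Multiplicativity of $\Ad$ then yields $\Ad G_{\mathrm{b}}(A_{\mathrm{b}}-D_{\mathrm{b}})=\Ad(\varLambda'\varLambda^{-1})\Ad\varLambda(A-\varDelta')=\Ad\varLambda'(A-\varDelta')=A'{}_{\mathrm{b}}$.

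The computation is essentially routine and needs neither fake flatness nor any specialty hypothesis. The only point deserving care is that $\Ad$ here acts on objects living in the graded, internal--function setting of I, so one must be explicit that the relevant left--action and linearity properties from subsect. 3.5 of I apply verbatim in this graded context; this is the main (and only) potential obstacle, and it is dispatched by the cited algebraic facts.
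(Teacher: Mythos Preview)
Your proof is correct and follows essentially the same approach as the paper's own argument: both rely on the defining relations \ceqref{sfbasic1}, \ceqref{sfbasic2} together with their primed counterparts, the definitions \ceqref{2basic1}, \ceqref{2basic2} of $G_{\mathrm{b}}$, $D_{\mathrm{b}}$, and the multiplicativity of $\Ad$. The only cosmetic difference is that the paper first inverts \ceqref{sfbasic1}, \ceqref{sfbasic2} to express $A$, $B$ through $A_{\mathrm{b}}$, $B_{\mathrm{b}}$ and then substitutes into the primed relations, whereas you compute the right-hand sides of \ceqref{2basic12}, \ceqref{2basic13} directly and identify them with $A'{}_{\mathrm{b}}$, $B'{}_{\mathrm{b}}$; the algebraic content is the same.
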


\begin{proof}
Exploiting relations \ceqref{sfbasic1}, \ceqref{sfbasic2}, we can express 
$A$, $B$ in terms of $A_{\mathrm{b}}$, $B_{\mathrm{b}}$, $\varLambda$, $\varDelta$. 
Inserting these identities into 
the primed counterparts of \ceqref{sfbasic1}, \ceqref{sfbasic2}, we obtain expressions of
$A'{}_{\mathrm{b}}$, $B'{}_{\mathrm{b}}$ in terms of $A_{\mathrm{b}}$, $B_{\mathrm{b}}$, $\varLambda$, $\varDelta$, 
$\varLambda'$, $\varDelta'$. These latter  
can be cast in the form \ceqref{2basic12}, \ceqref{2basic13} employing \ceqref{2basic1}, \ceqref{2basic2}.
\end{proof}

\noindent
The matching relation \ceqref{2basic12}, \ceqref{2basic13} can be written in terms of projected components.

\begin{prop}
The projected basic components $\omega_{\mathrm{b}}$, $\varOmega_{\mathrm{b}}$, $\theta_{\mathrm{b}}$, $\varTheta_{\mathrm{b}}$ 
and $\omega'{}_{\mathrm{b}}$, $\varOmega'{}_{\mathrm{b}}$, $\theta'{}_{\mathrm{b}}$, $\varTheta'{}_{\mathrm{b}}$ 
of the $2$--connection are related on $T[1]\pi^{-1}(U\cap U')$ as \pagebreak 
\begin{align}
&\omega'{}_{\mathrm{b}}=\Ad f_{\mathrm{b}}(\omega_{\mathrm{b}}-s_{\mathrm{b}}),
\vphantom{\Big]}
\label{2basic14}
\\
&\varOmega'{}_{\mathrm{b}}=\mu{}\dot{}(f_{\mathrm{b}},\varOmega_{\mathrm{b}}-S_{\mathrm{b}})
-{}\dot{}\mu{}\dot{}(\Ad f_{\mathrm{b}}(\omega_{\mathrm{b}}-s_{\mathrm{b}}),F_{\mathrm{b}}),
\vphantom{\Big]}
\label{2basic15}
\\
&\theta'{}_{\mathrm{b}}=\Ad f_{\mathrm{b}}(\theta_{\mathrm{b}}),
\vphantom{\Big]}
\label{2basic16}
\\
&\varTheta'{}_{\mathrm{b}}=\mu{}\dot{}(f_{\mathrm{b}},\varTheta_{\mathrm{b}})
-{}\dot{}\mu{}\dot{}(\Ad f_{\mathrm{b}}(\theta_{\mathrm{b}}),F_{\mathrm{b}}).
\vphantom{\Big]}
\label{2basic17}
\end{align}
\end{prop}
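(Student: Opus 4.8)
The plan is to obtain \ceqref{2basic14}--\ceqref{2basic17} by decomposing into projected components the $\DD\mathsans{M}$--valued matching relations \ceqref{2basic12}, \ceqref{2basic13} of the preceding proposition. The geometric content — that the local basic $2$--connection data on $T[1]\pi^{-1}(U\cap U')$ match through $G_{\mathrm{b}}$, $D_{\mathrm{b}}$ — has already been established there; what is left is purely the algebra of the Lie group crossed module $\mathsans{M}=(\mathsans{E},\mathsans{G})$, exactly in parallel with the passage from \ceqref{cmgautrx9}, \ceqref{cmgautrx10} to the explicit formulas \ceqref{cmgautr1}--\ceqref{cmgautr4}.

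First I would substitute into \ceqref{2basic12}, \ceqref{2basic13} the $\alpha$--expansions \ceqref{sfbasic11}, \ceqref{sfbasic12} of $A_{\mathrm{b}}$, $B_{\mathrm{b}}$ and their primed counterparts, the expansion \ceqref{2basic4} of $G_{\mathrm{b}}$ and the expansion \ceqref{2basic5} of $D_{\mathrm{b}}$. This turns the right hand side of \ceqref{2basic12} into $\Ad(\ee^{\alpha F_{\mathrm{b}}}f_{\mathrm{b}})\big((\omega_{\mathrm{b}}-s_{\mathrm{b}})-\alpha(\varOmega_{\mathrm{b}}-S_{\mathrm{b}})\big)$ and that of \ceqref{2basic13} into $\Ad(\ee^{\alpha F_{\mathrm{b}}}f_{\mathrm{b}})(\theta_{\mathrm{b}}+\alpha\varTheta_{\mathrm{b}})$, both viewed as $\DD\mathfrak{m}$--valued internal functions on $T[1]\pi^{-1}(U\cap U')$.

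Next I would evaluate these two adjoint actions by means of the explicit formula 3.5.18 of I for the adjoint action of $\DD\mathsans{M}=\mathfrak{e}[1]\rtimes_{\mu\dot{}}\mathsans{G}$ on $\DD\mathfrak{m}$, expressed through $\mu\dot{}$, $\dot{}\mu\dot{}$ and $\dot\tau$: the $\mathsans{G}$--factor $f_{\mathrm{b}}$ produces the $\Ad f_{\mathrm{b}}(\cdot)$ and $\mu\dot{}(f_{\mathrm{b}},\cdot)$ terms, while conjugation by the $\mathfrak{e}[1]$--factor $\ee^{\alpha F_{\mathrm{b}}}$ produces, at order $\alpha$, the cross terms $-\dot{}\mu\dot{}(\cdot,F_{\mathrm{b}})$ paired respectively with $\Ad f_{\mathrm{b}}(\omega_{\mathrm{b}}-s_{\mathrm{b}})$ and $\Ad f_{\mathrm{b}}(\theta_{\mathrm{b}})$. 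Finally, matching coefficients of $\alpha^0$ and $\alpha^1$ on the two sides — reading off the projected components of $A'{}_{\mathrm{b}}$, $B'{}_{\mathrm{b}}$ via \ceqref{sfbasic11}, \ceqref{sfbasic12} once more — yields \ceqref{2basic14}, \ceqref{2basic15} from \ceqref{2basic12} and \ceqref{2basic16}, \ceqref{2basic17} from \ceqref{2basic13}.

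The calculation is entirely mechanical and I do not anticipate a genuine obstacle; the only points requiring care are the opposite signs with which the $\mathfrak{e}$--components enter the expansions \ceqref{sfbasic12} and \ceqref{2basic5} (versus \ceqref{sfbasic11} and \ceqref{2basic4}), and the correct bookkeeping of the cross terms sourced by the non-trivial $\mathfrak{e}[1]$--part $F_{\mathrm{b}}$ of $G_{\mathrm{b}}$, which must be combined with the already $\mathsans{G}$--rotated $\mathfrak{g}$--components rather than the unrotated ones.
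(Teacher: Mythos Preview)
Your proposal is correct and follows essentially the same route as the paper: insert the $\alpha$--expansions \ceqref{sfbasic11}, \ceqref{sfbasic12}, their primed versions, and \ceqref{2basic4}, \ceqref{2basic5} into \ceqref{2basic12}, \ceqref{2basic13}, then apply 3.5.18 of I to unpack the adjoint action and read off the projected components. Your additional remarks on sign bookkeeping and on pairing the $F_{\mathrm{b}}$ cross terms with the already $f_{\mathrm{b}}$--rotated $\mathfrak{g}$--components are accurate and simply make explicit what the paper leaves as ``simple calculations''.
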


\noindent
Relations \ceqref{2basic10}, \ceqref{2basic11} entail that, at the basic level, eqs. \ceqref{2basic14}--\ceqref{2basic17}
are of the same form as the 
matching relations of the projected components of a $2$--connection 
in strict higher gauge theory. See subsect. \cref{subsec:nadifcoh} for more in this. 

\begin{proof}
Inserting \ceqref{sfbasic11}, \ceqref{sfbasic12}, their primed counterparts and \ceqref{2basic4}, \ceqref{2basic5},
into \ceqref{2basic12}, \ceqref{2basic13} and using 3.5.18 of I, we obtain the \ceqref{2basic14}--\ceqref{2basic17}
by simple calculations. 
\end{proof}

Next, consider a $1$--gauge transformation of $\hat{\mathcal{P}}$ of components $\varPsi$, $\varUpsilon$. 

\begin{prop}
The basic components $\varPsi_{\mathrm{b}}$, $\varUpsilon_{\mathrm{b}}$ 
and $\varPsi'{}_{\mathrm{b}}$, $\varUpsilon'{}_{\mathrm{b}}$ 
of the $1$--gauge transformation are related on $T[1]\pi^{-1}(U\cap U')$ as 
\begin{align}
\varPsi'{}_{\mathrm{b}}&=G_{\mathrm{b}}\varPsi_{\mathrm{b}}G_{\mathrm{b}}^{-1},
\vphantom{\Big]}
\label{2basic18}
\\
\varUpsilon'{}_{\mathrm{b}}&=\Ad G_{\mathrm{b}}(\varUpsilon_{\mathrm{b}}-D_{\mathrm{b}}+\Ad \varPsi_{\mathrm{b}}(D_{\mathrm{b}})).
\vphantom{\Big]}
\label{2basic19}
\end{align}
\end{prop}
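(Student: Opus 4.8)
The plan is to mimic the proof of the $2$--connection matching relations \ceqref{2basic12}, \ceqref{2basic13}, working directly with the group-valued internal functions rather than with their projected components. The defining relations \ceqref{sfbasic5}, \ceqref{sfbasic6} of $\varPsi_{\mathrm{b}}$, $\varUpsilon_{\mathrm{b}}$ are inverted to express $\varPsi$, $\varUpsilon$ through $\varPsi_{\mathrm{b}}$, $\varUpsilon_{\mathrm{b}}$ and the adapted coordinates $\varLambda$, $\varDelta$; substituting the resulting expressions into the primed counterparts of \ceqref{sfbasic5}, \ceqref{sfbasic6} and recognizing $G_{\mathrm{b}}$, $D_{\mathrm{b}}$ via \ceqref{2basic1}, \ceqref{2basic2} then yields \ceqref{2basic18}, \ceqref{2basic19}. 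All steps are purely algebraic, relying only on the fact that $\Ad$ is the adjoint action of the virtual Lie group $\iMap(T[1]\pi^{-1}(U\cap U'),\DD\mathsans{M})$ on the virtual Lie algebra $\iMap(T[1]\pi^{-1}(U\cap U'),\ZZ\DD\mathfrak{m})$, so that $\Ad\varLambda\circ\Ad\varLambda'=\Ad(\varLambda\varLambda')$ and each $\Ad$ acts linearly; no coboundary terms appear, since none occur in \ceqref{sfbasic5}, \ceqref{sfbasic6}.

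For the first relation, \ceqref{sfbasic5} gives $\varPsi=\varLambda^{-1}\varPsi_{\mathrm{b}}\varLambda$, so the primed version of \ceqref{sfbasic5} becomes $\varPsi'{}_{\mathrm{b}}=\varLambda'\varPsi\varLambda'{}^{-1}=(\varLambda'\varLambda^{-1})\varPsi_{\mathrm{b}}(\varLambda'\varLambda^{-1})^{-1}$, which is $G_{\mathrm{b}}\varPsi_{\mathrm{b}}G_{\mathrm{b}}{}^{-1}$ by \ceqref{2basic1}.

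For the second relation, \ceqref{sfbasic6} gives $\varUpsilon=\Ad\varLambda^{-1}(\varUpsilon_{\mathrm{b}})+\varDelta-\Ad\varPsi(\varDelta)$. Inserting this into the primed version of \ceqref{sfbasic6}, expanding by linearity of $\Ad$, and collecting terms, one obtains $\varUpsilon'{}_{\mathrm{b}}=\Ad(\varLambda'\varLambda^{-1})(\varUpsilon_{\mathrm{b}})+\Ad\varLambda'(\varDelta-\varDelta')-\Ad(\varLambda'\varPsi)(\varDelta-\varDelta')$. Now \ceqref{2basic2} gives $\varDelta-\varDelta'=-\Ad\varLambda^{-1}(D_{\mathrm{b}})$, so the middle term equals $-\Ad G_{\mathrm{b}}(D_{\mathrm{b}})$; and since $\varLambda'\varPsi\varLambda^{-1}=(\varLambda'\varLambda^{-1})(\varLambda\varPsi\varLambda^{-1})=G_{\mathrm{b}}\varPsi_{\mathrm{b}}$ by \ceqref{sfbasic5}, the last term equals $+\Ad G_{\mathrm{b}}\Ad\varPsi_{\mathrm{b}}(D_{\mathrm{b}})$. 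Combining these with \ceqref{2basic1} and factoring $\Ad G_{\mathrm{b}}$ out gives $\varUpsilon'{}_{\mathrm{b}}=\Ad G_{\mathrm{b}}(\varUpsilon_{\mathrm{b}}-D_{\mathrm{b}}+\Ad\varPsi_{\mathrm{b}}(D_{\mathrm{b}}))$, as claimed.

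There is no real obstacle here: the computation is a rearrangement exploiting the group and linear properties of the virtual adjoint action, entirely parallel to the $2$--connection case and to the ordinary matching relations \ceqref{02basic20}, \ceqref{02basic22}. The only points requiring slight care are keeping the conjugating factors on the correct side (so that $G_{\mathrm{b}}$, not $G_{\mathrm{b}}{}^{-1}$, appears in the final formulas) and the sign in $\varDelta-\varDelta'=-\Ad\varLambda^{-1}(D_{\mathrm{b}})$.
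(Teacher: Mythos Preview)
Your proof is correct and follows precisely the approach sketched in the paper: invert \ceqref{sfbasic5}, \ceqref{sfbasic6} to express $\varPsi$, $\varUpsilon$ through $\varPsi_{\mathrm{b}}$, $\varUpsilon_{\mathrm{b}}$, $\varLambda$, $\varDelta$, substitute into the primed counterparts, and rewrite using \ceqref{2basic1}, \ceqref{2basic2}. You supply more computational detail than the paper's terse outline, but the route is identical.
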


\begin{proof}
Exploiting relations \ceqref{sfbasic5}, \ceqref{sfbasic6}, we can express 
$\varPsi$, $\varUpsilon$ in terms of $\varPsi_{\mathrm{b}}$, $\varUpsilon_{\mathrm{b}}$, $\varLambda$, $\varDelta$. 
Inserting these identities into 
the primed counterparts of \ceqref{sfbasic5}, \ceqref{sfbasic6}, we obtain expressions of
$\varPsi'{}_{\mathrm{b}}$, $\varUpsilon'{}_{\mathrm{b}}$ in terms of 
$\varPsi_{\mathrm{b}}$, $\varUpsilon_{\mathrm{b}}$, $\varLambda$, $\varDelta$, 
$\varLambda'$, $\varDelta'$. These latter  
can be cast in the form \ceqref{2basic18}, \ceqref{2basic19} employing \ceqref{2basic1}, \ceqref{2basic2}.
\end{proof}

\noindent
The matching relation \ceqref{2basic18}, \ceqref{2basic19} can be written in terms of projected components.

\begin{prop}
The projected basic components $g_{\mathrm{b}}$, $J_{\mathrm{b}}$, $h_{\mathrm{b}}$, $K_{\mathrm{b}}$ 
and $g'{}_{\mathrm{b}}$, $J'{}_{\mathrm{b}}$, $h'{}_{\mathrm{b}}$, $K'{}_{\mathrm{b}}$ 
of the $1$--gauge transformation are related on $T[1]\pi^{-1}(U\cap U')$ as 
\begin{align}
&g'{}_{\mathrm{b}}=f_{\mathrm{b}}g_{\mathrm{b}}f_{\mathrm{b}}{}^{-1},
\vphantom{\Big]}
\label{2basic20}
\\
&J'{}_{\mathrm{b}}=\mu{}\dot{}(f_{\mathrm{b}},J_{\mathrm{b}})+F_{\mathrm{b}}
-\mu{}\dot{}(f_{\mathrm{b}}g_{\mathrm{b}}f_{\mathrm{b}}{}^{-1},F_{\mathrm{b}}),
\vphantom{\Big]}
\label{2basic21}
\end{align} 
\begin{align}
&h'{}_{\mathrm{b}}=\Ad f_{\mathrm{b}}(h_{\mathrm{b}}-s_{\mathrm{b}}+\Ad g_{\mathrm{b}}(s_{\mathrm{b}})),
\vphantom{\Big]}
\label{2basic22}
\\
&K'{}_{\mathrm{b}}=\mu{}\dot{}(f_{\mathrm{b}},K_{\mathrm{b}}-S_{\mathrm{b}}
+\mu{}\dot{}(g_{\mathrm{b}},S_{\mathrm{b}})-{}\dot{}\mu{}\dot{}(\Ad g_{\mathrm{b}}(s_{\mathrm{b}}),J_{\mathrm{b}})
\vphantom{\Big]}
\label{2basic23}
\\
&\hspace{4cm}
-{}\dot{}\mu{}\dot{}(h_{\mathrm{b}}-s_{\mathrm{b}}
+\Ad g_{\mathrm{b}}(s_{\mathrm{b}}),\mu{}\dot{}(f_{\mathrm{b}}{}^{-1},F_{\mathrm{b}}))).
\vphantom{\Big]}
\nonumber
\end{align}
\end{prop}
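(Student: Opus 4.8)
The plan is to reduce the projected matching relations \ceqref{2basic20}--\ceqref{2basic23} to the non--projected ones \ceqref{2basic18}, \ceqref{2basic19}, which have just been established, by inserting the component decompositions and reading off each Lie factor order by order in the formal parameter $\alpha\in\mathbb{R}[1]$. Concretely, I would substitute the expansions \ceqref{sfbasic13}, \ceqref{sfbasic14} of $\varPsi_{\mathrm{b}}$, $\varUpsilon_{\mathrm{b}}$, their primed analogues for $\varPsi'{}_{\mathrm{b}}$, $\varUpsilon'{}_{\mathrm{b}}$, and the expansions \ceqref{2basic4}, \ceqref{2basic5} of the matching data $G_{\mathrm{b}}$, $D_{\mathrm{b}}$ into \ceqref{2basic18}, \ceqref{2basic19}, and then compare the $\mathsans{G}$-- and $\mathfrak{e}$--valued pieces on the two sides. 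All the algebra is carried out inside the virtual derived Lie group $\iMap(T[1]\pi^{-1}(U\cap U'),\DD\mathsans{M})$ with $\DD\mathsans{M}=\mathfrak{e}[1]\rtimes_{\mu\dot{}}\mathsans{G}$ and its attendant structures from subsect. 3.5 of I.

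First I would treat the transformation component, eq. \ceqref{2basic18}. Writing $\varPsi'{}_{\mathrm{b}}(\alpha)=\ee^{\alpha J'{}_{\mathrm{b}}}g'{}_{\mathrm{b}}$, $\varPsi_{\mathrm{b}}(\alpha)=\ee^{\alpha J_{\mathrm{b}}}g_{\mathrm{b}}$, $G_{\mathrm{b}}(\alpha)=\ee^{\alpha F_{\mathrm{b}}}f_{\mathrm{b}}$, the right hand side of \ceqref{2basic18} is the conjugate $G_{\mathrm{b}}\varPsi_{\mathrm{b}}G_{\mathrm{b}}{}^{-1}$; performing the semidirect--product multiplications and the inverse by means of 3.5.2, 3.5.3 of I, its $\mathsans{G}$--component yields $g'{}_{\mathrm{b}}=f_{\mathrm{b}}g_{\mathrm{b}}f_{\mathrm{b}}{}^{-1}$, i.e. \ceqref{2basic20}, and its $\mathfrak{e}[1]$--component yields \ceqref{2basic21}, the $\mathsans{G}$--factor of the conjugate appearing inside the last $\mu\dot{}$ term. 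This is exactly the pattern already used for $\varPsi_{\mathrm{b}}$ itself in the proof of prop. \cref{prop:basic8}, with $\gamma$, $\varGamma$ there replaced by $f_{\mathrm{b}}$, $F_{\mathrm{b}}$ here. For the shift component, eq. \ceqref{2basic19}, I would substitute $\varUpsilon_{\mathrm{b}}(\alpha)=h_{\mathrm{b}}-\alpha K_{\mathrm{b}}$, $D_{\mathrm{b}}(\alpha)=s_{\mathrm{b}}-\alpha S_{\mathrm{b}}$ and their primed counterparts and evaluate $\Ad G_{\mathrm{b}}$ and $\Ad\varPsi_{\mathrm{b}}$ with the explicit adjoint--action formula 3.5.18 of I; collecting the $\alpha$--independent part reproduces \ceqref{2basic22}, while the $\alpha$--linear part, up to the overall sign in \ceqref{2basic5}, reproduces \ceqref{2basic23}.

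I expect the bulk of the work, and the only real obstacle, to be the bookkeeping in the $K'{}_{\mathrm{b}}$ identity \ceqref{2basic23}: it collects contributions from $\Ad G_{\mathrm{b}}$ acting on the $\alpha$--linear part of $\varUpsilon_{\mathrm{b}}-D_{\mathrm{b}}+\Ad\varPsi_{\mathrm{b}}(D_{\mathrm{b}})$, from the exponential factors $\ee^{\alpha F_{\mathrm{b}}}$ and $\ee^{\alpha J_{\mathrm{b}}}$ hitting the $\alpha$--independent parts, and from the derived brackets $\dot{}\mu\dot{}\,(-,-)$ generated by 3.5.18, and one must keep the three flavours of twisting maps $\dot{}\mu$, $\mu\dot{}$, $\dot{}\mu\dot{}$ and the $\mathbb{R}[1]$--, $\mathbb{R}[-1]$--gradings straight throughout. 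Once these are tracked carefully the verification of \ceqref{2basic20}--\ceqref{2basic23} is a finite, if lengthy, computation entirely analogous to the proofs of props. \cref{prop:basic8}, \cref{prop:basic9} and of the $2$--connection matching relations \ceqref{2basic14}--\ceqref{2basic17}, using no ingredient beyond 3.5.2, 3.5.3 and 3.5.18 of I.
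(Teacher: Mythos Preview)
Your proposal is correct and follows essentially the same approach as the paper: insert the expansions \ceqref{sfbasic13}, \ceqref{sfbasic14}, their primed counterparts and \ceqref{2basic4}, \ceqref{2basic5} into the non--projected matching relations \ceqref{2basic18}, \ceqref{2basic19}, and extract the projected components using 3.5.2, 3.5.3 and 3.5.18 of I. Your more detailed breakdown of how the $K'{}_{\mathrm{b}}$ terms arise is a helpful elaboration of what the paper merely calls ``straightforward computations''.
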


\noindent
In view of eqs. \ceqref{2basic10}, \ceqref{2basic11},  the \ceqref{2basic20}--\ceqref{2basic23}
reproduce at the basic level 
the matching relations of the projected components of a $1$--gauge transformation of strict higher gauge theory.
We will come back to this in subsect. \cref{subsec:nadifcoh}. 

\begin{proof}
Inserting \ceqref{sfbasic13}, \ceqref{sfbasic14}, their primed counterparts and \ceqref{2basic4}, \ceqref{2basic5},
into \ceqref{2basic18}, \ceqref{2basic19} and using 3.5.2, 3.5.3 and 3.5.18 of I, 
we obtain the \ceqref{2basic20}--\ceqref{2basic23} through straightforward computations. 
\end{proof}

Finally consider a $2$--gauge transformation of $\hat{\mathcal{P}}$ of components $E$, $C$.

\begin{prop}
The projected basic components $E_{\mathrm{b}}$, $C_{\mathrm{b}}$ 
and $E'{}_{\mathrm{b}}$, $C'{}_{\mathrm{b}}$
of the $2$--gauge transformation are related on $T[1]\pi^{-1}(U\cap U')$ as 
\begin{align}
&E'{}_{\mathrm{b}}=\mu(f_{\mathrm{b}},E_{\mathrm{b}}),
\vphantom{\Big]}
\label{2basic24}
\\
&C'{}_{\mathrm{b}}=\mu\dot{}\,(f_{\mathrm{b}},C_{\mathrm{b}})+F_{\mathrm{b}}-\Ad\mu(f_{\mathrm{b}},E_{\mathrm{b}})(F_{\mathrm{b}}). 
\vphantom{\Big]}
\label{2basic25}
\end{align}
\end{prop}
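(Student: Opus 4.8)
The plan is to mirror exactly the argument already used for the $1$--gauge matching relations (eqs.\ \eqref{2basic18}--\eqref{2basic23}), reducing everything to the defining formulas \eqref{basic17}, \eqref{basic18} for the basic modification and variation components and to the transition data \eqref{2basic6}, \eqref{2basic7} for $f_{\mathrm{b}}$, $F_{\mathrm{b}}$. The starting observation is that the pair $E$, $C$ of a $2$--gauge transformation is a single pair of internal functions on $T[1]P$ (the reference $2$--connection $\omega$, $\varOmega$, $\theta$, $\varTheta$ being global), so that over $T[1]\pi^{-1}(U\cap U')$ the two basic representatives $E_{\mathrm{b}}$, $C_{\mathrm{b}}$ and $E'{}_{\mathrm{b}}$, $C'{}_{\mathrm{b}}$ arise from the \emph{same} restricted $E$, $C$ via \eqref{basic17}, \eqref{basic18}, with $\gamma$, $\varGamma$ replaced by $\gamma'$, $\varGamma'$ in the primed case, all functions being tacitly restricted to $T[1]\pi^{-1}(U\cap U')$.

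First I would dispose of the modification component. From \eqref{basic17} one has $E_{\mathrm{b}}=\mu(\gamma,E)$ and $E'{}_{\mathrm{b}}=\mu(\gamma',E)$; writing $\gamma'=(\gamma'\gamma^{-1})\gamma$ and using that $\mu$ is a left action together with $f_{\mathrm{b}}=\gamma'\gamma^{-1}$ from \eqref{2basic6} gives $E'{}_{\mathrm{b}}=\mu(f_{\mathrm{b}},\mu(\gamma,E))=\mu(f_{\mathrm{b}},E_{\mathrm{b}})$, which is \eqref{2basic24}. For the variation component I would invert \eqref{basic18} to express $\mu\dot{}\,(\gamma,C)=C_{\mathrm{b}}-\varGamma+\Ad E_{\mathrm{b}}(\varGamma)$, then substitute into the primed version $C'{}_{\mathrm{b}}=\mu\dot{}\,(\gamma',C)+\varGamma'-\Ad E'{}_{\mathrm{b}}(\varGamma')$ of \eqref{basic18}, using $\gamma'=(\gamma'\gamma^{-1})\gamma$, $E'{}_{\mathrm{b}}=\mu(f_{\mathrm{b}},E_{\mathrm{b}})$, and linearity of the $\mathsans{G}$--action $\mu\dot{}\,$ on $\mathfrak{e}$, to arrive at
\[
C'{}_{\mathrm{b}}=\mu\dot{}\,(f_{\mathrm{b}},C_{\mathrm{b}})-\mu\dot{}\,(f_{\mathrm{b}},\varGamma)+\mu\dot{}\,(f_{\mathrm{b}},\Ad E_{\mathrm{b}}(\varGamma))+\varGamma'-\Ad\mu(f_{\mathrm{b}},E_{\mathrm{b}})(\varGamma').
\]

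It then remains to recognise the last four terms as $F_{\mathrm{b}}-\Ad\mu(f_{\mathrm{b}},E_{\mathrm{b}})(F_{\mathrm{b}})$. Inserting $F_{\mathrm{b}}=\varGamma'-\mu\dot{}\,(f_{\mathrm{b}},\varGamma)$ from \eqref{2basic7} and expanding, the whole identity reduces, after cancelling the common terms $\mu\dot{}\,(f_{\mathrm{b}},\varGamma)$, $\varGamma'$ and $\Ad\mu(f_{\mathrm{b}},E_{\mathrm{b}})(\varGamma')$, to the single equation $\mu\dot{}\,(f_{\mathrm{b}},\Ad E_{\mathrm{b}}(\varGamma))=\Ad\mu(f_{\mathrm{b}},E_{\mathrm{b}})(\mu\dot{}\,(f_{\mathrm{b}},\varGamma))$, i.e.\ the $\mathsans{G}$--equivariance of the conjugation action of $\mathsans{E}$ on $\mathfrak{e}$ with respect to $\mu$. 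This is one of the structural crossed-module identities recalled in subsect.\ 3.5 of I, and it is the only non-bookkeeping ingredient of the proof: the mild obstacle is precisely keeping track that, in \eqref{basic18}, the $\mathsans{E}$--adjoint action is intertwined correctly by $\mu$ when both the group element $E_{\mathrm{b}}$ and the Lie algebra element $\varGamma$ are simultaneously acted on by $f_{\mathrm{b}}$. Once that identity is invoked, \eqref{2basic25} follows term by term. I would also note in passing that, by prop.\ \ref{prop:2basic1}, $G_{\mathrm{b}}$, $D_{\mathrm{b}}$ and hence $f_{\mathrm{b}}$, $F_{\mathrm{b}}$ are basic, so that the right-hand sides of \eqref{2basic24}, \eqref{2basic25} are genuine basic internal functions on $T[1]\pi^{-1}(U\cap U')$ and the matching is compatible with their identification with functions on $T[1](U\cap U')$.
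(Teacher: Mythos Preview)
Your proof is correct and follows essentially the same route as the paper's own argument: invert \eqref{basic17}, \eqref{basic18} to express $E$, $C$ through $E_{\mathrm{b}}$, $C_{\mathrm{b}}$, $\gamma$, $\varGamma$, substitute into the primed versions, and rewrite via \eqref{2basic6}, \eqref{2basic7}. You have in fact supplied more detail than the paper, correctly isolating the one nontrivial step as the crossed-module equivariance identity $\mu\dot{}\,(g,\Ad A(X))=\Ad\mu(g,A)(\mu\dot{}\,(g,X))$.
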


\begin{proof}
By relations \ceqref{basic17}, \ceqref{basic18}, we can express 
$E$, $C$ in terms of $E_{\mathrm{b}}$, $C_{\mathrm{b}}$, $\gamma$, $\varGamma$. 
Inserting these identities into 
the primed counterparts of \ceqref{basic17}, \ceqref{basic18}, we obtain expressions of
$E'{}_{\mathrm{b}}$, $C'{}_{\mathrm{b}}$ in terms of $E_{\mathrm{b}}$, $C_{\mathrm{b}}$, 
$\gamma$, $\varGamma$, $\gamma'$, $\varGamma'$. These latter  
can be rewritten in the form \ceqref{2basic24}, \ceqref{2basic25} employing \ceqref{2basic6}, \ceqref{2basic7}. 
\end{proof}

\noindent
It is noteworthy that the matching relations do not involve the underlying  reference $2$--connection. 

The basic matching components behave as expected when the adapted coordinates used are special. 

\begin{prop}
If the two sets of adapted coordinates involved are both special 
(cf. defs. \cref{defi:specadpcoo}), then one has \hphantom{xxxxxxxxxx}
\begin{align}
&I_{U\cap U'}{}^*F_{\mathrm{b}}=0,
\vphantom{\Big]}
\label{propbas5}
\\
&I_{U\cap U'}{}^*S_{\mathrm{b}}=0,
\vphantom{\Big]}
\label{propbas6}
\end{align}
where $I_{U\cap U'}:\pi_0{}^{-1}(U\cap U')\rightarrow\pi^{-1}(U\cap U')$ is the inclusion map. 
\end{prop}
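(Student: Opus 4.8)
The plan is to obtain both identities by simply applying the restriction morphism $I_{U\cap U'}{}^*$ to the defining expressions \ceqref{2basic7} and \ceqref{2basic9} of the projected basic matching components $F_{\mathrm{b}}$ and $S_{\mathrm{b}}$. The point is that $I_{U\cap U'}{}^*:\iFun(T[1]\pi^{-1}(U\cap U'))\rightarrow\iFun(T[1]\pi_0{}^{-1}(U\cap U'))$, being the restriction morphism attached to the inclusion $I_{U\cap U'}:\pi_0{}^{-1}(U\cap U')\rightarrow\pi^{-1}(U\cap U')$, is a homomorphism of graded algebras commuting with the crossed module structure operations $\mu\dot{}$, $\dot{}\mu\dot{}$, $\Ad$ and $\dot\tau$ that enter \ceqref{2basic7}, \ceqref{2basic9}. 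Hence the whole computation reduces to evaluating $I_{U\cap U'}{}^*$ on the individual building blocks $\gamma,\gamma',\sigma,\sigma',\varGamma,\varGamma',\varSigma,\varSigma'$, and in fact only on the $\mathfrak{e}$--valued ones $\varGamma,\varGamma',\varSigma,\varSigma'$.

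The key step is to record the elementary compatibility of the inclusion maps: $I_{U\cap U'}$, $I_U$, $I_{U'}$ and the obvious inclusions $\pi^{-1}(U\cap U')\hookrightarrow\pi^{-1}(U)$, $\pi_0{}^{-1}(U\cap U')\hookrightarrow\pi_0{}^{-1}(U)$ (and their primed analogues) fit into commuting squares, which is immediate from the constructions of subsect.~3.8 of I. Since in \ceqref{2basic7}, \ceqref{2basic9} the functions $\varGamma,\varSigma$ and $\varGamma',\varSigma'$ are tacitly already restricted to $T[1]\pi^{-1}(U\cap U')$, this compatibility gives that $I_{U\cap U'}{}^*\varGamma$ equals the restriction to $T[1]\pi_0{}^{-1}(U\cap U')$ of $\varGamma_0=I_U{}^*\varGamma$, and likewise $I_{U\cap U'}{}^*\varSigma$ is the restriction of $\varSigma_0=I_U{}^*\varSigma$, while $I_{U\cap U'}{}^*\varGamma'$ and $I_{U\cap U'}{}^*\varSigma'$ are the restrictions of $\varGamma'{}_0=I_{U'}{}^*\varGamma'$ and $\varSigma'{}_0=I_{U'}{}^*\varSigma'$. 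Now specialty of both sets of adapted coordinates (def.~\cref{defi:specadpcoo}) means precisely that \ceqref{bmaurer15}, \ceqref{bmaurer16} hold for $\varGamma_0,\varSigma_0$ and their primed counterparts for $\varGamma'{}_0,\varSigma'{}_0$; therefore $I_{U\cap U'}{}^*\varGamma=I_{U\cap U'}{}^*\varSigma=I_{U\cap U'}{}^*\varGamma'=I_{U\cap U'}{}^*\varSigma'=0$.

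Plugging this into the restriction of \ceqref{2basic7} along $I_{U\cap U'}$ and using that $I_{U\cap U'}{}^*$ commutes with $\mu\dot{}$ yields $I_{U\cap U'}{}^*F_{\mathrm{b}}=0-\mu\dot{}(I_{U\cap U'}{}^*(\gamma'\gamma^{-1}),0)=0$, which is \ceqref{propbas5}. Likewise, restricting \ceqref{2basic9} along $I_{U\cap U'}$, each of the two summands acquires either the factor $I_{U\cap U'}{}^*\varSigma'-I_{U\cap U'}{}^*\varSigma=0$ or the factor $I_{U\cap U'}{}^*\varGamma=0$, so $I_{U\cap U'}{}^*S_{\mathrm{b}}=0$, which is \ceqref{propbas6}. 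There is no genuine obstacle in this argument; the only point demanding a little care is the bookkeeping of which inclusion map restricts which function, i.e. verifying that $I_{U\cap U'}{}^*$ of a datum already restricted from $\pi^{-1}(U)$ agrees with $I_U{}^*$ of that datum further restricted to $\pi_0{}^{-1}(U\cap U')$, and symmetrically for $U'$.
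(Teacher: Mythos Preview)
Your proof is correct and follows essentially the same approach as the paper's own proof, which simply states that the result follows from \ceqref{2basic7}, \ceqref{2basic9} upon substituting \ceqref{bmaurer15}, \ceqref{bmaurer16} and their primed counterparts. Your version is more explicit about the compatibility of the various inclusion maps and the fact that $I_{U\cap U'}{}^*$ is an algebra homomorphism respecting the crossed module operations, which the paper leaves implicit.
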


\begin{proof}
This follows from \ceqref{2basic7}, \ceqref{2basic9} upon substituting 
\ceqref{bmaurer15}, \ceqref{bmaurer16} and its primed counterpart. 
\end{proof}

\noindent
Note that this property renders the matching relations \ceqref{2basic15}, \ceqref{2basic17},
respectively \ceqref{2basic21}, \ceqref{2basic23}, 
compatible with \ceqref{propbas1}, \ceqref{propbas2}, respectively \ceqref{propbas3}, \ceqref{propbas4}, 
in case the relevant $2$--connection, respectively $1$--gauge transformation, is special.


\subsection{\textcolor{blue}{\sffamily Relation to non Abelian differential cocycles}}\label{subsec:nadifcoh}

In this subsection, we shall explore whether $2$--connections and $1$--gauge transformations as defined in the 
synthetic theory of subsects. \cref{subsec:cmconn}, \cref{subsec:cmgautr}
can be related to non Abelian differential cocycles and their equivalences \ccite{Breen:2001ie,Schr:2013cnag}. 
We consider again a synthetic principal $\hat{\matheul{K}}$--$2$--bundle $\hat{\mathcal{P}}$ and its
associated synthetic setup. 

In subsect. \cref{subsec:maurer}, we have seen that we can describe the portion $\pi^{-1}(U)$ of $P$ lying above 
a trivializing neighborhood $U$ of $M$ by means of adapted coordinates $\gamma$, $\varGamma$, $\sigma$, $\varSigma$. 
Since $\sigma$ and $\varSigma$ are expressible in terms of $\gamma$, $\varGamma$ through relations \ceqref{maurer2},
\ceqref{maurer3}, only these latter are truly independent. So, we shall limit ourselves to their sole consideration. 

In subsect. \cref{subsec:basic}, using adapted coordinates we have constructed via \ceqref{basic1}--\ceqref{basic4}
the local basic data
$\omega_{\mathrm{b}}$, $\varOmega_{\mathrm{b}}$, $\theta_{\mathrm{b}}$, $\varTheta_{\mathrm{b}}$
associated with a $2$--connection on $\pi^{-1}(U)$. 
Of these, $\theta_{\mathrm{b}}$, $\varTheta_{\mathrm{b}}$. can be expressed in terms of 
$\omega_{\mathrm{b}}$, $\varOmega_{\mathrm{b}}$ by eqs. \ceqref{basic5}, \ceqref{basic6}
and so can be disregarded in the following. 
Similarly, through adapted coordinates 
we have constructed via \ceqref{basic9}--\ceqref{basic12} also the local basic data
$g_{\mathrm{b}}$, $J_{\mathrm{b}}$, $h_{\mathrm{b}}$, $K_{\mathrm{b}}$ 
associated with a $1$--gauge transformation on $\pi^{-1}(U)$. 
Again, of these $h_{\mathrm{b}}$ and $K_{\mathrm{b}}$ can be given in term of $g_{\mathrm{b}}$, $J_{\mathrm{b}}$ 
by eqs. \ceqref{basic13}, \ceqref{basic14} and so can be once more disregarded. 
In this way, the basic data ${}^{g,J}\omega_{\mathrm{b}}$, ${}^{g,J}\varOmega_{\mathrm{b}}$
of the $1$--gauge transformed $2$--connection can be expressed in terms of 
$\omega_{\mathrm{b}}$, $\varOmega_{\mathrm{b}}$
and $g_{\mathrm{b}}$, $J_{\mathrm{b}}$ only by the familiar higher gauge theoretic relations, 
as we found out by inserting \ceqref{basic13}, \ceqref{basic14} into \ceqref{basic21}, \ceqref{basic22}.

In subsect. \cref{subsec:basic}, further, \pagebreak we have seen that the matching of local basic $2$--connection and 
$1$--gauge transformation data relative to overlapping neighborhoods $U$, $U'$ of $M$ 
is governed by local basic transition data $f_{\mathrm{b}}$, $F_{\mathrm{b}}$, $s_{\mathrm{b}}$, $S_{\mathrm{b}}$
given by eqs. \ceqref{2basic6}--\ceqref{2basic9} 
of which the latter two are expressible in terms of the former two by eqs. \ceqref{2basic10}, \ceqref{2basic11}
and so can be also safely left aside in the following. 

We now choose an trivializing covering $\{U_i\}$ of $M$ and for each set $U_i$ adapted coordinates 
$\gamma_i$, $\varGamma_i$ and consider the associated local $2$--connection, $1$--gauge transformation 
and transition data. To relate the present framework to non Abelian differential cocycle theory, 
we shall restrict ourselves fake flat $2$--connections as appropriate. 

For a $2$--connection, there are then defined for every set $U_i$ of the covering
lo\-cal basic data $\omega_{\mathrm{b}i}\in\iMap(T[1]\pi^{-1}(U_i),\mathfrak{g}[1])$, 
$\varOmega_{\mathrm{b}i}\in\iMap(T[1]\pi^{-1}(U_i),\mathfrak{e}[2])$
via \ceqref{basic1}, \ceqref{basic2}. By the assumed fake flatness,
these satisfy 
\begin{equation}
d_{\pi^{-1}(U_i)}\omega_{\mathrm{b}i}+\frac{1}{2}[\omega_{\mathrm{b}i},\omega_{\mathrm{b}i}]-\dot\tau(\varOmega_{\mathrm{b}i})=0.
\label{nadifcoh1}
\end{equation}
For a $1$--gauge transformation, local basic data 
$g_{\mathrm{b}i}\in\iMap(T[1]\pi^{-1}(U_i),\mathsans{G})$, $J_{\mathrm{b}i}\in\iMap(T[1]\pi^{-1}(U_i),\mathfrak{e}[1])$
can be similarly defined on each $U_i$ via \ceqref{basic9}, \ceqref{basic10}. 

For every couple of intersecting sets $U_i$, $U_j$ of the covering, 
transition data $f_{\mathrm{b}ij}\in\iMap(T[1]\pi^{-1}(U_i\cap U_j),\mathsans{G})$, 
$F_{\mathrm{b}ij}\in\iMap(T[1]\pi^{-1}(U_i\cap U_j),\mathfrak{e}[1])$ are likewise 
built through \ceqref{2basic6}, \ceqref{2basic7}. The local $2$--connection data $\omega_{\mathrm{b}i}$, $\varOmega_{\mathrm{b}i}$,
match as 
\begin{align}
&\omega_{\mathrm{b}i}=\Ad f_{\mathrm{b}ij}(\omega_{\mathrm{b}j})-d_{\pi^{-1}(U_i\cap U_j)}f_{\mathrm{b}ij}f_{\mathrm{b}ij}{}^{-1}
-\dot\tau(F_{\mathrm{b}ij}),
\vphantom{\Big]}
\label{nadifcoh2}
\\
&\varOmega_{\mathrm{b}i}=\mu{}\dot{}\,(f_{\mathrm{b}ij},\varOmega_{\mathrm{b}j})
-d_{\pi^{-1}(U_i\cap U_j)}F_{\mathrm{b}ij}-\frac{1}{2}[F_{\mathrm{b}ij},F_{\mathrm{b}ij}]
-{}\dot{}\mu{}\dot{}\,(\omega_{\mathrm{b}i},F_{\mathrm{b}ij})
\vphantom{\Big]}
\label{nadifcoh3}
\end{align}
on $U_i\cap U_j$, as follows readily from eqs. \ceqref{2basic14}, \ceqref{2basic15}
using the \ceqref{2basic10}, \ceqref{2basic11}. Similarly, 
 the the local $1$--gauge transformation data $g_{\mathrm{b}}$, $J_{\mathrm{b}}$
match as 
\begin{align}
&g_{\mathrm{b}i}=f_{\mathrm{b}ij}g_{\mathrm{b}j}f_{\mathrm{b}ij}{}^{-1},
\vphantom{\Big]}
\label{nadifcoh4}  
\\
&J_{\mathrm{b}i}=\mu{}\dot{}(f_{\mathrm{b}ij},J_{\mathrm{b}j})+F_{\mathrm{b}ij}
-\mu{}\dot{}(g_{\mathrm{b}i},F_{\mathrm{b}ij})
\vphantom{\Big]}
\label{nadifcoh5}   
\end{align}
by eqs. \ceqref{2basic20}, \ceqref{2basic21}. 

By virtue of relations \pagebreak \ceqref{2basic6}, \ceqref{2basic7}, the data $f_{\mathrm{b}ij}$. $F_{\mathrm{b}ij}$
form a $\DD\mathsans{M}$--valued $1$--cocycle on $P$, as on every non empty triple intersection $U_i\cap U_j\cap U_k$
\begin{align}
&f_{\mathrm{b}ik}=f_{\mathrm{b}ij}f_{\mathrm{b}jk},
\vphantom{\Big]}
\label{nadifcoh6}
\\
&F_{\mathrm{b}ik}=F_{\mathrm{b}ij}+\mu{}\dot{}\,(f_{\mathrm{b}ij},F_{\mathrm{b}jk}).
\vphantom{\Big]}
\label{nadifcoh7}
\end{align}
By the way it is constructed, this cocycle is trivial. 

Combining \ceqref{basic13}, \ceqref{basic14} into \ceqref{basic21}, \ceqref{basic22}, 
the local basic data ${}^{g,J}\omega_{\mathrm{b}i}$, ${}^{g,J}\varOmega_{\mathrm{b}i}$
of the $1$--gauge transformed $2$--connection are found to be given by 
\begin{align}
&{}^{g,J}\omega_{\mathrm{b}i}=\Ad g_{\mathrm{b}i}(\omega_{\mathrm{b}i})-d_{\pi^{-1}(U_i)}g_{\mathrm{b}i}g_{\mathrm{b}i}{}^{-1}
-\dot\tau(J_{\mathrm{b}i}),
\vphantom{\Big]}
\label{nadifcoh2/1}
\\
&{}^{g,J}\varOmega_{\mathrm{b}i}=\mu{}\dot{}\,(g_{\mathrm{b}i},\varOmega_{\mathrm{b}i})
-d_{\pi^{-1}(U_i)}J_{\mathrm{b}i}-\frac{1}{2}[J_{\mathrm{b}i},J_{\mathrm{b}i}]
-{}\dot{}\mu{}\dot{}\,({}^{g,J}\omega_{\mathrm{b}i},J_{\mathrm{b}i})
\vphantom{\Big]}
\label{nadifcoh3/1}
\end{align} 
for any covering set $U_i$. 

Our aim next is ascertaining whether the above setup can be naturally related to (some internal variant of)
non Abelian differential cocycle theory. We are going to submit a proposal in this sense. 
Before proceeding further, however,  the following remark is in order. 
In an ordinary principal $\mathsans{G}$--bundle $P$, basic forms of $P$ are pull--backs via
the bundle's projection map $\pi$ of ordinary forms of the base $M$. The proof of this 
important property requires crucially that the right $\mathsans{G}$--action of $P$ is transitive 
on the fibers. In a principal $\hat{\matheul{K}}$--$2$--bundle $\hat{\mathcal{P}}$, transitiveness holds only up to isomorphism.
For this reason, basic forms of $P$ do not necessarily arise as pull--backs via
the bundle's projection map $\pi$ of ordinary forms of the base $M$, though they may do. 
Our reformulation of differential cocycle theory hinges on this property.

We have found the following notion useful. 

\begin{defi}
A quasi trivializer consists in an assignment of a basic 
Lie group valued internal function $T_{\mathrm{b}ij}\in\iMap(T[1]\pi^{-1}(U_i\cap U_j),\mathsans{E})$ 
for each pair of intersecting covering sets $U_i$, $U_j$. 
\end{defi}

\noindent We stress that the basicness of the $T_{\mathrm{b}ij}$ is crucial. 


\begin{defi}\label{defi:diffparacocl}
A differential paracocycle is a pair of a fake flat $2$--connection $\{\omega_{\mathrm{b}i}, \varOmega_{\mathrm{b}i}\}$
and a quasi trivializer $\{T_{\mathrm{b}ij}\}$ enjoying the following properties.
\begin{enumerate}

\item For any set $U_i$, Lie algebra valued internal functions 
$\bar\omega_i\in\iMap(T[1]U_i,\mathfrak{g}[1])$, $\bar\varOmega_i\in\iMap(T[1]U_i,\mathfrak{e}[2])$ exist 
with the property that 
\begin{align}
&\omega_{\mathrm{b}i}=\pi^*\bar\omega_i,
\vphantom{\Big]}
\label{nadifcoh8}
\\
&\varOmega_{\mathrm{b}i}=\pi^*\bar\varOmega_i.
\vphantom{\Big]}
\label{nadifcoh9}
\end{align}

\item For any two intersecting sets $U_i$, $U_j$, 
Lie group and algebra valued internal functions 
$\bar f_{ij}\in\iMap(T[1](U_i\cap U_j),\mathsans{G})$, $\bar F_{ij}\in\iMap(T[1](U_i\cap U_j),\mathfrak{e}[1])$
exist such that on $U_i\cap U_j$
\begin{align}
&f_{\mathrm{b}ij}=\tau(T_{\mathrm{b}ij})\pi^*\bar f_{ij},
\vphantom{\Big]}
\label{nadifcoh10}
\\
&F_{\mathrm{b}ij}=\Ad T_{\mathrm{b}ij}(\pi^*\bar F_{ij})-{}\dot{}\mu(\pi^*\bar\omega_i,T_{\mathrm{b}ij})
-d_{\pi^{-1}(U_i\cap U_j)}T_{\mathrm{b}ij}T_{\mathrm{b}ij}{}^{-1}. 
\vphantom{\Big]}
\label{nadifcoh11}
\end{align}

\item 
For any three intersecting sets $U_i$, $U_j$, $U_k$, there is
a Lie group valued internal function 
$\bar T_{ijk}\in\iMap(T[1](U_i\cap U_j\cap U_k),\mathsans{E})$ such that on $U_i\cap U_j\cap U_k$
\begin{equation}
T_{\mathrm{b}ik}{}^{-1}\mu(f_{\mathrm{b}ij},T_{\mathrm{b}jk})T_{\mathrm{b}ij}=\pi^*\bar T_{ijk}.
\label{nadifcoh12}
\end{equation}
\end{enumerate}
\end{defi}

\noindent
The content of the above definition is motivated by the following result which it leads to. 

\begin {prop}
The local $2$--connection and transition data $\{\bar\omega_i,\bar\varOmega_i,\bar f_{ij},\bar F_{ij},\bar T_{ijk}\}$
of a differential paracocycle $\{\omega_{\mathrm{b}i},\varOmega_{\mathrm{b}i},T_{\mathrm{b}ij}\}$ 
constitute 
a  differential cocycle. Indeed, the $2$--connection data $\bar\omega_i$, $\bar\varOmega_i$
satisfy the fake flatness condition 
\begin{equation}
d_{U_i}\bar\omega_i+\frac{1}{2}[\bar\omega_i,\bar\omega_i]-\dot\tau(\bar\varOmega_i)=0
\label{nadifcoh13}
\end{equation}
on every set $U_i$ and the matching conditions 
\begin{align}
&\bar\omega_i=\Ad \bar f_{ij}(\bar\omega_j)-d_{U_i\cap U_j}\bar f_{ij}\bar f_{ij}{}^{-1}
-\dot\tau(\bar F_{ij}),
\vphantom{\Big]}
\label{nadifcoh14}
\\
&\bar\varOmega_i=\mu{}\dot{}\,(\bar f_{ij},\bar\varOmega_j)
-d_{U_i\cap U_j}\bar F_{ij}-\frac{1}{2}[\bar F_{ij},\bar F_{ij}]
-{}\dot{}\mu{}\dot{}\,(\bar\omega_i,\bar F_{ij})
\vphantom{\Big]}
\label{nadifcoh15}
\end{align}
on every non empty intersection $U_i\cap U_j$. Moreover, 
the transition data $\bar f_{ij}$, $\bar F_{ij}$, $\bar T_{ijk}$
satisfy the consistency conditions 
\begin{align}
&\bar f_{ik}=\tau(\bar T_{ijk})\bar f_{ij}\bar f_{jk}, \hspace{8.5cm}
\vphantom{\Big]}
\label{nadifcoh16}
\end{align} 
\begin{align}
&\bar F_{ik}=\Ad \bar T_{ijk}(\bar F_{ij}+\mu{}\dot{}\,(\bar f_{ij},\bar F_{jk}))
-{}\dot{}\mu(\bar\omega_i,\bar T_{ijk})-d_{U_i\cap U_j\cap U_k}\bar T_{ijk}\bar T_{ijk}{}^{-1}
\vphantom{\Big]}
\label{nadifcoh17}
\end{align}
on every non empty intersection $U_i\cap U_j\cap U_k$. Finally, 
\begin{equation}
\bar T_{ikl}\bar T_{ijk}=\bar T_{ijl}\mu(\bar f_{ij},\bar T_{jkl})
\label{nadifcoh18}
\end{equation}
on every non empty intersection $U_i\cap U_j\cap U_k\cap U_l$. 
\end{prop}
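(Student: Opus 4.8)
The plan is to obtain each of the identities \ceqref{nadifcoh13}--\ceqref{nadifcoh18} by \emph{descending} along the bundle projection $\pi$ the corresponding statement that, by the results of subsect. \cref{subsec:basic}, already holds at the basic level on the total space morphism manifold $P$. Two preliminary facts underpin the whole argument. First, over each trivializing set $U_i$ the map $\pi$ is the projection $U_i\times\DD\mathsans{M}\to U_i$, hence admits local sections, so that $\pi^*\colon\iFun(T[1]U_i)\to\iFun(T[1]\pi^{-1}(U_i))$ is injective; moreover $\pi^*$ intertwines $d_{U_i}$ with $d_{\pi^{-1}(U_i)}$ and commutes with the bracket $[-,-]$, with the crossed module maps $\tau$, $\dot\tau$, $\mu$, $\dot{}\mu$, $\mu{}\dot{}$, $\dot{}\mu{}\dot{}$, $\Ad$, and with the Leibniz rule for products of group valued functions. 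Second, the operation relations 2.1.1--2.1.6 of I force $d_{\pi^{-1}(U)}$ to send basic internal functions to basic ones; consequently all the combinations occurring in \ceqref{nadifcoh10}--\ceqref{nadifcoh12}, such as $d_{\pi^{-1}(U_i\cap U_j)}T_{\mathrm{b}ij}T_{\mathrm{b}ij}{}^{-1}$ and $\dot{}\mu(\pi^*\bar\omega_i,T_{\mathrm{b}ij})$, are again basic, which is exactly where the stipulated basicness of the quasi trivializer is used and which renders the defining relations of a differential paracocycle consistent.

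The relations that descend effortlessly are the fake flatness condition and the higher transition identities. For \ceqref{nadifcoh13}: since the $2$--connection is fake flat, $\theta_{\mathrm{b}i}=0$ by prop. \cref{prop:basic4}, so \ceqref{basic5} reduces to the basic fake flatness relation \ceqref{nadifcoh1}; substituting \ceqref{nadifcoh8}, \ceqref{nadifcoh9}, pulling $d$, $[-,-]$, $\dot\tau$ through $\pi^*$ and invoking injectivity of $\pi^*$ yields \ceqref{nadifcoh13}. For \ceqref{nadifcoh16}, \ceqref{nadifcoh17} I would start from the trivial cocycle relations \ceqref{nadifcoh6}, \ceqref{nadifcoh7} on $P$, insert \ceqref{nadifcoh10}, \ceqref{nadifcoh11} and the definition \ceqref{nadifcoh12} of $\bar T_{ijk}$, and use that $\mu(g,-)$ is a group endomorphism of $\mathsans{E}$, that $\tau$ is a homomorphism, and the Peiffer identities of $(\mathsans{E},\mathsans{G})$ (in particular $\mu(\tau(e),-)=\Ad e$), to reorganize all of the $T_{\mathrm{b}\bullet}$--dependence until both sides become $\pi^*$ of the asserted base space identities; injectivity of $\pi^*$ finishes it. The quadruple overlap relation \ceqref{nadifcoh18} is analogous: writing $\pi^*\bar T_{ijk}=T_{\mathrm{b}ik}{}^{-1}\mu(f_{\mathrm{b}ij},T_{\mathrm{b}jk})T_{\mathrm{b}ij}$ from \ceqref{nadifcoh12} and repeatedly using \ceqref{nadifcoh6} with the homomorphism property of $\mu$, one checks that $\pi^*(\bar T_{ikl}\bar T_{ijk})$ and $\pi^*(\bar T_{ijl}\mu(\bar f_{ij},\bar T_{jkl}))$ both equal $T_{\mathrm{b}il}{}^{-1}\mu(f_{\mathrm{b}ij},\mu(f_{\mathrm{b}jk},T_{\mathrm{b}kl})T_{\mathrm{b}jk})T_{\mathrm{b}ij}$.

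The main work lies in the matching relations \ceqref{nadifcoh14}, \ceqref{nadifcoh15}, which I would derive from their basic level counterparts \ceqref{nadifcoh2}, \ceqref{nadifcoh3} (themselves a consequence of \ceqref{2basic14}, \ceqref{2basic15} together with \ceqref{2basic10}, \ceqref{2basic11}). After inserting \ceqref{nadifcoh8}--\ceqref{nadifcoh11} and expanding, the task reduces to showing that every occurrence of $T_{\mathrm{b}ij}$ cancels. This is the computational heart of the proof: it rests on the graded Leibniz rule applied to the products $\tau(T_{\mathrm{b}ij})\pi^*\bar f_{ij}$ and $\Ad T_{\mathrm{b}ij}(\pi^*\bar F_{ij})$, on the equivariance of $\tau$, $\dot\tau$, $\mu$, $\dot{}\mu$, $\mu{}\dot{}$ and on the Peiffer identities, arranged so that the $\mathsans{E}$--conjugations and the inhomogeneous term $d_{\pi^{-1}(U_i\cap U_j)}T_{\mathrm{b}ij}T_{\mathrm{b}ij}{}^{-1}$ carried by \ceqref{nadifcoh11} recombine precisely into $\pi^*$ of the right hand sides of \ceqref{nadifcoh14}, \ceqref{nadifcoh15}. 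I expect this cancellation to be the only genuinely delicate step; once it is in place, injectivity of $\pi^*$ delivers \ceqref{nadifcoh14}, \ceqref{nadifcoh15}. Note finally that no part of the argument involves the reference $2$--connection entering the $2$--gauge sector, consistent with the fact that the transition data of \ceqref{nadifcoh12} are defined without reference to it.
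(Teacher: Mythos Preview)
Your proposal follows essentially the same route as the paper: substitute \ceqref{nadifcoh8}--\ceqref{nadifcoh12} into the basic level identities \ceqref{nadifcoh1}--\ceqref{nadifcoh3}, \ceqref{nadifcoh6}, \ceqref{nadifcoh7}, use the crossed module identities (what the paper calls ``the identities of app.\ B of I'') to cancel the $T_{\mathrm{b}ij}$--dependence, and then descend via injectivity of $\pi^*$; \ceqref{nadifcoh18} follows directly from \ceqref{nadifcoh12}. One minor correction: you justify injectivity of $\pi^*$ by saying that over $U_i$ the map $\pi$ is literally the projection $U_i\times\DD\mathsans{M}\to U_i$, but in this synthetic setting the trivializing map $\varPhi_{U_i}$ arises from a functor that is only \emph{weakly} invertible (cf.\ subsect.\ \cref{subsec:maurer}), so $\pi^{-1}(U_i)$ is not a genuine product; the correct justification is simply that $\pi$ is a surjective submersion (prop.\ 3.2 of I), which is what the paper invokes.
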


\begin{proof}
Relations \ceqref{nadifcoh13}--\ceqref{nadifcoh17} follow from substituting expressions 
\ceqref{nadifcoh8}--\ceqref{nadifcoh11} into relations \ceqref{nadifcoh1}--\ceqref{nadifcoh3},
\ceqref{nadifcoh6}, \ceqref{nadifcoh7} and using \ceqref{nadifcoh12}. The proof involves combined use of 
the identities of app. B of I. The property of $\pi$ being a surjective submersion (cf. prop. 3.2 of I)
is used to deduce that $\bar\tau=0$ from any identity of the form $\pi^*\bar\tau=0$ 
with $\bar\tau$ some local internal function on $M$. \ceqref{nadifcoh18} follows directly from \ceqref{nadifcoh12}
through a simple calculation. 
\end{proof}

\noindent
The above result can be intuitively understood as follows. The local basic data
$\{\omega_{\mathrm{b}i},\varOmega_{\mathrm{b}i}, f_{\mathrm{b}ij}, F_{\mathrm{b}ij}, 1_{\mathsans{E}}\}$
can be viewed  as something like a trivial differential cocycle on $P$. 
By \ceqref{nadifcoh8}--\ceqref{nadifcoh12}, the local basic data 
$\{\pi^*\bar\omega_i,\pi^*\bar\varOmega_i,\pi^*\bar f_{ij},\pi^*\bar F_{ij},\pi^*\bar T_{ijk}\}$
form a trivial differential cocycle on $P$ equivalent to the former. The fundamental cocycle relations 
obeyed by the data $\{\pi^*\bar\omega_i,\pi^*\bar\varOmega_i,\pi^*\bar f_{ij},\pi^*\bar F_{ij},\pi^*\bar T_{ijk}\}$
are then satisfied also by the data $\{\bar\omega_i,\bar\varOmega_i,\bar f_{ij},\bar F_{ij},\bar T_{ijk}\}$, 
since $\pi$ is a surjective submersion. 
The local data $\{\bar\omega_i,\bar\varOmega_i,\bar f_{ij},\bar F_{ij},\bar T_{ijk}\}$ constitute therefore a 
 differential cocycle on $M$. Unlike its counterpart in $P$, 
this cocycle is generally non trivial since in eq. 
\ceqref{nadifcoh12} $T_{\mathrm{b}ij}$ is not necessarily of the form $T_{\mathrm{b}ij}=\pi^*\bar T_{ij}$
for some internal function $\bar T_{ij}\in\iMap(T[1](U_i\cap U_j),\mathsans{E})$.

\begin{defi}\label{defi:nadifcoh1}
Two differential paracocycles $\{\omega_{\mathrm{b}i},\varOmega_{\mathrm{b}i},T_{\mathrm{b}ij}\}$,
$\{\tilde\omega_{\mathrm{b}i},\tilde\varOmega_{\mathrm{b}i},\tilde T_{\mathrm{b}ij}\}$
are said to be equivalent if for every set $U_i$ 
\begin{align}
&\tilde\omega_{\mathrm{b}i}=\omega_{\mathrm{b}i},
\vphantom{\Big]}
\label{nadifcoh19}
\\
&\tilde\varOmega_{\mathrm{b}i}=\varOmega_{\mathrm{b}i}
\vphantom{\Big]}
\label{nadifcoh20}
\end{align}
and for every intersecting set pair $U_i$, $U_j$ there is a Lie group valued internal 
function $\bar T_{ij}\in\iMap(T[1](U_i\cap U_j),\mathsans{E})$ such that 
\begin{equation}
\tilde T_{\mathrm{b}ij}=T_{\mathrm{b}ij}\pi^*\bar T_{ij}{}^{-1}.
\label{nadifcoh21}
\end{equation}
\end{defi}

\noindent
Differential paracocycle equivalence is manifestly an equivalence relation as suggested by its name. 
Further, it implies the equivalence of the underlying differential cocycles.

\begin{prop}\label{prop:nadifcoh1}
If $\{\omega_{\mathrm{b}i},\varOmega_{\mathrm{b}i},T_{\mathrm{b}ij}\}$,
$\{\tilde\omega_{\mathrm{b}i},\tilde\varOmega_{\mathrm{b}i},\tilde T_{\mathrm{b}ij}\}$
are two equivalent differential pa\-racocycles, then their associated  differential cocycles
$\{\bar\omega_i,\bar\varOmega_i,\bar f_{ij},\bar F_{ij},\bar T_{ijk}\}$
 $\{\bar{\tilde\omega}_i,\bar{\tilde\varOmega}_i,\bar{\tilde f}_{ij},\bar{\tilde F}_{ij},\bar{\tilde T}_{ijk}\}$
are equivalent. Indeed, 
\begin{align}
&\bar{\tilde\omega}_i=\bar\omega_i,
\vphantom{\Big]}
\label{nadifcoh22}
\\
&\bar{\tilde\varOmega}_i=\bar\varOmega_i
\vphantom{\Big]}
\label{nadifcoh23}
\end{align}
on each set $U_i$, 
\begin{align}
&\bar{\tilde f}_{ij}=\tau(\bar T_{ij})\bar f_{ij}, \hspace{4.5cm}
\vphantom{\Big]}
\label{nadifcoh24}
\\
&\bar{\tilde F}_{ij}=\Ad\bar T_{ij}(\bar F_{ij})-{}\dot{}\mu(\bar{\tilde \omega}_i,\bar T_{ij})
-d_{U_i\cap U_j}\bar T_{ij}\bar T_{ij}{}^{-1}
\vphantom{\Big]}
\label{nadifcoh25}
\end{align} 
on every non empty intersection $U_i\cap U_j$ and 
\begin{equation}
\bar{\tilde T}_{ijk}=\bar T_{ik}\bar T_{ijk}\mu(\bar f_{ij},\bar T_{jk}{}^{-1})\bar T_{ij}{}^{-1}
\label{nadifcoh26}
\end{equation}
on every non empty intersection  $U_i\cap U_j\cap U_k$
\end{prop}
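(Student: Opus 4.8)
The plan is to reduce everything to the single structural fact already used in the proof of the preceding proposition: since $\pi$ is a surjective submersion (prop. 3.2 of I), the pull--back $\pi^*$ is injective on internal functions on the base, so that an equality $\pi^*\bar X=\pi^*\bar Y$ on $\pi^{-1}(U_i\cap\cdots\cap U_l)$ forces $\bar X=\bar Y$ on $U_i\cap\cdots\cap U_l$. One should also note at the outset that the matching components $f_{\mathrm{b}ij}$, $F_{\mathrm{b}ij}$ depend only on the common choice of trivializing covering $\{U_i\}$ and adapted coordinates $\gamma_i$, $\varGamma_i$ (cf. def. \cref{defi:matchfnc}, eqs. \ceqref{2basic6}, \ceqref{2basic7}), hence are the \emph{same} for the two paracocycles; in particular $\tilde f_{\mathrm{b}ij}=f_{\mathrm{b}ij}$ and $\tilde F_{\mathrm{b}ij}=F_{\mathrm{b}ij}$.

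With this in hand, \ceqref{nadifcoh22}, \ceqref{nadifcoh23} are immediate: \ceqref{nadifcoh8}, \ceqref{nadifcoh9} and \ceqref{nadifcoh19}, \ceqref{nadifcoh20} give $\pi^*\bar{\tilde\omega}_i=\tilde\omega_{\mathrm{b}i}=\omega_{\mathrm{b}i}=\pi^*\bar\omega_i$ and likewise for $\varOmega$, whence the claim by injectivity of $\pi^*$. For \ceqref{nadifcoh24} I would write \ceqref{nadifcoh10} for both paracocycles, $f_{\mathrm{b}ij}=\tau(T_{\mathrm{b}ij})\pi^*\bar f_{ij}=\tau(\tilde T_{\mathrm{b}ij})\pi^*\bar{\tilde f}_{ij}$, substitute the equivalence relation $\tilde T_{\mathrm{b}ij}=T_{\mathrm{b}ij}\pi^*\bar T_{ij}{}^{-1}$ of \ceqref{nadifcoh21}, and use that $\tau$ is a group homomorphism commuting with $\pi^*$; cancelling $\tau(T_{\mathrm{b}ij})$ and invoking injectivity of $\pi^*$ yields $\bar{\tilde f}_{ij}=\tau(\bar T_{ij})\bar f_{ij}$.

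The relation \ceqref{nadifcoh25} is obtained along the same lines but with heavier bookkeeping. I would equate the two expressions \ceqref{nadifcoh11} for $F_{\mathrm{b}ij}$ (recalling $\tilde F_{\mathrm{b}ij}=F_{\mathrm{b}ij}$ and, by the step just completed, $\bar{\tilde\omega}_i=\bar\omega_i$), substitute $\tilde T_{\mathrm{b}ij}=T_{\mathrm{b}ij}\pi^*\bar T_{ij}{}^{-1}$, and expand the three terms on the right using the Leibniz--type identities of app. B of I: $\Ad(GH)=\Ad G\circ\Ad H$, the product rule $d(GH)(GH)^{-1}=dGG^{-1}+\Ad G(dHH^{-1})$ for the right logarithmic derivative, and the corresponding product and equivariance rules for $\dot{}\mu$. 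After these expansions the contributions built solely from $T_{\mathrm{b}ij}$ reproduce the right--hand side of the unprimed \ceqref{nadifcoh11}, i.e. $F_{\mathrm{b}ij}$ itself, so that they cancel on equating the two sides; what remains is $\pi^*\bar{\tilde F}_{ij}$ expressed as $\pi^*$ of $\Ad\bar T_{ij}(\bar F_{ij})-\dot{}\mu(\bar\omega_i,\bar T_{ij})-d_{U_i\cap U_j}\bar T_{ij}\bar T_{ij}{}^{-1}$, and injectivity of $\pi^*$ gives \ceqref{nadifcoh25}.

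Finally, \ceqref{nadifcoh26} comes from \ceqref{nadifcoh12}. Writing that relation for both paracocycles, $\pi^*\bar T_{ijk}=T_{\mathrm{b}ik}{}^{-1}\mu(f_{\mathrm{b}ij},T_{\mathrm{b}jk})T_{\mathrm{b}ij}$ and its primed analogue, I would substitute $\tilde T_{\mathrm{b}ik}=T_{\mathrm{b}ik}\pi^*\bar T_{ik}{}^{-1}$ together with its $ij$ and $jk$ versions and $\tilde f_{\mathrm{b}ij}=f_{\mathrm{b}ij}=\tau(T_{\mathrm{b}ij})\pi^*\bar f_{ij}$, and then disentangle the middle factor $\mu(\tau(T_{\mathrm{b}ij})\pi^*\bar f_{ij},\,T_{\mathrm{b}jk}\pi^*\bar T_{jk}{}^{-1})$ by means of the crossed--module identities of app. B of I, namely $\mu(gg',e)=\mu(g,\mu(g',e))$, $\mu(g,ee')=\mu(g,e)\mu(g,e')$ and the Peiffer relation $\mu(\tau(e),e')=ee'e^{-1}$. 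Collecting the result and recognizing $\pi^*\bar T_{ijk}=T_{\mathrm{b}ik}{}^{-1}\mu(f_{\mathrm{b}ij},T_{\mathrm{b}jk})T_{\mathrm{b}ij}$ inside it, one isolates $\pi^*$ of $\bar T_{ik}\bar T_{ijk}\mu(\bar f_{ij},\bar T_{jk}{}^{-1})\bar T_{ij}{}^{-1}$, and a last appeal to injectivity of $\pi^*$ delivers \ceqref{nadifcoh26}. This last step is the main obstacle: it is where the crossed--module structure of $\mathsans{M}$ is genuinely used and where the computation is most error--prone, since the factors $\pi^*\bar f_{ij}$ and $\pi^*\bar T_{jk}{}^{-1}$ must be commuted past $\tau(T_{\mathrm{b}ij})$ and the surviving $T_{\mathrm{b}}$'s in the correct order; by contrast the earlier steps are essentially formal once $\pi^*$--injectivity is available.
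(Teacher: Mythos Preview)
Your proposal is correct and follows essentially the same approach as the paper's proof: both derive \ceqref{nadifcoh22}--\ceqref{nadifcoh25} by equating the tilded and untilded versions of \ceqref{nadifcoh8}--\ceqref{nadifcoh11}, substituting \ceqref{nadifcoh21}, and appealing to injectivity of $\pi^*$, and both obtain \ceqref{nadifcoh26} from the tilded and untilded forms of \ceqref{nadifcoh12} together with \ceqref{nadifcoh21}. You are in fact more explicit than the paper on two points it leaves implicit: that the matching data $f_{\mathrm{b}ij}$, $F_{\mathrm{b}ij}$ depend only on the adapted coordinates and hence coincide for the two paracocycles, and precisely which crossed--module identities (notably the Peiffer relation) are used to commute the $\pi^*\bar f_{ij}$ and $\pi^*\bar T_{jk}^{-1}$ factors past $T_{\mathrm{b}ij}$ in the derivation of \ceqref{nadifcoh26}.
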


\begin{proof}
Relations \ceqref{nadifcoh22}, \ceqref{nadifcoh23} are an immediate consequence of
\ceqref{nadifcoh19}, \ceqref{nadifcoh20} and \ceqref{nadifcoh8}, \ceqref{nadifcoh9} and 
their tilded analogues. Relations \ceqref{nadifcoh24}, \ceqref{nadifcoh25} follow from equating 
the tilded and untilded versions of expressions \ceqref{nadifcoh10}, \ceqref{nadifcoh11} and use the resulting 
equations together with \ceqref{nadifcoh21} 
to express $\bar{\tilde f}_{ij}$, $\bar{\tilde F}_{ij}$ in terms of $\bar f_{ij}$, $\bar F_{ij}$.
The proof involves combined use of the identities of app. B of I. 
Finally, \ceqref{nadifcoh26} follows from the tilded version of \ceqref{nadifcoh12} upon using 
\ceqref{nadifcoh21} and the untilded form of \ceqref{nadifcoh12}. 
\end{proof}

\noindent
Intuitively, the above result can be understood as follows. 
In $P$, the differential cocycles 
$\{\pi^*\bar\omega_i,\pi^*\bar\varOmega_i,\pi^*\bar f_{ij},\pi^*\bar F_{ij},\pi^*\bar T_{ijk}\}$,
$\{\pi^*\bar{\tilde\omega}_i,\pi^*\bar{\tilde\varOmega}_i,
\pi^*\bar{\tilde f}_{ij},\pi^*\bar{\tilde F}_{ij},\pi^*\bar{\tilde T}_{ijk}\}$ 
are equivalent to the cocycles 
$\{\omega_{\mathrm{b}i},\varOmega_{\mathrm{b}i}, f_{\mathrm{b}ij}, F_{\mathrm{b}ij}, 1_{\mathsans{E}}\}$, 
$\{\tilde\omega_{\mathrm{b}i},\tilde\varOmega_{\mathrm{b}i}, f_{\mathrm{b}ij}, 
F_{\mathrm{b}ij}, 1_{\mathsans{E}}\}$, respectively. Since the latter two coincide 
by \ceqref{nadifcoh19}, \ceqref{nadifcoh20}, the former two are equivalent. Thanks to 
\ceqref{nadifcoh21}, this property entails the equivalence of the cocycles 
$\{\bar\omega_i,\bar\varOmega_i,\bar f_{ij},\bar F_{ij},\bar T_{ijk}\}$
$\{\bar{\tilde\omega}_i,\bar{\tilde\varOmega}_i,
\bar{\tilde f}_{ij},\bar{\tilde F}_{ij},\bar{\tilde T}_{ijk}\}$. Note that this equivalence is not of the 
most general form, as it does not involve $1$--gauge transformation.

The above analysis shows that the local basic data $\{\omega_{\mathrm{b}i},\varOmega_{\mathrm{b}i}\}$ 
of a fake flat $2$--connection together with the data $\{T_{\mathrm{b}ij}\}$ of a quasi trivializer 
can fit into a differential paracocycle. This in turn is directly related to a genuine
differential cocycle. 
The natural question arises about whether the local basic data $\{g_{\mathrm{b}i},J_{\mathrm{b}i}\}$
of a $1$--gauge transformation can fit into some object with somewhat analogous properties 
capable of relating in a meaningful way to an assigned differential paracocycle. 


\begin{defi}\label{defi:nadifcoh3}
A gauge paraequivalence subordinated to a differential paracocycle 
$\{\omega_{\mathrm{b}i},\varOmega_{\mathrm{b}i},T_{\mathrm{b}ij}\}$ 
consists of a $1$--gauge transformation $\{g_{\mathrm{b}i},J_{\mathrm{b}i}\}$
enjoying the following properties.
\begin{enumerate}

\item For any set $U_i$, there exist Lie group and algebra valued internal functions 
$\bar g_i\in\iMap(T[1]U_i,\mathsans{G})$, $\bar J_i\in\iMap(T[1]U_i,\mathfrak{e}[1])$ 
such that 
\begin{align}
&g_{\mathrm{b}i}=\pi^*\bar g_i,
\vphantom{\Big]}
\label{nadifcoh27}
\\
&J_{\mathrm{b}i}=\pi^*\bar J_i.
\vphantom{\Big]}
\label{nadifcoh28}
\end{align}

\item For any two intersecting sets $U_i$, $U_j$, there exists a Lie group valued internal function
$\bar A_{ij}\in\iMap(T[1](U_i\cap U_j),\mathsans{E})$ such that 
\begin{equation}
\mu(g_{\mathrm{b}i},T_{\mathrm{b}ij}{}^{-1})T_{\mathrm{b}ij}=\pi^*\bar A_{ij}.
\label{nadifcoh29}
\end{equation}
\end{enumerate}
\end{defi}

\noindent
The following proposition shows the naturality of the above definition. 

\begin{prop}
Let $\{\omega_{\mathrm{b}i},\varOmega_{\mathrm{b}i},T_{\mathrm{b}ij}\}$ be a differential 
paracocycle and let $\{g_{\mathrm{b}i},J_{\mathrm{b}i}\}$ be gauge paraequivalence subordinated 
to it. Then, $\{{}^{g,J}\omega_{\mathrm{b}i}r,{}^{g,J}\varOmega_{\mathrm{b}i},T_{\mathrm{b}ij}\}$ 
is a differential paracocycle as well. In terms of the cocycle and equivalence data of
$\{\omega_{\mathrm{b}i},\varOmega_{\mathrm{b}i},T_{\mathrm{b}ij}\}$ and $\{g_{\mathrm{b}i},J_{\mathrm{b}i}\}$
the cocycle data of $\{{}^{g,J}\omega_{\mathrm{b}i}r,{}^{g,J}\varOmega_{\mathrm{b}i},T_{\mathrm{b}ij}\}$ read as 
\begin{align}
&{}^{\bar g,\bar J}\bar\omega_i=\Ad\bar g_i(\bar\omega_i)-d_{U_i}\bar g_i\bar g_i{}^{-1}-\dot\tau(\bar J_i),
\vphantom{\Big]}
\label{nadifcoh33}
\\
&{}^{\bar g,\bar J}\bar\varOmega_i=\mu{}\dot{}(\bar g_i,\bar\varOmega_i)
-d_{U_i}\bar J_i-\frac{1}{2}[\bar J_i,\bar J_i]-{}\dot{}\mu{}\dot{}({}^{\bar g,\bar J}\bar\omega_i,\bar J_i),
\vphantom{\Big]}
\label{nadifcoh34}
\end{align} 
\begin{align}
&{}^{\bar g,\bar J}\bar f_{ij}=\bar f_{ij},
\vphantom{\Big]}
\label{nadifcoh35}
\\
&{}^{\bar g,\bar J}\bar F_{ij}=\Ad\bar A_{ij}{}^{-1}(\bar J_i+\mu{}\dot{}(\bar g_i,\bar F_{ij}))
-\mu{}\dot{}(\bar f_{ij},\bar J_j)
\vphantom{\Big]}
\label{nadifcoh36}
\\
&\hspace{4.5cm}{}-\dot{}\mu({}^{\bar g,\bar J}\bar\omega_i,\bar A_{ij}{}^{-1})
-d_{U_i\cap U_j}\bar A_{ij}{}^{-1}\bar A_{ij},
\vphantom{\Big]}
\nonumber
\\
&{}^{\bar g,\bar J}\bar T_{ijk}=\bar T_{ijk}.
\vphantom{\Big]}
\label{nadifcoh37}
\end{align}
\end{prop}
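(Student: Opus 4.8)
The plan is to verify, one after another, the three defining conditions of a differential paracocycle (def. \cref{defi:diffparacocl}) for the triple $\{{}^{g,J}\omega_{\mathrm{b}i},{}^{g,J}\varOmega_{\mathrm{b}i},T_{\mathrm{b}ij}\}$ — including the requirement that the underlying $2$--connection be fake flat — and to read off the cocycle data of the transformed paracocycle as we go. The only genuinely new ingredients beyond the hypotheses are the pull--back data $\bar g_i$, $\bar J_i$ and $\bar A_{ij}$ supplied by the gauge paraequivalence $\{g_{\mathrm{b}i},J_{\mathrm{b}i}\}$ (def. \cref{defi:nadifcoh3}).

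First I would dispose of the $2$--connection data. Inserting the pull--back expressions \ceqref{nadifcoh8}, \ceqref{nadifcoh9}, \ceqref{nadifcoh27}, \ceqref{nadifcoh28} into the gauge transformation relations \ceqref{nadifcoh2/1}, \ceqref{nadifcoh3/1} and using that $\pi^*$ is a homomorphism of internal function algebras which commutes with $d$ and with the pointwise algebraic operations $[-,-]$, $\mu\dot{}$, $\dot\tau$, one reads off at once that ${}^{g,J}\omega_{\mathrm{b}i}=\pi^*({}^{\bar g,\bar J}\bar\omega_i)$ and ${}^{g,J}\varOmega_{\mathrm{b}i}=\pi^*({}^{\bar g,\bar J}\bar\varOmega_i)$, with ${}^{\bar g,\bar J}\bar\omega_i$, ${}^{\bar g,\bar J}\bar\varOmega_i$ given precisely by \ceqref{nadifcoh33}, \ceqref{nadifcoh34}. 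This establishes property $1$ of def. \cref{defi:diffparacocl} for the transformed data and proves \ceqref{nadifcoh33}, \ceqref{nadifcoh34} in one stroke. It remains to observe that ${}^{g,J}\omega_{\mathrm{b}i}$, ${}^{g,J}\varOmega_{\mathrm{b}i}$ are again fake flat: by \ceqref{basic5} the fake flatness condition \ceqref{nadifcoh1} for the local data is the vanishing of the corresponding local curvature component, and by \ceqref{cmgautr3} (cf. prop. \cref{prop:1gautrsfexpl}) this component transforms under a $1$--gauge transformation merely by $\Ad$, so its vanishing is preserved.

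Next I would handle the transition data that stay put. The quasi trivializer $\{T_{\mathrm{b}ij}\}$ is unchanged by hypothesis, and the adapted--coordinate matching functions $f_{\mathrm{b}ij}$, $F_{\mathrm{b}ij}$ of subsect. \cref{subsec:basic} are built from the adapted coordinates alone and are independent of the chosen $2$--connection. Hence property $3$ of def. \cref{defi:diffparacocl}, eq. \ceqref{nadifcoh12}, holds verbatim with the same $\bar T_{ijk}$, giving \ceqref{nadifcoh37}, and the factorization \ceqref{nadifcoh10} is untouched, giving \ceqref{nadifcoh35} (here injectivity of $\pi^*$, which follows from $\pi$ being a surjective submersion, guarantees that $\bar f_{ij}$, $\bar T_{ijk}$ are literally the same functions). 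The substantive step is property $2$ for the $\mathfrak{e}[1]$--valued datum: one must exhibit ${}^{\bar g,\bar J}\bar F_{ij}$ with $F_{\mathrm{b}ij}=\Ad T_{\mathrm{b}ij}(\pi^*({}^{\bar g,\bar J}\bar F_{ij}))-\dot{}\mu(\pi^*({}^{\bar g,\bar J}\bar\omega_i),T_{\mathrm{b}ij})-d_{\pi^{-1}(U_i\cap U_j)}T_{\mathrm{b}ij}T_{\mathrm{b}ij}{}^{-1}$ and identify it with the right hand side of \ceqref{nadifcoh36}. Starting from the old factorization \ceqref{nadifcoh11}, I would use the gauge transformation relation \ceqref{nadifcoh2/1} to trade $\pi^*\bar\omega_i$ for $\pi^*({}^{\bar g,\bar J}\bar\omega_i)$ plus $\bar g_i$-- and $\bar J_i$--dependent terms, push the extra contributions through $\Ad T_{\mathrm{b}ij}$ and $\dot{}\mu(\cdot,T_{\mathrm{b}ij})$ with the crossed module identities of app. B of I, and recognise the combination $\mu(g_{\mathrm{b}i},T_{\mathrm{b}ij}{}^{-1})T_{\mathrm{b}ij}=\pi^*\bar A_{ij}$ of \ceqref{nadifcoh29}; stripping $\pi^*$ via surjectivity of $\pi$ (prop. 3.2 of I) then leaves exactly \ceqref{nadifcoh36}.

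The main obstacle is precisely this last rearrangement. What looks innocuous is a fairly long bookkeeping exercise: one starts from a mixed expression in $T_{\mathrm{b}ij}$, $g_{\mathrm{b}i}$, $J_{\mathrm{b}i}$, $\pi^*\bar\omega_i$ and $\pi^*\bar F_{ij}$, and must reorganise it into a single term of the manifest form $\Ad T_{\mathrm{b}ij}(\pi^*(\cdot))$ together with the canonical connection and Maurer--Cartan terms, in such a way that $\bar A_{ij}$ emerges naturally and the equivariance and Peiffer identities of the crossed module $(\mathsans{E},\mathsans{G})$ collapse the remaining brackets to the shape of \ceqref{nadifcoh36}. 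Everything else in the argument is either a direct pull--back identity or an appeal to the connection--independence of the adapted coordinate data, so this is where all the work lies.
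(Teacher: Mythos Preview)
Your plan matches the paper's own argument almost exactly: pull back \ceqref{nadifcoh2/1}, \ceqref{nadifcoh3/1} via \ceqref{nadifcoh8}, \ceqref{nadifcoh9}, \ceqref{nadifcoh27}, \ceqref{nadifcoh28} to obtain \ceqref{nadifcoh33}, \ceqref{nadifcoh34}; observe that \ceqref{nadifcoh10} and \ceqref{nadifcoh12} are connection--independent to get \ceqref{nadifcoh35}, \ceqref{nadifcoh37}; and then grind through the factorisation \ceqref{nadifcoh11} with the new data, using \ceqref{nadifcoh29} and the crossed--module identities, for \ceqref{nadifcoh36}. Your added remark on fake flatness being preserved (via \ceqref{cmgautr3}) is a useful explicit check that the paper leaves implicit.

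One small point to sharpen: in your sketch for \ceqref{nadifcoh36} you start from the old \ceqref{nadifcoh11} and trade $\pi^*\bar\omega_i$ for $\pi^*({}^{\bar g,\bar J}\bar\omega_i)$ using \ceqref{nadifcoh2/1} and \ceqref{nadifcoh29}. That manoeuvre only feeds in $i$--indexed gauge data, yet the target formula \ceqref{nadifcoh36} contains the $j$--indexed term $\mu\dot{}(\bar f_{ij},\bar J_j)$. To produce it you also need the $J$--matching relation \ceqref{nadifcoh5}, which is exactly the extra ingredient the paper invokes (alongside \ceqref{nadifcoh10}, \ceqref{nadifcoh11}, \ceqref{nadifcoh29}). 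Without \ceqref{nadifcoh5} you would land on an equivalent but differently organised expression; with it the rearrangement closes to the stated form.
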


\begin{proof}
Inserting \ceqref{nadifcoh8}, \ceqref{nadifcoh9} and \ceqref{nadifcoh27}, \ceqref{nadifcoh28}
into \ceqref{nadifcoh2/1}, \ceqref{nadifcoh3/1}, one readily finds that 
${}^{g,J}\omega_{\mathrm{b}i}=\pi^*{}^{\bar g,\bar J}\bar\omega_i$, 
${}^{g,J}\varOmega_{\mathrm{b}i}=\pi^*{}^{\bar g,\bar J}\bar\varOmega_i$ 
with ${}^{\bar g,\bar J}\bar\omega_i$, ${}^{\bar g,\bar J}\bar\varOmega_i$ 
given by \ceqref{nadifcoh33}, \ceqref{nadifcoh34}, respectively. \ceqref{nadifcoh35} is evident 
by relation \ceqref{nadifcoh10} expressing $f_{\mathrm{b}ij}$. To verify \ceqref{nadifcoh36}, 
one has to show that $F_{\mathrm{b}ij}$ can be expressed as in \ceqref{nadifcoh11} with $\bar\omega_i$, 
$\bar F_{ij}$ replaced by ${}^{\bar g,\bar J}\bar\omega_i$, ${}^{\bar g,\bar J}\bar F_{ij}$
as given by \ceqref{nadifcoh33}, \ceqref{nadifcoh36}, respectively. This is straightforward using 
\ceqref{nadifcoh5} together with \ceqref{nadifcoh10}, \ceqref{nadifcoh11} and \ceqref{nadifcoh29}
and the identities of app. B of I. \ceqref{nadifcoh37} is evident 
from relation \ceqref{nadifcoh12}. 
\end{proof}

\noindent
The following proposition describes the global matching of the local data of 
a gauge paraequivalence. 

\begin{prop}
Let $\{g_{\mathrm{b}i},J_{\mathrm{b}i}\}$ be a gauge paraequivalence subordinated 
to the differential paracocycle $\{\omega_{\mathrm{b}i},\varOmega_{\mathrm{b}i},T_{\mathrm{b}ij}\}$. 
Then, 
\begin{align}
&\bar g_i=\tau(\bar A_{ij})\bar f_{ij}\bar g_j\bar f_{ij}{}^{-1},
\vphantom{\Big]}
\label{nadifcoh38}
\\
&\bar J_i=\Ad\bar A_{ij}(\mu{}\dot{}(\bar f_{ij},\bar J_j)+{}^{\bar g,\bar J}\bar F_{ij})
-{}\dot{}\mu({}^{\bar g,\bar J}\bar\omega_i,\bar A_{ij})
\vphantom{\Big]}
\label{nadifcoh39}
\\
&\hspace{5cm}-d_{U_i\cap U_j}\bar A_{ij}\bar A_{ij}{}^{-1}
-\mu{}\dot{}(\bar g_i,\bar F_{ij})
\vphantom{\Big]}
\nonumber
\end{align}
on every non empty intersection $U_i\cap U_j$. Moreover, 
\begin{equation}
\bar A_{ik}=\mu(\bar g_i,\bar T_{ijk})\bar A_{ij}\mu(\bar f_{ij},\bar A_{jk})\bar T_{ijk}{}^{-1}
\label{nadifcoh40}
\end{equation}
on every non empty intersection $U_i\cap U_j\cap U_k$. 
\end{prop}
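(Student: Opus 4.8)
The plan is to descend all three identities from the total space $P$ to the base $M$, exactly along the lines of the proofs of the two preceding propositions: one first establishes each relation with every local internal function of $M$ replaced by its pull--back to $P$, combining the local matching rules \ceqref{nadifcoh4}, \ceqref{nadifcoh5} of the basic $1$--gauge transformation data with the defining relations \ceqref{nadifcoh10}--\ceqref{nadifcoh12}, \ceqref{nadifcoh27}--\ceqref{nadifcoh29} of the differential paracocycle and the subordinated gauge paraequivalence; since $\pi$ is a surjective submersion (cf. prop. 3.2 of I), any identity of the form $\pi^*\bar\tau=0$ forces $\bar\tau=0$, which removes the pull--backs and yields \ceqref{nadifcoh38}--\ceqref{nadifcoh40} on $M$.

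For \ceqref{nadifcoh38} I would start from \ceqref{nadifcoh4}, substitute $g_{\mathrm{b}i}=\pi^*\bar g_i$, $g_{\mathrm{b}j}=\pi^*\bar g_j$ from \ceqref{nadifcoh27} and $f_{\mathrm{b}ij}=\tau(T_{\mathrm{b}ij})\pi^*\bar f_{ij}$ from \ceqref{nadifcoh10}, obtaining that $\pi^*\bar g_i$ is the conjugate of $\pi^*(\bar f_{ij}\bar g_j\bar f_{ij}{}^{-1})$ by $\tau(T_{\mathrm{b}ij})$. Applying $\tau$ to \ceqref{nadifcoh29} and using the crossed module equivariance $\tau(\mu(g,e))=g\tau(e)g^{-1}$ gives $\pi^*\tau(\bar A_{ij})=g_{\mathrm{b}i}\tau(T_{\mathrm{b}ij}){}^{-1}g_{\mathrm{b}i}{}^{-1}\tau(T_{\mathrm{b}ij})$; feeding $g_{\mathrm{b}i}=\pi^*\bar g_i$ from the first step into this identifies $\pi^*\bar g_i$ with $\pi^*\tau(\bar A_{ij})$ times $\pi^*(\bar f_{ij}\bar g_j\bar f_{ij}{}^{-1})$, whence \ceqref{nadifcoh38} after stripping the pull--back.

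Relation \ceqref{nadifcoh39} is the computational core. I would insert into the matching rule \ceqref{nadifcoh5} the expressions \ceqref{nadifcoh28} for $J_{\mathrm{b}i}$, $J_{\mathrm{b}j}$, \ceqref{nadifcoh10} and \ceqref{nadifcoh11} for $f_{\mathrm{b}ij}$ and $F_{\mathrm{b}ij}$, \ceqref{nadifcoh27} for $g_{\mathrm{b}i}$, and \ceqref{nadifcoh29} for the combination $\mu(g_{\mathrm{b}i},T_{\mathrm{b}ij}{}^{-1})T_{\mathrm{b}ij}$, then reorganize the right hand side using repeatedly the crossed module identities collected in app. B of I --- equivariance of $[-,-]$, $\mu\dot{}\,$ and $\dot{}\mu\dot{}\,$ under $\Ad$, the infinitesimal Peiffer relations, and the Leibniz rule for $d_{\pi^{-1}(U_i\cap U_j)}$ applied to the products $\Ad T_{\mathrm{b}ij}(-)$, $\tau(T_{\mathrm{b}ij})$, $\pi^*\bar A_{ij}$. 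The terms $d_{\pi^{-1}(U_i\cap U_j)}T_{\mathrm{b}ij}T_{\mathrm{b}ij}{}^{-1}$ produced by differentiating \ceqref{nadifcoh11} have to be matched against those produced by differentiating \ceqref{nadifcoh29}, and basicness of $T_{\mathrm{b}ij}$ is what makes this bookkeeping consistent. The outcome is $J_{\mathrm{b}i}=\pi^*$ of the right hand side of \ceqref{nadifcoh39}, once the combinations defining ${}^{\bar g,\bar J}\bar\omega_i$ and ${}^{\bar g,\bar J}\bar F_{ij}$ in \ceqref{nadifcoh33} and \ceqref{nadifcoh36} are recognised; a final appeal to surjectivity of $\pi$ concludes. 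This is the step where I expect the real difficulty: it is a long graded crossed module computation in which one must keep careful track of the signs produced by the degree $1$ shift and of which $\mu\dot{}\,$ or $\dot{}\mu\dot{}\,$ slot each factor sits in.

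Finally, \ceqref{nadifcoh40} is purely algebraic. Solving \ceqref{nadifcoh12} for $T_{\mathrm{b}ik}{}^{-1}$ in terms of $\pi^*\bar T_{ijk}$, $T_{\mathrm{b}ij}$ and $\mu(f_{\mathrm{b}ij},T_{\mathrm{b}jk}{}^{-1})$, and substituting this into $\pi^*\bar A_{ik}=\mu(g_{\mathrm{b}i},T_{\mathrm{b}ik}{}^{-1})T_{\mathrm{b}ik}$, I would push $\mu(g_{\mathrm{b}i},-)$ through the product using that it is an automorphism of $\mathsans{E}$ and that $\mu(g_{\mathrm{b}i})\mu(f_{\mathrm{b}ij})=\mu(f_{\mathrm{b}ij})\mu(g_{\mathrm{b}j})$ by \ceqref{nadifcoh4}; then the Peiffer identity turns $\mu(f_{\mathrm{b}ij},-)$ acting on a pull--back into conjugation by $T_{\mathrm{b}ij}$, which cancels, and the combinations $\mu(g_{\mathrm{b}j},T_{\mathrm{b}jk}{}^{-1})T_{\mathrm{b}jk}=\pi^*\bar A_{jk}$ and $\mu(g_{\mathrm{b}i},T_{\mathrm{b}ij}{}^{-1})T_{\mathrm{b}ij}=\pi^*\bar A_{ij}$ emerge by \ceqref{nadifcoh29}, together with $\mu(g_{\mathrm{b}i},\pi^*\bar T_{ijk})=\pi^*\mu(\bar g_i,\bar T_{ijk})$ via \ceqref{nadifcoh27}. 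What is left is $\pi^*$ of the right hand side of \ceqref{nadifcoh40}, which descends to $M$ by surjectivity of $\pi$ as before.
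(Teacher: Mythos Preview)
Your plan is correct and follows essentially the same route as the paper: for \ceqref{nadifcoh38} one substitutes \ceqref{nadifcoh10}, \ceqref{nadifcoh27} into \ceqref{nadifcoh4} and uses \ceqref{nadifcoh29}; for \ceqref{nadifcoh40} one combines \ceqref{nadifcoh4}, \ceqref{nadifcoh10}, \ceqref{nadifcoh12}, \ceqref{nadifcoh27}, \ceqref{nadifcoh29} algebraically; and in all cases one descends via the surjective submersion $\pi$.

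For \ceqref{nadifcoh39} there is a small tactical difference worth noting. In \ceqref{nadifcoh5} the quantity $F_{\mathrm{b}ij}$ appears twice, once as $+F_{\mathrm{b}ij}$ and once as $-\mu\dot{}\,(g_{\mathrm{b}i},F_{\mathrm{b}ij})$. The paper exploits the previous proposition, which shows that $\{{}^{g,J}\omega_{\mathrm{b}i},{}^{g,J}\varOmega_{\mathrm{b}i},T_{\mathrm{b}ij}\}$ is again a differential paracocycle, so that \ceqref{nadifcoh11} holds equally with $\bar\omega_i$, $\bar F_{ij}$ replaced by ${}^{\bar g,\bar J}\bar\omega_i$, ${}^{\bar g,\bar J}\bar F_{ij}$. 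The paper then substitutes this transformed version of \ceqref{nadifcoh11} for the first occurrence of $F_{\mathrm{b}ij}$ and the original \ceqref{nadifcoh11} for the second; with this asymmetric substitution the target form \ceqref{nadifcoh39} falls out almost immediately, and the $T_{\mathrm{b}ij}$--dependent terms cancel without the lengthy bookkeeping you anticipate. Your approach of making a uniform substitution and then recognising the combinations \ceqref{nadifcoh33}, \ceqref{nadifcoh36} at the end is equivalent but more laborious; the paper's shortcut is what makes the computation ``straightforward'' rather than the long graded crossed module calculation you flag as the real difficulty.
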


\begin{proof}
Inserting \ceqref{nadifcoh10} and \ceqref{nadifcoh27}
into \ceqref{nadifcoh4} and rearranging the resulting factors 
in the right hand side using also \ceqref{nadifcoh29}, relation
\ceqref{nadifcoh38} is obtained. To show \ceqref{nadifcoh39}, one substitutes 
\ceqref{nadifcoh10}, \ceqref{nadifcoh11} and \ceqref{nadifcoh27}, 
\ceqref{nadifcoh28} into \ceqref{nadifcoh5}. In the first 
insertion of \ceqref{nadifcoh11}, one expresses $F_{\mathrm{b}ij}$ 
in terms of ${}^{\bar g,\bar J}\bar\omega_i$, ${}^{\bar g,\bar J}\bar F_{ij}$;
in the second, one writes $F_{\mathrm{b}ij}$ through $\bar\omega_i$, $\bar F_{ij}$.
Use of the identities of app. B of I leads to \ceqref{nadifcoh39} straightforwardly. 
\ceqref{nadifcoh40} follows from combining \ceqref{nadifcoh4}, 
\ceqref{nadifcoh10}, \ceqref{nadifcoh12}, \ceqref{nadifcoh27}, \ceqref{nadifcoh29}
trough a simple algebraic computation. 
\end{proof}

\noindent
We note that eq. \ceqref{nadifcoh39} is an equivalent rewriting of eq. \ceqref{nadifcoh36}. 
However, we deduced  \ceqref{nadifcoh39} from \ceqref{nadifcoh5} by suitably expressing 
the latter relation in terms of barred objects. So, eq. \ceqref{nadifcoh39} does not constitute  
anything new, but it merely shows the consistency of eqs. \ceqref{nadifcoh5} and \ceqref{nadifcoh36}. 


\begin{defi}\label{defi:nadifcoh2}
Two pairs of differential paracocycles and subordinated gauge paraequivalences 
$\{\omega_{\mathrm{b}i},\varOmega_{\mathrm{b}i},T_{\mathrm{b}ij}\}$, 
$\{g_{\mathrm{b}i},J_{\mathrm{b}i}\}$,
$\{\tilde\omega_{\mathrm{b}i},\tilde\varOmega_{\mathrm{b}i},\tilde T_{\mathrm{b}ij}\}$,
$\{\tilde g_{\mathrm{b}i},\tilde J_{\mathrm{b}i}\}$
are equivalent if $\{\omega_{\mathrm{b}i},\varOmega_{\mathrm{b}i},T_{\mathrm{b}ij}\}$,
$\{\tilde\omega_{\mathrm{b}i},\tilde\varOmega_{\mathrm{b}i},\tilde T_{\mathrm{b}ij}\}$ are 
equivalent differential paracocycles 
and furthermore for every set $U_i$ 
\begin{align}
&\tilde g_{\mathrm{b}i}=g_{\mathrm{b}i},
\vphantom{\Big]}
\label{nadifcoh30}
\\
&\tilde J_{\mathrm{b}i}=J_{\mathrm{b}i}. 
\vphantom{\Big]}
\label{nadifcoh31}
\end{align}
\end{defi}

\noindent
Equivalence of differential paracocycle and subordinated gauge paraequivalence pairs 
 is manifestly an equivalence relation as suggested by its name. 

\begin{prop}
If $\{\omega_{\mathrm{b}i},\varOmega_{\mathrm{b}i},T_{\mathrm{b}ij}\}$,
$\{g_{\mathrm{b}i},J_{\mathrm{b}i}\}$,
$\{\tilde\omega_{\mathrm{b}i},\tilde\varOmega_{\mathrm{b}i},\tilde T_{\mathrm{b}ij}\}$,
$\{\tilde g_{\mathrm{b}i},\tilde J_{\mathrm{b}i}\}$
are equivalent pairs of differential paracocycles and subordinated gauge paraequivalences, then identities 
\ceqref{nadifcoh22}--\ceqref{nadifcoh26} hold and moreover
\begin{align}
&\bar{\tilde g}_i=\bar g_i,
\vphantom{\Big]}
\label{nadifcoh32}
\\
&\bar{\tilde J}_i=\bar J_i
\vphantom{\Big]}
\label{nadifcoh41}
\end{align}
on each set $U_i$ and \hphantom{xxxxxxxxxx}
\begin{equation}
\bar{\tilde A}_{ij}=\mu(\bar g_i,\bar T_{ij})\bar A_{ij}\bar T_{ij}{}^{-1}
\label{nadifcoh42}
\end{equation}
on every non empty intersection $U_i\cap U_j$.
\end{prop}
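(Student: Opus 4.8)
The plan is to split the claim into three batches of identities and to reduce each batch to an equality of internal functions on a portion of $P$ that is manifestly a pull--back via $\pi$ of a relation on $M$; since $\pi$ is a surjective submersion (cf. prop. 3.2 of I), the pull--back map $\pi^*$ on internal functions is injective, and every such equality descends to the asserted relation on the base. No analytic work beyond this descent is required.

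First I would dispose of \ceqref{nadifcoh22}--\ceqref{nadifcoh26}: by def. \cref{defi:nadifcoh2}, equivalence of the two pairs of differential paracocycles and subordinated gauge paraequivalences includes in particular equivalence of the underlying differential paracocycles $\{\omega_{\mathrm{b}i},\varOmega_{\mathrm{b}i},T_{\mathrm{b}ij}\}$ and $\{\tilde\omega_{\mathrm{b}i},\tilde\varOmega_{\mathrm{b}i},\tilde T_{\mathrm{b}ij}\}$ in the sense of def. \cref{defi:nadifcoh1}, so prop. \cref{prop:nadifcoh1} applies verbatim and already delivers \ceqref{nadifcoh22}--\ceqref{nadifcoh26}. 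Next, for \ceqref{nadifcoh32}, \ceqref{nadifcoh41} I would combine the defining relations \ceqref{nadifcoh27}, \ceqref{nadifcoh28} of the paraequivalence $\{g_{\mathrm{b}i},J_{\mathrm{b}i}\}$, namely $g_{\mathrm{b}i}=\pi^*\bar g_i$, $J_{\mathrm{b}i}=\pi^*\bar J_i$, with their tilded counterparts and the equivalence conditions \ceqref{nadifcoh30}, \ceqref{nadifcoh31}; this gives $\pi^*\bar{\tilde g}_i=\pi^*\bar g_i$ and $\pi^*\bar{\tilde J}_i=\pi^*\bar J_i$ on each $U_i$, whence \ceqref{nadifcoh32}, \ceqref{nadifcoh41} by injectivity of $\pi^*$.

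The only computation with any content is \ceqref{nadifcoh42}. I would evaluate the tilded version of the defining relation \ceqref{nadifcoh29}, namely $\mu(\tilde g_{\mathrm{b}i},\tilde T_{\mathrm{b}ij}{}^{-1})\tilde T_{\mathrm{b}ij}=\pi^*\bar{\tilde A}_{ij}$, substituting $\tilde g_{\mathrm{b}i}=g_{\mathrm{b}i}$ from \ceqref{nadifcoh30} and $\tilde T_{\mathrm{b}ij}=T_{\mathrm{b}ij}\pi^*\bar T_{ij}{}^{-1}$ from \ceqref{nadifcoh21}, so that $\tilde T_{\mathrm{b}ij}{}^{-1}=\pi^*\bar T_{ij}\,T_{\mathrm{b}ij}{}^{-1}$. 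Using that $\mu(g_{\mathrm{b}i},-)$ is a group homomorphism of $\mathsans{E}$ which, by $g_{\mathrm{b}i}=\pi^*\bar g_i$, is intertwined by $\pi^*$ with $\mu(\bar g_i,-)$ (identities of app. B of I), the left hand side reorganizes as $\pi^*\mu(\bar g_i,\bar T_{ij})\cdot\bigl(\mu(g_{\mathrm{b}i},T_{\mathrm{b}ij}{}^{-1})T_{\mathrm{b}ij}\bigr)\cdot\pi^*\bar T_{ij}{}^{-1}$; substituting the untilded \ceqref{nadifcoh29} for the bracketed middle factor collapses this to $\pi^*\bigl(\mu(\bar g_i,\bar T_{ij})\bar A_{ij}\bar T_{ij}{}^{-1}\bigr)$, and a final application of injectivity of $\pi^*$ produces \ceqref{nadifcoh42}.

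I do not expect a real obstacle here: the descent is uniformly handled by the surjectivity of the submersion $\pi$, and the only point demanding care is the bookkeeping of the crossed module action $\mu$ when it is moved past products and past $\pi^*$, which is entirely dictated by the identities collected in app. B of I. As a consistency check one can also verify that \ceqref{nadifcoh42} is compatible with the tilded and untilded forms of \ceqref{nadifcoh40}, but this is not needed for the argument.
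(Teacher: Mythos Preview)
Your proof is correct and follows essentially the same route as the paper's own argument: invoke prop. \cref{prop:nadifcoh1} for \ceqref{nadifcoh22}--\ceqref{nadifcoh26}, combine \ceqref{nadifcoh27}, \ceqref{nadifcoh28} with their tilded analogues and \ceqref{nadifcoh30}, \ceqref{nadifcoh31} for \ceqref{nadifcoh32}, \ceqref{nadifcoh41}, and derive \ceqref{nadifcoh42} from the tilded and untilded \ceqref{nadifcoh29} together with \ceqref{nadifcoh21}. The only difference is that you spell out the descent via injectivity of $\pi^*$ and the bookkeeping of $\mu$ in the last step more explicitly than the paper does.
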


\begin{proof}
Since $\{\omega_{\mathrm{b}i},\varOmega_{\mathrm{b}i},T_{\mathrm{b}ij}\}$,
$\{\tilde\omega_{\mathrm{b}i},\tilde\varOmega_{\mathrm{b}i},\tilde T_{\mathrm{b}ij}\}$
are equivalent differential cocycles according to def. \cref{defi:nadifcoh2}, 
eqs. \ceqref{nadifcoh22}--\ceqref{nadifcoh26} hold by virtue of prop. \cref{prop:nadifcoh1}. 
Relations \ceqref{nadifcoh32}, \ceqref{nadifcoh41} are an immediate consequence of 
\ceqref{nadifcoh30}, \ceqref{nadifcoh31} and \ceqref{nadifcoh27}, \ceqref{nadifcoh28} and 
their tilded analogues. \ceqref{nadifcoh42} follows form \ceqref{nadifcoh29} and its tilded form
and \ceqref{nadifcoh21}. 
\end{proof}

Gauge paraequivalences subordinated to the same differential paracocycle  form a group. 

\begin{prop}
The gauge paraequivalences $\{g_{\mathrm{b}i},J_{\mathrm{b}i}\}$ subordinated to a fixed differential 
paracocycle $\{\omega_{\mathrm{b}i},\varOmega_{\mathrm{b}i},T_{\mathrm{b}ij}\}$ constitute a subgroup 
of the $1$--gauge group.
\end{prop}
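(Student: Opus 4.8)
The plan is to observe first that, by def. \cref{defi:nadifcoh3}, a gauge paraequivalence subordinated to $\{\omega_{\mathrm{b}i},\varOmega_{\mathrm{b}i},T_{\mathrm{b}ij}\}$ is simply a $1$--gauge transformation $\{g_{\mathrm{b}i},J_{\mathrm{b}i}\}$ whose local basic data obey in addition the descent conditions \ceqref{nadifcoh27}--\ceqref{nadifcoh29}; hence the set of all such objects sits inside the $1$--gauge group from the outset, and only the three subgroup axioms — containing the identity, closure under the group product, closure under inversion — have to be checked. Throughout I would use that the basic projected components of a product, resp. an inverse, of $1$--gauge transformations are given patchwise by the semidirect product, resp. inverse, law of $\iMap(T[1]\pi^{-1}(U_i),\mathfrak{e}[1]\rtimes_{\mu\dot{}}\mathsans{G})$: this holds because, by \ceqref{sfbasic5}, \ceqref{sfbasic6}, passing to basic components on $\pi^{-1}(U_i)$ is conjugation by the fixed group element $\varLambda_i$, hence an automorphism of $\iMap(T[1]\pi^{-1}(U_i),\DD\mathsans{M})$, so that the product and the inverse carry projected basic components $\big(g_{\mathrm{b}i}g'_{\mathrm{b}i},\,J_{\mathrm{b}i}+\mu\dot{}\,(g_{\mathrm{b}i},J'_{\mathrm{b}i})\big)$ and $\big(g_{\mathrm{b}i}{}^{-1},\,-\mu\dot{}\,(g_{\mathrm{b}i}{}^{-1},J_{\mathrm{b}i})\big)$. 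I would also use that the structure maps $\mu$, $\mu\dot{}$ are natural, so that the restriction morphism $\pi^*$ intertwines them, and that $\mu(g,-)$ is a group automorphism of $\mathsans{E}$.

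For the identity I would take $g_{\mathrm{b}i}=1_{\mathsans{G}}$, $J_{\mathrm{b}i}=0$; then $\bar g_i=1_{\mathsans{G}}$, $\bar J_i=0$ meet \ceqref{nadifcoh27}, \ceqref{nadifcoh28}, while $\mu(1_{\mathsans{G}},T_{\mathrm{b}ij}{}^{-1})T_{\mathrm{b}ij}=T_{\mathrm{b}ij}{}^{-1}T_{\mathrm{b}ij}=1_{\mathsans{E}}=\pi^*1_{\mathsans{E}}$ meets \ceqref{nadifcoh29} with $\bar A_{ij}=1_{\mathsans{E}}$. For the product of two gauge paraequivalences with barred data $(\bar g_i,\bar J_i,\bar A_{ij})$ and $(\bar g'_i,\bar J'_i,\bar A'_{ij})$, conditions \ceqref{nadifcoh27}, \ceqref{nadifcoh28} follow at once from naturality of the group law under $\pi^*$, with barred data $\bar g_i\bar g'_i$ and $\bar J_i+\mu\dot{}\,(\bar g_i,\bar J'_i)$; for \ceqref{nadifcoh29} I would use $\mu(g_1g_2,-)=\mu(g_1,\mu(g_2,-))$ and the homomorphism property of $\mu(g,-)$ to write
\begin{align*}
\mu(g_{\mathrm{b}i}g'_{\mathrm{b}i},T_{\mathrm{b}ij}{}^{-1})T_{\mathrm{b}ij}
&=\mu\big(g_{\mathrm{b}i},\,\mu(g'_{\mathrm{b}i},T_{\mathrm{b}ij}{}^{-1})T_{\mathrm{b}ij}\big)\,\mu(g_{\mathrm{b}i},T_{\mathrm{b}ij}{}^{-1})T_{\mathrm{b}ij}\\
&=\mu(g_{\mathrm{b}i},\pi^*\bar A'_{ij})\,\pi^*\bar A_{ij}=\pi^*\!\big(\mu(\bar g_i,\bar A'_{ij})\,\bar A_{ij}\big),
\end{align*}
so that \ceqref{nadifcoh29} holds for the product with barred datum $\mu(\bar g_i,\bar A'_{ij})\bar A_{ij}$.

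For the inverse, conditions \ceqref{nadifcoh27}, \ceqref{nadifcoh28} are again immediate by naturality, with barred data $\bar g_i{}^{-1}$ and $-\mu\dot{}\,(\bar g_i{}^{-1},\bar J_i)$. Applying $\mu(g_{\mathrm{b}i}{}^{-1},-)$ to the relation \ceqref{nadifcoh29} for $\{g_{\mathrm{b}i},J_{\mathrm{b}i}\}$ and using once more that $\mu(g,-)$ is an automorphism of $\mathsans{E}$ gives $\mu(g_{\mathrm{b}i}{}^{-1},T_{\mathrm{b}ij}{}^{-1})T_{\mathrm{b}ij}=\pi^*\!\big(\mu(\bar g_i{}^{-1},\bar A_{ij}{}^{-1})\big)$, i.e. \ceqref{nadifcoh29} for the inverse with barred datum $\mu(\bar g_i{}^{-1},\bar A_{ij}{}^{-1})$. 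Since the transition data $T_{\mathrm{b}ij}$ are untouched throughout, identity, products and inverses of gauge paraequivalences subordinated to the fixed differential paracocycle are again such, which proves the subgroup property.

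The only step requiring any care is the bookkeeping in the last two paragraphs showing that condition \ceqref{nadifcoh29} is stable under the group operations; its sole non-formal ingredients are the naturality of $\mu$ and $\mu\dot{}$ under the pull--back $\pi^*$ and the fact that $\mu(g,-)\in\Aut(\mathsans{E})$, everything else being an immediate consequence of the semidirect product structure of the $1$--gauge group.
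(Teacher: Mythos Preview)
Your argument is correct and follows essentially the same route as the paper's own proof: both verify the subgroup conditions by writing the basic projected components of the product and the inverse via the semidirect product law of $\mathfrak{e}[1]\rtimes_{\mu\dot{}}\mathsans{G}$ and then checking that \ceqref{nadifcoh27}--\ceqref{nadifcoh29} are preserved, arriving at exactly the same barred data $\mu(\bar g_i,\bar A'_{ij})\bar A_{ij}$ for the product and $\mu(\bar g_i{}^{-1},\bar A_{ij}{}^{-1})$ for the inverse. Your treatment is slightly more detailed (you explain why conjugation by $\varLambda_i$ transports the group law to the basic components, and you spell out the identity element and the manipulation behind \ceqref{nadifcoh29}), but the substance is the same.
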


\begin{proof}
Suppose that $\{g_{1\mathrm{b}i},J_{1\mathrm{b}i}\}$, $\{g_{2\mathrm{b}i},J_{2\mathrm{b}i}\}$ are 
gauge paraequivalences subordinated to $\{\omega_{\mathrm{b}i},\varOmega_{\mathrm{b}i},T_{\mathrm{b}ij}\}$ 
and that $\{g_{3\mathrm{b}i},J_{3\mathrm{b}i}\}$ is their product as $1$--gauge transformations, 
so that $g_{3\mathrm{b}i}=g_{2\mathrm{b}i}g_{1\mathrm{b}i}$, 
$J_{3\mathrm{b}i}=J_{2\mathrm{b}i}+\mu{}\dot{}(g_{2\mathrm{b}i},J_{1\mathrm{b}i})$.
Then, $g_{3\mathrm{b}i}$, $J_{3\mathrm{b}i}$ satisfy \ceqref{nadifcoh27}--\ceqref{nadifcoh29} too
with $\bar g_{3i}=\bar g_{2i}\bar g_{1i}$, $\bar J_{3i}=\bar J_{2i}+\mu{}\dot{}(\bar g_{2i},\bar J_{1i})$
and $\bar A_{3ij}=\mu(\bar g_{2i},\bar A_{1ij})\bar A_{2ij}$. Similarly, suppose that 
$\{g_{1\mathrm{b}i},J_{1\mathrm{b}i}\}$ is a 
gauge paraequivalence subordinated to $\{\omega_{\mathrm{b}i},\varOmega_{\mathrm{b}i},T_{\mathrm{b}ij}\}$ 
and that $\{g_{2\mathrm{b}i},J_{2\mathrm{b}i}\}$ is its inverse as a $1$--gauge transformation,   
so that $g_{2\mathrm{b}i}=g_{1\mathrm{b}i}{}^{-1}$, 
$J_{2\mathrm{b}i}=-\mu{}\dot{}(g_{1\mathrm{b}i},J_{1\mathrm{b}i})$.
Then, $g_{2\mathrm{b}i}$, $J_{2\mathrm{b}i}$ satisfies \ceqref{nadifcoh27}--\ceqref{nadifcoh29} too 
with $\bar g_{2i}=\bar g_{1i}{}^{-1}$, $\bar J_{2i}=-\mu{}\dot{}(\bar g_{1i}{}^{-1},\bar J_{1i})$
and $\bar A_{2ij}=\mu(\bar g_{1i}{}^{-1},\bar A_{1ij}{}^{-1})$. This is enough to show the proposition.
\end{proof}

We assume now that for each set $U_i$ of the covering 
the adapted coordinates $\gamma_i$, $\varGamma_i$ can be chosen to be special
(cf. def. \cref{defi:specadpcoo}). Then, by \ceqref{bmaurer15}  
\begin{equation}
I_i{}^*\varGamma_i=0,
\label{nadifcoh43}
\end{equation}
where $I_i:\pi_0{}^{-1}(U_i)\rightarrow\pi^{-1}(U_i)$ is the injection map.

\begin{prop}
The basic matching data $F_{\mathrm{b}ij}$ satisfy 
\begin{equation}
I_{ij}{}^*F_{\mathrm{b}ij}=0
\label{nadifcoh44}
\end{equation}
for each non empty intersection $U_i\cap U_j$
\end{prop}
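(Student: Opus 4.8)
The plan is to trace the definition of $F_{\mathrm{b}ij}$ back to the special adapted coordinates and apply the restriction morphism. Recall from \ceqref{2basic7} that $F_{\mathrm{b}ij}=\varGamma_j'-\mu{}\dot{}(\gamma_j'\gamma_i^{-1},\varGamma_i)$, where here I write $\varGamma_i$ for the $\mathfrak{e}[1]$--component of the adapted coordinates on $\pi^{-1}(U_i)$ and $\varGamma_j'$ for that on $\pi^{-1}(U_j)$ (restricted to $T[1]\pi^{-1}(U_i\cap U_j)$ as understood in the proposition preceding \ceqref{2basic9}). The key point is that $I_{ij}{}^*$ is the restriction morphism associated with the inclusion $I_{ij}:\pi_0{}^{-1}(U_i\cap U_j)\rightarrow\pi^{-1}(U_i\cap U_j)$, which is an algebra homomorphism; in particular it is multiplicative, commutes with the crossed--module action $\mu\dot{}$, and sends $\varGamma_i|_{T[1]\pi^{-1}(U_i\cap U_j)}$ to $I_i{}^*\varGamma_i$ composed with the further restriction to $U_i\cap U_j$, and likewise for $\varGamma_j'$.

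First I would note that the specialty hypothesis \ceqref{bmaurer15} (equivalently \ceqref{nadifcoh43}) gives $I_i{}^*\varGamma_i=0$ and its primed counterpart $I_j{}^*\varGamma_j'=0$, and that these persist upon further restriction to the smaller neighborhood $U_i\cap U_j$ (restriction morphisms compose). Then I would apply $I_{ij}{}^*$ to \ceqref{2basic7}: since $I_{ij}{}^*$ is an algebra morphism respecting $\mu\dot{}$, one gets
\begin{equation}
I_{ij}{}^*F_{\mathrm{b}ij}=I_{ij}{}^*\varGamma_j'-\mu{}\dot{}\,(I_{ij}{}^*(\gamma_j'\gamma_i^{-1}),I_{ij}{}^*\varGamma_i).
\nonumber
\end{equation}
Both $I_{ij}{}^*\varGamma_j'$ and $I_{ij}{}^*\varGamma_i$ vanish by the restricted form of the specialty conditions, so the right-hand side is $0$, which is exactly \ceqref{nadifcoh44}. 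This is essentially the same mechanism as the proof of \ceqref{propbas5}, \ceqref{propbas6} given earlier via \ceqref{2basic7}, \ceqref{2basic9}, now applied at the level of the covering.

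There is no real obstacle here; the only point requiring a word of care is the compatibility of the various restriction morphisms — one must check that $I_{ij}{}^*$ restricted from $\pi^{-1}(U_i\cap U_j)$ matches the composition of $I_i{}^*$ (on $\pi^{-1}(U_i)$) with the restriction to the overlap, so that $I_i{}^*\varGamma_i=0$ does indeed imply $I_{ij}{}^*(\varGamma_i|_{T[1]\pi^{-1}(U_i\cap U_j)})=0$. This is a formal consequence of the functoriality of the operation-morphism assignment recalled in subsect. 3.7--3.8 of I (the inclusions $\pi_0{}^{-1}(U_i\cap U_j)\hookrightarrow\pi_0{}^{-1}(U_i)\hookrightarrow\pi^{-1}(U_i)$ and $\pi_0{}^{-1}(U_i\cap U_j)\hookrightarrow\pi^{-1}(U_i\cap U_j)\hookrightarrow\pi^{-1}(U_i)$ give the same composite). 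With that in hand the argument is a one-line substitution, so I would present it as such.
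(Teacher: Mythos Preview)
Your proof is correct and follows essentially the same approach as the paper's own proof, which is a one-line appeal to \ceqref{2basic7} and \ceqref{nadifcoh43}. Your additional remark about the compatibility of the restriction morphisms is a reasonable bit of care, though the paper leaves this implicit.
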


\noindent
Above, $I_{ij}:\pi_0{}^{-1}(U_i\cap U_j)\rightarrow\pi^{-1}(U_i\cap U_j)$ is the injection map.

\begin{proof}
Eq. \ceqref{nadifcoh44} follows immediately from \ceqref{2basic7} and \ceqref{nadifcoh43}.
\end{proof}


\begin{prop}
If $\{\omega_{\mathrm{b}i}, \varOmega_{\mathrm{b}i},T_{\mathrm{b}ij}\}$ 
is a differential paracocycle, then for each non empty intersection $U_i\cap U_j$
\begin{align}
&\Ad I_{ij}{}^*T_{\mathrm{b}ij}(\pi_0{}^*\bar F_{ij})-{}\dot{}\mu(\pi_0{}^*\bar\omega_i,I_{ij}{}^*T_{\mathrm{b}ij})
\vphantom{\Big]}
\label{nadifcoh45}
\\
&\hspace{4cm}-d_{\pi_0{}^{-1}(U_i\cap U_j)}I_{ij}{}^*T_{\mathrm{b}ij}I_{ij}{}^*T_{\mathrm{b}ij}{}^{-1}=0.
\vphantom{\Big]}
\nonumber
\end{align}
\end{prop}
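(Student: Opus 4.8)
The plan is to obtain \ceqref{nadifcoh45} simply by applying the restriction morphism $I_{ij}{}^*$ to the defining relation \ceqref{nadifcoh11} of the differential paracocycle and then invoking the vanishing result of the preceding proposition, eq. \ceqref{nadifcoh44}. Concretely, $I_{ij}{}^*:\iFun(T[1]\pi^{-1}(U_i\cap U_j))\rightarrow\iFun(T[1]\pi_0{}^{-1}(U_i\cap U_j))$ is the restriction morphism associated with the inclusion $I_{ij}:\pi_0{}^{-1}(U_i\cap U_j)\rightarrow\pi^{-1}(U_i\cap U_j)$, a component of the operation morphism $\iOOO L$ of subsect. 3.8 of I. As the pull--back by a smooth map, it is a homomorphism of the internal function algebras, it commutes with the de Rham differential $d$, and by naturality of the crossed module constructions it intertwines the operations $\Ad$, $\mu$, ${}\dot{}\mu$, the group product and the group inverse. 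Moreover, since $\pi$ and $\pi_0$ are compatible projections and $I_{ij}$ is a restriction of the canonical inclusion $I:P_0\rightarrow P$, one has $\pi\circ I_{ij}=\pi_0$ and hence $I_{ij}{}^*\circ\pi^*=\pi_0{}^*$.

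First I would apply $I_{ij}{}^*$ term by term to the right hand side of \ceqref{nadifcoh11}. Using the properties just listed, $I_{ij}{}^*\Ad T_{\mathrm{b}ij}(\pi^*\bar F_{ij})=\Ad I_{ij}{}^*T_{\mathrm{b}ij}(\pi_0{}^*\bar F_{ij})$, $I_{ij}{}^*{}\dot{}\mu(\pi^*\bar\omega_i,T_{\mathrm{b}ij})={}\dot{}\mu(\pi_0{}^*\bar\omega_i,I_{ij}{}^*T_{\mathrm{b}ij})$, and $I_{ij}{}^*\big(d_{\pi^{-1}(U_i\cap U_j)}T_{\mathrm{b}ij}T_{\mathrm{b}ij}{}^{-1}\big)=d_{\pi_0{}^{-1}(U_i\cap U_j)}I_{ij}{}^*T_{\mathrm{b}ij}I_{ij}{}^*T_{\mathrm{b}ij}{}^{-1}$. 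Thus the image under $I_{ij}{}^*$ of the right hand side of \ceqref{nadifcoh11} is precisely the left hand side of the claimed identity \ceqref{nadifcoh45}.

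On the other hand, applying $I_{ij}{}^*$ to the left hand side of \ceqref{nadifcoh11} yields $I_{ij}{}^*F_{\mathrm{b}ij}$, which vanishes by the preceding proposition, eq. \ceqref{nadifcoh44} — this being where the standing assumption that the adapted coordinates $\gamma_i,\varGamma_i$ are special, cf. \ceqref{nadifcoh43}, enters. Equating the two computations gives \ceqref{nadifcoh45}, and both sides are manifestly $\mathfrak{e}[1]$--valued internal functions on $T[1]\pi_0{}^{-1}(U_i\cap U_j)$, as required.

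There is no genuine obstacle here: the proposition is a formal consequence of \ceqref{nadifcoh11}, \ceqref{nadifcoh44} and the functoriality of the operation restriction morphism. The only point needing care is checking that $I_{ij}{}^*$ really commutes with all the structure it is fed here — the de Rham differential, the crossed module operations $\Ad,\mu,{}\dot{}\mu$, group inversion, and the base pull--backs $\pi^*$ versus $\pi_0{}^*$ — but each of these compatibilities is already established in I (it is the content of $I_{ij}{}^*$ being a morphism of operations together with $\pi\circ I_{ij}=\pi_0$), so the verification is immediate. All the substantive work has already been done in proving \ceqref{nadifcoh44}.
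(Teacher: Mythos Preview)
Your proof is correct and follows exactly the same route as the paper, which simply states that \ceqref{nadifcoh45} is a direct consequence of \ceqref{nadifcoh11} and \ceqref{nadifcoh44}. You have merely spelled out the functoriality properties of $I_{ij}{}^*$ and the identity $\pi\circ I_{ij}=\pi_0$ that make this direct consequence work.
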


\begin{proof}
Eq. \ceqref{nadifcoh45} is a direct consequence of  \ceqref{nadifcoh11} and \ceqref{nadifcoh44}. 
\end{proof}

\begin{defi} \label{defi:specdifparcl}
A differential paracocycle $\{\omega_{\mathrm{b}i}, \varOmega_{\mathrm{b}i},T_{\mathrm{b}ij}\}$ 
is said to be special if the underlying $2$--connection is special (cf. def. \cref{defi:prop2conn}).
\end{defi}

\noindent
By \ceqref{propbas2}, then, in each set $U_i$ 
\begin{equation}
I_i{}^*\varOmega_{\mathrm{b}i}=0. 
\label{nadifcoh46}
\end{equation}
We note that by \ceqref{nadifcoh3} the condition of specialty is globally consistent if 
\ceqref{nadifcoh44} holds. 

\begin{prop}
If the differential paracocycle $\{\omega_{\mathrm{b}i}, \varOmega_{\mathrm{b}i},T_{\mathrm{b}ij}\}$ 
is special, then 
\begin{equation}
\bar\varOmega_i=0
\label{nadifcoh47}
\end{equation}
in each set $U_i$.
\end{prop}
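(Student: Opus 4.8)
The plan is to pull the defining relation $\varOmega_{\mathrm{b}i}=\pi^*\bar\varOmega_i$ of eq. \ceqref{nadifcoh9} back along the injection $I_i:\pi_0{}^{-1}(U_i)\rightarrow\pi^{-1}(U_i)$ and then argue injectivity of pull--back by a surjective submersion, exactly as in the proof of the proposition relating a differential paracocycle to a genuine differential cocycle. First I would use that $I_i$ intertwines the projections, $\pi\circ I_i=\pi_0$ on $\pi_0{}^{-1}(U_i)$ (a part of the synthetic setup of subsect. 3.2 of I, where $\pi$, $\pi_0$ form a functor), so that applying $I_i{}^*$ to \ceqref{nadifcoh9} and invoking functoriality of pull--back gives $I_i{}^*\varOmega_{\mathrm{b}i}=(\pi\circ I_i){}^*\bar\varOmega_i=\pi_0{}^*\bar\varOmega_i$.

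Next, since the paracocycle is special, the underlying $2$--connection is special by def. \cref{defi:specdifparcl} together with def. \cref{defi:prop2conn}, and the adapted coordinates entering the construction of $\varOmega_{\mathrm{b}i}$ in \ceqref{basic2} have been taken special as well; hence \ceqref{propbas1}, recorded here as \ceqref{nadifcoh46}, yields $I_i{}^*\varOmega_{\mathrm{b}i}=0$. Combined with the identification of the previous step this gives $\pi_0{}^*\bar\varOmega_i=0$. Finally I would conclude exactly as in the cocycle proposition above: $\pi_0$ is a surjective submersion, the local model of $\pi_0$ over $U_i$ being the projection $U_i\times\DD\mathsans{M}_0\rightarrow U_i$, so $\pi_0{}^*$ is injective on internal functions over $U_i$, and therefore $\bar\varOmega_i=0$, which is \ceqref{nadifcoh47}.

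The only point requiring any care is this last one, namely that pull--back along $\pi_0$ (restricted over $U_i$) is injective on the relevant internal functions, i.e. that $\pi_0$ is a surjective submersion in the synthetic sense. This is the object--manifold analogue of the property of $\pi$ invoked repeatedly in the preceding proofs and follows from the local trivializability of $P_0$; granting it, the argument is a one--line diagram chase through \ceqref{nadifcoh9} and \ceqref{nadifcoh46} and does not involve any of the algebraic identities of app. B of I.
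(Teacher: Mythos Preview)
Your argument is correct and is essentially identical to the paper's own proof: pull back \ceqref{nadifcoh9} along $I_i$, use $\pi\circ I_i=\pi_0|_{\pi_0{}^{-1}(U_i)}$ together with \ceqref{nadifcoh46} to get $\pi_0{}^*\bar\varOmega_i=0$, and conclude by the surjective--submersion property of $\pi_0$ (prop.~3.2 of I). The only difference is that you spell out more explicitly why \ceqref{nadifcoh46} applies and why $\pi_0{}^*$ is injective, but the logical content is the same.
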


\begin{proof}
By virtue of \ceqref{nadifcoh9} and the relation $\pi\circ I_i=\pi_0|_{\pi_0{}^{-1}(U_i)}$,  
\ceqref{nadifcoh46} implies that  $0=I_i{}^*\pi^*\bar\varOmega_i=\pi_0{}^*\bar\varOmega_i$. 
Since $\pi_0$ is a surjective submersion (cf. prop. 3.2 of I), \ceqref{nadifcoh47} holds. 
\end{proof}

\begin{prop}
If the differential paracocycle $\{\omega_{\mathrm{b}i}, \varOmega_{\mathrm{b}i},T_{\mathrm{b}ij}\}$ 
is special, so is any other paracocycle 
$\{\tilde\omega_{\mathrm{b}i},\tilde\varOmega_{\mathrm{b}i},\tilde T_{\mathrm{b}ij}\}$ equivalent to it. 
\end{prop}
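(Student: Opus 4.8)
The plan is to note that paracocycle equivalence, as encoded in def. \cref{defi:nadifcoh1}, pins down the local $2$--connection data and lets only the quasi trivializer vary, whereas specialty (def. \cref{defi:specdifparcl}) is a condition on the underlying $2$--connection alone; consequently the assertion should come out as an immediate consequence of the two definitions, with essentially no computation.

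Concretely, I would first recall that equivalence of the paracocycles $\{\omega_{\mathrm{b}i},\varOmega_{\mathrm{b}i},T_{\mathrm{b}ij}\}$ and $\{\tilde\omega_{\mathrm{b}i},\tilde\varOmega_{\mathrm{b}i},\tilde T_{\mathrm{b}ij}\}$ entails, by \ceqref{nadifcoh19} and \ceqref{nadifcoh20}, that $\tilde\omega_{\mathrm{b}i}=\omega_{\mathrm{b}i}$ and $\tilde\varOmega_{\mathrm{b}i}=\varOmega_{\mathrm{b}i}$ on every covering set $U_i$, the two paracocycles differing only through their quasi trivializers via $\tilde T_{\mathrm{b}ij}=T_{\mathrm{b}ij}\pi^*\bar T_{ij}{}^{-1}$. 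Since the local basic connection and curvature data $\{\omega_{\mathrm{b}i},\varOmega_{\mathrm{b}i}\}$ are obtained from the global $2$--connection components $\omega$, $\varOmega$ through the invertible relations \ceqref{basic1}, \ceqref{basic2} (the remaining components $\theta_{\mathrm{b}i}$, $\varTheta_{\mathrm{b}i}$ being then determined by \ceqref{basic5}, \ceqref{basic6}), these equalities mean that the two paracocycles share one and the same underlying fake flat $2$--connection of $\hat{\mathcal{P}}$.

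It then suffices to invoke def. \cref{defi:specdifparcl}: a differential paracocycle is special exactly when its underlying $2$--connection is special in the sense of def. \cref{defi:prop2conn}. As the underlying $2$--connections of $\{\omega_{\mathrm{b}i},\varOmega_{\mathrm{b}i},T_{\mathrm{b}ij}\}$ and $\{\tilde\omega_{\mathrm{b}i},\tilde\varOmega_{\mathrm{b}i},\tilde T_{\mathrm{b}ij}\}$ coincide, specialty of the former is equivalent to specialty of the latter; in particular, if the first paracocycle is special, so is the second. Equivalently, once special adapted coordinates $\gamma_i$, $\varGamma_i$ are in force, one may phrase this through the characterization $I_i{}^*\varOmega_{\mathrm{b}i}=0$ (cf. \ceqref{propbas2}, \ceqref{nadifcoh46}): applying $I_i{}^*$ to $\tilde\varOmega_{\mathrm{b}i}=\varOmega_{\mathrm{b}i}$ yields $I_i{}^*\tilde\varOmega_{\mathrm{b}i}=I_i{}^*\varOmega_{\mathrm{b}i}=0$ on each $U_i$.

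There is no genuine obstacle here; the only point deserving a word of care is that "underlying $2$--connection" is well defined, i.e. that specialty is insensitive to the choice of quasi trivializer, which is clear because it is a condition on the $2$--connection data $\omega_{\mathrm{b}i}$, $\varOmega_{\mathrm{b}i}$ only. This is moreover consistent with the earlier observation that, given \ceqref{nadifcoh44}, the specialty condition is preserved by the matching rule \ceqref{nadifcoh3}.
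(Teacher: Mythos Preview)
Your proposal is correct and follows essentially the same approach as the paper: you use \ceqref{nadifcoh19}, \ceqref{nadifcoh20} together with the invertibility of \ceqref{basic1}, \ceqref{basic2} to conclude that equivalent paracocycles share the same underlying $2$--connection, and then invoke def.~\cref{defi:specdifparcl}. The additional rephrasing via $I_i{}^*\tilde\varOmega_{\mathrm{b}i}=I_i{}^*\varOmega_{\mathrm{b}i}=0$ is a harmless extra, not present in the paper's more concise version.
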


\begin{proof}
By \ceqref{nadifcoh19}, \ceqref{nadifcoh20} and \ceqref{basic1}, \ceqref{basic2} and their 
tilded counterparts, the $2$--connections underlying two equivalent paracocycles are equal.
So, if the first paracocycle is special, so is the second by virtue of def. \cref{defi:specdifparcl}. 
\end{proof}


\begin{defi} \label{defi:specdifpareq}
A gauge paraequivalence $\{g_{\mathrm{b}i},J_{\mathrm{b}i}\}$ subordinated to a
differential paracocycle $\{\omega_{\mathrm{b}i}, \varOmega_{\mathrm{b}i},T_{\mathrm{b}ij}\}$ 
is said to be special if  the underlying $1$--gauge transformation is special.
\end{defi}

\noindent
By \ceqref{propbas3}, then, in each set $U_i$
\begin{equation}
I_i{}^*J_{\mathrm{b}i}=0.
\label{nadifcoh48}
\end{equation}
We note that by \ceqref{nadifcoh5} the condition of specialty is globally consistent if 
\ceqref{nadifcoh44} holds. 

\begin{prop}
It the gauge paraequivalence $\{g_{\mathrm{b}i},J_{\mathrm{b}i}\}$ subordinated to a
differential paracocycle $\{\omega_{\mathrm{b}i}, \varOmega_{\mathrm{b}i},T_{\mathrm{b}ij}\}$ 
is special, then in each set $U_i$. 
\begin{equation}
\bar J_i=0.
\label{nadifcoh49}
\end{equation}
\end{prop}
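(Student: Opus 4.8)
The plan is to reproduce, almost verbatim, the argument that established $\bar\varOmega_i=0$ for a special differential paracocycle, now applied to the shift datum of the subordinated gauge paraequivalence. The two ingredients are the representation of $J_{\mathrm{b}i}$ as a pull--back via $\pi$ of a local internal function on $M$ and the specialty of the gauge paraequivalence, which together will be shown to force the pull--back of $\bar J_i$ via $\pi_0$ to vanish; surjectivity and submersivity of $\pi_0$ then finish the job.

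First I would invoke \ceqref{nadifcoh28}: since $\{g_{\mathrm{b}i},J_{\mathrm{b}i}\}$ is a gauge paraequivalence subordinated to $\{\omega_{\mathrm{b}i},\varOmega_{\mathrm{b}i},T_{\mathrm{b}ij}\}$, there exists an internal function $\bar J_i\in\iMap(T[1]U_i,\mathfrak{e}[1])$ with $J_{\mathrm{b}i}=\pi^*\bar J_i$. Next, specialty of the gauge paraequivalence gives, through \ceqref{propbas3}, the relation \ceqref{nadifcoh48}, namely $I_i{}^*J_{\mathrm{b}i}=0$, where $I_i:\pi_0{}^{-1}(U_i)\rightarrow\pi^{-1}(U_i)$ is the inclusion map. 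Combining the two and using the compatibility identity $\pi\circ I_i=\pi_0|_{\pi_0{}^{-1}(U_i)}$ — the same one already used in the proof of \ceqref{nadifcoh47} — one obtains $\pi_0{}^*\bar J_i=I_i{}^*\pi^*\bar J_i=I_i{}^*J_{\mathrm{b}i}=0$. Finally, since $\pi_0$ is a surjective submersion (cf. prop. 3.2 of I), the pull--back morphism $\pi_0{}^*$ is injective, so $\pi_0{}^*\bar J_i=0$ yields $\bar J_i=0$, which is \ceqref{nadifcoh49}.

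I do not expect any genuine obstacle here: the statement is the exact analogue of the proposition giving \ceqref{nadifcoh47}, with $\varOmega_{\mathrm{b}i}$, $\bar\varOmega_i$, \ceqref{nadifcoh46}, \ceqref{nadifcoh9} replaced throughout by $J_{\mathrm{b}i}$, $\bar J_i$, \ceqref{nadifcoh48}, \ceqref{nadifcoh28}, and the proof is the short chain of pull--backs displayed above. The only point deserving a remark is that the hypothesis \ceqref{nadifcoh48} is globally meaningful precisely because the specialty condition for a gauge paraequivalence is preserved across overlaps whenever \ceqref{nadifcoh44} holds, as noted just above the statement via \ceqref{nadifcoh5}; that observation is a consistency check on the definition rather than a step in the proof itself.
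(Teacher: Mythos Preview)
Your proof is correct and follows exactly the approach of the paper, which explicitly says the argument is the same as that leading to \ceqref{nadifcoh47} with \ceqref{nadifcoh28} in place of \ceqref{nadifcoh9}. You have reproduced that chain of pull--backs precisely, including the use of $\pi\circ I_i=\pi_0|_{\pi_0{}^{-1}(U_i)}$ and the injectivity of $\pi_0{}^*$ from $\pi_0$ being a surjective submersion.
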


\begin{proof}
This follows from \ceqref{nadifcoh28} through a reasoning similar to that leading to 
\ceqref{nadifcoh47}. 
\end{proof}

\begin{prop}
If $\{\omega_{\mathrm{b}i},\varOmega_{\mathrm{b}i},T_{\mathrm{b}ij}\}$, 
$\{g_{\mathrm{b}i},J_{\mathrm{b}i}\}$ is a pair of a differential paracocycle and a 
subordinated gauge paraequivalence with $\{g_{\mathrm{b}i},J_{\mathrm{b}i}\}$ special and 
$\{\tilde\omega_{\mathrm{b}i},\tilde\varOmega_{\mathrm{b}i},\tilde T_{\mathrm{b}ij}\}$,
$\{\tilde g_{\mathrm{b}i},\tilde J_{\mathrm{b}i}\}$ is a pair of a differential paracocycle and a 
subordinated gauge paraequi\-valence equivalent to the former, then $\{\tilde g_{\mathrm{b}i},\tilde J_{\mathrm{b}i}\}$ 
is special.
\end{prop}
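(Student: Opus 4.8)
The plan is to run, for the underlying $1$--gauge transformation, the very same argument that was used above for differential paracocycles, where equality of the $2$--connections underlying two equivalent paracocycles was read off from \ceqref{nadifcoh19}, \ceqref{nadifcoh20} via \ceqref{basic1}, \ceqref{basic2}. Recall that, by defs. \cref{defi:specdifpareq} and \cref{defi:prop1gau}, the gauge paraequivalence $\{g_{\mathrm{b}i},J_{\mathrm{b}i}\}$ being special means that the global projected components $g$, $J$, $h$, $K$ whose local basic data are $\{g_{\mathrm{b}i},J_{\mathrm{b}i}\}$ satisfy \ceqref{0cmgautr5ol}, \ceqref{0cmgautr6ol}; hence it suffices to show that $\{\tilde g_{\mathrm{b}i},\tilde J_{\mathrm{b}i}\}$ is the collection of local basic data of the \emph{same} global $1$--gauge transformation.

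First I would unwind def. \cref{defi:nadifcoh2}: the assumed equivalence of the two pairs yields, besides the equivalence of the underlying paracocycles, the identities $\tilde g_{\mathrm{b}i}=g_{\mathrm{b}i}$ and $\tilde J_{\mathrm{b}i}=J_{\mathrm{b}i}$ on every set $U_i$ by \ceqref{nadifcoh30}, \ceqref{nadifcoh31}. Next, since throughout subsect. \cref{subsec:nadifcoh} the trivializing covering $\{U_i\}$ and the adapted coordinates $\gamma_i$, $\varGamma_i$ are fixed once and for all, relations \ceqref{basic9}, \ceqref{basic10} are invertible: they recover the restrictions to $T[1]\pi^{-1}(U_i)$ of the global components $g$, $J$ from $g_{\mathrm{b}i}$, $J_{\mathrm{b}i}$ and $\gamma_i$, $\varGamma_i$, and the tilded counterparts of \ceqref{basic9}, \ceqref{basic10} do the same for $\tilde g$, $\tilde J$ with the same $\gamma_i$, $\varGamma_i$. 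Comparing and using that the $\pi^{-1}(U_i)$ form an open covering of $P$, one concludes $\tilde g=g$ and $\tilde J=J$ on $P$, and then $\tilde h=h$, $\tilde K=K$ by \ceqref{cmgautr5}, \ceqref{cmgautr6} (equivalently \ceqref{basic13}, \ceqref{basic14}). Thus $\{\tilde g_{\mathrm{b}i},\tilde J_{\mathrm{b}i}\}$ underlies $\{g,J,h,K\}$, which is special by hypothesis, so $\{\tilde g_{\mathrm{b}i},\tilde J_{\mathrm{b}i}\}$ is special as well.

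The one delicate point---and the main, if minor, obstacle---is the tacit use of the standing convention that equivalent differential paracocycles, and the gauge paraequivalences subordinated to them, are written with respect to the same covering and the same adapted coordinates; it is precisely this that makes \ceqref{basic9}, \ceqref{basic10} invertible to one and the same global object on both sides. Granting it, everything else is inversion of explicit formulas plus a patching argument, with no real computation. As a cross--check one may instead argue entirely at the basic level: under the special--coordinates assumption, \ceqref{basic10} together with \ceqref{bmaurer15} shows that $I_i{}^*J_{\mathrm{b}i}=0$ is equivalent to $I_i{}^*J=0$, and likewise \ceqref{basic12} with \ceqref{bmaurer15} relates $I_i{}^*K_{\mathrm{b}i}$ to $I_i{}^*K$; since $\tilde g_{\mathrm{b}i}=g_{\mathrm{b}i}$ and $\tilde J_{\mathrm{b}i}=J_{\mathrm{b}i}$ force $\tilde K_{\mathrm{b}i}=K_{\mathrm{b}i}$ through \ceqref{basic13}, \ceqref{basic14}, the vanishings \ceqref{nadifcoh48} and \ceqref{propbas4} transfer verbatim to the tilded data, which again yields the claim.
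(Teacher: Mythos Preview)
Your proposal is correct and follows essentially the same route as the paper's proof: use \ceqref{nadifcoh30}, \ceqref{nadifcoh31} together with \ceqref{basic9}, \ceqref{basic10} and their tilded counterparts (with the same adapted coordinates) to conclude that the underlying global $1$--gauge transformations coincide, whence specialty transfers by def.~\cref{defi:specdifpareq}. You are simply more explicit than the paper about the invertibility of \ceqref{basic9}, \ceqref{basic10} and the standing assumption of fixed adapted coordinates, and you add a helpful basic-level cross-check.
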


\begin{proof}
By \ceqref{nadifcoh30}, \ceqref{nadifcoh31} and \ceqref{basic9}, \ceqref{basic10} and their 
tilded counterparts, the $1$--gauge transformations underlying two equivalent  
differential paracocycle and subordinated gauge paraequivalence pairs are equal.
So, if the first paraequivalence is special, so is the second by virtue 
of def. \cref{defi:specdifpareq}. 
\end{proof}

The reader certainly noticed that we did not include $2$--gauge symmetry in our discussion.
The reason for this is that, apparently, there is no way of making it fitting into the 
framework described in this subsection. An analysis of the global matching of the local basic data
$E_{\mathrm{b}i}$, $C_{\mathrm{b}i}$ of a $2$--gauge transformation would unavoidably be based on 
relations \ceqref{2basic24}, \ceqref{2basic25}. Forcing on $E_{\mathrm{b}i}$, $C_{\mathrm{b}i}$ relations 
analogous to \ceqref{nadifcoh8}, \ceqref{nadifcoh9} and \ceqref{nadifcoh27}, \ceqref{nadifcoh28} 
does not seem to yield any reasonable relation on $M$. This is an open problem requiring further 
investigation including possibly a revision of the synthetic theory of subsect. \cref{subsec:cmgauforgau}.

\vfil\eject

\section{\textcolor{blue}{\sffamily Appraisal of the results obtained}}\label{sec:outlk}

It is important to critically assess the strengths and weaknesses of the operational 
synthetic formulation of the total space theory of 
principal $2$--bundles and $2$--connections and $1$-- and $2$--gauge transformations
thereof developed in this paper. A number of points can be raised 
concerning its viability and its eventual relationship with other approaches.
We are going address some of these issues in this section.


\subsection{\textcolor{blue}{\sffamily Some open problems}}\label{subsec:prob}

The geometry of a principal $\hat{\matheul{K}}$--$2$--bundle $\hat{\mathcal{P}}$ is characterized not only by
the right $\hat{\mathsans{K}}$--action but also by the morphism composition of $\hat P$.
Our operational formulation relies heavily of the former while it leaves the latter 
in the background (cf. subsect. 3.8 of I). However, the second is a constitutive element 
of the principal $2$--bundle structure as basic as the first. 

Since morphisms belonging to different fibers of a principal $2$--bundle 
can never be composed, morphism 
composition is essentially a local operation. For a chosen local neighborhood $U$ of $M$,
through a pair of reciprocally weakly inverse trivializing functors
$\hat{\varPhi}_U:\hat\pi^{-1}(U)\rightarrow U\times\hat{\mathsans{K}}$ 
and $\tilde{\hat{\varPhi}}_U:U\times\hat{\mathsans{K}}\rightarrow\hat\pi^{-1}(U)$
composition of morphisms of $\hat\pi^{-1}(U)$ is turned into composition of 
corresponding morphisms of $\hat{\mathsans{K}}$ and viceversa. 
It is known \ccite{Baez5} that the groupoid structure of a strict $2$--group such as $\hat{\mathsans{K}}$
can be reduced to the group one as follows. With any morphism $A\in\hat{\mathsans{K}}$ there is 
associated a morphism $\hat\varrho(A)\in\hat{\mathsans{K}}$ given by 
\begin{equation}
\hat\varrho(A)=\hat t(A)^{-1}A,
\label{}
\end{equation}
such that for composable morphisms $A,\,B\in\hat{\mathsans{K}}$
\begin{equation}
\hat\varrho(B\circ A)=\hat\varrho(B)\hat\varrho(A).
\label{}
\end{equation}
So, as $\hat t(B\circ A)=\hat t(B)$, right composition of $B$ by $A$ is equivalent to right 
mul\-tiplication of $B$ by $\hat\varrho(A)$. It follows that, for any two composable morphisms $X$, 
$Y\in \hat P$, right composition of $Y$ by $X$ can be reduced, in the appropriate categorical sense, 
to the right action of some element of $A_X\in\hat{\mathsans{K}}$ depending on $X$ 
on $Y$. 

Shifting to the synthetic setup of $\hat{\mathcal{P}}$, 
it is in the way explained above that the operation $\iOOO S_{P}$ indirectly includes
the morphism composition structure of $\hat P$ in spite of the fact that 
its synthetic counterpart $P$ has no groupoid structure (cf. subsect. 3.2 of I)
A more explicit incorporation of this latter in our formulation would be desirable. 


To construct the basic theory, we proposed a notion of coordinates adapted to the 
local product structure $U\times\mathsans{K}$ of a principal 
$\hat{\matheul{K}}$--$2$--bundle $\hat{\mathcal{P}}$ in subsect. \cref{subsec:maurer}. 
These coordinates are Lie valued internal functions on $T[1]\pi^{-1}(U)$ behaving in 
a certain way under the action of the derivations of the operation $\iOOO S_{\pi^{-1}(U)}$. 
The definition provided is essentially algebraic. It leads to a seemingly viable basic 
formulation of principal $2$--bundle $2$--connection and $1$--gauge transformation 
theory. However, the eventual relation of 
adapted coordinates to trivialization functors remains blurred at best and calls for 
further investigation. 

To make contact with other widely studied formulations of $2$--connections and $1$--gauge transformations 
of principal $2$--bundles, we introduced the notions
of differential paracocycle and gauge paraequivalence in subsect. \cref{subsec:nadifcoh}.
The definitions of these entities we gave are admittedly somewhat {\it ad hoc}.
The cocycle data $\{\bar\omega_i,\bar\varOmega_i,\bar f_{ij},\bar F_{ij},\bar T_{ijk}\}$
associated with a differential paracocycle $\{\omega_{\mathrm{b}i},\varOmega_{\mathrm{b}i},T_{\mathrm{b}ij}\}$ 
are simply assumed to exist as part of the definition of this latter. 
Similarly, the equivalence data $\{\bar g_i,\bar J_i,\bar A_{ij}\}$ associated with 
a gauge paraequivalence $\{g_{\mathrm{b}i},J_{\mathrm{b}i}\}$ are again assumed to exist. 
It would be desirable instead to have a formulation where the cocycle and equivalence data 
can be constructively shown to exist in analogy to the ordinary theory. 

The viability of the formulation furnished 
here remains to be tested in concrete examples. This left for future work.


\subsection{\textcolor{blue}{\sffamily Toward a more geometric interpretation}}\label{subsec:interpr}

In this paper, we worked out an operational synthetic total space theory of $2$--connections and $1$-- and $2$--gauge 
transformations for strict principal $2$--bundles adopting a graded differential 
geometric approach and mimicking to a large extent the corresponding formulation of connection and 
gauge transformation theory for ordinary principal bundles. In the ordinary case, however, these notions 
have also a more conventional intuitive geometric interpretation in terms of the overall 
geometry of the principal bundle and its fibered structure. The natural question arises whether 
a similar interpretation exists also in the higher theory.  

As already observed in subsect. 3.1 of I, at the moment 
no definition of $2$--connection on a strict principal $2$--bundle akin to that of the ordinary 
theory formulated in terms of a horizontal invariant distribution in the 
tangent bundle of the bundle is available. There exits however a definition of $1$--gauge transformation analogous 
to that of the ordinary theory as an equivariant fiber preserving bundle automorphism
formulated by Wockel in ref. \ccite{Wockel:2008tspb}. The interpretation of $2$--connections
as defined in subsect. \cref{subsec:cmconn} along the lines just indicated remains an open problem.
It is conversely possible to attempt a comparison of the notion of $1$--gauge transformation of 
subsect. \cref{subsec:cmgautr} and Wockel's categorical one. 

For a given strict principal $\hat{\matheul{K}}$--$2$--bundle $\hat{\mathcal{P}}$,
the synthetic counterpart of the gauge $2$--group $\Fun^{\hat{\mathsans{K}}}\!(\hat{P},\hat{\mathsans{K}}_{\mathrm{Ad}})$ 
(cf. subsect. 3.1 of I)
is the group $\Fun^{\mathsans{K}}\!(P,\mathsans{K}_{\mathrm{Ad}})$ of $\mathsans{K}$--equivariant maps 
of $\Map(P,\mathsans{K})$ restricting to $\mathsans{K}_0$--equivariant maps of $\Map(P_0,\mathsans{K}_0)$. 
$\Fun^{\mathsans{K}}\!(P,\mathsans{K}_{\mathrm{Ad}})$ is formally analogous to 
$\Fun^{\hat{\mathsans{K}}}\!(\hat{P},\hat{\mathsans{K}}_{\mathrm{Ad}})$ in se\-veral respects,  but by the 
lack of a groupoid structure of $\mathsans{K}$ (cf. subsect. 3.2 of I)
it has no morphisms and is thus a mere mapping group. 
$\Fun^{\mathsans{K}}\!(P,\mathsans{K}_{\mathrm{Ad}})$ can\-not be directly equated with the $1$--gauge group 
as defined earlier in subsect. \cref{subsec:cmgautr}. 
Rather, $\Fun^{\mathsans{K}}\!(P,\mathsans{K}_{\mathrm{Ad}})$ can be identified as a distinguished 
subgroup of the special subgroup of the $1$--gauge transformation group, as we show next. 

Recalling that $\mathsans{K}=\DD\mathsans{M}$, 
an element of $\Fun^{\mathsans{K}}\!(P,\mathsans{K}_{\mathrm{Ad}})$
is an instance of an internal function $\varPsi\in\iMap(T[1]P,\DD\mathsans{M})$ 
that is $\DD\mathsans{M}$--horizontal and $\DD\mathsans{M}$--equivariant
and restricts to an internal function $\varPsi_0\in\iMap(T[1]P_0,\DD\mathsans{M}_0)$ 
that is $\DD\mathsans{M}_0$--ho\-rizontal and $\DD\mathsans{M}_0$--equivariant, the action  
of $\DD\mathsans{M}$, respectively $\DD\mathsans{M}_0$, 
on itself being the right conjugation one. $\DD\mathsans{M}$--horizontality 
translates directly into relation \ceqref{cmgautrx3}. 
$\DD\mathsans{M}$--equivariance is equivalent to the condition that \hphantom{xxxxxxx}
\begin{equation}
R_F{}^*\varPsi=F^{-1}\varPsi F \pagebreak 
\label{}
\end{equation}
for $F\in\DD\mathsans{M}$, where in the right hand side $F$ is identified with its 
image under the isomorphism $z_{\mathsans{M}}:\DD\mathsans{M}\rightarrow\DD\mathsans{M}^+$ 
defined in eq. 3.6.1 of I. 
In infinitesimal form, expressing $F$ as $1+tZ$, where 
$Z\in\DD\mathfrak{m}$ and $t$ is a formal parameter such that $t^2=0$, this relation 
takes the form \ceqref{cmgautrx5}. Thus $\varPsi$ is the transformation component of 
a $1$--gauge transformation. Since $\varPsi$ restricts on $T[1]P_0$ to a
$\DD\mathsans{M}_0$--valued $\DD\mathsans{M}_0$--horizontal and $\DD\mathsans{M}_0$--equivariant internal map, 
this $1$--gauge transformation is special. 

In the present formulation of the theory, 
$2$--gauge transformations of $\hat{\mathcal{P}}$ cannot be obviously 
related to morphisms of the gauge $2$--group 
$\Fun^{\hat{\mathsans{K}}}\!(\hat{P},\hat{\mathsans{K}}_{\mathrm{Ad}})$, because 
the synthetic form $\Fun^{\mathsans{K}}\!(P,\mathsans{K}_{\mathrm{Ad}})$ of this latter does
not have any. Moreover, $2$--gauge transformations are supposed 
to act on $1$--gauge transformations (cf. subsect. \cref{subsec:cmgauforgau}) and do so in
a proper way depending on an assigned $2$--connection (cf. subsect. \cref{subsec:cmconn}).
As long we do not have a purely geometric total space theory of $2$--connections, any attempt 
to relate $2$--gauge transformations to morphisms of the gauge $2$--group is premature at best. 



\subsection{\textcolor{blue}{\sffamily Comparison with other formulations}}\label{subsec:wald}

An interesting total space formulation of $2$--connections theory has been worked out by Waldorf in refs. 
\ccite{Waldorf:2016tsct,Waldorf:2017ptpb}. We anticipate that Waldorf's theory is not obviously
equivalent to ours and most likely it is not. 
We outline it briefly below referring the interested reader to the cited papers for a full exposition.

Waldorf's approach is based on a special differential geometric framework. For a given 
principal $\hat{\matheul{K}}$--$2$--bundle $\hat{\mathcal{P}}$, its main ingredients are
the morphism and object manifolds $\hat P$ and $\hat P_0$ of $\hat{\mathcal{P}}$
and the Lie group crossed module 
$\mathsans{M}=(\mathsans{E},\mathsans{G},\tau,\mu)$ associated with the structure Lie $2$--group $\hat{\matheul{K}}$. 
He defines a vector space $A^\bullet(\hat P,\hat{\mathfrak{k}})$ of $\hat{\mathfrak{k}}$--valued 
differential forms of $\hat P$ and endows it with a structure of differential graded Lie algebra 
with Lie bracket $[-,-]$ and differential $D$. $A^p(\hat P,\hat{\mathfrak{k}})$ is 
a certain subspace of the vector space 
$\Omega^p(\hat P_0,\mathfrak{g})\,\oplus\,\Omega^p(\hat P,\mathfrak{e})\,\oplus\,\Omega^{p+1}(\hat P_0,\mathfrak{e})$
defined by algebraic constraints expressed in terms of the face maps of the nerve of the groupoid $\hat P$,
the simplicial complex $\hat P_\bullet=\hspace{-4mm}\xymatrix@C-=0.75cm
{\cdots\hspace{-12mm}&\ar@<-3.75pt>[r]\ar@<+3.75pt>[r]\ar@<-1.25pt>[r]\ar@<1.25pt>[r]
&\hat P_2\ar@<-3pt>[r]\ar@<+3pt>[r]\ar[r]&\hat P_1\ar@<-2pt>[r]\ar@<+2pt>[r]&\hat P_0}$ of composable sequences of $\hat P$.
Each $p$--form possesses therefore three components. 
The Lie bracket combines the form wedge product and the Lie bracket of the Lie algebra 
$\mathfrak{e}\rtimes_{\dot{}\mu\dot{}}\mathfrak{g}$. The differential $D$ is constructed assembling 
the de Rahm differentials $d_{\hat P}$, $d_{\hat P_0}$, the face maps 
of $\hat P_\bullet$ and target map $\dot\tau$. An adjoint action $\Ad$ of functors 
$\hat P\rightarrow\hat{\mathsans{K}}$ on $A^\bullet(\hat P,\hat{\mathfrak{k}})$  
preserving degree is also defined.

A $2$--connection is defined again in terms of its behaviour under the 
right $\hat{\mathsans{K}}$ action of $\hat P$. A $\hat{\mathsans{K}}$--valued variable $Q$ is 
considered and a Maurer--Cartan $1$--form $\varGamma\in A^1(\hat{\mathsans{K}},\hat{\mathfrak{k}})$ 
obeying the Maurer--Cartan equation $D\varGamma+[\varGamma,\varGamma]/2=0$ is defined. 
A $2$--connection of $\hat P$ is  a $1$--form $A\in A^1(\hat P,\hat{\mathfrak{k}})$ 
such that 
\begin{equation}
R_Q{}^*A=\Ad Q^{-1}(A)+\varGamma,
\label{wald1}
\end{equation}
which must be viewed as a $1$--form of $A^1(\hat P\times\hat{\mathsans{K}},\hat{\mathfrak{k}})$. 
The curvature of $A$ is a $2$--form $B\in A^2(\hat P,\hat{\mathfrak{k}})$ defined by \hphantom{xxxxxxxxxxx}
\begin{equation}
B=DA+\frac{1}{2}[A,A].
\label{wald2}
\end{equation}
By \ceqref{wald1}, it obeys \hphantom{xxxxxxxxxxx} 
\begin{equation}
R_Q{}^*B=\Ad Q^{-1}(B). 
\label{wald3}
\end{equation}
More explicitly, denoting by $g$ and $(H,h)$ the $\mathsans{G}$ and $\mathsans{E}\rtimes_\mu\mathsans{G}$ 
variables underlying $Q$ above, a $2$--connection $A$ consists of a triplet of forms
$\omega\in\Omega^1(\hat P_0,\mathfrak{g})$, $\varOmega'\in\Omega^1(\hat P,\mathfrak{e})$, $\varOmega\in\Omega^2(\hat P_0,\mathfrak{e})$
satisfying the simplicial constraints and such that 
\begin{align}
&R_g{}^*\omega=\Ad g^{-1}(\omega)+g^{-1}dg,
\vphantom{\Big]}
\label{wald4}
\\
&R_{(H,h)}{}^*\varOmega'=\mu{}\dot{}\,(h^{-1},\Ad H(\varOmega'+{}\dot{}\mu(\hat s^*\omega,H))+H^{-1}dH),
\vphantom{\Big]}
\label{wald5}
\\
&R_g{}^*\varOmega=\mu{}\dot{}\,(g^{-1},\varOmega),
\vphantom{\Big]}
\label{wald6}
\end{align}
where $\hat s$, $\hat tt$ are the source and target maps of $\hat P$. The curvature of the $2$--connection is the triplet of forms 
$\theta\in\Omega^2(\hat P_0,\mathfrak{g})$, $\varTheta'\in\Omega^2(\hat P,\mathfrak{e})$, $\varTheta\in\Omega^3(\hat P_0,\mathfrak{e})$
given by 
\begin{align}
&\theta=d_{\hat P_0}\omega+\frac{1}{2}[\omega,\omega]-\dot\tau(\varOmega),
\vphantom{\Big]}
\label{wald7}
\\
&\varTheta'=(\hat t^*-\hat s^*)\varOmega+d_{\hat P}\varOmega'+\frac{1}{2}[\varOmega',\varOmega']
+\dot{}\mu\dot{}\,(\hat s^*\omega,\varOmega'), 
\vphantom{\Big]}
\label{wald8}
\\
&\varTheta=d_{\hat P_0}\varOmega+\dot{}\mu\dot{}\,(\omega,\varOmega)
\vphantom{\Big]}
\label{wald9}
\end{align}
satisfying certain simplicial constraints and such that 
\begin{align}
&R_g{}^*\theta=\Ad g^{-1}(\theta),
\vphantom{\Big]}
\label{wald10}
\\
&R_{(H,h)}{}^*\varTheta'=\mu{}\dot{}\,(h^{-1},\Ad H(\varTheta'+{}\dot{}\mu(\hat s^*\theta,H))), 
\vphantom{\Big]}
\label{wald11}
\\
&R_g{}^*\varTheta=\mu{}\dot{}\,(g^{-1},\varTheta).
\vphantom{\Big]}
\label{wald12}
\end{align}
The following differences between Waldorf's formulation, henceforth marked as W, 
and  the formulation presented in this paper emerge, marked as O, emerge
even leaving aside the non synthetic nature of W and the synthetic one of O. 

In W, a $2$--connection has three  components
$\omega$, $\varOmega'$, $\varOmega$ whereas, in O, it has only two components $\omega$, $\varOmega$. 
In W, $\omega$, $\varOmega$ are forms on $\hat P_0$, while, in O, $\omega$, $\varOmega$ 
are forms on $P$. It is not possible to forget $\varOmega'$ in W because it enters into the simplicial 
constraints together with $\omega$, nor it is possible to set $\varOmega'=0$ because this would be 
inconsistent with \ceqref{wald5}. Apparently, the components $\omega$, $\varOmega$ of W correspond to the pull-back
components $\omega_0=I^*\omega$, $\varOmega_0=I^*\varOmega$ of O (cf. subsect. \cref{subsec:cmconn}). 
Relations \ceqref{wald4}, \ceqref{wald6}
of W in infinitesimal form are compatible with relations \ceqref{cmconn9}, \ceqref{cmconn10} of O
under the operation morphism $\iOOO L$. Similar remarks apply when comparing the three curvature components
$\theta$, $\varTheta'$, $\varTheta$ of W and the two components $\theta$, $\varTheta$ of O. 
From these remarks, it appears that W is not obviously equivalent to O and most likely it is not. 
Yet, the two formulations may yield at the end equivalent descriptions of $2$--connections on the 
bundle's base manifold. This remains an issue deserving further investigation.



\vfil\eject

\noindent
\textcolor{blue}{Acknowledgements.} 
The author thanks R. Picken, J. Huerta and C. Saemann for useful discussions.
He acknowledges financial support from INFN Research Agency
under the provisions of the agreement between University of Bologna and INFN. 
He also thanks the organizer of the 2018 EPSRC Durham Symposium 
on ``Higher Structures in M-Theory'' during which part of this work was done.

\vfil\eject

\end{document}